\documentclass[aps,prl,twocolumn,nofootinbib,superscriptaddress,10pt]{revtex4-2}

\usepackage{graphicx,color,amsmath,amsfonts,enumerate,amsthm,amssymb,mathtools,enumitem,thmtools,hyperref,subfigure,mathdots,enumitem,centernot,bm,soul,bbm,stmaryrd}
\usepackage[capitalise, noabbrev]{cleveref}
\usepackage{multirow}
\usepackage[figuresright]{rotating}

\usepackage{tikz,pgfplots}
\pgfplotsset{compat=1.17}
\usetikzlibrary{quantikz}
\hypersetup{colorlinks=true,linkcolor=blue,citecolor=blue,urlcolor=blue, hypertexnames=false}

\usepackage{algorithm,algpseudocode}
\usepackage{algorithmicx}
\usepackage{mathrsfs}
\usepackage{soul}

\usepackage{mathtools,mleftright}
\mleftright
\let\originalleft\left
\let\originalright\right
\renewcommand{\left}{\mathopen{}\mathclose\bgroup\originalleft}
\renewcommand{\right}{\aftergroup\egroup\originalright}

\def\E{ {\mathbb E} } 

\def\R{ \mathbb{R} }

\def\>{\rangle}
\def\<{\langle}
 
\newcommand{\ketbra}[2]{\ensuremath{\left|#1\right\rangle\!\left\langle#2\right|}}

\newcommand{\tr}[1]{\mathrm{tr}\left( #1 \right)}

\DeclareMathOperator{\Var}{Var}

\DeclareMathOperator{\Tr}{tr}

\newcommand{\hphide}[1]{}

\newcommand{\bigo}[1]{\mathcal{O}\left(#1\right)}

\newcommand{\sym}{\Pi_{\operatorname{sym}}}

\theoremstyle{plain}
\newtheorem{thm}{Theorem}
\newtheorem{theorem}[thm]{Theorem}

\newtheorem{lemma}{Lemma}
\newtheorem{prop}{Proposition}

\theoremstyle{definition} 
\newtheorem{defn}{Definition}

\newcommand{\PT}{\mathrm{PT}}

\newcommand{\rme}{\operatorname{e}}

\newcommand{\caH}{\mathcal{H}}

\newcommand{\caD}{\mathcal{D}}

\newcommand{\Wg}{\mathrm{Wg}}
\usepackage{graphicx}
\usepackage{bm}
\usepackage{amsmath}
\usepackage{lipsum}
\usepackage{balance}
\usepackage{soul}

\usetikzlibrary{positioning}
\usetikzlibrary{snakes}
\usetikzlibrary{decorations}
\definecolor{evergreen}{rgb}{0.27,0.62,0.20}

\tikzstyle{tensor_blue}  =[rectangle,draw=black,fill=blue!25,       thick,minimum size=0.6cm]
\tikzstyle{tensor_green} =[rectangle,draw=black,fill=green!20,      thick,minimum size=0.6cm]
\tikzstyle{tensor_purple}=[rectangle,draw=black,fill=blue!50!red!50,thick,minimum size=0.6cm]

\tikzstyle{Permute_2} = [rectangle, draw=black, thick, fill=red!25, minimum width=1.8cm,minimum height = 0.6cm]
\tikzstyle{Permute_3} = [rectangle, draw=black, thick, fill=red!25, minimum width=3.0cm,minimum height = 0.6cm]
\tikzstyle{Permute_4} = [rectangle, draw=black, thick, fill=red!25, minimum width=4.2cm,minimum height = 0.6cm]
\tikzstyle{Permute_5} = [rectangle, draw=black, thick, fill=red!25, minimum width=5.4cm,minimum height = 0.6cm]

\tikzstyle{UVU_6} = [rectangle, draw=black, thick, fill=blue!25, minimum width=2.1cm,minimum height = 0.6cm]
\tikzstyle{2party} = [rectangle, draw=black, thick, fill=green!20, minimum width=1.3cm,minimum height = 0.6cm]
\tikzstyle{terminal}  = [=, thick, minimum width=0.3cm, minimum height = 0.2cm]

\newcommand{\RN}[1]{%
	\textup{\uppercase\expandafter{\romannumeral#1}}%
}
\renewcommand{\thetable}{\Alph{section}\arabic{table}}

\definecolor{dullblue}{rgb}{.29,.47,.77}

\def\eqref#1{\textup{(\ref{#1})}}
\newcommand{\eref}[1]{Eq.~\textup{(\ref{#1})}}
\newcommand{\lref}[1]{Lemma~\ref{#1}}
\newcommand{\tref}[1]{Theorem~\ref{#1}}
\newcommand{\pref}[1]{Proposition~\ref{#1}}

\usepackage{orcidlink}

\begin{document}

\title{Quantum Nonlinear Properties from a Single Measurement Setting}

\author{Zihao Li}
\email{zihaoli@hku.hk}
\thanks{equal contribution}
\affiliation{QICI Quantum Information and Computation Initiative, School of Computing and Data Science, The University of Hong Kong, Pokfulam Road, Hong Kong, China}

\author{Datong Chen}
\thanks{equal contribution}
\affiliation{State Key Laboratory of Surface Physics, Department of Physics, and Center for Field Theory and Particle Physics, Fudan University, Shanghai 200433, China}
\affiliation{Institute for Nanoelectronic Devices and Quantum Computing, Fudan University, Shanghai 200433, China}
\affiliation{Shanghai Research Center for Quantum Sciences, Shanghai 201315, China}

\author{Dayue Qin}
\affiliation{Key Laboratory for Information Science of Electromagnetic Waves (Ministry of Education), Fudan University, Shanghai 200433, China}

\author{Yuxiang Yang}
\email{yuxiang@cs.hku.hk}
\affiliation{QICI Quantum Information and Computation Initiative, School of Computing and Data Science, The University of Hong Kong, Pokfulam Road, Hong Kong, China}

\author{You Zhou}
\email{you\_zhou@fudan.edu.cn}
\affiliation{Key Laboratory for Information Science of Electromagnetic Waves (Ministry of Education), Fudan University, Shanghai 200433, China}

\date{\today}

\begin{abstract}
Nonlinear properties of quantum states are essential to quantum information and many-body physics, but assessing them experimentally is challenging, as it typically requires multi-copy operations or a large number of measurement settings.
To address this challenge, we develop a universal framework, collision-based nonlinear estimation (CBNE), for efficiently measuring nonlinear quantities of a quantum state $\rho$, such as the higher-order expectation value \(\Tr(O\rho^t)\) for some observable $O$, using single-copy randomized measurements. Strikingly, our protocol requires only a single measurement setting, provided that the system dimension is sufficiently large or a few ancillary qubits are available; this contrasts with the conventional expectation that multiple measurement bases are necessary for nonlinear estimation. In addition, CBNE is observable-independent at the experimental stage, which enables simultaneous estimation of multiple nonlinear functions. It further extends to broader tasks, including the estimation of principal component properties and partial-transpose moments of quantum states. Our results provide a practical and scalable route for measuring nonlinear state properties on near-term quantum devices.
\end{abstract}

\maketitle  

\def \loopwidth {1em}
\def \colspacing {2em}
\def \rowspacing {1.5em}
\setlength\tabcolsep{20pt}
\newsavebox{\composition}
\newsavebox{\tensorproduct}
\newsavebox{\traceytrace}

\let\oldaddcontentsline\addcontentsline
\renewcommand{\addcontentsline}[3]{}

Nonlinear functions of quantum systems, such as the state moments $\Tr(\rho^t)$ and higher-order expectation values $\Tr(O\rho^t)$, play central roles in areas ranging from quantum information processing to many-body physics. 
They underpin numerous applications including quantum error mitigation 
\cite{PhysRevX.11.041036,PhysRevX.11.031057,RevModPhys.95.045005,PRXQuantum.4.010303,zhou2024hybrid,PhysRevLett.133.080601,o2023purification}, quantum virtual cooling (QVC) 
\cite{PhysRevX.9.031013,PhysRevLett.133.080601}, entanglement spectroscopy 
\cite{PhysRevB.96.195136,Yirka2021qubitefficient}, and quantum entropy estimation \cite{islam2015measuring,brydges2019probing,PhysRevLett.120.050406,PhysRevA.97.023604}. 
Beyond these, the \emph{partial-transpose} (PT) moments of quantum states 
provide powerful criteria for entanglement detection \cite{PhysRevLett.121.150503,elben2020mixedstate,singlezhou,Neven2021Entanglement,Yu2021Entanglement} and serve as useful probes for quantum phases of matter and entanglement dynamics in open systems \cite{PhysRevLett.109.130502,PhysRevB.94.035152,PhysRevLett.125.140603}. 
They also enable access to the entanglement negativity \cite{PhysRevLett.95.090503,PhysRevA.65.032314,PhysRevA.109.012422,carteret2016estimating}, a prominent entanglement measure for mixed quantum states.

Conventional approaches for accessing these nonlinear quantities, such as generalized swap tests \cite{PhysRevLett.88.217901,bruni2004measurimg,Quek2024multivariatetrace}, require collective operations on multiple copies of the state prepared in parallel \cite{liu2025estimating,liu2024auxiliary,Shin2025resourceefficient,PhysRevLett.94.040502,PhysRevLett.129.260501,zhang2025measuring,chen2025simultaneous,shi2025near,ye2025exponential}, which are experimentally demanding. 
More recently, \emph{randomized measurement} (RM) protocols and the framework of \emph{classical shadow estimation} (CSE) have emerged as powerful single-copy alternatives \cite{PhysRevLett.108.110503,HKPshadow20,PhysRevX.14.031035,PhysRevLett.124.010504,elben2023randomized,PhysRevA.99.052323,PhysRevLett.108.110503,PhysRevResearch.5.023027,PRXQuantum.5.010352,arienzo2023closed,PhysRevLett.130.230403}.
By applying random circuits followed by fixed-basis measurements, these methods enable the estimation of nonlinear properties from the classical correlations between independent measurement runs \cite{elben2023randomized}.

However, the implementation of these single-copy methods still incurs a substantial overhead in the number of random circuits (i.e., measurement settings) used. Specifically, to achieve an  estimation error $\epsilon$, the required setting number usually scales as $\sim 1/\epsilon^2$ and, for many tasks, can even grow exponentially with the system size \cite{HKPshadow20,PhysRevLett.127.260501,elben2020mixedstate,pelecanos2026beating,elben2023randomized}. 
This poses a significant practical bottleneck for near-term devices, as frequent switching between measurement settings is resource-intensive, and most hardware platforms prefer to operate
with minimal setting variations \cite{PhysRevLett.131.240602,Zhou2023perform,PRXQuantum.2.010307,huggins2022unbiasing,singlezhou}. 
Unfortunately, estimating nonlinear functions with a constant number of single-copy measurement settings is currently achievable only for second-order quantities, such as $\Tr(\rho^2)$ and $\Tr(O\rho^2)$ with specific structured observables $O$ \cite{PhysRevLett.124.010504,anshu2022distributed,du2025optimal}.

\begin{figure}[b]
\centering
\includegraphics[width=0.48\textwidth]{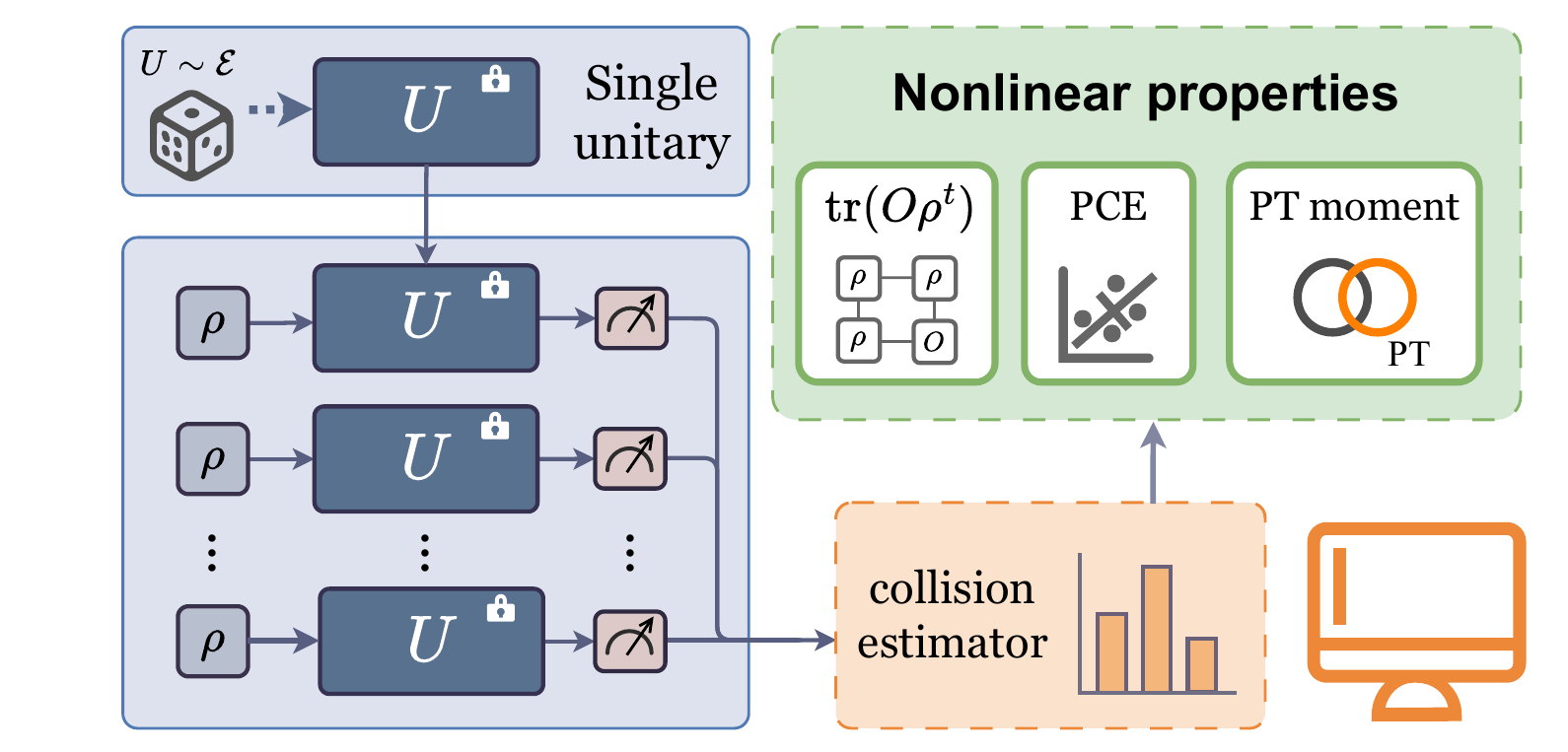}
\caption{Schematic illustration of the CBNE framework. 
Left: experimental stage. A fixed random unitary is repeatedly applied to sequentially prepared copies of $\rho$, followed by projective measurements in the computational basis. 
Right: classical postprocessing. Collision estimators constructed from the measurement outcomes enable the estimation of nonlinear functions of $\rho$, including $\Tr(O\rho^t)$, $\Tr(\rho^t)$, principal component properties, PT moments, etc.
}
\label{fig:scheme}
\end{figure}

In this work, we overcome this limitation by introducing a universal and observable-independent framework, termed \emph{collision-based nonlinear estimation} (CBNE), for measuring nonlinear functions such as $\Tr(O\rho^t)$, including $\Tr(\rho^t)$, from single-copy RMs. 
As illustrated in Fig.~\ref{fig:scheme}, CBNE requires only \emph{a single} fixed random unitary followed by projective measurements, provided that either the system dimension is sufficiently large or a small number, $\bigo{\log\epsilon^{-1}}$, of ancillary qubits are accessible. 
This substantially reduces the experimental overhead compared to existing approaches (see Table~\ref{tab:compare} in the End Matter), suggesting that the conventional reliance on many measurement bases is not intrinsic.
Inheriting the mindset of CSE, CBNE enables the estimation of multiple nonlinear functions from the same measurement data through simple and efficient linear postprocessing based on collision statistics.
Moreover, it achieves the best-known sample cost for general orders $t\geq 3$ among existing single-copy measurement protocols.
Beyond these, CBNE can be applied to learn principal component properties of quantum states, and admits a natural generalization to measuring PT moments, thereby enabling entanglement detection with a single measurement setting.

\emph{Collision-based estimation from RMs}---
We first present the basic version of the CBNE protocol, leaving its variants and extensions to later sections.
Consider an unknown $n$-qubit quantum state $\rho$ on a Hilbert space $\caH$ that can be prepared repeatedly. As shown in Fig.~\ref{fig:scheme}, the experimental stage of CBNE builds upon RMs \cite{elben2023randomized} and proceeds as follows.
We apply to $\rho$ a random unitary $U$ drawn from some ensemble $\mathcal E$, then measure the state in the computational basis $\{|b\>\}_{b=0}^{d-1}$,  where $d=2^n$ is the system dimension. 
This process is repeated for $N_U$ distinct random unitaries, with $N_M$ projective measurements performed for each unitary.

To illustrate the key idea of CBNE, let us first consider the estimation of state moments $p_t = \Tr(\rho^t)$.
Let $\mathbf{b}_U=\{\hat{b}_1,\dots,\hat{b}_{N_M}\}\in[d]^{N_M}$ be the list of measurement outcomes collected under a fixed random unitary \(U\).
In postprocessing, we calculate the following estimator by counting the number of \( k \)-wise collisions among the elements in \( \mathbf{b}_U \)  (for each $k=2,\dots,t$): 
\begin{align}\label{eq:collision_moment_prl}
	\hat M_k^U
	:=
	\frac{\kappa_k}{d\binom{N_M}{k}}
	\sum_{i_1<i_2<\cdots<i_k}
	\mathbf{1}\{\hat{b}_{i_1}=\cdots=\hat{b}_{i_k}\}, 
\end{align}
where $\textbf{1}\{Y\}=1$ if the event $Y$ is true and 0 otherwise.
As shown in Supplementary Material (SM) Sec.~\ref{sec:computational_cost}, $\hat M_k^U$ can be efficiently computed classically.
In addition, when the ensemble $\mathcal E$ forms a unitary $t$-design \cite{Mele23}, its expectation value (taken over the randomness in unitary selection and in quantum measurements) is given by (see SM Sec.~\ref{sec:momentCBNE}): 
\begin{align}\label{eq:zeta_def_prl}
\E \left[\hat M_k^U\right]\!=
\kappa_k \, \underset{U\sim \mathcal E}{\E} \left[{\Pr}_{\rho}(b|U)^k\right]
= \frac{1}{k!}\sum_{\pi\in \mathrm{S}_k}\!\Tr(\rho^{\otimes k}R_\pi) :=\zeta_k,
\end{align}
where $\Pr_{\rho}(b|U):=\<b|U\rho U^\dag|b\>$, $\underset{U\sim \mathcal E}{\E}[\cdot]$ denotes the average over random unitaries in $\mathcal E$, $\kappa_k=\binom{k+d-1}{k}$, 
$\mathrm{S}_k$ denotes the permutation group of order $k$, 
and $R_\pi$ is the permutation operator on $\caH^{\otimes k}$ 
associated with $\pi\in \mathrm{S}_k$.

In the summation of \eref{eq:zeta_def_prl}, each $\pi \in \mathrm{S}_k$ contributes according to its cycle structure, making $\zeta_k$ a polynomial in $\{p_j\}_{j \le k}$ \cite{PhysRevLett.108.110503}: 
\begin{align}\label{eq:zetaRelation}
\zeta_2= \frac{1}{2}+\frac{p_2}{2}, \quad  
\zeta_3= \frac{1}{6}+\frac{p_2}{3}+\frac{p_3}{2}, \quad  \dots 
\end{align} 
Let $\hat{\zeta}_k := \sum_{s=1}^{N_U} \hat{M}_k^{U_s}/N_U$ denote the estimator of $\zeta_k$ obtained by averaging over $N_U$ random unitaries.
Finally, by substituting these estimates into \eref{eq:zetaRelation}, we can sequentially solve for $p_2,\dots,p_t$.

We note that state moment estimation via the relation \eqref{eq:zeta_def_prl} was previously suggested by  Ref.~\cite{PhysRevLett.108.110503}.
However, our CBNE protocol is conceptually different:
Instead of estimating individual outcome probabilities $\Pr_{\rho}(b|U)$, 
we directly measure their power sums through collision statistics, thereby significantly reducing the resource cost; see SM Sec.~\ref{sec:comparison} for details.

\emph{From moments to more general nonlinear functions}---
The CBNE framework extends naturally from moments $p_t$ to $\Tr(O\rho^t)$. 
Let $O_0:=O-\Tr(O)\openone/d$ be the traceless part of $O$. 
Using the data $\mathbf{b}_U$, we introduce the following estimator by generalizing $\hat M_k^U$ in \eref{eq:collision_moment_prl}:  
\begin{align}\label{eq:collision_general_prl}
\hat{\Gamma}_k^U(O_0)
:=
\frac{\kappa_{k+1}}{d\binom{N_M}{k}}
\sum_{i_1<\cdots<i_k}\!
\mathbf{1}\{\hat{b}_{i_1}\!=\cdots=\hat{b}_{i_k}\}
\,\widetilde{\Pr}_{O_0}(\hat{b}_{i_1}|U).
\end{align}
Here $\widetilde{\Pr}_{O_0}(b|U)=\bra{b} U O_0 U^{\dag} \ket{b}$ represents the \emph{quasi}-outcome probability of (fictitious) computational-basis measurements on the \emph{pseudostate} $U O_0 U^{\dag}$, and can be computed classically.

When the ensemble $\mathcal E$ forms a unitary $(t+1)$-design,
$\hat{\Gamma}_k^{U} (O_0)$ is an unbiased estimator for (see SM Sec.~\ref{sec:CBNEmain}): 
\begin{align}\label{eq:xikO0def}
\E \left[ \hat{\Gamma}_k^U (O_0)\right] 
= \frac{\sum_{\pi\in \mathrm{S}_{k+1}}  \Tr\left[ \left(\rho^{\otimes k} \otimes O_0 \right)R_\pi\right] }{(k+1)! } 
:=\xi_k^{O_0}. 
\end{align}
By definition, $\xi_k^{O_0}$ can be expanded as a polynomial in 
$p_j$ and $\Tr(O_0\rho^i)$ with $1\leq i,j\leq k$:  
\begin{align}\label{eq:xiORelation}
\!\xi_1^{O_0}&=  \frac{\Tr(O_0\rho)}{2}, 
\quad \ 
\xi_2^{O_0}=  \frac{\Tr(O_0\rho)+\Tr(O_0\rho^2)}{3},
\nonumber\\ 
\!\xi_3^{O_0}&=  \frac{(1+p_2)\Tr(O_0\rho)}{8} + \frac{\Tr(O_0\rho^2)+\Tr(O_0\rho^3)}{4}, \ \, \dots
\end{align}
Inverting these relations using the estimated values of $\xi_i^{O_0}$ and $p_j$ yields $\Tr(O_0\rho^k)$ sequentially for $k=1,2,\dots,t$. 
Consequently, we obtain $\Tr(O\rho^k)= \Tr(O_0\rho^k)+ \Tr(O)p_k/d$. 

The complete CBNE protocol is summarized in Algorithm~\ref{alg:CBNEmain} in the End Matter. Its performance is guaranteed by the following theorem; see SM Sec.~\ref{sec:CBNEmain} for a proof.

\begin{theorem}\label{thm:MomOResult}
Suppose $t\geq 1$ is a constant integer and $O$ is an observable with bounded operator norm $\|O\|\le 1$. 
Let $O_0:=O-\Tr(O)\openone/d$ and $\mathfrak{B}= \max \left\lbrace \Tr(O_0^2),1\right\rbrace$. 
For any $0<\epsilon<1$, the CBNE protocol can return $\epsilon$-additive error estimates for all $\Tr(O\rho),\dots,\Tr(O\rho^t)$ and $p_2,\dots,p_t$ with high probability, provided that
\begin{align} 
N_U=\max\left\lbrace 1, c_1 \,\Xi_1 \right\rbrace, \ \  
N_M\geq c_2\,d \min\left\lbrace \Xi_1^{1/t}, 1 \right\rbrace, 
\end{align}
and the ensemble $\mathcal E$ forms a unitary $2(t+1)$-design. 
Here, $c_1,c_2>0$ are some constants independent of $\mathfrak{B},d$ and $\epsilon$, and $\Xi_1$ is a shorthand for $\Xi_1(d,\mathfrak{B},\epsilon):=\mathfrak{B}/(d\epsilon^2)$. 
State moment estimation is recovered by taking $O=\openone$.
\end{theorem}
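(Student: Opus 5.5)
The plan is to prove unbiasedness, bound the variance of each collision estimator, and then push the errors through the triangular inversion of \eref{eq:zetaRelation} and \eref{eq:xiORelation}.

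\emph{Unbiasedness.} Fix $U$. The outcomes $\hat b_i$ are i.i.d.\ with law $\Pr_\rho(b|U)$, so for any $k$-subset
\begin{align}
\E\left[\mathbf{1}\{\hat b_{i_1}=\cdots=\hat b_{i_k}\}\,\widetilde{\Pr}_{O_0}(\hat b_{i_1}|U)\right]=\sum_b \Pr_\rho(b|U)^k\,\widetilde{\Pr}_{O_0}(b|U)=\sum_b \Tr\left[(U^\dagger|b\rangle\langle b|U)^{\otimes (k+1)}(\rho^{\otimes k}\otimes O_0)\right].
\end{align}
If $\mathcal E$ is a $(k+1)$-design, $\E_U (U^\dagger|b\rangle\langle b|U)^{\otimes(k+1)}=\Pi_{\rm sym}/\kappa_{k+1}$, where $\Pi_{\rm sym}=\frac{1}{(k+1)!}\sum_\pi R_\pi$. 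Summing over the $d$ values of $b$ and dividing by the prefactor $\kappa_{k+1}/d$ gives \eref{eq:xikO0def}. The case $O_0\to$ identity gives \eref{eq:zeta_def_prl}.

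\emph{Inversion is stable.} Since $t$ is constant, the maps $\{\zeta_k,\xi_k^{O_0}\}\mapsto\{p_k,\Tr(O_0\rho^k)\}$ are triangular. The leading coefficients are constants: $1/k!\cdot(k-1)!=1/k$ for $p_k$, and $1/(k+1)$ for $\Tr(O_0\rho^k)$, coming from $(k+1)$-cycles and $k$-cycles containing the $O_0$ slot. The remaining coefficients are bounded because $|p_j|\le1$ and $|\Tr(O_0\rho^i)|\le\|O_0\|\le 2$. Hence $\epsilon/C_t$ accuracy in every $\hat\zeta_k$ and $\hat\xi_k$ yields $\epsilon$ accuracy in every target, and $\Tr(O)p_k/d$ adds at most $\|O\|$ times the $p_k$ error. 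It therefore suffices, via Chebyshev (or median-of-means plus a union bound over $2t$ quantities), to show $\Var[\hat\xi_k]\lesssim\epsilon^2$ for each $k\le t$.

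\emph{Variance bound (main obstacle).} By the law of total variance,
\begin{align}
\Var[\hat\xi_k]=\frac{1}{N_U}\left(\Var_U\left[\E[\hat\Gamma_k^U|U]\right]+\E_U\Var\left[\hat\Gamma_k^U|U\right]\right).
\end{align}

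For the first term I would compute $\E_U(\kappa_{k+1}/d)^2\big(\sum_b P_b^k Q_b\big)^2$, with $P_b=\Pr_\rho(b|U)$ and $Q_b=\widetilde{\Pr}_{O_0}(b|U)$, using the $2(k+1)$-design property (this is where $2(t+1)$ enters). The diagonal $b=b'$ part gives $\frac{\kappa_{k+1}^2}{d\,\kappa_{2k+2}}\sum_{\pi\in S_{2k+2}}\Tr[(\rho^{\otimes 2k}\otimes O_0^{\otimes2})R_\pi]/(2k+2)!$. Since $\kappa_{k+1}^2/\kappa_{2k+2}=O(1)$ for fixed $k$, and any permutation trace is bounded by $\Tr(O_0^2)$ or by $\|O_0\|^2\le4$, this part is $O(\mathfrak B/d)$.

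The off-diagonal $b\neq b'$ part requires Weingarten asymptotics: it must cancel against $(\xi_k^{O_0})^2$ up to $O(\mathfrak B/d)$. This cancellation is the delicate step. I would organize it by noting that the off-diagonal Weingarten term equals the squared mean times $(1+O(1/d))$ corrections plus terms containing $\Tr(O_0\,\cdot)$ pairings. Alternatively, since the design is only needed up to $2(t+1)$, I would compare with Haar and use Gaussian/Dirichlet representations of $P_b$ to obtain the $O(\mathfrak B/d)$ fluctuation directly. This yields $N_U\gtrsim \mathfrak B/(d\epsilon^2)=\Xi_1$.

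For the second term, the conditional variance of the U-statistic decomposes by overlap size $j$ of pairs of $k$-subsets. The $j$-overlap contribution is roughly
\begin{align}
\frac{\kappa_{k+1}^2}{d^2}\binom{N_M}{k}^{-1}N_M^{k-j}\sum_b P_b^{2k-j}Q_b^2 .
\end{align}
Averaging over $U$, $\E\sum_b P_b^{2k-j}Q_b^2\approx d\cdot d^{-(2k-j)}\Tr(O_0^2)/d\cdot O(1)$ when $\rho$ is pure, which is the worst case. The contribution thus scales as $\mathfrak B\,d^{j-1}/N_M^{j}\cdot d^{0}$, i.e.\ as $(\mathfrak B/d)(d/N_M)^{j}$ for $j=1,\dots,k$. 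Choosing $N_M\ge c_2 d$ makes every term $O(\mathfrak B/d)$, so $N_U=c_1\Xi_1$ suffices. This is the regime $\Xi_1\ge1$.

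When $\Xi_1<1$, set $N_U=1$. We then need $(\mathfrak B/d)(d/N_M)^j\le\epsilon^2$ for all $j\le t$, i.e.\ $(d/N_M)^j\le 1/\Xi_1$. The binding case is $j=t$, giving $N_M\ge d\,\Xi_1^{1/t}$. In the same regime the first (unitary) term, of order $\mathfrak B/d$, is $\le\epsilon^2$ automatically since $\Xi_1<1$.

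This produces exactly the claimed $N_U$ and $N_M$, with $c_1,c_2$ depending only on $t$. The bookkeeping of the Weingarten cancellation in the unitary-fluctuation term is where I expect most of the work to lie.
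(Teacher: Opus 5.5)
Your proposal is correct and follows essentially the same route as the paper. The common ingredients are: unbiasedness from $\E_U (U^\dagger|b\rangle\langle b|U)^{\otimes(k+1)}=\Pi_{\rm sym}^{(k+1)}/\kappa_{k+1}$; a variance bound organized by the overlap of the two index subsets, where your law-of-total-variance split is just a regrouping of the paper's zero-overlap versus positive-overlap terms; Weingarten asymptotics for the zero-overlap term, with every permutation trace bounded by $O(\mathfrak{B})$, so that it reproduces $(\xi_k^{O_0})^2$ up to $O(\mathfrak{B}/d)$; and then Chebyshev, a union bound, and the triangular inversion.

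One small slip: $\E_U Q_b^2\approx \Tr(O_0^2)/d^2$ rather than $\Tr(O_0^2)/d$, so $\E_U\sum_b P_b^{2k-j}Q_b^2=O(\mathfrak{B}\,d^{-(2k-j+1)})$, and this is what actually gives your (correct) final scaling $(\mathfrak{B}/d)(d/N_M)^j$.
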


Our CBNE protocol exhibits notable advantages in several aspects  
(see Table~\ref{tab:compare} in the End Matter and SM Sec.~\ref{sec:comparison} for comparisons with existing approaches).

First, from an experimental perspective, it requires only a single measurement setting ($N_U=1$) when the system dimension is sufficiently large, i.e., $d \ge c_1 \mathfrak{B}/\epsilon^2$, a condition that is generically satisfied for $O=\openone$ and for observables with bounded rank. 
If instead $d < c_1 \mathfrak{B}/\epsilon^2$, single-setting estimation can still be achieved via a slight modification of the basic framework:  
after preparing each $\rho$, we additionally introduce
\begin{align} 
n_a=\left\lceil \log_2 \left(\frac{c_1 \mathfrak{B}}{d\,\epsilon^2} \right)\right\rceil
\leq 2\log_2 (\epsilon^{-1}) +\bigo{1}
\end{align}
ancillary qubits (initialized in the trivial state $\rho_a=|0^{n_a}\>\<0^{n_a}|$) to increase the system size; 
the CBNE protocol is then executed within the extended system  \(\tilde{\rho} = \rho \otimes \rho_a\) and observable \(\tilde{O} = O\otimes \rho_a\), preserving the target value \(\Tr(\tilde{O} \tilde{\rho}^{\,t}) = \Tr(O \rho^t)\).  
This modification reduces the number of required measurement settings to $N_U=1$, while preserving the overall sample cost \cite{eff_dim}.

Second, the sample complexity of our protocol reads 
\begin{align} \label{eq:SampleCompCBNE}
N_{\text{tot}}=N_U N_M= \bigo{\max\bigg\lbrace \frac{d^{1-1/t}\mathfrak{B}^{1/t}}{\epsilon^{2/t}}, \frac{\mathfrak{B}}{\epsilon^2} \bigg\rbrace}, 
\end{align}
which is the best known result so far for order $t\geq 3$. 
In particular, in the high-precision regime $\epsilon^2 \le \mathfrak{B}/d$,
the cost becomes $N_{\text{tot}}=\bigo{\mathfrak{B}/\epsilon^2}$, which achieves the same scaling as the CSE method for estimating $\Tr(O\rho)$ \cite{HKPshadow20}, demonstrating that nonlinear functions can be estimated as efficiently as linear ones  in this regime.

Third, our protocol inherits and further generalizes two key advantages of CSE \cite{HKPshadow20}: 
(i) The experimental stage is independent of the target observable, enabling simultaneous estimation of multiple observables;  
(ii) As analyzed in SM Sec.~\ref{sec:computational_cost}, the classical postprocessing overhead of CBNE is minimal, reducing to simple collision counting with few or no matrix operations, and is even much lower than that of CSE.

Fourth, as shown in the End Matter, 
when the target observable $O$ has a low rank or equals the identity $\openone$, 
log-depth random circuits drawn from an approximate unitary design \cite{schuster2025random,cui2025unitary,cui2025unitary}
can be applied in place of the random unitaries used in CBNE, 
while retaining almost the same performance guarantees. 
This further drastically reduces the experimental overhead,  
making the protocol appealing to near-term quantum devices \cite{Preskill2018quantumcomputingin}.

\emph{Numerical results}---
We validate the performance of CBNE by numerically simulating the estimation of $p_3=\Tr(\rho^3)$. The target state $\rho$ is chosen as the ground state of the 1D \emph{transverse-field Ising model} (TFIM) with open boundary conditions, whose Hamiltonian has the form 
$H = -J\sum_{i=1}^{n-1} X_i X_{i+1} - h \sum_{i=1}^n Z_i$.
For computational feasibility, the unitary ensemble used in our simulation is generated via $2$-local brickwork circuits  composed of nearest-neighbor Haar-random gates, which are known to form an approximate design~\cite{Brandao2016LocalDesign,Haferkamp2022Interleaved}.

\begin{figure}
\centering
\includegraphics[width=0.45\textwidth]{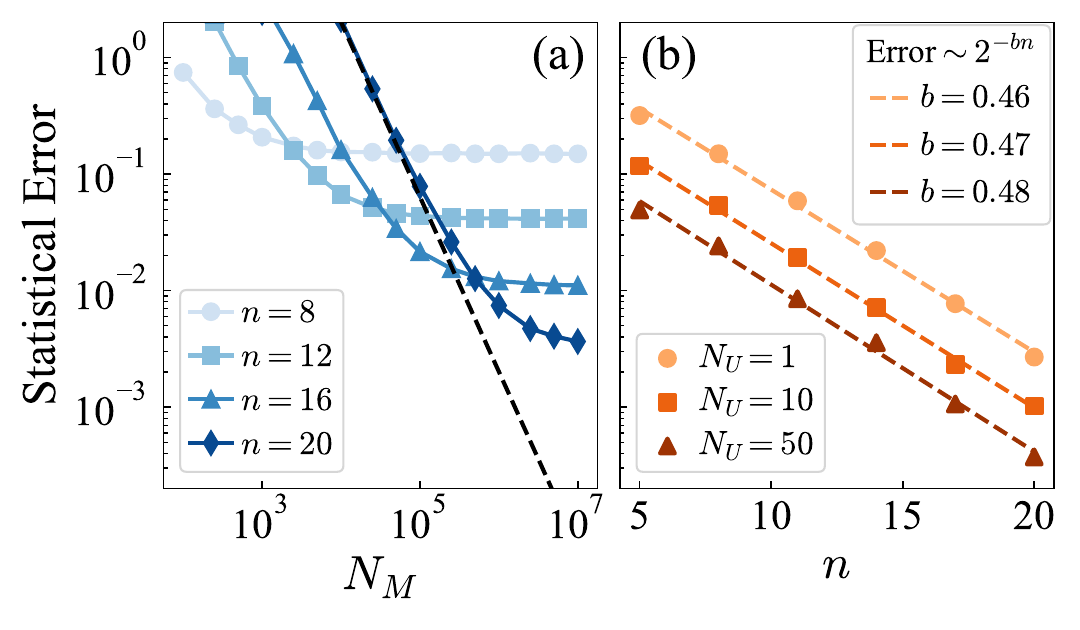}
\caption{Average statistical error in estimating $p_3$ via CBNE. Here $\rho=(1-p)|\mathrm{gs}\>\<\mathrm{gs}|+pI/d$, $p=0.2$, and $|\mathrm{gs}\>$ is the ground state of TFIM with $J=h=1.0$. 
(a) Error scaling versus \( N_M \) for \( N_U=1 \) across different system sizes \( n \); the dashed black line  represents the scaling $\propto N_M^{-3/2}$. 
(b) Error scaling versus \( n \) for fixed \( N_M=10^8 \) and different \( N_U \).}
\label{fig:SE_thirdM}
\end{figure}

We investigate the scaling behavior of the estimation error with respect to the system size $n$ and the number of measurements $N_M$ per unitary. 
Figure~\ref{fig:SE_thirdM}(a) reveals two distinct scaling regimes. 
In the small $N_M$ regime, the error is proportional to $N_M^{-3/2}$, and decreases rapidly with increasing $N_M$. 
This behavior is consistent with our theoretical result in \tref{thm:MomOResult}, which shows that the error in estimating $\Tr(O\rho^t)$ scales as $N_M^{-t/2}$ when $N_M\ll d$. 
In the large $N_M$ regime, the error saturates and instead 
decreases exponentially with $n$, as illustrated in Fig.~\ref{fig:SE_thirdM}(b). 
This exponential suppression agrees with the theoretical prediction in \tref{thm:MomOResult}, 
demonstrating that moment estimation can be achieved with a single measurement setting when the system size is large enough.

\emph{Principal component estimation}---  
To further showcase the versatility of CBNE, we apply it to the task of \emph{principal component estimation} (PCE), 
which is valuable for many applications such as error mitigation 
\cite{PhysRevX.11.041036,PhysRevX.11.031057,RevModPhys.95.045005,PRXQuantum.4.010303,o2023purification,PhysRevLett.133.080601}. 
In this task, we are given copies of an unknown  state $\rho$, 
and the goal is to estimate eigen-property $\bra{\psi}O\ket{\psi}$ of the principal component of $\rho$.  
Here $O$ is an arbitrary observable with $\|O\|\leq 1$, and $\ket{\psi}$ is the principal eigenstate of $\rho$ that corresponds to the largest eigenvalue $\lambda$. 
Following prior works~\cite{huang2022quantum,grier2024principal}, we assume that
$\lambda$ is larger than all other eigenvalues of $\rho$ by a constant that is independent of the system dimension $d$.

To address the PCE problem, recent near-term proposals \cite{PhysRevX.11.041036,PhysRevX.11.031057,PhysRevLett.133.080601} consider the estimation of $\Tr(O\rho^t)/p_t$, which converges to $\bra{\psi}O\ket{\psi}$ as $t$ increases.  
Our CBNE protocol directly estimates both $\Tr(O\rho^t)$ and $p_t$, and thus provides a natural approach to PCE. 
Its performance guarantees for this task are summarized below; see SM Sec.~\ref{sec:PCE} for a proof.

\begin{theorem}\label{thm:PrinEigen}
The CBNE protocol can estimate $\bra{\psi}O\ket{\psi}$ up to any constant additive error with high probability, 
provided that $t\geq c_1$, $N_U= \max\left\lbrace 1, c_2 \mathfrak{B}/d\right\rbrace$, $N_M\geq c_3 d$.
Here $\mathfrak{B}= \max \left\lbrace \Tr(O_0^2),1\right\rbrace$, and $c_1,c_2,c_3>0$ are some constants independent of $\mathfrak{B}$ and $d$. 
\end{theorem}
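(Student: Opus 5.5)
We reduce \tref{thm:PrinEigen} to \tref{thm:MomOResult} in two stages: a deterministic approximation bound showing that $\Tr(O\rho^t)/p_t$ is close to $\bra{\psi}O\ket{\psi}$ once $t$ exceeds a constant, and a statistical bound showing that both numerator and denominator can be estimated to a sufficiently small additive error.

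\emph{Deterministic bias.} Write $\rho=\lambda\ketbra{\psi}{\psi}+\sum_{i\ge2}\lambda_i\ketbra{\psi_i}{\psi_i}$ with $\lambda_i\le\lambda-\Delta$ for a constant gap $\Delta>0$. Then
\begin{align}
\frac{\Tr(O\rho^t)}{p_t}-\bra{\psi}O\ket{\psi}=\frac{\sum_{i\ge2}\lambda_i^t\big(\bra{\psi_i}O\ket{\psi_i}-\bra{\psi}O\ket{\psi}\big)}{\lambda^t+\sum_{i\ge2}\lambda_i^t}.
\end{align}
Its absolute value is at most $2\sum_{i\ge2}\lambda_i^t/\lambda^t$. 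Since $\sum_i\lambda_i\le 1$, we have $\sum_{i\ge2}\lambda_i^t\le(\lambda-\Delta)^{t-1}$. This bound is free of $d$, unlike a naive $d(\lambda-\Delta)^t$ bound.

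The gap also gives $\lambda\ge\Delta$, so the relative error is at most $2(1-\Delta/\lambda)^{t-1}/\lambda\le 2(1-\Delta)^{t-1}/\Delta$. This falls below $\epsilon_0/2$ for $t\ge c_1=\bigo{\Delta^{-1}\log(1/(\Delta\epsilon_0))}$, a constant. The same reasoning gives the denominator bound $p_t\ge\lambda^t\ge\Delta^t$, which is also a constant.

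\emph{Statistical error.} Apply \tref{thm:MomOResult} with this constant $t$ and an accuracy $\epsilon$ to be fixed, obtaining estimates $\hat a\approx\Tr(O\rho^t)$ and $\hat p\approx p_t$ with $|\hat a-\Tr(O\rho^t)|,|\hat p-p_t|\le\epsilon$ with high probability. If $\epsilon\le p_t/2$, then
\begin{align}
\left|\frac{\hat a}{\hat p}-\frac{\Tr(O\rho^t)}{p_t}\right|\le\frac{\epsilon}{\hat p}+\frac{|\Tr(O\rho^t)|\,\epsilon}{\hat p\,p_t}\le\frac{4\epsilon}{p_t},
\end{align}
using $|\Tr(O\rho^t)|\le p_t$ because $\|O\|\le1$. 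Choosing $\epsilon=\epsilon_0\Delta^t/8$, which is a constant, makes this error at most $\epsilon_0/2$. Combined with the bias bound, the total error is at most $\epsilon_0$.

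\emph{Resources.} With constant $\epsilon$, \tref{thm:MomOResult} gives $\Xi_1=\Theta(\mathfrak B/d)$. Hence $N_U=\max\{1,c_2\mathfrak B/d\}$. It also gives $N_M\ge c\,d\min\{\Xi_1^{1/t},1\}$, which is implied by $N_M\ge c_3 d$. The required design order $2(t+1)$ is again a constant.

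\emph{Main obstacle.} The delicate step is the bias bound. A naive count of the $d-1$ subleading eigenvalues would force $t\sim\log d$, so the argument must use $\sum_i\lambda_i\le1$ to keep $t$ constant. Everything else is bookkeeping of constants that depend only on $\Delta$ and the target accuracy.
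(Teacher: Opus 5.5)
Your proof is correct, and its overall architecture matches the paper's. Both arguments have three parts. First, a deterministic bound shows that $\Tr(O\rho^t)/p_t$ lies within $\epsilon_0/2$ of $\langle\psi|O|\psi\rangle$ once $t$ exceeds a constant. Second, a ratio-perturbation bound uses $p_t\ge\lambda^t\ge\mu^t$ (your $\Delta$ is the paper's $\mu$). Third, \tref{thm:MomOResult} is invoked at a constant accuracy of order $\mu^t\epsilon_0$. Your statistical step is essentially \lref{lem:PrincErrorB} and the proof of \tref{thm:PrinEigenFor}.

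The genuine difference is in the bias step.

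\textbf{Paper's route.} The paper avoids a $d$-dependent count of subleading eigenvalues through a structural detour.
\begin{itemize}
\item \lref{lem:PrincErrorALem1} uses the greedy grouping of Algorithm~\ref{alg:choose_ell} to write $\rho=\lambda|\psi\rangle\langle\psi|+\sum_{j=1}^m w_j\sigma_j$. The $\sigma_j$ have orthogonal supports, there are $m\le 3/\mu$ of them, and each weight satisfies $w_j\le\lambda-\mu/3$.
\item \lref{lem:PrincErrorA} then bounds the tail by $m\,[(\lambda-\mu/3)/\lambda]^t\le(6/\mu)(1-\mu/3)^t$.
\item It handles the unknown denominator by checking its two extreme values, $\lambda^t$ and $\lambda^t+\sum_j w_j^t$.
\end{itemize}

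\textbf{Your route.} You replace all of this with the single inequality $\sum_{i\ge2}\lambda_i^t\le\lambda_2^{t-1}\sum_{i\ge2}\lambda_i\le(\lambda-\mu)^{t-1}$, together with an exact expression for the difference $\Tr(O\rho^t)/p_t-\langle\psi|O|\psi\rangle$.

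\textbf{Comparison.} Your version needs no grouping construction and no case analysis on the denominator. It also decays slightly faster, with base $1-\mu$ instead of $1-\mu/3$. The resulting threshold $t=\mathcal{O}(\mu^{-1}\log(1/(\mu\epsilon_0)))$ is of the same order as the paper's. For this theorem the grouping lemma buys nothing extra, so your route is the more economical one.

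\textbf{Remark.} Your resource bookkeeping, $N_M\ge c\,d\,\Xi_1^{1/t}$ at constant accuracy, also yields the sharper formal condition $N_M\ge c_3 d^{1-1/t}\mathfrak{B}^{1/t}$ of \tref{thm:PrinEigenFor}, not just $N_M\ge c_3 d$.
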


Again, our protocol operates with very few measurement settings. 
In addition, the observable $O$'s information is not employed during the experimental stage.
With this constraint, the best previously known protocol for PCE has a sample complexity of $\mathcal{O}(d)$ for general $O$, but relies on collective measurements on multiple copies of $\rho$ \cite{grier2024principal}.  
In contrast, our protocol achieves this efficiency using only single-copy measurements.

\begin{figure}[t]
\centering
\includegraphics[width=0.45\textwidth]{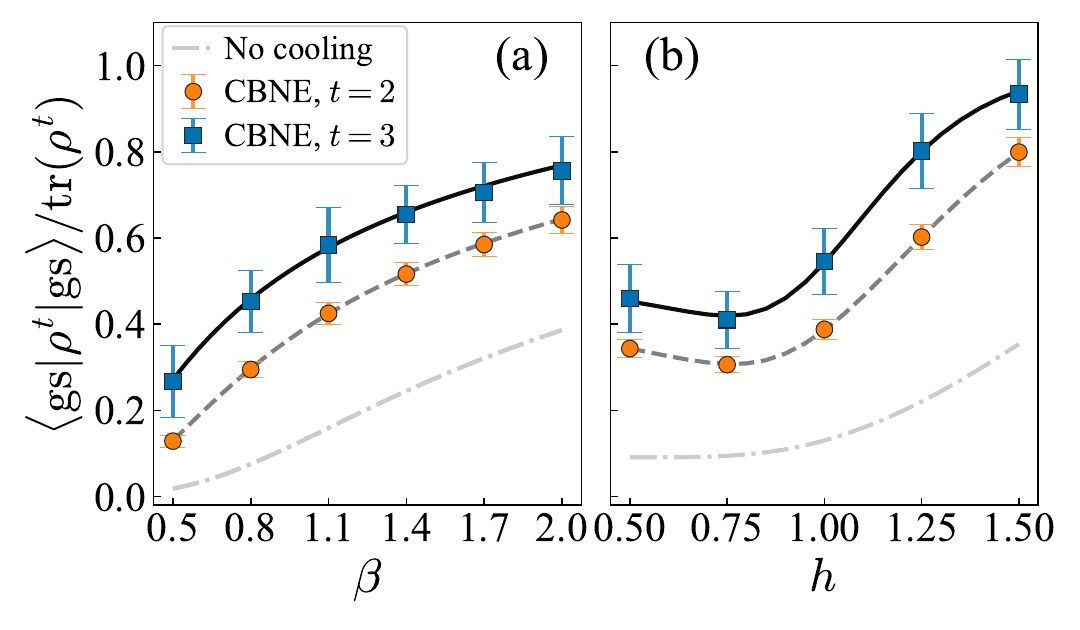}
\caption{Estimated fidelities between $|\mathrm{{gs}}\rangle$ and the virtually cooled state $\rho^t/\tr{\rho^t}$ as functions of the inverse temperature $\beta$ and coupling strength $h$. Here $\rho$ is the Gibbs state of the 14-qubit TFIM, and $|\mathrm{gs}\>$ denotes the ground state. The parameters are chosen as (a) $J=h=1.0$ and (b) $J=1.0$, $\beta=1.0$. Solid and dashed curves represent the exact fidelities. Each point represents the average fidelity estimated by CBNE with $N_U = 1$ and $N_M = 10^6$ over 100 independent trials, and error bars indicate the standard deviation.
}
\label{fig:VCooling}
\end{figure}

Next, as a proof-of-principle demonstration, we apply the CBNE protocol to QVC~\cite{PhysRevX.9.031013,du2025optimal}. This technique aims to characterize properties of an effective low-temperature state using measurements on a noisy thermal ensemble, without requiring physical cooling of the system. For a Gibbs state $\rho = \rme^{-\beta H}\!\!/\!\Tr(\rme^{-\beta H})$ with inverse temperature $\beta$, the virtually cooled state $\rho^{(t)}:=\rho^t/p_t \propto \rme^{-t\beta H}$
corresponds to a system at a lower effective temperature. Since CBNE serves as an effective tool for estimating the properties of $\rho^{(t)}$, it can be used naturally for QVC. We numerically demonstrate this using the TFIM Hamiltonian, focusing on measuring the fidelity between the virtually cooled state and the ground state. Remarkably, as illustrated in Fig.~\ref{fig:VCooling}, CBNE yields accurate estimates with only a single measurement setting, significantly simplifying experimental implementation.

\emph{Estimation of PT moments}---
Beyond estimating nonlinear properties $\Tr(O\rho^t)$, our CBNE framework can be extended to measure PT moments of quantum states.   
Consider a bipartition of the original $n$-qubit state $\rho$ into two subsystems $A$ and $B$, and assume without loss of generality that 
their dimensions $d_A \ge d_B$. 
The $t$-th PT moment is then defined as $p_t^{\PT}:=\Tr\left[(\rho_{AB}^{\top_{\! B}})^t\right]$, where $(\cdot)^{\top_{\! B}}$ denotes partial transpose with respect to $B$ \cite{PhysRevLett.121.150503,elben2020mixedstate,singlezhou}.

Here we outline the key features of our \emph{PT moment estimation} (PTME) protocol, with full details provided in the End Matter.
Unlike CBNE, which uses global unitaries acting on the entire system, 
PTME applies random unitaries only to subsystem $A$, making it more experimentally accessible. 
It then measures one part of \(A\) in the computational basis and performs Bell measurements on qubit pairs between the remaining part of \(A\) and subsystem \(B\), as illustrated in Fig.~\ref{fig:PTME}(a).
As before, $N_U$ and $N_M$ denote the number of distinct random unitaries used and the number of measurements per unitary, respectively. 

\begin{figure}[t]
\centering
\includegraphics[width=0.46\textwidth]{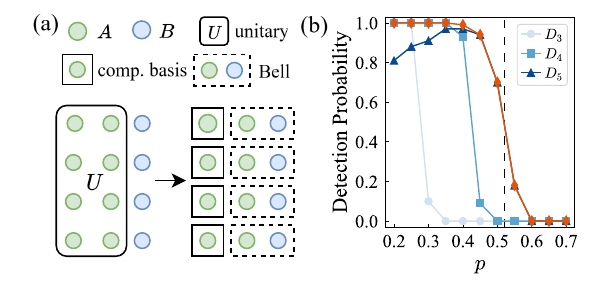}
\caption{(a) Illustration of the PTME protocol. A random unitary \(U\) is applied to subsystem \(A\), followed by computational-basis measurements on part of \(A\) and Bell measurements on paired qubits between \(B\) and the remaining part of \(A\).
(b) Entanglement-detection probability for 
\(\rho=(1-p)|\mathrm{gs}\rangle\langle\mathrm{gs}|+pI/d\), 
where \(|\mathrm{gs}\rangle\) is the ground state of a 20-qubit TFIM with \(J=h=1\). 
We take subsystem $A$ as the first 18 qubits and $B$ as the remaining 2 qubits. 
Each point is the fraction of 100 independent PTME trials in which the estimated witness detects entanglement, with \(N_U=1\) and \(N_M=10^8\).
Red diamonds indicate detection by at least one of \(D_3,D_4,D_5\), 
and the vertical dashed line marks the theoretical threshold at which at least one of \(D_3,D_4,D_5\) becomes positive.
}
\label{fig:PTME}
\end{figure}

\begin{theorem}\label{thm:PTResult}
Suppose $t\geq 2$ is a constant integer.
For any $0<\epsilon<1$, the PTME protocol can return $\epsilon$-additive error estimates for all 
$p_2^{\PT},\dots,p_t^{\PT}\!$ with high probability, provided that
$d_{\!A}\epsilon\geq c_1$ and 
\begin{align} 
N_U\!=\max\left\lbrace 1, c_2 \,\Xi_2 \right\rbrace, \ \  
N_M\!\geq c_3\,d_{\!A} d_B \min\left\lbrace \Xi_2^{1/t}, 1 \right\rbrace\!
\end{align}
for some constants $c_1,c_2,c_3>0$ independent of $d_A,d_B,\epsilon$. Here, $\Xi_2$ is a shorthand for $\Xi_2(d_A,d_B,\epsilon):=d_B/(d_{A}\epsilon^2)$. 
\end{theorem}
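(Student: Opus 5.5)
The plan is to reduce PTME to the CBNE machinery behind \tref{thm:MomOResult}. The key is an identity that turns the partial transpose into an ordinary moment under local randomization. Write $A=A_1A_2$, where $A_2$ has the same dimension $d_B$ as $B$ and is Bell-measured jointly with $B$, while $A_1$ (dimension $d_A/d_B$) is measured in the computational basis.

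\emph{Step 1 (the outcome model).} For a fixed $U$ on $A$, an outcome is a pair $(a,\beta)$, with $a\in[d_A/d_B]$ and $\beta$ a Bell index on $A_2B$. Its probability is
\begin{equation*}
\Pr(a,\beta|U)=\langle a,\Phi_\beta|(U\otimes\openone)\rho(U\otimes\openone)^\dagger|a,\Phi_\beta\rangle .
\end{equation*}
Using $|\Phi_\beta\rangle=(\openone\otimes W_\beta)|\Phi\rangle$ together with the transpose trick on the maximally entangled state, this equals $d_B^{-1}\langle a,\beta'|U\,\sigma_\beta\,U^\dagger|a,\beta'\rangle$ up to relabelling. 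Here $\sigma_\beta$ is obtained from $\rho^{\top_B}$ by a fixed local unitary on $B$ and a reshaping that identifies $B$ with $A_2$. This is why the partial transpose enters linearly in the outcome probabilities.

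\emph{Step 2 (the collision expectation).} I then define collision estimators $\hat M^{U,\PT}_k$ as in \eref{eq:collision_moment_prl}, with normalisation adapted to the number of outcomes $d_Ad_B$. Taking the average over a unitary $2t$-design on $A$ via Weingarten/Schur–Weyl on $\caH_A^{\otimes k}$, I would show that $\E[\hat M_k^{U,\PT}]$ is a combination of terms $\Tr[(\rho^{\top_B})^{\otimes k}(R^A_\pi\otimes R^B_{\pi^{-1}})]$, up to a fixed symmetrisation. These are exactly polynomials in $p_j^{\PT}$ for $j\le k$, in analogy with \eref{eq:zetaRelation}. The relation is triangular, with a nonzero coefficient on $p_k^{\PT}$, so $p_2^{\PT},\dots,p_t^{\PT}$ can be solved sequentially. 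The errors propagate with only constant blow-up since $t$ is constant.

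\emph{Step 3 (the variance).} I split the variance as $\Var=\E_U\Var(\hat M|U)+\Var_U\E(\hat M|U)$.

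The measurement part is a U-statistic in $N_M$ i.i.d. samples. By Hoeffding decomposition its variance is bounded by $\sum_{j=1}^k \binom{k}{j}^2 N_M^{-j}\,\E_U\sum_{x}\Pr(x|U)^{2k-j}$ times $\kappa_k^2/D^2$, where $D=d_Ad_B$. Applying the $2t$-design bounds for power sums of the outcome distribution gives a contribution of order $\sum_j (D/N_M)^{j}D^{-1}$ up to PT-norm factors. This is made $\le\epsilon^2/N_U$ by the stated $N_M\gtrsim D\min\{\Xi_2^{1/t},1\}$.

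The unitary part requires bounding $\Var_U\big[\sum_x\Pr(x|U)^k\big]$. This again follows from the $2t$-design moments, but now only the $A$ system is randomized. The leading fluctuation therefore scales like $d_B/d_A$ rather than $1/d$, which is the origin of $\Xi_2=d_B/(d_A\epsilon^2)$ and of $N_U=\max\{1,c_2\Xi_2\}$.

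Finally, Chebyshev combined with a median-of-means argument and a union bound over $k\le t$ gives high probability.

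\emph{Main obstacle.} I expect the hardest step to be the $U$-fluctuation bound. It requires controlling Weingarten sums over $S_{2k}$ acting only on $A$, contracted with the partial-transposed state, whose operator and trace norms can be as large as $d_B$. I would bound each permutation term by products of Schatten norms, using $\|\rho^{\top_B}\|_2\le1$ and $\|\rho^{\top_B}\|_1\le d_B$. I would then use the condition $d_A\epsilon\ge c_1$ to absorb the subleading Weingarten corrections of order $1/d_A$, so that they do not spoil the $\epsilon$-accuracy.
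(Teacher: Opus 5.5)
There is a genuine gap in Steps 1--2: the statistic you propose does not estimate PT moments at all.

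First, the identity in Step 1 is false. The transpose trick gives $|\Phi\rangle\langle\Phi|^{\top_B}=\mathbb{S}/d_B$, so
$\Pr(a,\beta|U)=d_B^{-1}\Tr\big[\rho^{\top_B}(U^\dagger\otimes I_B)(|a\rangle\langle a|\otimes\mathbb{S}^{(\beta)})(U\otimes I_B)\big]$, where $\mathbb{S}^{(\beta)}$ is a locally rotated swap on $A_2B$. This is not a diagonal entry $\langle a,\beta'|U\sigma_\beta U^\dagger|a,\beta'\rangle$ of any operator $\sigma_\beta$ on $A$. For a counterexample, take $d_{A_1}=1$ and $\rho=|\Phi^+\rangle\langle\Phi^+|$. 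Then $\Pr(\Phi^+|U)=|\Tr U|^2/d_B^2$, which differs at $U=I$ and $U=Z$, although these act identically, up to phase, on every computational-basis vector.

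More seriously, do the $A$-twirl of your unsigned collision statistic $\sum_{a,\beta}\Pr(a,\beta|U)^k$ correctly. The partial trace that appears is $\Tr_{A_2}\big[(d_B|\Phi_\beta\rangle\langle\Phi_\beta|)^{\otimes k}(R_\pi^\top)_{A_2}\big]=(R_\pi)_B$. So the leading ($\tau=\pi$) Weingarten term is $\Tr[\rho^{\otimes k}(R_\pi)_A\otimes(R_\pi)_B]=\Tr(\rho^{\otimes k}R_\pi)$. That is exactly your expression $\Tr[(\rho^{\top_B})^{\otimes k}(R^A_\pi\otimes R^B_{\pi^{-1}})]$, but the two partial transposes cancel. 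The result is a polynomial in the ordinary moments $p_j$, not in $p_j^{\PT}$.

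For $k=3$ your estimator would return $\zeta_3=\frac16+\frac{p_2}{3}+\frac{p_3}{2}$, i.e.\ $p_3$ rather than $p_3^{\PT}$. These differ precisely on the entangled states one cares about. PT moments enter only through $\tau\neq\pi$ cross terms, which are $1/d_A$-suppressed and mixed with everything else. So your variance analysis, whatever its merits, is attached to the wrong mean.

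The missing idea is the signed estimator of Algorithm~\ref{alg:PTME}. You must coarse-grain the $A_2B$ measurement to the swap parity $\hat r=\pm1$, i.e.\ the POVM $G_\pm=(I\pm\mathbb{S})/2$, realised as the parity of singlet outcomes. Each $A_1$-collision is then weighted by $\hat r_{i_1}\cdots\hat r_{i_k}$. The conditional mean becomes
$\sum_b\big[\Pr(\hat r=+1,\hat b=b|U)-\Pr(\hat r=-1,\hat b=b|U)\big]^k=\sum_b\Tr\big[\rho_U(|b\rangle\langle b|_{A_1}\otimes\mathbb{S}_{A_2,B})\big]^k$.
Since $\Tr_{A_2}[\mathbb{S}^{\otimes k}(M)_{A_2}]=M_B$ carries no transpose (unlike the rescaled Bell projector, which returns the transpose), the twirl produces $\sum_{\pi,\tau}\Wg^{(k,d_A)}_{\pi,\tau}(R_\tau)_A\otimes(R^\top_\pi)_B$. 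After the normalisation $d_A^k/(k!\,d_{A_1})$, the diagonal terms give $\zeta_k^{\PT}$. The off-diagonal terms give the $O(d_A^{-1})$ bias that forces $d_A\epsilon\ge c_1$.

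Once this estimator is in place, the rest of your outline is essentially the paper's route through Lemmas~\ref{lam:biasLambdaU}, \ref{lem:PTVariance}, and \ref{lem:PTMestTrans}. That outline comprises Weingarten control of the $2k$-fold terms, a unitary-fluctuation term of order $1/d_{A_1}=d_B/d_A$ that produces $\Xi_2$, a shot-noise term of order $(d_Ad_B/N_M)^k/d_{A_1}$, Chebyshev plus a union bound, and the triangular inversion. Note also that the theorem concerns this specific protocol. Replacing it with a full Bell measurement and unsigned collisions would not prove the statement even if it worked.
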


Notably, when subsystem $A$ is sufficiently larger than $B$, the PTME protocol again requires only $N_U=1$ measurement setting, greatly simplifying experimental implementation. 
If instead $d_A$ is small, similar to CBNE, single-setting estimation of PT moments can still be achieved by first introducing at most $2\log_2 (\epsilon^{-1}) +\bigo{1}$ ancillary qubits (initialized in a pure state $\rho_a$) to subsystem $A$, and then applying PTME to the extended state $\tilde{\rho} = \rho_{AB}\otimes \rho_a$. 
This modification preserves the target value $\Tr\left[(\tilde\rho^{\top_{\! B}})^t\right]=\Tr\left[(\rho_{AB}^{\top_{\! B}})^t\right]$ and does not increase the sample cost. 
Moreover, PTME achieves the best-known sample complexity for orders $t\geq 4$. 
A detailed comparison with existing approaches is provided in SM Sec.~\ref{sec:comparison}.

We demonstrate the utility of PTME for entanglement detection using the \(D_m\) witness hierarchy~\cite{Neven2021Entanglement}; see SM Sec.~\ref{sec:moreNumerical} for details. 
Each \(D_m\) is constructed from PT moments up to order \(m\), and higher orders provide stronger witnesses but are typically harder to estimate experimentally. 
PTME addresses this challenge by estimating several PT moments simultaneously from a single measurement setting. 
As shown in Fig.~\ref{fig:PTME}(b), combining the witnesses \(D_3,D_4,D_5\) leverages the statistical robustness of lower orders and the detection power of higher orders, thereby improving the overall detection probability and extending the detectable range of entangled states.

\emph{Conclusion}---
We developed the CBNE framework for the efficient and practical estimation of nonlinear properties of quantum states with a single measurement setting. 
Beyond its versatility for a wide range of quantum information processing tasks, CBNE suggests that switching among multiple measurement bases is not fundamentally necessary for accessing a broad class of nonlinear state properties. 
This observation opens intriguing directions for quantum learning and foundational studies.

Several directions merit further investigation. 
First, extending our framework to more general nonlinear functions like $\Tr(O\rho^{\otimes t})$ and $\Tr(O_1\rho^{t_1}\cdots O_m\rho^{t_m})$ can broaden its applicability to quantum information and many-body physics \cite{PRXQuantum.5.030338,PhysRevLett.127.260501,zhou2024hybrid}; see SM Sec.~\ref{sec:OtherNonlinear} for a preliminary exploration.  
Second, it would be valuable to explore hybrid schemes that combine CBNE with limited collective operations \cite{PRXQuantum.6.010336,chen2024optimal}, potentially further improving sample efficiency. 
Finally, investigating fundamental lower bounds on the sample cost of nonlinear estimation \cite{anshu2022distributed,chen2022exponential,gong2024sample} would help assess the optimality of our framework.

\section*{Acknowledgments}
We thank Huangjun Zhu, Zhenhuan Liu, Changhao Yi, and Yanglin Hu
for inspiring  discussions. 
This work is supported by
the National Natural Science Foundation of China via the Excellent Young Scientists Fund (Hong Kong and Macau) Project 12322516, the Hong Kong Grant Council (RGC) through the National Natural Science Foundation of China (NSFC)/Research Grants Council (RGC) Joint Research Scheme project N\_HKU7107/24 and General Research Fund (GRF) project 17305625. Y.Z. acknowledges the support from the National Natural Science Foundation of China (NSFC) Grant No.~12575012, the Quantum Science and Technology-National Science and Technology Major Project Grant Nos.~2024ZD0301900 and 2021ZD0302000, the Shanghai QiYuan Innovation Foundation, the Shanghai Municipal Commission of Science and Technology with Grant No.~25511103200, the Shanghai Science and Technology Innovation Action Plan Grant No.~24LZ1400200, the Shanghai Pilot Program for Basic Research-Fudan University 21TQ1400100 (25TQ003), the CCF-Quantum CTek Superconducting Quantum Computing CCF-QC2025006.

\onecolumngrid
\section*{End Matter}

\twocolumngrid

\begin{figure}[b]
\begin{algorithm}[H]
{\small
\hspace{1pt}\textbf{Input:}  \!$N_U N_M$ prepared state $\rho$, observable $O$, integer $t\geq 1$.   \\

\begin{algorithmic}[1]
\caption{{\small CBNE for estimating $\left\{p_k, \Tr(O\rho^k)\right\}_{k=1}^t$}}
\label{alg:CBNEmain}

\For{$s=1,2,\dots,N_U$,}
\State{Randomly choose a unitary $U_s$ from an ensemble $\mathcal E$ and record it. }
\For{$j=1,2,\dots,N_M$,}
\State{Apply $U_s$ to $\rho$, obtaining the rotated state $U_s\rho U_s^\dagger$.} 
\State{Measure the rotated state in the computational basis, obtaining outcome $\hat{b}_j\in\{0,\dots,d-1\}$.}
\EndFor 
\State{Compute the collision estimators $\hat M_k^{U_{\!s}}$ and $\hat{\Gamma}_k^{U_{\!s}} (O_0)$ [defined in Eqs.~\eqref{eq:collision_moment_prl} and \eqref{eq:collision_general_prl}] for all $k=1,\dots,t$ using the measurement data $\mathbf{b}_{U_{\!s}}=\{\hat{b}_1,\dots,\hat{b}_{N_M}\}$.}
\EndFor 
\State{For $k=1,\dots,t$, compute the averages
\begin{align}
\quad  
\hat\zeta_k := \frac{1}{N_U} \sum_{s=1}^{N_U} 
\hat{M}_k^{U_s}
\ \  \text{and}\ \ 
\hat\xi_k^{O_0}:= \frac{1}{N_U} \sum_{s=1}^{N_U} \hat{\Gamma}_k^{U_{\!s}}(O_0)
\end{align}
as estimates for $\zeta_k$ and $\xi_k^{O_0}$, respectively.}

\State{Substitute $\hat\zeta_2,\dots,\hat\zeta_t$ into \eref{eq:zetaRelation}, then solve the equations to obtain estimates $\hat{p}_k$ for $p_k$ ($k=2,\dots,t$).}

\State{Substitute $\hat{p}_2,\dots,\hat{p}_t$ and $\hat{\xi}_1^{O_0},\dots,\hat{\xi}_t^{O_0}$ into  \eref{eq:xiORelation}, then solve the equations to obtain estimates $\hat{o}_k^{(0)}$ for $\Tr(O_0\rho^k)$ ($k=1,\dots,t$).}

\State{Let $\hat{p}_1\!=1$; compute $\hat{o}_k \!:= \hat{o}_k^{(0)}+\Tr(O)\hat{p}_k/d$ for $k=\!1,\dots,t$.} 

\end{algorithmic}
\hspace{-105pt} \textbf{Output:} $\hat{p}_2, \dots, \hat{p}_t$ and $\hat{o}_1, \dots, \hat{o}_t$. 
}
\end{algorithm}
\vspace{-7pt}
\end{figure}

\emph{Comparison with previous works}--- 
In Table~\ref{tab:compare}, we compare our CBNE and PTME protocols with existing approaches for estimating nonlinear properties of quantum states using single-copy operations. 
In particular, we focus on the number of required measurement settings \(N_U\), which is the key experimental resource considered in this work. 
Further details and comparisons on other aspects are provided in SM Sec.~\ref{sec:comparison}.

\emph{Complete algorithm of the CBNE protocol}--- 
We summarize the full CBNE protocol in Algorithm~\ref{alg:CBNEmain} below. 

\begin{table*}
\centering
\caption{Comparison of the number of measurement settings \(N_U\) required by various protocols for estimating nonlinear properties with single-copy operations
(see SM Sec.~\ref{sec:comparison} for details). 
Here, \(n\) is the qubit number of the target state \(\rho\), \(d=2^n\) is the system dimension, 
\(d_A\) and \(d_B\) are subsystem dimensions, 
\(\epsilon\) is the target additive error, 
\(\mathfrak{B}=\max\{\Tr(O_0^2),1\}\), and ``dichotomic'' means Hermitian with eigenvalues \(\pm1\). 
Previous purity-estimation protocols and the ORM protocol for dichotomic observables can also achieve single-setting estimation in the large-\(d\) regime or using a few ancillary qubits, but only for order \(t=2\).
\label{tab:compare}
}
\begin{math}
\begin{array}{c|cc}
\hline\hline
\text{Protocol} & N_U & \text{Conditions/Constraints} \\[0.3ex]
\hline
\text{CBNE for $\Tr(O\rho^t)$ (this work)} 
& 1 
& d \gtrsim \mathfrak{B} \epsilon^{-2} \ \text{or}\ \sim 2\log_2(\epsilon^{-1}) \ \text{anc.\! qubits} \\[0.1ex]

\text{PTME for $p_t^{\PT}$ (this work)} 
& 1 
& d_A \gtrsim d_B \epsilon^{-2} \ \text{or}\ \sim 2\log_2(\epsilon^{-1}) \ \text{anc.\! qubits} \\[0.5ex]

\hline

\text{Purity estimation
\cite{PhysRevA.99.052323,brydges2019probing,PhysRevLett.124.010504,anshu2022distributed}
} 
& \bigo{\max\lbrace 1, 1/(\epsilon^2 d) \rbrace}
& \text{restricted to order} \ t=2 \\[0.3ex]

\text{ORM protocol for $\Tr(O\rho^2)$
(dichotomic $O$) \cite{du2025optimal} 
} 
& \bigo{\max\lbrace 1, 1/(\epsilon^2 d) \rbrace }
& \text{restricted to $t=2$; observable-dependent} \\[0.3ex]

\text{ORM protocol for $\Tr(O\rho^2)$
(general $O$) \cite{du2025optimal} 
} 
& \bigo{\max\lbrace 1, 1/(\epsilon^2 d) \rbrace
\log(\epsilon^{-1}) }
& \text{restricted to $t=2$; observable-dependent} \\[0.3ex]

\text{BRM protocol for $\Tr(O\rho^2)$ 
\cite{du2025optimal}
} 
& \mathcal{O}(\epsilon^{-2})
& \text{restricted to $t=2$} \\

\text{ZZL protocol for $p^{\mathrm{PT}}_3$ \cite{singlezhou}} 
& \mathcal{O}(\epsilon^{-2}) 
& \text{only $t=3$; global $U$ on $A\cup B$ or $d_A=d_B$} \\

\text{Standard CSE 
\cite{elben2020mixedstate,PhysRevLett.127.260501,HKPshadow20,elben2023randomized,PRXQuantum.4.010303}
} 
& \ \text{large; can be exponential in $n$}\ 
& \text{---} \\
\text{Variant CSE \cite{PhysRevX.13.011049,PhysRevLett.131.160601}} 
& 1 
& \bigo{n+\log(\epsilon^{-1})} \ \text{anc.\! qubits} \\[0.5ex]

\hline\hline
\end{array}
\end{math}
\vspace{-2pt}
\end{table*}

\emph{CBNE with an approximate unitary design ensemble}---
The CBNE protocol introduced in the main text employs random unitaries drawn from an exact $2(t+1)$-design ensemble $\mathcal{E}$, which generally requires circuit depths that scale polynomially with the number of qubits. 
This raises the natural question of whether \emph{approximate unitary designs} can be used instead, as such ensembles are much more experimentally feasible: Recent works~\cite{schuster2025random,Laracuente2024approximate} have shown that for constant $k$, an $\mu$-approximate unitary $k$-design on $n$ qubits can be implemented with circuit depth $\mathcal{O}(\log(n/\mu))$ without using ancillary qubits. The depth can be further reduced to $\mathcal{O}(\log\log(n/\mu))$ when ancillary qubits are available~\cite{cui2025unitary}. A more detailed introduction
to approximate unitary designs is given in SM Sec.~\ref{sec:Prelimi}.

Here, we affirmatively answer this question. First, for the estimation of quantum state moments, 
the CBNE protocol using approximate designs maintains the performance  stated in Theorem~\ref{thm:MomOResult}
(see SM Sec.~\ref{sec:momentCBNE} for proof):

\begin{prop}\label{prop:appCBNEmoment}
Suppose $t>1$ is a constant integer. For any $0<\epsilon<1$, 
the CBNE protocol can return $\epsilon$-additive error estimates for all 
$p_2,p_3,\dots,p_t$ with high probability, provided that: $(i)$ the random unitaries are sampled from a $(c_3\epsilon^2)$-approximate $2t$-design ensemble, and 
\begin{align}\label{eq:MomResultApp}
(ii) \, N_U= \max\left\lbrace 1, \dfrac{c_1}{d\epsilon^2} \right\rbrace ,  
\ 
(iii) \, N_M\geq  c_2  \min\left\lbrace \frac{d^{1-1/t}}{\epsilon^{2/t}}, d \right\rbrace.     
\end{align}
Here $c_1,c_2,c_3>0$ are constants  independent of $d$ and $\epsilon$. 
\end{prop}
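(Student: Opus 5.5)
The proof reduces to the exact-design analysis behind \tref{thm:MomOResult} (with $O=\openone$), plus control of the two places where approximate designs enter: the bias of $\hat M_k^U$ and its variance.

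\textbf{Step 1: bias.} For fixed $U$, $\hat M_k^U$ is a $U$-statistic over i.i.d.\ outcomes. Its conditional mean is therefore
\begin{align*}
\E[\hat M_k^U\,|\,U]=\frac{\kappa_k}{d}\sum_b \Pr_\rho(b|U)^k=\frac{\kappa_k}{d}\sum_b\Tr\big[(U\rho U^\dagger)^{\otimes k}|b\rangle\langle b|^{\otimes k}\big].
\end{align*}
Averaging over $U$ is a linear functional of the $k$-th moment operator of $\mathcal E$. I will use the multiplicative (relative-error) definition of a $\mu$-approximate design, which is what the shallow-circuit constructions cited in the End Matter provide. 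Positivity of $\rho^{\otimes k}$ and of $|b\rangle\langle b|^{\otimes k}$ then gives
\begin{align*}
\big|\E[\hat M_k^U]-\zeta_k\big|\le\mu\,\zeta_k\le\mu.
\end{align*}
Since $2t\ge k$, a $2t$-design is in particular a $k$-design for every $k\le t$.

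\textbf{Step 2: variance.} I decompose
\begin{align*}
\Var(\hat M_k^U)=\E_U\big[\Var(\hat M_k^U|U)\big]+\Var_U\big(\E[\hat M_k^U|U]\big).
\end{align*}

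For the first term, the Hoeffding decomposition of the $U$-statistic gives terms of the form
\begin{align*}
\frac{\kappa_k^2}{d^2}\binom{N_M}{k}^{-1}\binom{k}{j}\binom{N_M-k}{k-j}\sum_b \Pr(b|U)^{2k-j}, \qquad j=1,\dots,k.
\end{align*}
These involve only moments of order at most $2k-1\le 2t$. Under an exact design they are bounded by $\zeta$-type quantities, and these are $O(1)$ uniformly in $\rho$. Under a relative-error approximate design the same bounds hold up to a factor $(1+\mu)$, again by positivity. This reproduces the exact-design estimate
\begin{align*}
\E_U\big[\Var(\hat M_k^U|U)\big]\lesssim \sum_{j=1}^k \frac{d^{\,j-1}}{N_M^{\,j}}\cdot\frac{1}{d^{\,j-1}}\text{-type terms},
\end{align*}
which is dominated by $d^{t-1}/N_M^t$ once $N_M\lesssim d$, and by $1/N_M$ otherwise.

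The second term, the fluctuation over $U$, is a $2k$-th moment quantity. It equals $O(1/d)$ for an exact $2k$-design, from the standard Weingarten calculation already used for \tref{thm:MomOResult}. The approximate-design correction is an additive $\mu\,\zeta_k^2$-type term. This is where $\mu=c_3\epsilon^2$ is needed, so that the correction is at most $\epsilon^2$, comparable to the target variance.

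\textbf{Step 3: concentration and inversion.} Averaging over $N_U$ unitaries gives
\begin{align*}
\Var(\hat\zeta_k)\lesssim\frac{1}{N_U}\Big(\frac{1}{d}+\frac{d^{t-1}}{N_M^{t}}+\frac{1}{N_M}\Big)+\mu .
\end{align*}
With the choices in (ii) and (iii), this is at most a small constant times $\epsilon^2$. Chebyshev's inequality then yields constant success probability, and a median-of-means boost raises it to high probability with only constant overhead. Combined with the $O(\epsilon^2)$ bias from Step 1, each $\hat\zeta_k$ is $O(\epsilon)$-accurate.

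The relations \eref{eq:zetaRelation} form a triangular linear system: the coefficient of $p_k$ in $\zeta_k$ is $1/k$, and the other coefficients are constants depending only on $t$. Solving sequentially therefore amplifies errors by at most a $t$-dependent constant, which is absorbed into $c_1,c_2,c_3$.

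\textbf{Main obstacle.} The hard step is the second variance term: showing that the approximate-design error enters the $2k$-th moment fluctuation only as an additive $O(\mu)$ and is not amplified by the factor $\kappa_k^2/d\sim d^{2k-1}$. I will first try to write $\Var_U(\E[\hat M_k^U|U])$ entirely as a combination of positive operators contracted against the $2k$-fold moment operator, so that the relative-error bound applies term by term. If cancellations prevent this, i.e.\ the variance is a difference of two positive $O(1)$ terms, the relative bound still gives $O(\mu)$ error for each term separately. That suffices, because the target is only $\epsilon^2$, not $1/d$. This is exactly why the proposition requires $\mu\propto\epsilon^2$.
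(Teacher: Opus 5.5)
Your argument is correct and is essentially the paper's. The bias bound $|\E[\hat M_k^U]-\zeta_k|\le\mu\zeta_k$ and the variance bound both rest on the same positivity trick: every collision term is a positive contraction, so it grows by at most a factor $(1+\mu)$, while $\E[\hat M_k^U]^2\ge(1-\mu)^2\zeta_k^2$. This leaves an additive $O(\mu)$ variance term that $\mu\propto\epsilon^2$ neutralizes in Chebyshev, and your split into $\E_U[\Var(\hat M_k^U|U)]+\Var_U(\E[\hat M_k^U|U])$ is just a regrouping of the paper's overlap-class expansion of $\E[(\hat M_k^U)^2]$.

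Two minor slips. First, the relations \eqref{eq:zetaRelation} are triangular but not linear (e.g.\ $p_2^2$ appears in $\zeta_4$), so bounded error amplification follows from $|p_j|\le1$ rather than from linearity. Second, no median-of-means boost is needed, and it would in fact force $N_U>1$: for a constant failure probability $\delta$, Chebyshev alone suffices, with $c_1,c_2,c_3$ depending on $\delta$.
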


Second, for estimating $\Tr(O\rho^t)$ with approximate unitary design, we introduce a slight modification to CBNE protocol's postprocessing. 
Instead of $\hat{\Gamma}_k^U(O_0)$ in \eref{eq:collision_general_prl}, we now calculate $\hat{\Gamma}_k^U(O)$ and use $\hat\xi_k^{O}=\sum_{s=1}^{N_U} \hat{\Gamma}_k^{U_{\!s}}(O)/N_U$ as the estimate for $\xi_k^{O}$ [defined in \eref{eq:xikO0def} with $O_0 \mapsto O$]. This variant is technically convenient for controlling the bias induced by approximate designs. 
Since \(\xi^O_k\) can be expanded as a polynomial in \(p_j\) and \(\Tr(O\rho^i)\) for \(1\le i,j\le k\), the quantities \(\Tr(O\rho^k)\) can again be recovered sequentially from the estimated \(p_k\) and \(\xi^O_k\).

As proved in SM Sec.~\ref{sec:CBNEmain}, the performance of this variant CBNE protocol is guaranteed by the following theorem. Note that we assume without loss of generality that the observable $O$ is positive semidefinite ($O\geq0$); if not, we decompose $O$ as $O = O_+ - O_-$ with $O_\pm \geq0$ and estimate $\Tr(O_+\rho^t)$ and $\Tr(O_-\rho^t)$ separately. 

\begin{prop}\label{prop:appCBNE}
Suppose $t\geq1$ is a constant integer and $O$ is an observable with $O\geq 0$ and $\Tr(O)\geq1$. 
For any $0<\epsilon<1$, the variant CBNE protocol can return $\epsilon$-error estimates for all 
$\big\{p_k,\Tr(O\rho^k)\big\}_{k=1}^t$ with high probability, provided that: 
$(i)$ the random unitaries are sampled from a $[c_3\epsilon^2\!/\!\Tr(O)^2]$-approximate $2(t+1)$-design ensemble, and 
\begin{align}
{(ii)} \ N_U &= \max\left\lbrace 1, \dfrac{c_1 \Tr(O)^2}{d\epsilon^2} \right\rbrace, 
\\
{(iii)} \ N_M &\geq  c_2  \min\left\{ \frac{d^{1-1/t}\Tr(O)^{2/t}}{\epsilon^{2/t}}, d \right\}.  
\end{align}
Here $c_1,c_2,c_3>0$ are constants  independent of $O,d,\epsilon$. 
\end{prop}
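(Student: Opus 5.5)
We will prove Proposition~\ref{prop:appCBNE} by a bias--variance decomposition of each averaged estimator $\hat\xi_k^{O}$ (and $\hat\zeta_k$), followed by a stability analysis of the triangular inversion of the relations \eqref{eq:zetaRelation} and \eqref{eq:xiORelation}.

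\textbf{Conditional expectation and bias.} Write $\mathcal E_\mu$ for the $\mu$-approximate $2(t+1)$-design, with $\mu=c_3\epsilon^2/\Tr(O)^2$, and $\mathcal E_H$ for the Haar ensemble. Since the $k$ outcomes in a collision tuple are i.i.d.\ given $U$, the conditional expectation of $\hat\Gamma_k^U(O)$ is
\begin{equation*}
\frac{\kappa_{k+1}}{d}\sum_b \Pr_\rho(b|U)^k\,\langle b|UOU^\dagger|b\rangle
=\frac{\kappa_{k+1}}{d}\sum_b\Tr\!\left[(U^\dagger|b\rangle\langle b|U)^{\otimes(k+1)}(\rho^{\otimes k}\otimes O)\right].
\end{equation*}
Averaging over $U$ turns this into $\Tr[(\Phi_{\mathcal E}^{(k+1)}-\Phi_H^{(k+1)})(\sum_b|b\rangle\langle b|^{\otimes (k+1)})(\rho^{\otimes k}\otimes O)]$, up to the Haar value $\xi_k^O$. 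We will use the relative-error (or diamond-norm) definition of approximate designs from SM Sec.~\ref{sec:Prelimi}, together with $O\ge0$ and the positivity of $\sum_b |b\rangle\langle b|^{\otimes(k+1)}$, to bound the bias by $\mu\cdot\kappa_{k+1}d^{-1}\,\E_H[\sum_b \Pr_\rho(b|U)^k\widetilde{\Pr}_O(b|U)]\approx \mu\,\xi_k^O$. This is exactly why the variant uses $O$ rather than $O_0$: positivity is needed here. Then $\xi_k^O\le \Tr(O)\cdot\mathcal O(1)$, so the bias is $\mathcal O(\mu\Tr(O))\le\epsilon/\Tr(O)\le\epsilon$. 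For the pure moment estimators $\hat M_k^U$ we simply invoke Proposition~\ref{prop:appCBNEmoment}, since $\mu\le c_3\epsilon^2$ there.

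\textbf{Variance.} We decompose $\Var(\hat\Gamma_k^U)$ by the law of total variance into a unitary part $\Var_U(\E[\hat\Gamma|U])$ and a shot part $\E_U\Var(\hat\Gamma|U)$.

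For the unitary part, the second moment involves $2(k+1)\le 2(t+1)$ copies of $U$. We compute it under Haar via Weingarten/permutation sums, giving $\mathcal O(\Tr(O)^2/d)$ after using $\Tr(O^2)\le\Tr(O)^2$ and $\|O\|\le1$. The approximate design inflates this by at most a factor $(1+\mu)$. Averaging over $N_U$ unitaries therefore gives error $\epsilon$ once $N_U\gtrsim\Tr(O)^2/(d\epsilon^2)$.

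For the shot part, we use the standard U-statistic variance decomposition $\sum_{j=1}^k \binom{k}{j}^2\binom{N_M}{k}^{-1}\binom{N_M-k}{k-j}\sigma_j^2$, where $\sigma_j^2$ is controlled by moments $\sum_b \Pr(b|U)^{2k-j}\widetilde{\Pr}_O(b|U)^2$. Under the design these have expectation $\lesssim d^{-(2k-j)}\Tr(O)^2$. This yields the scaling
\begin{equation*}
\mathcal O\!\left(\Tr(O)^2\sum_{j=1}^{k} \frac{d^{j-1}}{N_M^{j}}\right),
\end{equation*}
and these are again $2(t+1)$-design quantities, up to a factor $1+\mu$. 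When $N_M\gtrsim d$ the $j=1$ term dominates, and combined with $N_U$ it gives error $\epsilon$. When $N_M\ll d$ the $j=k=t$ term dominates, giving $d^{t-1}\Tr(O)^2/N_M^t\le\epsilon^2 N_U\cdot(\ldots)$. This matches condition (iii) with $N_U=1$ once $d\gtrsim\Tr(O)^2/\epsilon^2$. Chebyshev's inequality plus a median-of-means step and a union bound over $k\le t$ then give high probability.

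\textbf{Inversion and main obstacle.} $\xi_k^O$ equals $\frac{k!}{(k+1)!}\Tr(O\rho^k)$ plus a constant-coefficient polynomial in lower-order quantities and $p_j$. Each of these is bounded, since $|\Tr(O\rho^i)|\le\Tr(O)$, but the $\Tr(O)$ factors threaten to amplify the error in $p_j$. To handle this, we estimate $p_j$ to accuracy $\epsilon/\Tr(O)$; the stated $N_U$ and $N_M$ already afford this, since they scale as $\Tr(O)^2/\epsilon^2$. Induction on $k$ with constant-size coefficients then keeps the total error $\mathcal O(\epsilon)$. The main obstacle I expect is the bias step: converting an approximate-design guarantee into a multiplicative bound on a non-positive-looking moment. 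Here I would first try the relative-error definition applied to the positive operator $\rho^{\otimes k}\otimes O$. If only additive (diamond) closeness is available, I would instead pay a factor $d^{\mathcal O(t)}$, and the statement would fail, so relative error must be the operative notion.
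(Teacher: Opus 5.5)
Your route is essentially the paper's. You bound the bias by $\mu\,\xi_k^{O}\le\mu\Tr(O)$ using the CP-order definition \eqref{eq:appdesignDef} and the positivity of $\rho^{\otimes k}\otimes O$ (Lemma~\ref{lam:biasGammakUapp}). You use overlap counting for the Haar variance, then Chebyshev with a union bound. You estimate $p_j$ to accuracy $\epsilon/\Tr(O)$ via Proposition~\ref{prop:appCBNEmoment} before the triangular inversion (Lemma~\ref{lem:MoOestTrans}).

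One step, however, fails as stated: ``the approximate design inflates this by at most a factor $(1+\mu)$'' for the unitary part of the variance. With $f(U)=\E[\hat\Gamma_k^U(O)\,|\,U]$, the ordering $(1-\mu)\Phi_{\rm H}\preccurlyeq\Phi_{\mathcal E}\preccurlyeq(1+\mu)\Phi_{\rm H}$ controls the positive functionals $\E_U[f^2]$ and $\E_U[f]$ separately, not their difference. The mean can shrink by $(1-\mu)$ while the second moment grows by $(1+\mu)$, so all you get is
\[
\Var_{\mathcal E}[f]\le(1+\mu)\,\E_{\rm H}[f^2]-(1-\mu)^2\big(\xi_k^{O}\big)^2=(1+\mu)\Var_{\rm H}[f]+\mu(3-\mu)\big(\xi_k^{O}\big)^2 .
\]
This is the content of Lemma~\ref{lem:GammakVarianceApp}. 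The additive term is of order $\mu\Tr(O)^2$ and is not suppressed by $d$ or $N_M$. It can exceed the Haar part $\mathcal O(\Tr(O)^2/d)$ whenever $\mu\gg 1/d$, so it is not a small multiplicative correction.

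In Chebyshev at scale $\epsilon$, this term contributes $\mathcal O(\mu\Tr(O)^2/(\epsilon^2N_U))=\mathcal O(c_3/N_U)$. You must make it at most $\delta/(2t)$ by choosing $c_3$ small. This is where condition (i) needs $\epsilon^2/\Tr(O)^2$, not the $\epsilon/\Tr(O)$ your bias computation suggests would suffice. With $\mu\sim\epsilon/\Tr(O)$, the argument would require $N_U\gtrsim\Tr(O)/\epsilon$, which destroys the single-setting claim.

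The same care is needed in your shot part, since a conditional variance is also a difference. First bound each covariance by the positive joint moment $\E\big[\sum_b\Pr_\rho(b|U)^{2k-j}\widetilde{\Pr}_O(b|U)^2\big]$; this is legitimate because the kernel is nonnegative for $O\ge0$. Only then apply the $(1+\mu)$ factor. That moment is $\mathcal O(d^{-(2k-j+1)}\Tr(O)^2)$, not $d^{-(2k-j)}$; the extra factor of $d$ is what actually yields your (correct) $d^{j-1}/N_M^{j}$.

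Two smaller points:
\begin{itemize}
\item $\|O\|\le 1$ is not a hypothesis of the proposition and is not needed. For $O\ge0$, Lemma~\ref{lem:xikOUB} already bounds every relevant trace by $\Tr(O)$ or $\Tr(O)^2$.
\item A median-of-means step cannot boost confidence when $N_U=1$, because the single unitary's fluctuation is shared by all batches. Plain Chebyshev with constant failure probability, as in the paper, is what the statement requires.
\end{itemize}
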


Compared to \tref{thm:MomOResult}, the numbers $N_U$ and $N_M$ differ only in the replacement of $\mathfrak{B} = \max \lbrace \Tr(O_0^2), 1\rbrace$ with $\Tr(O)^2$. Therefore, for low-rank observables $O$, 
nonlinear estimation with approximate designs achieves performance close to exact designs while requiring much shallower circuits.
This relaxation significantly reduces the implementation overhead of CBNE and strengthens its near-term applicability.

\emph{PTME protocol}--- 
Here we provide more details on our PTME protocol. 
The protocol estimates PT moments through symmetrized quantities that play the same role as \(\zeta_k\) in the CBNE protocol. 
For integer \(k\ge2\), define
\begin{equation}\label{eq:zetakPT}
	\zeta_k^{\PT}
    := \frac{1}{k!}\sum_{\pi\in \mathrm{S}_k}
	\Tr\left\{\rho_{AB}^{\otimes k} \left[(R_\pi)_A \otimes (R_\pi^{\top})_B\right]\right\},
\end{equation}
where $(R_\pi)_A$ denotes the permutation operator acting on subsystem $A$ of $k$ copies, and similarly for $(R_\pi^{\top})_B$. 
As in ordinary moment estimation, \(\zeta_k^{\PT}\) depends only on the cycle structure of \(\pi\), and hence is a polynomial in the PT moments \(\{p_j^{\PT}\}_{j\le k}\):
\begin{align}\label{eq:zetaPTRelation}
\zeta_2^{\PT}= \frac{1}{2}+\frac{p_2^{\PT}}{2}, \quad  
\zeta_3^{\PT}= \frac{1}{6}+\frac{p_2^{\PT}}{3}+\frac{p_3^{\PT}}{2}, \quad  \dots 
\end{align}
Hence, estimating $\{\zeta_k^{\PT}\}_{k=2}^t$ allows one to sequentially recover $p_2^{\PT},\dots,p_t^{\PT}$ via \eref{eq:zetaPTRelation}.

\begin{figure}[b]
\begin{algorithm}[H]
{\small
\hspace{-38pt}\textbf{Input:}  $N_U N_M$ prepared state $\rho$ and integer $t\geq 2$.   \\

\begin{algorithmic}[1]
\caption{{\small PTME protocol for estimating $\left\{p_k^{\PT}\right\}_{k=2}^t$} \ \ }
\label{alg:PTME}

\For{$s=1,2,\dots,N_U$,}
\State{Randomly choose a unitary $U_s$ on subsystem $A$ from a $2t$-design ensemble. }
\For{$j=1,2,\dots,N_M$,}
\State{Apply $U_s$ to subsystem $A$ of $\rho$, obtaining the rotated state $\rho_{U_s}=(U_s\otimes I_B)\rho(U_s\otimes I_B)^{\dag}$.}

\State{Measure the subsystem $A_1$ of $\rho_{U_s}$ in the computational basis, obtaining outcome $\hat{b}_j\in\{0,\dots,d_{A_1}-1\}$.}  
\State{Measure the subsystems $A_2$ and $B$ with the POVM $\mathcal{G}=\{G_+, G_-\}$,
	obtaining outcome $\hat{r}_{j}\in\{+1,-1\}$. 
	Here the POVM elements $G_{\pm}=(I\pm \mathbb{S})/2$, with $I$ and $\mathbb{S}$ being the identity and swap operators on $A_2\cup B$, respectively. }
\EndFor 
\State{Compute the following estimator for all $k=2,3,\dots,t$ using the measurement data 
$\mathbf{b}_{U_s} = \big\{\hat{b}_1, \hat{b}_2, \dots, \hat{b}_{N_M}\big\}$ and $\mathbf{r}_{U_s} = \{\hat{r}_1, \dots, \hat{r}_{N_M}\}$:  
\begin{align}\label{eq:hatLambdaPT}
	\ \hat{\Lambda}_k^{U_{\!s}}
	:= \frac{d_{\!A}^{\,k}}{k!\, d_{\!A_1} \binom{N_M}{k}}
	\sum_{i_1 <\dots < i_k }  \! \left( \hat{r}_{i_1}\cdots \hat{r}_{i_k} \right)  \textbf{1}\big\{\hat{b}_{i_1}=\dots=\hat{b}_{i_k}\big\}. \!
\end{align}}
\EndFor 
\State{For $k=2,3,\dots,t$, compute 
	$\hat\zeta_k^{\PT} := \frac{1}{N_U} \sum_{s=1}^{N_U} \hat{\Lambda}_k^{U_{\!s}}$ as our estimate for $\zeta_k^{\PT}$.}

\vspace{0.2em}

\State{Substitute $\hat\zeta_2^{\PT},\dots,\hat\zeta_t^{\PT}$ into  \eref{eq:zetaPTRelation}, then solve the equations to obtain estimates $\hat{p}_k^{\PT}$ for $p_k^{\PT}$ ($k=2,\dots,t$).}

\end{algorithmic}
\hspace{-149pt} \textbf{Output:} $\hat{p}_2^{\PT}, \dots, \hat{p}_t^{\PT}$. 
}
\end{algorithm}
\vspace{-7pt}
\end{figure}

By leveraging tools from representation theory, 
we construct a signed collision estimator $\hat{\Lambda}_k^{U}$ for $\zeta_k^{\PT}$, as defined in \eref{eq:hatLambdaPT}. 
This estimator processes measurement data obtained under a fixed unitary $U$, and can be computed efficiently classically (see SM Sec.~\ref{sec:computational_cost}). 
As proved in SM Sec.~\ref{sec:PTME}, $\hat{\Lambda}_k^{U}$ satisfies
\begin{align}\label{eq:ELambdakU}
	\mathbb{E}\left[ \hat{\Lambda}_k^{U} \right]
	= \zeta_k^{\PT} + \mathcal{O}(d_A^{-1}),   
\end{align}
so its bias is of order \(\mathcal{O}(d_A^{-1})\), which is exponentially small in the number of qubits of subsystem \(A\).
This guarantees accurate estimation of $\zeta_k^{\PT}$ and hence the PT moments.

The complete PTME protocol is summarized in  Algorithm~\ref{alg:PTME}; see Fig.~\ref{fig:PTME}(a) in the main text for an illustration. 
Recall that the subsystem $A$ is not smaller than $B$. 
For protocol implementation, we further partition $A$ into two disjoint subsystems $A_1$ and $A_2$, such that $d_A=d_{A_1}d_{A_2}$ and $d_{A_2}=d_B$.
This dimensional matching ensures the POVM $\mathcal{G}=\{G_+, G_-\}$  used in Step 6 of Algorithm~\ref{alg:PTME} is properly defined.

Note that the POVM elements $G_{+}$ and $G_{-}$ are projectors onto the $+1$ and $-1$  eigenspaces of the swap operator $\mathbb{S}$, respectively. 
Since $\mathbb{S}=\bigotimes_{i=1}^{n_B}\mathbb{S}_i$ admits a qubit-wise tensor-product form, 
the POVM $\mathcal{G}$ can be realized through $n_B$ local 2-qubit projective measurements: For each pair of qubits across $A_2$ and $B$, one performs the Bell measurement 
$\{I-|\Psi_-\>\<\Psi_-|,|\Psi_-\>\<\Psi_-|\}$, where
\begin{align}
|\Psi_-\>= \frac{1}{\sqrt{2}} (|01\>-|10\>), 
\ \ 
|\Psi_-\>\<\Psi_-|
= \frac{1}{2}(I_i-\mathbb{S}_i) . 
\end{align}
The overall  measurement result of $\mathcal{G}$ is determined by the parity of the outcome $|\Psi_-\rangle\langle\Psi_-|$ counts: An even (odd) number of such outcomes corresponds to $+1$ ($-1$).

\let\addcontentsline\oldaddcontentsline

\newpage
\onecolumngrid
\newpage
\vspace{3em}
\begin{center}
\textbf{\large Quantum Nonlinear Properties from a Single Measurement Setting: \\ \smallskip Supplementary Material}
\end{center}

\setcounter{page}{1}

\setcounter{secnumdepth}{3}

\renewcommand{\figurename}{Fig.}
\renewcommand{\theequation}{S\arabic{equation}}
\renewcommand{\thetable}{S\arabic{table}}
\renewcommand{\thetheorem}{S\arabic{theorem}}
\renewcommand{\thelemma}{S\arabic{lemma}} 
\renewcommand{\theprop}{S\arabic{prop}} 
\renewcommand{\thealgorithm}{S\arabic{algorithm}}
\renewcommand{\thefigure}{S\arabic{figure}}

\renewcommand{\thesection}{S\arabic{section}}
\renewcommand{\thesubsection}{\Alph{subsection}}
\renewcommand{\thesubsubsection}{\alph{subsubsection}}

\setcounter{section}{0}
\setcounter{equation}{0}
\setcounter{figure}{0}
\setcounter{lemma}{0}
\setcounter{theorem}{0}
\setcounter{prop}{0}	
\setcounter{algorithm}{0}

\def\eqref#1{\textup{(\ref{#1})}}

\date{\today}

\tableofcontents

\section{Preliminaries}\label{sec:Prelimi}
For our proofs in the following context, in this section, we review some key concepts and notation about permutations, Haar measure, and (approximate) unitary designs. 
A more detailed review can be found in Ref.~\cite{Mele23}. 

Throughout the paper, we consider an \( n \)-qubit Hilbert space \( \mathcal{H} \) with dimension \( d = 2^n \). Denote by \( \mathcal{D}(\mathcal{H}) \) the set of all density operators on \( \mathcal{H} \), and by \( \mathcal{L}(\mathcal{H}) \) the set of linear operators on \( \mathcal{H} \). Let \( \openone \) be the identity operator on \( \mathcal{H} \), and $\{|b\>\}_{b=0}^{d-1}$ be the computational basis of \( \mathcal{H} \). For any \( A \in \mathcal{L}(\mathcal{H}) \), we write \( \|A\| \) for its operator norm, \( \|A\|_1\) for its Schatten 1-norm, and \( \|A\|_2 \) for its Schatten 2-norm (Frobenius norm).

Let \( \mathrm{S}_k \) denote the permutation group of order \( k \). For any \( \pi \in \mathrm{S}_k \), we define $\#(\pi)$ as the number of disjoint cycles in $\pi$, 
and define the associated permutation operator \( R_\pi \) acting on \( \mathcal{H}^{\otimes k} \) as the unitary operator satisfying  
\begin{align}
R_\pi \left( |\varphi_1\rangle \otimes \cdots \otimes |\varphi_k\rangle \right) 
= |\varphi_{\pi^{-1}(1)}\rangle \otimes \cdots \otimes |\varphi_{\pi^{-1}(k)}\rangle\qquad 
\forall \, |\varphi_1\rangle, \ldots, |\varphi_k\rangle \in \mathcal{H}. 
\end{align}
Here, the action of \( R_\pi \) rearranges the tensor factors according to the inverse permutation \( \pi^{-1} \). For instance, if \( \pi = (1,2) \in \mathrm{S}_2 \) (the transposition swapping elements 1 and 2), then \( R_{\pi} \) is the swap operator, which we denote specifically by $\mathbb{S}$. 
To describe permutations acting independently on sub-blocks, we write \((\tau_1, \tau_2) \in \mathrm{S}_{2k}\) for the permutation where \(\tau_1 \in \mathrm{S}_k\) acts on the first \(k\) elements and \(\tau_2 \in \mathrm{S}_k\) acts on the remaining \(k\) elements.

A Haar random unitary on a Hilbert space \( \mathcal{H} \) of dimension \( d \) is a unitary operator \( U \) drawn uniformly from the \emph{Haar measure} $\mu_{\rm H}$ on the unitary group \( {\rm U}(d) \). This measure is the unique unitarily invariant probability distribution satisfying 
\begin{align}
\underset{U\sim \mu_{\rm H}}{\E} [f(U)]
:=
\int_{{\rm U}(d)} f(U) {\rm d}\mu_{\rm H}(U) 
= \int_{{\rm U}(d)} f(VU) {\rm d}\mu_{\rm H}(U) 
= \int_{{\rm U}(d)} f(UV) {\rm d}\mu_{\rm H}(U) 
\end{align}
for any integrable function \( f \) and fixed \( V \in {\rm U}(d) \). Here we use ${\E}_{U\sim \mu_{\rm H}} [f(U)]$ to denote the expectation of $f (U )$  with respect to the Haar measure.  
For any $A \in \mathcal{L}(\mathcal{H})$, the \emph{$k$-fold Haar-random twirling channel} is defined as  
\begin{align}\label{eq:twirlingDef}
	\Phi_{\rm H}(A) := \underset{U\sim \mu_{\rm H}}{\E} \left[ U^{\otimes k} A U^{\dagger\otimes k} \right]. 
\end{align}  
When $A=\ketbra{\psi}{\psi}$ with $\ket{\psi}\in\caH$, we have
\begin{align}\label{eq:PhiHaar}
\Phi_{\rm H}(\ketbra{\psi}{\psi})
=\frac{1}{\kappa_k }\Pi_{\mathrm{sym}}^{(k)}
=\frac{1}{ k!\, \kappa_k} \sum_{\pi \in \mathrm{S}_k} R_\pi, 
\qquad
\kappa_k=\binom{k+d-1}{k}, 
\end{align}
where $\sym^{(k)}$ is the projector onto the symmetric subspace of $\caH^{\otimes k}$, 
and $\kappa_k$ is the dimension of this symmetric subspace. 
For general $A \in \mathcal{L}(\mathcal{H})$, the $k$-fold twirling result reads 
\begin{align}\label{eq:tirling}
	\Phi_{\rm H}(A)
	= \sum_{\tau,\pi\in \mathrm{S}_k} \Wg_{\pi,\tau}^{(k,d)}\Tr(A R_{\pi}^{\top})R_{\tau}, 
\end{align}
where $(\cdot)^{\top}$ denotes the transpose operation (with respect to the computational basis), and 
$\Wg_{\pi,\tau}^{(k,d)}\in\mathbb{R}$ is known as the \emph{Weingarten function}. 
When the space dimension $d$ is large, the Weingarten function scales as
\begin{align}\label{eq:Weingarten}
	\Wg_{\pi,\tau}^{(k,d)}=d^{-k}\left(\delta_{\pi, \tau}+\mathcal{O}(d^{-1})\right).
\end{align}
This equation holds because the permutation operators are asymptotically orthogonal in the limit of $d\rightarrow \infty$ \cite{kostenberger2021weingarten,aharonov2022quantum}.

Generating Haar-random unitaries on quantum computers is inefficient, since most unitaries require an exponential number of elementary gates with respect to the system size $n$. This motivates the definition of (approximate) unitary $k$-design, which are unitary ensembles $\mathcal E$ that (approximately) replicates the statistical properties of the Haar measure up to the $k$-th moment. 
Formally, $\mathcal E$ is called a \emph{unitary $k$-design} if $\Phi_{\mathcal E}= \Phi_{\rm H}$, where
\begin{align}\label{eq:designDef}
\Phi_{\mathcal E}(A):= \underset{U\sim \mathcal E}{\E} \left[ U^{\otimes k} A U^{\dagger\otimes k} \right] 
\end{align}  
is the $k$-fold twirling channel with respect to $\mathcal E$.
Furthermore, $\mathcal E$ is called an \emph{$\epsilon$-approximate unitary $k$-design} if 
\begin{align}\label{eq:appdesignDef}
(1-\epsilon)\Phi_{\rm H} 
\preccurlyeq \Phi_{\mathcal E} 
\preccurlyeq (1+\epsilon) \Phi_{\rm H},  
\end{align}
where $\Phi\preccurlyeq\Phi'$ denotes that $\Phi'-\Phi$ is a completely-positive map.
By definition, any ($\epsilon$-approximate) unitary $k$-design is automatically a ($\epsilon$-approximate) unitary $(k-1)$-design.
Recent works \cite{schuster2025random, Laracuente2024approximate} have proved that for constant $k$, an $\epsilon$-approximate unitary $k$-design on $n$ qubits can be efficiently constructed in circuit depth $\mathcal O(\log(n/\epsilon))$, without using ancilla qubits. When ancilla qubits are permitted, the required depth can be further reduced to $\mathcal O(\log\log(n/\epsilon))$ \cite{cui2025unitary}.

\section{Estimation of quantum state moments with  the CBNE protocol}\label{sec:momentCBNE}

In this section, we provide further details on the CBNE protocol for quantum state moment estimation. While Algorithm~\ref{alg:CBNEmain} in the End Matter summarizes the full CBNE protocol, for the analysis in this section we isolate its moment-estimation part and present it as Algorithm~\ref{alg:CBNEmoments}, which should be viewed as a proof-oriented subroutine rather than a separate protocol.

Recall that the quantity $\zeta_k$ [defined in \eref{eq:zeta_def_prl} of the main text] can be expressed as a polynomial of $p_j$ with $j\leq k$: 
\begin{align}\label{eq:zetaExpan}
\zeta_2= \frac{1}{2}+\frac{p_2}{2}, 
\qquad  
\zeta_3= \frac{1}{6}+\frac{p_2}{3}+\frac{p_3}{2},
\qquad 
\zeta_4= \frac{1}{24}+\frac{p_2^2}{8}+\frac{p_2}{4}+\frac{p_3}{3}+\frac{p_4}{4}, \qquad \dots  
\end{align}
Consequently, in Step~10 of Algorithm~\ref{alg:CBNEmoments}, one can sequentially invert these relations with the estimated $\zeta_k$ to obtain the desired moment values: 
\begin{align}
p_2= 2\zeta_2-1, 
\qquad
p_3= 2\zeta_3-\frac{1}{3}-\frac{2p_2}{3}, 
\qquad
p_4= 4\zeta_4- \frac{1}{6}-\frac{p_2^2}{2}-p_2-\frac{4p_3}{3},
\qquad \dots   
\end{align}

The rest of this section is organized as follows: 
In Sec.~\ref{sec:ExpVarCBNEmoment}, we derive the expectation and variance of our estimator $\hat{M}_k^{U}$ in \eref{eq:SMcollision_moment_prl}. In Sec.~\ref{sec:performCBNEmoment}, we prove the performance of CBNE for state moment estimation as stated in \tref{thm:MomOResult} of the main text. In Sec.~\ref{sec:appdesignCBNEmoment}, we prove \pref{prop:appCBNEmoment} in the End Matter, which demonstrates the performance of CBNE for state moment estimation when the unitary ensemble $\mathcal E$ is an approximate unitary design.

\begin{figure}[b]
\begin{algorithm}[H]
{\small
\hspace{-213pt}\textbf{Input:}  $N_U N_M$ sequentially prepared quantum state $\rho$ and integer $t\geq 2$.   \\

\begin{algorithmic}[1]
\caption{{\small CBNE protocol for estimating $\left\{p_t\right\}_{k=2}^t$} \ \ }
\label{alg:CBNEmoments}

\For{$s=1,2,\dots,N_U$,}
\State{Randomly choose a unitary $U_s$ from an ensemble $\mathcal E$. }
\For{$j=1,2,\dots,N_M$,}
\State{Evolve the state $\rho$ using $U_s$ to get $U_s\rho U_s^{\dag}$.} 
\State{Measure the rotated state in the computational basis, obtaining outcome $\hat{b}_j\in\{0,1,\dots,d-1\}$.}
\EndFor 
\State{Compute the following collision estimator for all $k=2,\dots,t$ using the measurement data $\mathbf{b}_{U_s} = \{\hat{b}_1, \dots, \hat{b}_{N_M}\}$:  
\begin{align}\label{eq:SMcollision_moment_prl}
\hat M_k^{U_s}
:=
\frac{\kappa_k}{d\binom{N_M}{k}}
\sum_{i_1<i_2<\cdots<i_k}
\mathbf{1}\{\hat{b}_{i_1}=\cdots=\hat{b}_{i_k}\}, 
\end{align}}

\EndFor 
\State{For $k=2,\dots,t$, compute the average 
    $\hat\zeta_k := \frac{1}{N_U} \sum_{s=1}^{N_U} \hat{M}_k^{U_s}$ as our estimate for $\zeta_k$.} 

\State{Substitute $\hat\zeta_2,\dots,\hat\zeta_t$ into  \eref{eq:zetaExpan}, then solve the equations to obtain estimates $\hat{p}_k$ for $p_k$ ($k=2,\dots,t$).}

\end{algorithmic}
\hspace{-427pt} \textbf{Output:} $\hat{p}_2, \dots, \hat{p}_t$. 
}
\end{algorithm}
\end{figure}

\subsection{Expectation and variance of \texorpdfstring{$\hat{M}_k^{U}$}{}}\label{sec:ExpVarCBNEmoment}
The statistical fluctuation of our estimator $\hat{M}_k^{U}$ (see \eref{eq:SMcollision_moment_prl}) comes from: 
(i) the randomness in the selection of $U$, and 
(ii) the randomness in quantum measurement outcomes $\mathbf{b}_{U} = \{\hat{b}_1, \dots, \hat{b}_{N_M}\}$ according to Born's rule. 
Taking these two sources of randomness into account, the expectation and variance of $\hat{M}_k^{U}$ are 
clarified by the following two lemmas.

\begin{lemma}\label{lam:UnbiasMkU}
Suppose $U$ is a random unitary sampled from a $k$-design ensemble $\mathcal E_{k}$. 
Then $\E \left[ \hat{M}_k^{U}\right]= \zeta_k$. 
\end{lemma}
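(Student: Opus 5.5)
We will prove the claim by conditioning on $U$, computing the measurement expectation exactly, and then averaging over $U$ with the $k$-design property. Fix $U$. The outcomes $\hat b_1,\dots,\hat b_{N_M}$ are i.i.d.\ with distribution $\Pr_\rho(b|U)=\<b|U\rho U^\dag|b\>$. For any fixed index set $i_1<\dots<i_k$, independence gives
\begin{align}
\E\left[\mathbf{1}\{\hat b_{i_1}=\cdots=\hat b_{i_k}\}\,\middle|\,U\right]=\sum_{b=0}^{d-1}{\Pr}_\rho(b|U)^k .
\end{align}
There are $\binom{N_M}{k}$ such index sets. Hence
\begin{align}
\E\left[\hat M_k^U\,\middle|\,U\right]=\frac{\kappa_k}{d}\sum_{b}{\Pr}_\rho(b|U)^k .
\end{align}

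Next we write the $k$-th power as a trace over $k$ copies:
\begin{align}
{\Pr}_\rho(b|U)^k=\Tr\left[(U\rho U^\dag)^{\otimes k}\,(|b\>\<b|)^{\otimes k}\right]
=\Tr\left[\rho^{\otimes k}\,U^{\dag\otimes k}(|b\>\<b|)^{\otimes k}U^{\otimes k}\right].
\end{align}
This expression is a degree-$(k,k)$ polynomial in the entries of $U$ and $U^\dag$. Since $\mathcal E_k$ is a $k$-design, its average over $U$ equals the Haar average. The map $U\mapsto U^\dag$ preserves the Haar measure, so we may apply \eref{eq:PhiHaar} with $|\psi\>=|b\>$. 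This gives
\begin{align}
\underset{U\sim\mathcal E_k}{\E}\left[U^{\dag\otimes k}(|b\>\<b|)^{\otimes k}U^{\otimes k}\right]=\frac{1}{k!\,\kappa_k}\sum_{\pi\in\mathrm S_k}R_\pi,
\end{align}
which does not depend on $b$.

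Summing over the $d$ values of $b$ produces a factor $d$ that cancels the $1/d$. The factor $\kappa_k$ cancels as well. We obtain
\begin{align}
\E\left[\hat M_k^U\right]=\frac{1}{k!}\sum_{\pi\in\mathrm S_k}\Tr(\rho^{\otimes k}R_\pi)=\zeta_k .
\end{align}
This also establishes the intermediate identity $\kappa_k\,\E_{U}[\Pr_\rho(b|U)^k]=\zeta_k$ stated in \eref{eq:zeta_def_prl}.

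The only delicate point is the twirl step. We must check that the design property applies to $U^\dag$ conjugation, which follows from unitary invariance of the Haar measure and the fact that the expression is a balanced polynomial of degree $k$. Everything else is a direct linearity computation, so we expect no real obstacle.
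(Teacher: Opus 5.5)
Your proof is correct and follows the paper's argument essentially step for step. You condition on $U$, reduce the collision probability to $\sum_b \Pr_\rho(b|U)^k$, rewrite it as $\Tr[\rho^{\otimes k}\,\Phi(\ketbra{b}{b}^{\otimes k})]$, and apply the $k$-design twirl identity \eref{eq:PhiHaar}, after which the factors $d$ and $\kappa_k$ cancel. Your remark about the twirl being conjugation by $U^{\dagger}$ rather than $U$ (handled via balanced degree-$(k,k)$ polynomials and inversion invariance of the Haar measure) addresses a point the paper passes over silently, so it is a useful clarification rather than a different route.
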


\begin{lemma}\label{lem:MomVariance}
Suppose $k>1$ is a constant integer independent of the system dimension $d$, and $U$ is a random unitary  sampled from a $2k$-design ensemble $\mathcal E_{2k}$. 
Then 
\begin{align}
	\Var\left[\hat{M}_k^{U}\right] 
	= \bigo{\max\left\lbrace \frac{1}{d} ,\frac{d^{k-1}}{N_{\!M}^{k}}\right\rbrace  } .
\end{align}
\end{lemma}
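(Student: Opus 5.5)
The plan is to apply the law of total variance, conditioning on the unitary $U$:
\[
\Var[\hat M_k^U]=\E_U\bigl[\Var[\hat M_k^U\mid U]\bigr]+\Var_U\bigl[\E[\hat M_k^U\mid U]\bigr].
\]
Write $q_b:=\Pr_\rho(b|U)$. Conditionally on $U$, the outcomes $\hat b_1,\dots,\hat b_{N_M}$ are i.i.d.\ with distribution $q$. Hence
\[
\E[\hat M_k^U\mid U]=\frac{\kappa_k}{d}\sum_b q_b^k .
\]

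\textbf{Second term (randomness in $U$).} Using $\sum_b q_b^k=\sum_b\Tr[(U\rho U^\dagger)^{\otimes k}|b\>\<b|^{\otimes k}]$, the second moment $\E_U(\sum_b q_b^k)^2$ becomes an average of a degree-$2k$ polynomial in $U,U^\dagger$. The $2k$-design property lets us replace it by the Haar average via \eref{eq:tirling}, applied to $A=\rho^{\otimes 2k}$ and tested against $\sum_{b,b'}|b\>\<b|^{\otimes k}\otimes|b'\>\<b'|^{\otimes k}$. Two cases arise.

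The diagonal terms $b=b'$ give $d\,\E q_b^{2k}\sim d\cdot\kappa_{2k}^{-1}\zeta_{2k}=O(d^{1-2k})$. After multiplying by $(\kappa_k/d)^2\sim d^{2k-2}$, they contribute $O(1/d)$.

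The off-diagonal terms $b\neq b'$ require the leading-order cancellation: $\E[q_bq_{b'}\text{-type products}]$ must equal $(\E q_b^k)^2(1+O(1/d))$. I would establish this with the Weingarten asymptotics \eref{eq:Weingarten}. The leading term $\delta_{\pi,\tau}$ factorizes into block permutations $(\tau_1,\tau_2)$ and reproduces $(\zeta_k/\kappa_k)^2 d^2$. Permutations mixing the two blocks, together with the $O(d^{-1})$ Weingarten corrections, are suppressed by an extra factor $1/d$, because the overlap $\Tr(|b\>\<b|^{\otimes k}\otimes|b'\>\<b'|^{\otimes k}R_\tau)$ vanishes for mixing $\tau$ when $b\ne b'$. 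Controlling these corrections uniformly, including terms $\Tr(\rho^{\otimes 2k}R_\pi)$ with $|\cdot|\le1$ and a constant number ($(2k)!^2$) of them, gives $\Var_U=O(1/d)$.

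\textbf{First term (shot noise).} Conditionally on $U$, $\hat M_k^U$ is $\kappa_k/d$ times a U-statistic of order $k$ with kernel $h(b_1,\dots,b_k)=\mathbf 1\{b_1=\cdots=b_k\}$. The standard Hoeffding variance decomposition gives
\[
\Var[\hat M_k^U\mid U]\le\Bigl(\frac{\kappa_k}{d}\Bigr)^2\sum_{j=1}^k O\!\left(N_M^{-j}\right)\sum_b q_b^{2k-j},
\]
since the overlap of two index sets of size $j$ forces $2k-j$ distinct samples to collide. Taking $\E_U$ with the design property (only a $(2k-1)$-design is needed here) gives $\E_U\sum_b q_b^{2k-j}=O(d\cdot d^{-(2k-j)})$. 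Each term is then $O(d^{2k-2}\cdot d^{1-2k+j}N_M^{-j})=O(d^{j-1}/N_M^{j})$. These interpolate between $1/N_M$ and $d^{k-1}/N_M^{k}$, so
\[
\sum_{j=1}^k \frac{d^{j-1}}{N_M^j}\le k\max\left\lbrace \frac1{N_M},\frac{d^{k-1}}{N_M^k}\right\rbrace .
\]
If $N_M\ge d$, the $1/N_M$ term is at most $1/d$. If $N_M<d$, then $d^{j-1}/N_M^j$ increases with $j$ and is dominated by the $j=k$ term. In both cases the shot-noise part is $O(\max\{1/d,\,d^{k-1}/N_M^k\})$.

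\textbf{Main obstacle.} The hard step is the off-diagonal cancellation in $\Var_U$: a naive bound gives $O(1)$, and the $1/d$ decay needs the Weingarten expansion to be carried out carefully to next order. I would first try to handle it by writing everything in terms of the symmetric-subspace twirl and comparing exact expressions with $\kappa_{2k}$ versus $\kappa_k^2$, whose ratio differs from the needed value by $1+O(k^2/d)$.
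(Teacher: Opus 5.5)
Your proof is correct. Its computational core is the paper's, but you organize it through a different decomposition.

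\textbf{How the paper does it.} The paper never conditions on $U$. It expands $\E[(\hat M_k^U)^2]$ directly as a double sum over pairs of $k$-tuples and sorts the pairs by overlap ${\rm Co}=k-l$. Overlapping pairs are evaluated as $\frac{d}{\kappa_{k+l}}\Tr[\rho^{\otimes(k+l)}\Pi_{\mathrm{sym}}^{(k+l)}]=\bigo{d^{1-k-l}}$. Disjoint pairs are handled by the $b=b'$ versus $b\neq b'$ split and the Weingarten expansion. This is exactly your mechanism: only block permutations $(\tau_1,\tau_2)$ survive, and $\pi\neq(\tau_1,\tau_2)$ costs an extra $1/d$. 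The paper then cancels $\zeta_k^2$ using $\binom{N_M}{k}^{-2}\binom{N_M}{2k}\binom{2k}{k}\le 1$ and $k!\kappa_k/d^k=1+\bigo{d^{-1}}$.

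\textbf{How the pieces correspond.} Your Hoeffding term with overlap $j$ carries the same counting factor as the paper's $l=k-j$ term. Your $\Var_U[\E(\hat M_k^U\mid U)]$ is the paper's disjoint-pair computation with that combinatorial prefactor replaced by $1$.

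\textbf{What each route buys.} Your split is more transparent. It shows that the $N_M$-independent $\bigo{1/d}$ floor comes entirely from the randomness of $U$, so only $N_U$ or a larger $d$ can reduce it. It shows that the shot-noise part needs only a $(2k-1)$-design. It also replaces the combinatorial cancellation by simply dropping the negative $(\sum_b q_b^k)^2$ part of each Hoeffding covariance. The paper's single direct expansion, in turn, is a template it reuses for $\hat\Gamma_k^U(O)$ and $\hat\Lambda_k^U$.

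\textbf{Your fallback.} Comparing $\kappa_{2k}$ with $\kappa_k^2$ is unnecessary, and it only works cleanly for pure $\rho$. For mixed $\rho$, neither $\rho^{\otimes 2k}$ nor $|b\rangle\langle b|^{\otimes k}\otimes|b'\rangle\langle b'|^{\otimes k}$ (with $b\neq b'$) twirls to a multiple of $\Pi_{\mathrm{sym}}^{(2k)}$. The Weingarten step you outline is the right tool there.
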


\begin{proof}[Proof of \lref{lam:UnbiasMkU}]
We have
\begin{align}
\E \left[ \hat{M}_k^{U}\right] 
&=\underset{U }{\E} \, \underset{\,\mathbf{b}_{U}}{\E} \left( \hat{M}_k^{U} \Big| U \right)
=\underset{U }{\E} \left[ \frac{\kappa_k}{d} \sum_{b=0}^{d-1} \Pr\left(\hat{b}_{i_1}=\dots=\hat{b}_{i_k}=b \,|U \right) \right] 
=\frac{\kappa_k}{d} \sum_{b=0}^{d-1} \E_{U} \left[ {\Pr}_{\rho}(b|U)^k \right] 
\nonumber \\
&=\frac{\kappa_k}{d} \sum_{b=0}^{d-1} \E_{U} \left[ \<b|U\rho U^{\dag}|b\>^k \right] 
= \frac{\kappa_k}{d} \sum_{b=0}^{d-1} \Tr \left[ \rho^{\otimes k}  \Phi_{\mathcal E_{k}} \big( \ketbra{b}{b}^{\otimes k} \big) \right]
\stackrel{(a)}{=}  \frac{1}{k!} \sum_{\pi\in \mathrm{S}_k}\Tr\left(\rho^{\otimes k}R_\pi\right)
= \zeta_k , 
\end{align}
where the equality $(a)$ follows from \eref{eq:PhiHaar} and the relation $\Phi_{\mathcal E_{k}}=\Phi_{\rm H}$. 
This confirms \lref{lam:UnbiasMkU}, implying that $\hat{M}_k^{U}$ is an unbiased estimator of $\zeta_k$. 
\end{proof}

\begin{proof}[Proof of \lref{lem:MomVariance}]
The variance of $\hat{M}_k^{U}$ reads
\begin{align}\label{eq:VarhatM} 
	\Var\left[ \hat{M}_k^{U}\right] 
	&= \E\left[ \left( \hat{M}_k^{U}\right) ^2 \right] - \E\left[ \hat{M}_k^{U} \right]^2 
	= \E_{U} \E_{\mathbf{b}_{U}} \left[ \left( \hat{M}_k^{U}\right) ^2 \Big| U \right]
	- \zeta_k^2
	\nonumber\\
	&= \binom{N_M}{k}^{-2} \left( \frac{\kappa_k}{d}\right)^2
	\sum_{\substack{1\leq i_1<i_2<\dots<i_k\leq N_M \\ 1\leq j_1<j_2<\dots<j_k\leq N_M}} 
	\E_{U} \E_{\mathbf{b}_{U}} \Big(
	\textbf{1}\left[\hat{b}_{i_1}=\dots=\hat{b}_{i_k}\right]  
	\textbf{1}\left[\hat{b}_{j_1}=\dots=\hat{b}_{j_k}\right]  \Big| U \Big) - \zeta_k^2. 
\end{align}
where the second equality follows from \lref{lam:UnbiasMkU}. 

Note that $(i_1,\dots,i_k)$ and $(j_1,\dots,j_k)$ are two $k$-tuples drawn from the index set $\{1,2,\dots,N_M\}$. Let ${\rm Co}[(i_1,\dots,i_k);(j_1,\dots,j_k)]$ be the collision number between them, that is, the number of common elements shared by $(i_1,\dots,i_k)$ and $(j_1,\dots,j_k)$. 
For $l\in\{0,1,\dots,k-1\}$, if 
${\rm Co}[(i_1,\dots,i_k);(j_1,\dots,j_k)]=k-l$, we have 
\begin{align}\label{eq:EE1casea}
&\E_{U} \E_{\mathbf{b}_{U}} \Big(
\textbf{1}\left[\hat{b}_{i_1}=\dots=\hat{b}_{i_k}\right]  
\textbf{1}\left[\hat{b}_{j_1}=\dots=\hat{b}_{j_k}\right]  \Big| U \Big)
= \E_{U} \E_{\mathbf{b}_{U}} \Big(
\textbf{1}\left[\hat{b}_{i_1}=\dots=\hat{b}_{i_k}=\hat{b}_{j_1}=\dots=\hat{b}_{j_k}\right] \Big| U \Big) \nonumber\\
&=
\E_{U} \sum_{b=0}^{d-1} \<b|U\rho U^{\dag}|b\>^{k+l}=\sum_{b=0}^{d-1} \Tr \left[ \rho^{\otimes (k+l)}  \Phi_{\mathcal E_{2k}} \big( \ketbra{b}{b}^{\otimes (k+l)} \big) \right] 
=\frac{d}{\kappa_{k+l}} \Tr \left[ \rho^{\otimes {(k+l)}} \Pi_{\mathrm{sym}}^{(k+l)} \right] 
= \bigo{d^{1-k-l}}, 
\end{align}
where the last equality follows because $\Tr \left[ \rho^{\otimes {(k+l)}} \Pi_{\mathrm{sym}}^{(k+l)} \right] \leq 1$ and that 
$\kappa_{k+l}=\binom{k+l+d-1}{k+l}=\Theta(d^{k+l})$ when $k$ and $l$ are both constant independent of $d$. 
If ${\rm Co}[(i_1,\dots,i_k);(j_1,\dots,j_k)]=0$, we have 
\begin{align}\label{eq:EE1caseb}
&\E_{U} \E_{\mathbf{b}_{U}} \Big(
\textbf{1}\left[\hat{b}_{i_1}=\dots=\hat{b}_{i_k}\right]  
\textbf{1}\left[\hat{b}_{j_1}=\dots=\hat{b}_{j_k}\right]  \Big| U \Big)
= 
\E_{U} \sum_{b,b'=0}^{d-1} \<b|U\rho U^{\dag}|b\>^{k} \<b'|U\rho U^{\dag}|b'\>^{k}
\nonumber\\
&=
\E_{U} \sum_{b=0}^{d-1} \<b|U\rho U^{\dag}|b\>^{2k}
+\E_{U} \sum_{b\ne b'} \<b|U\rho U^{\dag}|b\>^{k} \<b'|U\rho U^{\dag}|b'\>^{k}
=d^{2-2k} k!^2\zeta_k^2  +\bigo{d^{1-2k}},
\end{align}
where the last equality holds because 
\begin{align}
	\E_{U} \sum_{b=0}^{d-1} \<b|U\rho U^{\dag}|b\>^{2k}
	=\sum_{b=0}^{d-1} \Tr \left[ \rho^{\otimes 2k}  \Phi_{\mathcal E_{2k}} \big( \ketbra{b}{b}^{\otimes 2k} \big) \right]
	= \frac{d}{\kappa_{2k}} \tr{\rho^{\otimes 2k} \Pi_{\mathrm{sym}}^{2k} }
	= \mathcal{O}(d^{1-2k})  
\end{align}
and  
\begin{align}
&\E_{U} \sum_{b\ne b'} \<b|U\rho U^{\dag}|b\>^{k} \<b'|U\rho U^{\dag}|b'\>^{k}
\nonumber\\
&=  \sum_{b\ne b'} \Tr\left[ \left( \ketbra{b}{b}^{\otimes k}\otimes \ketbra{b'}{b'}^{\otimes k}\right)  
\Phi_{\mathcal E_{2k}}\left( \rho^{\otimes 2k} \right) \right] 
\nonumber\\
&\stackrel{(a)}{=}\sum_{b\ne b'} \sum_{\tau,\pi\in \mathrm{S}_{2k}} \Wg_{\pi,\tau}^{(2k,d)} 
 \Tr\left[ \left( \ketbra{b}{b}^{\otimes k}\otimes \ketbra{b'}{b'}^{\otimes k}\right) R_{\tau} \right] 
 \Tr\left( \rho^{\otimes 2k}  R_{\pi}^{\top} \right) 
\nonumber\\
&\stackrel{(b)}{=}\sum_{b\ne b'} \sum_{\pi\in \mathrm{S}_{2k}} \sum_{\tau_1,\tau_2\in \mathrm{S}_k}
\Wg_{\pi,(\tau_1,\tau_2)}^{(2k,d)}
\Tr\left( \rho^{\otimes 2k}  R_{\pi} \right) 
\nonumber\\
&=d(d-1)\sum_{\tau_1,\tau_2\in \mathrm{S}_k}
\Wg_{(\tau_1,\tau_2),(\tau_1,\tau_2)}^{(2k,d)}  
\Tr\left( \rho^{\otimes 2k}  R_{(\tau_1,\tau_2)} \right) 
+d(d-1)\sum_{\tau_1,\tau_2\in \mathrm{S}_k}\sum_{\substack{\pi\in\mathrm{S}_{2k}\\ \pi\ne(\tau_1,\tau_2)}} 
\Wg_{\pi,(\tau_1,\tau_2)}^{(2k,d)}  
\Tr\left( \rho^{\otimes 2k}  R_{\pi} \right)
\nonumber\\
&\stackrel{(c)}{=}d(d-1)\sum_{\tau_1,\tau_2\in \mathrm{S}_k}
\left[ d^{-2k}+ \bigo{d^{-2k-1}}\right]  
\Tr\left( \rho^{\otimes k}  R_{\tau_1} \right) \Tr\left( \rho^{\otimes k}  R_{\tau_2} \right) 
+d(d-1)\sum_{\tau_1,\tau_2\in \mathrm{S}_k}\sum_{\substack{\pi\in\mathrm{S}_{2k}\\ \pi\ne(\tau_1,\tau_2)}}  \bigo{d^{-2k-1}}
\nonumber\\
&= 
d^{2-2k} \left[ \sum_{\tau \in \mathrm{S}_{k}} \Tr\left(\rho^{\otimes t}R_\tau\right) \right]^2 
+\bigo{d^{1-2k}}
\nonumber\\
&=d^{2-2k} k!^2\zeta_k^2  +\bigo{d^{1-2k}}.
\end{align}
Here,
$(a)$ follows from \eref{eq:tirling}; 
$(b)$ holds because $\Tr\big[ \big( \ketbra{b}{b}^{\otimes k}\!\otimes \ketbra{b'}{b'}^{\otimes k}\big) R_{\tau} \big]=1$ if $\tau=(\tau_1,\tau_2)$ for some $\tau_1,\tau_2\in\mathrm{S}_{k}$, and 0 otherwise; 
$(c)$ follows from \eref{eq:Weingarten}.   

The variance in \eref{eq:VarhatM}  is then upper bounded by
\begin{align}\label{eq:EE1cased}
&\Var\left[ \hat{M}_k^{U}\right] +\zeta_k^2
\nonumber\\
&=  
\binom{N_M}{k}^{-2} \left( \frac{\kappa_k}{d}\right)^2
\sum_{l=0}^{k} \,
\sum_{\substack{i_1<i_2<\dots<i_k, j_1<j_2<\dots<j_k, \\{\rm Co}[(i_1,\dots,i_k);(j_1,\dots,j_k)]=k-l }} 
\E_{U} \E_{\mathbf{b}_{U}} \Big(
\textbf{1}\left[\hat{b}_{i_1}=\dots=\hat{b}_{i_k}\right]  
\textbf{1}\left[\hat{b}_{j_1}=\dots=\hat{b}_{j_k}\right]  \Big| U \Big)
\nonumber\\
&= 
\binom{N_M}{k}^{-2} \left( \frac{\kappa_k}{d}\right)^2\,  
\sum_{l=0}^{k} 
\binom{N_M}{k+l} \binom{k+l}{k} \binom{k}{l} \, 
\E_{U} \E_{\mathbf{b}_{U}} \Big(
\textbf{1}\left[\hat{b}_{i_1}=\dots=\hat{b}_{i_k}\right]  
\textbf{1}\left[\hat{b}_{j_1}=\dots=\hat{b}_{j_k}\right]  \Big| U \Big)\Big|_{{\rm Co}[(i_1,\dots,i_k);(j_1,\dots,j_k)]=k-l}
\nonumber\\
&= 
\binom{N_M}{k}^{-2} \left( \frac{\kappa_k}{d}\right)^2
\left\lbrace \,  
\sum_{l=0}^{k-1} 
\binom{N_M}{k+l} \binom{k+l}{k} \binom{k}{l} \, 
\bigo{d^{1-k-l}}
+   
\binom{N_M}{2k} \binom{2k}{k}
\left[d^{2-2k} k!^2\zeta_k^2  +\bigo{d^{1-2k}}
\right]\right\rbrace 
\nonumber\\
&= 
\sum_{l=0}^{k-1}\binom{N_M}{k}^{-2}  
\binom{N_M}{k+l} \mathcal{O}(d^{k-1-l})
+ 
\binom{N_M}{k}^{-2}  \binom{N_M}{2k}\binom{2k}{k} \left( \frac{k!\kappa_k}{d^{k}} \right)^2 
\left[\zeta_k^2 
+ \bigo{d^{-1}}
\right], 
\end{align}
where the third equality follows from Eqs.~\eqref{eq:EE1casea} and \eqref{eq:EE1caseb}. 
Note that 
\begin{align}
\frac{k!\kappa_k}{d^{k}}
= \frac{k!}{d^{k}}\binom{k+d-1}{k}
= \left( 1+ \frac{k-1}{d} \right) \left( 1+ \frac{k-2}{d} \right) \cdots \left( 1+ \frac{1}{d} \right) 
= 1+\bigo{d^{-1}}
\end{align}
and 
\begin{align} 
	\binom{N_M}{k}^{-2}  \binom{N_M}{2k}\binom{2k}{k}
	&= \left[ \frac{N_M!}{k!(N_M-k)!}\right]^{-2}  \frac{N_M!}{(2k)!(N_M-2k)!} \cdot \frac{(2k)!}{(k)!(k)!}
	= \frac{(N_M-k)!^2}{N_M!(N_M-2k)!}  
	\nonumber\\
	&
	= \left( \frac{N_M-k}{N_M}\right)  \left( \frac{N_M-k-1}{N_M-1}\right)  \cdots \left(\frac{N_M-2k+1}{N_M-k+1}\right) 
	\leq 1.
\end{align}
By plugging these into \eref{eq:EE1cased}, we obtain 
\begin{align}
\Var\left[ \hat{M}_k^{U}\right] +\zeta_k^2
\leq 
\sum_{l=0}^{k-1} \bigo{\frac{d^{k-l-1}}{N_M^{k-l}}} 
+ \left[1+\bigo{d^{-1}}\right]^2 \left[ \zeta_k^2 + \mathcal{O}(d^{-1})\right] 
= 
\bigo{\max\left\lbrace \frac{1}{d} ,\frac{d^{k-1}}{N_{\!M}^{k}}\right\rbrace  } 
+\zeta_k^2 ,
\end{align}
which confirms \lref{lem:MomVariance}. 
\end{proof}

\subsection{Proof of CBNE's performance in state moment estimation}\label{sec:performCBNEmoment}

When applied to state moment estimation, 
the performance of the CBNE protocol is summarized in the following theorem, which is a restatement of \tref{thm:MomOResult} from the main text in the special case of $O=\openone$. 

\begin{theorem}\label{thm:MomResult}
Suppose $\rho\in\caD(\caH)$ and $t>1$ is a constant integer independent of the system dimension $d$.
For any $0<\epsilon<1$, the CBNE protocol (in Algorithm~\ref{alg:CBNEmoments}) can return $\epsilon$-additive error estimates for all 
$p_2,p_3,\dots,p_t$ with high probability, provided that:
$(i)$ the random unitaries are sampled from a $2t$-design ensemble, 
\begin{align}
(ii)\ N_U= \max\left\lbrace 1, \dfrac{c_1}{d\epsilon^2} \right\rbrace  
,\quad\text{and}\quad 
(iii)\ N_M\geq  c_2  \min\left\lbrace \frac{d^{1-1/t}}{\epsilon^{2/t}}, d \right\rbrace. 
\end{align}
Here $c_1,c_2>0$ are some constants independent of $d$ and $\epsilon$. 
\end{theorem}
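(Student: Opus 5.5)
The plan is to combine the unbiasedness of \lref{lam:UnbiasMkU} and the variance bound of \lref{lem:MomVariance} with Chebyshev's inequality. This gives that each $\hat\zeta_k$ is accurate to within $\epsilon' = \Theta(\epsilon)$. We then propagate this error through the triangular inversion of \eref{eq:zetaExpan}.

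\emph{Step 1: single-$\zeta_k$ accuracy.} For each $k\in\{2,\dots,t\}$, the ensemble is a $2t$-design and hence a $2k$-design. The estimators $\hat M_k^{U_s}$ for different $s$ are i.i.d.\ because both the $U_s$ and the measurement outcomes are independent across $s$. Therefore
\begin{align}
\Var[\hat\zeta_k]=\frac{1}{N_U}\Var[\hat M_k^{U}]=\frac{1}{N_U}\,\bigo{\max\left\lbrace \frac{1}{d},\frac{d^{k-1}}{N_M^k}\right\rbrace}.
\end{align}
First consider $N_U\geq c_1/(d\epsilon^2)$. Then the $1/(dN_U)$ term is at most $\epsilon^2/c_1$. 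The second term is bounded separately.

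\emph{Step 2: the $N_M$ term.} There are two regimes.
\begin{itemize}
\item If $N_M\geq c_2 d$, then $d^{k-1}/N_M^k\le c_2^{-k}/d$. Combined with the factor $1/N_U\le d\epsilon^2/c_1$ when $N_U>1$, this gives $\bigo{\epsilon^2}$. When $N_U=1$, we have $d\ge c_1/\epsilon^2$, so it is again $\bigo{\epsilon^2}$.
\item If instead $N_M\ge c_2 d^{1-1/t}\epsilon^{-2/t}$, use $N_M\le d$ (otherwise the previous case applies). Since $d/N_M\ge 1$ and $k\le t$, we get $d^{k-1}/N_M^k=(d/N_M)^k/d\le (d/N_M)^t/d\le c_2^{-t}\epsilon^2$.
\end{itemize}
In all cases $\Var[\hat\zeta_k]\le C\epsilon^2/\min\{c_1,c_2^t\}$. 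Chebyshev's inequality then gives $|\hat\zeta_k-\zeta_k|\le\epsilon'$ with probability at least $1-\delta/t$, after choosing $c_1,c_2$ large in terms of $t,\delta$ and the target $\epsilon'$. The constant-factor $\min$ in (iii) is handled by this case split. A union bound over the $t-1$ values of $k$ covers all of them simultaneously. If a $\log(1/\delta)$ dependence is wanted, a median-of-means version could be used, but the constant-probability statement only needs Chebyshev.

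\emph{Step 3: error propagation.} The relation $\zeta_k=\frac{1}{k!}\sum_{\pi}\prod_{\text{cycles}}p_{|c|}$ has the form $\zeta_k = p_k/k + f_k(p_2,\dots,p_{k-1})$. Here the coefficient of $p_k$ is $(k-1)!/k!=1/k$, coming from the $k$-cycles, and $f_k$ is a fixed polynomial with bounded coefficients. We prove by induction on $k$ that $|\hat p_k-p_k|\le C_k\epsilon'$. The key point is that all true $p_j\in[0,1]$ and all $\hat p_j$ lie within $C_j\epsilon'\le 1$ of them, so $f_k$ is Lipschitz with a $t$-dependent constant on $[-1,2]^{k-2}$. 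Then $\hat p_k = k(\hat\zeta_k-f_k(\hat p_{<k}))$ satisfies $|\hat p_k-p_k|\le k(\epsilon'+L_k\max_{j<k}C_j\epsilon')$. Choosing $\epsilon'=\epsilon/\max_k C_k$, with each $C_k$ depending only on $t$, finishes the proof.

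The main obstacle is Step 2: making the $\min\{\cdot\}$ in condition (iii) mesh with both regimes of $N_U$. In particular, one must check that when $N_U=1$ the $d^{k-1}/N_M^k$ term is controlled for every $k\le t$, not just $k=t$. This reduces to the monotonicity of $(d/N_M)^k$ in $k$ when $N_M\le d$.
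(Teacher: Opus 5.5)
Your proposal is correct and takes the same approach as the paper. Both arguments combine unbiasedness (\lref{lam:UnbiasMkU}), the variance bound (\lref{lem:MomVariance}), Chebyshev's inequality with a union bound over $k$, and the inductive inversion of \lref{lem:EstimatorTrans}. Your case split on $N_M$ and your Lipschitz-on-a-bounded-box argument just spell out what the paper calls ``straightforward to verify'' and ``a summation of a constant number of terms.''

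One small fix is needed in your second case. When $d<N_M<c_2 d$ (possible if $c_2>1$), the first case does not literally apply. There, however, $d^{k-1}/N_M^k\le 1/d$ holds directly, so the $1/(dN_U)\le\epsilon^2/c_1$ bound already covers it.
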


\begin{proof}[Proof of \tref{thm:MomResult}]
Let $0<\delta<1$ be a constant. 
According to the union bound and \lref{lem:EstimatorTrans} below, in order to estimate all 
$p_2,\dots,p_t$ within $\epsilon$ additive error and failure probability at most $\delta$, it 
suffices to ensure that the following condition holds for all $k=2,\dots,t$:  
\begin{equation}\label{eq:EachUBceps}
\Pr\left\{ \left|\hat\zeta_k-\zeta_k\right| > c\epsilon \right\} \leq \frac{\delta}{t},
\end{equation}
where $c>0$ is some constant factor independent of $\epsilon$ and $d$.  

By \lref{lam:UnbiasMkU}, $\hat\zeta_k = \frac{1}{N_U} \sum_{s=1}^{N_U} \hat{M}_k^{U_s}$ is an unbiased estimator for $\zeta_k$. 
Then Chebyshev's inequality implies that 
\begin{align}
\Pr\left\{ \left|\hat\zeta_k-\zeta_k\right| > c\epsilon \right\}
\leq \frac{1}{c^2\epsilon^2} \Var\left[ \hat\zeta_k\right] 
= \frac{1}{c^2\epsilon^2 N_U} \Var\!\Big[ \hat{M}_k^{U} \Big] 
= \bigo{\frac{1}{\epsilon^2 N_U} \cdot \max\left\lbrace \frac{1}{d} ,\frac{d^{k-1}}{N_{\!M}^{k}}\right\rbrace},  
\end{align}
where the last equality follows from \lref{lem:MomVariance}.
Hence, to make \eref{eq:EachUBceps} hold, it is sufficient to take 
\begin{align} 
\frac{1}{\epsilon^2 N_U d} \leq 1
\quad\text{and}\quad  
\frac{d^{k-1}}{\epsilon^2 N_U N_{\!M}^{k}} \leq 1 ,
\end{align}
where we omit constant coefficients $c,t,\delta$. 
It is straightforward to verify that these two conditions are satisfied by choosing $N_U$ and $N_M$ as in \tref{thm:MomResult}. 
This completes the proof. 
\end{proof}

\begin{lemma}\label{lem:EstimatorTrans}
Suppose $t\geq 2$ is a constant integer independent of the system dimension $d$. For any $0<\eta<1$, if the estimates $\{\hat{\zeta}_k\}_{k=2}^t$ satisfy
	\begin{align}\label{eq:Cond1ptTrans}
		\left| \hat{\zeta}_k - \zeta_k \right| \leq \eta \quad \forall k = 2, \dots, t,
	\end{align} 
	then $\{\hat{p}_k\}_{k=2}^t$ computed in Step~10 of Algorithm~\ref{alg:CBNEmoments} satisfy
	$|\hat{p}_k - p_k| = \mathcal{O}(\eta)$ for all $k=2,\dots,t$.
\end{lemma}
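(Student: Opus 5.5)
We argue by induction on $k$, using the triangular structure of the relations in \eref{eq:zetaExpan}. Grouping permutations $\pi\in\mathrm{S}_k$ by cycle type, we have $\Tr(\rho^{\otimes k}R_\pi)=\prod_{c}p_{|c|}$, where the product runs over the cycles $c$ of $\pi$ and $p_1=1$. Exactly $(k-1)!$ permutations are $k$-cycles, so
\begin{align}
\zeta_k=\frac{1}{k}\,p_k + F_k(p_2,\dots,p_{k-1}),
\end{align}
where $F_k$ is a fixed polynomial with nonnegative rational coefficients depending only on $k$, not on $d$. It is a sum over the cycle types of $\mathrm{S}_k$ other than the single $k$-cycle, with each monomial of total cycle length $k$. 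Step~10 of Algorithm~\ref{alg:CBNEmoments} computes $\hat p_k=k\bigl[\hat\zeta_k-F_k(\hat p_2,\dots,\hat p_{k-1})\bigr]$ recursively, with $\hat p_1=p_1=1$. Consequently,
\begin{align}
\hat p_k-p_k = k(\hat\zeta_k-\zeta_k) - k\bigl[F_k(\hat p_2,\dots,\hat p_{k-1})-F_k(p_2,\dots,p_{k-1})\bigr].
\end{align}

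We then prove by induction on $k$ that $|\hat p_k-p_k|\le C_k\eta$ for constants $C_k$ depending only on $k$. The base case $k=2$ is immediate: $\hat p_2-p_2=2(\hat\zeta_2-\zeta_2)$, so $C_2=2$. For the inductive step, the first term is at most $k\eta$. For the second term, we use the local Lipschitz property of the polynomial $F_k$ on a bounded region. The true values satisfy $|p_j|\le 1$, and by the induction hypothesis together with $\eta<1$, $|\hat p_j|\le 1+C_j$. On the box $\prod_{j<k}[-1-C_j,\,1+C_j]$, every partial derivative of $F_k$ is bounded by a constant $L_k$ depending only on $k$ and $\{C_j\}_{j<k}$. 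The mean value theorem along the segment joining $(p_j)$ and $(\hat p_j)$, which lies in this box, then gives
\begin{align}
|F_k(\hat p)-F_k(p)|\le L_k\sum_{j<k}|\hat p_j-p_j|\le L_k\sum_{j<k}C_j\eta .
\end{align}
Setting $C_k=k\bigl(1+L_k\sum_{j<k}C_j\bigr)$ closes the induction.

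The only point requiring care is keeping the constants independent of $d$ and of $\rho$. This is the reason for restricting to $\eta<1$: it keeps the estimates $\hat p_j$ in a fixed bounded box, so the Lipschitz constant $L_k$ cannot blow up. It also uses that $t$ is a constant, since the $C_k$ grow with $k$, perhaps super-exponentially, but remain finite. No representation-theoretic input is needed beyond the cycle-type expansion and the fact that the coefficient of $p_k$ is exactly $(k-1)!/k!=1/k\neq0$.
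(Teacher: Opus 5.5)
Your proposal is correct and follows the paper's proof: induction on the order, using the cycle-type expansion $t!\,\zeta_t=(t-1)!\,p_t+\sum_{\#(\pi)\ge 2}\prod_{i} p_i^{\nu_i(\pi)}$ together with the sequential inversion performed in Step~10. Your explicit Lipschitz bound on a fixed box, which relies on $\eta<1$, makes rigorous the paper's brief remark that substituting $\mathcal{O}(\eta)$-accurate inputs into a constant number of polynomial terms gives an $\mathcal{O}(\eta)$ error.
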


\begin{proof}[Proof of \lref{lem:EstimatorTrans}]
We proceed by induction on $t$. 

\textbf{Base case ($t=2$):} 
From the relation $\zeta_2= 1/2+p_2/2$, we see that estimating $\zeta_2$ within $\eta$-additive error directly yields an $\mathcal{O}(\eta)$-additive error estimate for $p_2$. This establishes the lemma for $t=2$.

\textbf{Inductive step ($t \geq 3$):} 
Now we assume that $t\geq3$ is a constant integer and the desired result holds for all integers smaller than $t$. 
For any permutation $\pi \in \mathrm{S}_t$, we have 
\begin{align}\label{eq:TRrhotRpiexp}
\Tr\left(\rho^{\otimes t}R_\pi\right)
=
\begin{cases}
	p_t &  \text{when } \#(\pi)=1, \\
	\vspace{-1.2em} \\
	\prod_{i=2}^{t-1} p_i^{\nu_i(\pi)}   &  \text{when } \#(\pi)\geq 2.
\end{cases}
\end{align}
Here $\#(\pi)$ is the number of disjoint cycles in $\pi$, and the exponent $\nu_i(\pi)$ denotes the number of disjoint cycles of length $i$ in $\pi$, which satisfies $0\leq \nu_i(\pi)<t$ (for $i=2,\dots,t-1$) when $\#(\pi)\geq2$. 
Combining \eref{eq:TRrhotRpiexp} with \eref{eq:SMcollision_moment_prl}, we derive:
\begin{align}
t!\,\zeta_t
=\sum_{\pi\in \mathrm{S}_t}\Tr\left(\rho^{\otimes t}R_\pi\right)
=
(t-1)!\, p_t 
+ \sum_{\pi\in \mathrm{S}_t, \#(\pi)\geq 2} \,\prod_{i=2}^{t-1} p_i^{\nu_i(\pi)}, 
\end{align}
where the first term uses the fact that the number of permutations in $\mathrm{S}_t$ with $\#(\pi)=1$ is $(t-1)!$.
Rearranging this relation yields:
\begin{align}\label{eq:ptExp1}
p_t = t\,\zeta_t - \frac{1}{(t-1)!} \sum_{\pi\in \mathrm{S}_t, \#(\pi)\geq 2} \,\prod_{i=2}^{t-1} p_i^{\nu_i(\pi)}.   
\end{align}

By the induction hypothesis, if \eref{eq:Cond1ptTrans} holds, then the estimates $\{\hat{p}_k\}_{k=2}^{t-1}$ obtained from $\{\hat{\zeta}_k\}_{k=2}^{t-1}$ satisfy $|\hat{p}_k - p_k| = \mathcal{O}(\eta)$. 
Since the RHS of \eref{eq:ptExp1} is a summation of a constant number of terms, 
substituting these $\{\hat{p}_k\}_{k=2}^{t-1}$ and $\hat{\zeta}_t$ into \eref{eq:ptExp1} produces an estimate $\hat{p}_t$ for $p_t$ within error $\mathcal{O}(\eta)$. This completes the inductive step and the proof.
\end{proof}

\subsection{Moment estimation with an approximate unitary design ensemble}\label{sec:appdesignCBNEmoment}

This subsection aims to prove \pref{prop:appCBNEmoment} in the End Matter.
To this end, we first analyze the expectation and variance of our estimator $\hat{M}_k^{U}$ (see \eref{eq:SMcollision_moment_prl}) in the case where $U$ is drawn from an approximate unitary design ensemble.

\begin{lemma}\label{lam:biasMkUapp}
Suppose $U$ is a random unitary sampled from a $\mu$-approximate $k$-design ensemble $\mathcal E_{k}^\mu$, then 
\begin{align}
 \left| \underset{U\sim \mathcal E_{k}^\mu,  \mathbf{b}_{U}}{\E} \left[ \hat{M}_k^{U}\right] -\zeta_k \right|  \leq \mu\, \zeta_k. 
\end{align}
\end{lemma}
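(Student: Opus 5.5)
The plan is to repeat the computation in the proof of \lref{lam:UnbiasMkU}, but stop just before invoking $\Phi_{\mathcal E_k}=\Phi_{\rm H}$. Conditioning on $U$ and taking the expectation over the outcomes $\mathbf{b}_U$ exactly as there gives
\begin{align}
\underset{U\sim \mathcal E_{k}^\mu,  \mathbf{b}_{U}}{\E} \left[ \hat{M}_k^{U}\right]
=\frac{\kappa_k}{d}\sum_{b=0}^{d-1}\Tr\left[\rho^{\otimes k}\,\Phi_{\mathcal E_k^\mu}\big(\ketbra{b}{b}^{\otimes k}\big)\right].
\end{align}
The same expression with $\Phi_{\rm H}$ in place of $\Phi_{\mathcal E_k^\mu}$ equals $\zeta_k$, by \eref{eq:PhiHaar} and the last step of that proof.

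Next I use the definition \eref{eq:appdesignDef}. Since $(1+\mu)\Phi_{\rm H}-\Phi_{\mathcal E_k^\mu}$ and $\Phi_{\mathcal E_k^\mu}-(1-\mu)\Phi_{\rm H}$ are completely positive, hence positive, they map the positive operator $X:=\sum_b\ketbra{b}{b}^{\otimes k}$ to positive operators. Pairing these with the positive operator $\rho^{\otimes k}$ gives nonnegative traces, so
\begin{align}
(1-\mu)\Tr\left[\rho^{\otimes k}\Phi_{\rm H}(X)\right]\le \Tr\left[\rho^{\otimes k}\Phi_{\mathcal E_k^\mu}(X)\right]\le (1+\mu)\Tr\left[\rho^{\otimes k}\Phi_{\rm H}(X)\right].
\end{align}
Multiplying through by the positive factor $\kappa_k/d$ gives $(1-\mu)\zeta_k\le \E[\hat M_k^U]\le (1+\mu)\zeta_k$, which is the claim.

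There is no real obstacle. The only point needing care is that only positivity, not complete positivity, is used. This positivity is applied to a positive input, $X$, and tested against a positive operator, $\rho^{\otimes k}$. Together these ensure that the two-sided operator inequality becomes a scalar multiplicative bound. Note also that $\zeta_k=\Tr[\rho^{\otimes k}\sym^{(k)}]/\kappa_k\cdot(\kappa_k)\ge0$, so the bound $\mu\zeta_k$ is meaningful.
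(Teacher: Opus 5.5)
Your proof is correct and follows the paper's argument. Both reduce $\E[\hat M_k^U]$ to $\frac{\kappa_k}{d}\sum_b \Tr[\rho^{\otimes k}\Phi_{\mathcal E_k^\mu}(\ketbra{b}{b}^{\otimes k})]$ and then apply the ordering in \eref{eq:appdesignDef} to a positive input, tested against a positive operator. The only cosmetic difference is that the paper bounds each $b$ separately and sums with the triangle inequality, whereas you apply the ordering once to $X=\sum_b\ketbra{b}{b}^{\otimes k}$ and obtain the explicit sandwich $(1-\mu)\zeta_k\le\E[\hat M_k^U]\le(1+\mu)\zeta_k$.
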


\begin{proof}[Proof of \lref{lam:biasMkUapp}]
We have
\begin{align}
\left| \underset{U\sim \mathcal E_{k}^\mu,  \mathbf{b}_{U}}{\E}\left[ \hat{M}_k^{U}\right] -\zeta_k  \right| 
&\stackrel{(a)}{=} 
\left| 
 \underset{U\sim \mathcal E_{k}}{\E} \, \underset{\,\mathbf{b}_{U}}{\E} \left( \hat{M}_k^{U} \Big| U \right)
-\underset{U\sim \mathcal E_{k}^\mu}{\E} \, \underset{\,\mathbf{b}_{U}}{\E} \left( \hat{M}_k^{U} \Big| U \right)
\right|
\nonumber \\
&=
\left| 
 \frac{\kappa_k}{d}\, \underset{U\sim \mathcal E_{k}}{\E}  \left( \, \sum_{b=0}^{d-1} \<b|U\rho U^{\dag}|b\>^k \right) 
-\frac{\kappa_k}{d}\, \underset{U\sim \mathcal E_{k}^\mu}{\E} \left(  \, \sum_{b=0}^{d-1} \<b|U\rho U^{\dag}|b\>^k \right) 
\right|
\nonumber \\
&\leq \frac{\kappa_k}{d} \sum_{b=0}^{d-1}  \,
\left| 
 \Tr \left[ \rho^{\otimes k}  \Phi_{\mathcal E_{k}} \big( \ketbra{b}{b}^{\otimes k} \big) \right] 
-\Tr \left[ \rho^{\otimes k}  \Phi_{\mathcal E_{k}^\mu} \big( \ketbra{b}{b}^{\otimes k} \big) \right]  
\right| 
\nonumber \\
&\stackrel{(b)}{\leq}  \frac{\kappa_k}{d} \sum_{b=0}^{d-1}  \mu\,\Tr \left[ \rho^{\otimes k}  \Phi_{\mathcal E_{k}} \big( \ketbra{b}{b}^{\otimes k} \big) \right]
\nonumber \\
&\stackrel{(c)}{=} \frac{\mu}{k!} \sum_{\pi\in \mathrm{S}_k}\Tr\left(\rho^{\otimes k}R_\pi\right)
= \mu\, \zeta_k , 
\end{align}
which confirms \lref{lam:biasMkUapp}. 
Here, $(a)$ follows from \lref{lam:UnbiasMkU};  
$(b)$ follows from the definition of the $\mu$-approximate unitary $k$-design in \eref{eq:appdesignDef}; $(c)$ follows from \eref{eq:PhiHaar} and the relation $\Phi_{\mathcal E_{k}}=\Phi_{\rm H}$. 
\end{proof}

\begin{lemma}\label{lem:MomVarianceApp}
Suppose $k>1$ is a constant integer and $U$ is a random unitary sampled from a $\mu$-approximate $2k$-design ensemble $\mathcal E_{2k}^\mu$ with $\mu\le1$. Then 
\begin{align}
\underset{U\sim \mathcal E_{2k}^\mu,  \mathbf{b}_{U}}{\Var} \left[\hat{M}_k^{U}\right] 
=  \bigo{\max\left\lbrace \frac{1}{d} ,\frac{d^{k-1}}{N_{\!M}^{k}}\right\rbrace } +3\mu.
\end{align}
\end{lemma}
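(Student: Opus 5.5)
The plan is to write $\Var_{\mathcal E^\mu_{2k}}[\hat M_k^U] = \E_{\mathcal E^\mu_{2k}}[(\hat M_k^U)^2] - (\E_{\mathcal E^\mu_{2k}}[\hat M_k^U])^2$ and compare each term with its exact-design counterpart from the proofs of \lref{lam:UnbiasMkU} and \lref{lem:MomVariance}.

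\textbf{Second moment.} Conditioned on $U$, $\E_{\mathbf b_U}[(\hat M_k^U)^2|U]$ is the same finite combination of terms as in \eref{eq:VarhatM}. Each term has the form $\sum_b \<b|U\rho U^\dag|b\>^{k+l}$ or $\sum_{b,b'}\<b|U\rho U^\dag|b\>^k\<b'|U\rho U^\dag|b'\>^k$. All of these are expectations $\Tr[\rho^{\otimes m}\,U^{\dagger\otimes m}P\,U^{\otimes m}]$ with $P\geq0$ a product of computational-basis projectors and $m\le 2k$. Crucially, every coefficient in front of them in \eref{eq:VarhatM} is nonnegative.

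Since a $\mu$-approximate $2k$-design is also a $\mu$-approximate $m$-design for $m\le 2k$, the two-sided complete-positivity condition \eref{eq:appdesignDef} applies to each such term. Applied to positive $P$ and positive $\rho^{\otimes m}$, it gives that each term under $\mathcal E^\mu_{2k}$ lies within a factor $[1-\mu,1+\mu]$ of its Haar value. Summing with the nonnegative coefficients yields
\[
\E_{\mathcal E^\mu}[(\hat M_k^U)^2]\le(1+\mu)\,\E_{\rm H}[(\hat M_k^U)^2]=(1+\mu)\big(\Var_{\rm H}[\hat M_k^U]+\zeta_k^2\big).
\]

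\textbf{First moment.} By \lref{lam:biasMkUapp}, $\E_{\mathcal E^\mu}[\hat M_k^U]\ge(1-\mu)\zeta_k$. Since $\zeta_k\ge0$, it follows that $(\E_{\mathcal E^\mu}[\hat M_k^U])^2\ge(1-\mu)^2\zeta_k^2$.

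\textbf{Combining.} Using \lref{lem:MomVariance} for $\Var_{\rm H}$,
\[
\Var_{\mathcal E^\mu}\le(1+\mu)\,\mathcal O\big(\max\{1/d,\,d^{k-1}/N_M^k\}\big)+\big[(1+\mu)-(1-\mu)^2\big]\zeta_k^2 .
\]
The bracket equals $3\mu-\mu^2\le3\mu$. We also have $\zeta_k\le1$, since $\zeta_k=\Tr(\rho^{\otimes k}\Pi^{(k)}_{\rm sym})$. Finally, $1+\mu\le2$ is absorbed into the $\mathcal O$. This gives the claimed bound $\mathcal O(\max\{1/d,d^{k-1}/N_M^k\})+3\mu$.

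\textbf{Main obstacle.} The delicate step is the second-moment comparison. One has to make sure that the multiplicative approximation genuinely applies term by term, which requires positivity of both the measured operator and the input, together with nonnegative combination coefficients. One must also handle the $l$-dependent lower-order twirls correctly. Both points hold because every summand is the probability of a collision event, and is therefore nonnegative under any ensemble.
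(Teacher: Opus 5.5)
Your proposal is correct and follows essentially the same route as the paper. The paper likewise bounds the second moment term by term by $(1+\mu)$ times its exact-design value (its \eref{eq:EE11UB}, split into the overlapping-index and disjoint-index cases) and lower-bounds the squared mean by $(1-\mu)^2\zeta_k^2$ via \lref{lam:biasMkUapp}, arriving at $(1+\mu)\Var+\mu(3-\mu)\zeta_k^2$; your explicit note that $\zeta_k=\Tr\bigl(\rho^{\otimes k}\Pi_{\mathrm{sym}}^{(k)}\bigr)\le 1$ supplies a step the paper leaves implicit.
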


\begin{proof}[Proof of \lref{lem:MomVarianceApp}]
Similar to \eref{eq:VarhatM}, 
the variance of $\hat{M}_k^{U}$ can be expanded as 
\begin{align}\label{eq:VarhatMapp} 
&\underset{U\sim \mathcal E_{2k}^\mu,  \mathbf{b}_{U}}{\Var} \left[ \hat{M}_k^{U}\right] 
= \underset{U\sim \mathcal E_{2k}^\mu}{\E} \,
\underset{\mathbf{b}_{U}}{\E}
\left[ \left( \hat{M}_k^{U}\right) ^2 \Big| U \right] 
- \left( \underset{U\sim \mathcal E_{2k}^\mu,  \mathbf{b}_{U}}{\E} \left[\hat{M}_k^{U}\right] \right)^2 
\nonumber\\
&= 
\binom{N_M}{k}^{-2} \left( \frac{\kappa_k}{d}\right)^2
\sum_{\substack{1\leq i_1<\dots<i_k\leq N_M \\ 1\leq j_1<\dots<j_k\leq N_M}} 
\underset{U\sim \mathcal E_{2k}^\mu}{\E} \, \underset{\,\mathbf{b}_{U}}{\E} \Big(
\textbf{1}\left[\hat{b}_{i_1}=\dots=\hat{b}_{i_k}\right]  
\textbf{1}\left[\hat{b}_{j_1}=\dots=\hat{b}_{j_k}\right]  \Big| U \Big) 
- \left( \underset{U\sim \mathcal E_{2k}^\mu,  \mathbf{b}_{U}}{\E} \left[\hat{M}_k^{U}\right] \right)^2 
\nonumber\\
&\stackrel{(a)}{\leq}  
\binom{N_M}{k}^{-2} \left( \frac{\kappa_k}{d}\right)^2
\sum_{\substack{1\leq i_1<\dots<i_k\leq N_M \\ 1\leq j_1<\dots<j_k\leq N_M}} 
(1+\mu) \underset{U\sim \mathcal E_{2k}}{\E} \, \underset{\,\mathbf{b}_{U}}{\E} \Big(
\textbf{1}\left[\hat{b}_{i_1}=\dots=\hat{b}_{i_k}\right]  
\textbf{1}\left[\hat{b}_{j_1}=\dots=\hat{b}_{j_k}\right]  \Big| U \Big) 
- \big[ (1-\mu)\zeta_k\big]^2
\nonumber\\
&\stackrel{(b)}{=} 
(1+\mu) \left(\underset{U\sim \mathcal E_{2k},\mathbf{b}_{U}}{\Var} \left[ \hat{M}_k^{U}\right] + \zeta_k^2\right)
- (1-\mu)^2\zeta_k^2
\nonumber\\
&= 
(1+\mu)  \underset{U\sim \mathcal E_{2k},\mathbf{b}_{U}}{\Var} \left[ \hat{M}_k^{U}\right] + \mu(3-\mu)\zeta_k^2
\nonumber\\
&\stackrel{(c)}{=}
\bigo{\max\left\lbrace \frac{1}{d} ,\frac{d^{k-1}}{N_{\!M}^{k}} \right\rbrace } +3\mu. 
\end{align}
Here, $(a)$ holds because $\underset{U\sim \mathcal E_{2k}^\mu,\mathbf{b}_{U}}{\E} \left[\hat{M}_k^{U}\right]\geq (1-\mu)\zeta_k\geq 0$ (see \lref{lam:biasMkUapp}) and 
\begin{align}\label{eq:EE11UB}
\underset{U\sim \mathcal E_{2k}^{\mu}}{\E} \, \underset{\,\mathbf{b}_{U}}{\E} \Big(
\textbf{1}\left[\hat{b}_{i_1}=\dots=\hat{b}_{i_k}\right]  
\textbf{1}\left[\hat{b}_{j_1}=\dots=\hat{b}_{j_k}\right]  \Big| U \Big) 
\leq 
(1+\mu) \underset{U\sim \mathcal E_{2k}}{\E} \, \underset{\,\mathbf{b}_{U}}{\E} \Big(
\textbf{1}\left[\hat{b}_{i_1}=\dots=\hat{b}_{i_k}\right]  
\textbf{1}\left[\hat{b}_{j_1}=\dots=\hat{b}_{j_k}\right]  \Big| U \Big),
\end{align}	
which is proved below; $(b)$ follows from \eref{eq:VarhatM}, and $(c)$ follows from \lref{lem:MomVariance}. 

To prove \eref{eq:EE11UB}, we consider two cases. First,  if 
${\rm Co}[(i_1,\dots,i_k);(j_1,\dots,j_k)]=k-l$ with $l=0,\dots,k-1$, we have 
\begin{align}
&\left| 
\underset{U\sim \mathcal E_{2k}^\mu}{\E} \, \underset{\,\mathbf{b}_{U}}{\E} \Big(
\textbf{1}\left[\hat{b}_{i_1}=\dots=\hat{b}_{i_k}\right]  
\textbf{1}\left[\hat{b}_{j_1}=\dots=\hat{b}_{j_k}\right]  \Big| U \Big) 
-
\underset{U\sim \mathcal E_{2k}}{\E} \, \underset{\,\mathbf{b}_{U}}{\E} \Big(
\textbf{1}\left[\hat{b}_{i_1}=\dots=\hat{b}_{i_k}\right]  
\textbf{1}\left[\hat{b}_{j_1}=\dots=\hat{b}_{j_k}\right]  \Big| U \Big) 
\right| 
\nonumber\\&= 
\left| 
 \underset{U\sim \mathcal E_{2k}^\mu}{\E}  \sum_{b=0}^{d-1} \<b|U\rho U^{\dag}|b\>^{k+l}
-\underset{U\sim \mathcal E_{2k}}{\E}  \sum_{b=0}^{d-1} \<b|U\rho U^{\dag}|b\>^{k+l}
\right| 
\nonumber\\&\leq 
\sum_{b=0}^{d-1}\, \left| 
  \Tr \left[ \rho^{\otimes (k+l)}  \Phi_{\mathcal E_{2k}^\mu} \big( \ketbra{b}{b}^{\otimes (k+l)} \big) \right] 
- \Tr \left[ \rho^{\otimes (k+l)}  \Phi_{\mathcal E_{2k}} \big( \ketbra{b}{b}^{\otimes (k+l)} \big) \right] 
\right| 
\nonumber\\
&\leq  
\sum_{b=0}^{d-1}\, \mu\,
\Tr \left[ \rho^{\otimes (k+l)}  \Phi_{\mathcal E_{2k}} \big( \ketbra{b}{b}^{\otimes (k+l)} \big) \right] 
\nonumber\\
&= 
\mu \underset{U\sim \mathcal E_{2k}}{\E} \, \underset{\,\mathbf{b}_{U}}{\E} \Big(
\textbf{1}\left[\hat{b}_{i_1}=\dots=\hat{b}_{i_k}\right]  
\textbf{1}\left[\hat{b}_{j_1}=\dots=\hat{b}_{j_k}\right]  \Big| U \Big), 
\end{align}
which confirms \eref{eq:EE11UB}.
Here the second inequality follows from \eref{eq:appdesignDef}. 
Second, if ${\rm Co}[(i_1,\dots,i_k);(j_1,\dots,j_k)]=0$, we have 
\begin{align}
&\left| 
\underset{U\sim \mathcal E_{2k}^\mu}{\E} \, \underset{\,\mathbf{b}_{U}}{\E} \Big(
\textbf{1}\left[\hat{b}_{i_1}=\dots=\hat{b}_{i_k}\right]  
\textbf{1}\left[\hat{b}_{j_1}=\dots=\hat{b}_{j_k}\right]  \Big| U \Big) 
-
\underset{U\sim \mathcal E_{2k}}{\E} \, \underset{\,\mathbf{b}_{U}}{\E} \Big(
\textbf{1}\left[\hat{b}_{i_1}=\dots=\hat{b}_{i_k}\right]  
\textbf{1}\left[\hat{b}_{j_1}=\dots=\hat{b}_{j_k}\right]  \Big| U \Big) 
\right| 
\nonumber\\&= 
\left| 
 \underset{U\sim \mathcal E_{2k}^\mu}{\E}  \sum_{b,b'=0}^{d-1} \<b|U\rho U^{\dag}|b\>^{k} \<b'|U\rho U^{\dag}|b'\>^{k}
-\underset{U\sim \mathcal E_{2k}}{\E}  \sum_{b,b'=0}^{d-1} \<b|U\rho U^{\dag}|b\>^{k} \<b'|U\rho U^{\dag}|b'\>^{k} \right| 
\nonumber\\
&\leq 
\sum_{b,b'=0}^{d-1}\, \left| 
\Tr \left[ \rho^{\otimes 2k}  \Phi_{\mathcal E_{2k}^\mu} \big( \ketbra{b}{b}^{\otimes k}\!\otimes \ketbra{b'}{b'}^{\otimes k} \big) \right] 
- \Tr \left[ \rho^{\otimes 2k}  \Phi_{\mathcal E_{2k}} \big( \ketbra{b}{b}^{\otimes k}\!\otimes \ketbra{b'}{b'}^{\otimes k} \big) \right] 
\right| 
\nonumber\\
&\leq  
\sum_{b,b'=0}^{d-1}\, \mu\,
\Tr \left[ \rho^{\otimes 2k}  \Phi_{\mathcal E_{2k}} \big( \ketbra{b}{b}^{\otimes k}\!\otimes \ketbra{b'}{b'}^{\otimes k} \big) \right] 
\nonumber\\
&= 
\mu \underset{U\sim \mathcal E_{2k}}{\E} \, \underset{\,\mathbf{b}_{U}}{\E} \Big(
\textbf{1}\left[\hat{b}_{i_1}=\dots=\hat{b}_{i_k}\right]  
\textbf{1}\left[\hat{b}_{j_1}=\dots=\hat{b}_{j_k}\right]  \Big| U \Big) ,
\end{align}
which confirms \eref{eq:EE11UB} again.
This completes the proof of \lref{lem:MomVarianceApp}.
\end{proof}

We now prove Proposition~\ref{prop:appCBNEmoment} via Lemmas~\ref{lam:biasMkUapp} and \ref{lem:MomVarianceApp}.

\begin{prop}[Restatement of \pref{prop:appCBNEmoment}]
\label{prop:appCBNEmomentSM}
Suppose $\rho\in\caD(\caH)$ and $t>1$ is a constant integer independent of the system dimension $d$.
For any $0<\epsilon<1$, the CBNE protocol (in Algorithm~\ref{alg:CBNEmoments}) can return $\epsilon$-additive error estimates for all 
$p_2,p_3,\dots,p_t$ with high probability, provided that: $(i)$ the random unitaries are sampled from a $(c_3\epsilon^2)$-approximate $2t$-design ensemble, 
\begin{align}
(ii) \ N_U= \max\left\lbrace 1, \dfrac{c_1}{d\epsilon^2} \right\rbrace ,  
\quad \text{and} \quad 
(iii) \ N_M\geq  c_2  \min\left\lbrace \frac{d^{1-1/t}}{\epsilon^{2/t}}, d \right\rbrace,     
\end{align}
where $c_1,c_2,c_3>0$ are constants  independent of $d$ and $\epsilon$. 
\end{prop}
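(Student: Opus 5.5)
The proof follows the same route as \tref{thm:MomResult}. The one new ingredient is that $\hat\zeta_k$ now has a bias, which Lemma~\ref{lam:biasMkUapp} controls. Fix a constant failure probability $\delta$. By the union bound and \lref{lem:EstimatorTrans}, it suffices to show that for each $k=2,\dots,t$,
\begin{align}
\Pr\left\{ \left|\hat\zeta_k-\zeta_k\right| > c\epsilon \right\} \leq \frac{\delta}{t}
\end{align}
for a suitable constant $c$. A $\mu$-approximate $2t$-design is also a $\mu$-approximate $2k$-design and a $\mu$-approximate $k$-design for every $k\le t$, so Lemmas~\ref{lam:biasMkUapp} and \ref{lem:MomVarianceApp} apply with $\mu=c_3\epsilon^2$ for every $k$ in the range.

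Split the error by the triangle inequality:
\begin{align}
\left|\hat\zeta_k-\zeta_k\right| \le \left|\hat\zeta_k-\E[\hat\zeta_k]\right| + \left|\E[\hat\zeta_k]-\zeta_k\right| .
\end{align}
For the bias term, Lemma~\ref{lam:biasMkUapp} gives $|\E[\hat\zeta_k]-\zeta_k|\le \mu\zeta_k\le \mu$, using $\zeta_k\le 1$ (it is an average of $\Tr(\rho^{\otimes k}R_\pi)$, each of modulus at most $1$). With $\mu=c_3\epsilon^2\le c_3\epsilon$, choosing $c_3\le c/2$ makes the bias at most $c\epsilon/2$.

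For the fluctuation term, apply Chebyshev with threshold $c\epsilon/2$, using the independence of the $N_U$ rounds and Lemma~\ref{lem:MomVarianceApp}:
\begin{align}
\Pr\left\{ \left|\hat\zeta_k-\E[\hat\zeta_k]\right| > c\epsilon/2 \right\}
\le \frac{4}{c^2\epsilon^2 N_U}\left[ \bigo{\max\left\lbrace \frac{1}{d} ,\frac{d^{k-1}}{N_{\!M}^{k}}\right\rbrace} + 3\mu \right].
\end{align}

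The main obstacle is the additive $3\mu$ term in the variance. It is not suppressed by $N_M$, and when $N_U=1$ it must be killed by $\mu$ alone. This is exactly why the design error has to scale as $\epsilon^2$ rather than $\epsilon$. With $\mu=c_3\epsilon^2$ the contribution $12\mu/(c^2\epsilon^2 N_U)\le 12c_3/c^2$ can be made at most $\delta/(2t)$ by taking $c_3$ small enough.

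The remaining terms are handled exactly as in \tref{thm:MomResult}. The choice $N_U\ge c_1/(d\epsilon^2)$ handles the $1/(d\epsilon^2N_U)$ term. The choice of $N_M$ in (iii) gives $d^{k-1}/(\epsilon^2N_UN_M^k)\lesssim 1$ for all $k\le t$; in the regime $N_M\gtrsim d$ this term is already dominated by the $1/d$ term. Taking $c_1,c_2$ large relative to $c,t,\delta$ then finishes the proof.
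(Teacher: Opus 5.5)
Your proposal is correct and follows essentially the same route as the paper. You reduce via the union bound and \lref{lem:EstimatorTrans} to a per-$k$ tail bound, bound the bias by $\mu\zeta_k\le\mu$ using \lref{lam:biasMkUapp}, control the fluctuation with Chebyshev and \lref{lem:MomVarianceApp}, and take $\mu\propto\epsilon^2$ so that the $3\mu/(\epsilon^2 N_U)$ term is a small constant; the only difference is that the paper fixes $\mu=c^2\delta\epsilon^2/(20t)$ explicitly, where you leave $c_3$ as a sufficiently small constant.
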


\begin{proof}[Proof of \pref{prop:appCBNEmomentSM}]
Let $0<\delta<1$ be a constant. 
According to the union bound and \lref{lem:EstimatorTrans}, in order to estimate all 
$p_2,\dots,p_t$ within $\epsilon$ additive error and failure probability at most $\delta$, it 
suffices to ensure that the following condition holds for all $k=2,3,\dots,t$: 
\begin{equation}\label{eq:EachUBcepsAPP}
	\Pr\left\{ \left|\hat\zeta_k-\zeta_k\right| > c\epsilon \right\} \leq \frac{\delta}{t},
\end{equation}
where $0<c<1$ is some constant factor independent of $\epsilon$ and $d$.  
Note that \eref{eq:EachUBcepsAPP} can be achieved if 
\begin{align}\label{eq:CondiAppMom}
\left|\E[\hat\zeta_k]-\zeta_k\right| < \frac{c\epsilon}{2} 
\quad\text{and}\quad
\Pr\left\{ \left|\hat\zeta_k-\E[\hat\zeta_k]\right| > \frac{c\epsilon}{2}  \right\} \leq \frac{\delta}{t} \quad \forall k=2,3,\dots,t. 
\end{align} 

Assume that the random unitaries used in the CBNE protocol are sampled from a 
$(c^2\delta\epsilon^2/20t)$-approximate $2t$-design ensemble. 
Then \lref{lam:biasMkUapp} implies that
\begin{align}
 \left|\E[\hat\zeta_k]-\zeta_k\right| 
=\Big|\E\left[ \hat M^U_k\right] -\zeta_k\Big| 
\leq \frac{c^2\delta \epsilon^2}{20 t}
\leq \frac{c\epsilon}{2}. 
\end{align} 
In addition, Chebyshev's inequality implies that 
\begin{align}\label{eq:ChebyY}
\Pr\left\{ \left|\hat\zeta_k-\E[\hat\zeta_k]\right| > \frac{c\epsilon}{2}  \right\} 
&\leq \frac{4}{c^2\epsilon^2}\Var[\hat\zeta_k]  
= \frac{4}{c^2\epsilon^2} \cdot \frac{1}{N_U} 
\left[ \bigo{\max\left\lbrace \frac{1}{d} ,\frac{d^{k-1}}{N_{\!M}^{k}} \right\rbrace } 
       + \frac{3c^2\delta \epsilon^2}{20 t} \right] 
\nonumber\\
&\leq\frac{1}{\epsilon^2 N_U} \bigo{\max\left\lbrace \frac{1}{d} ,\frac{d^{k-1}}{N_{\!M}^{k}} \right\rbrace } 
     + \frac{3\delta}{ 5t }, 
\end{align} 
where the second inequality follows from \lref{lem:MomVarianceApp}. 
Therefore, to make sure that \eref{eq:CondiAppMom} holds, it is sufficient to take 
\begin{align} 
\frac{1}{\epsilon^2 dN_U} \leq 1
\quad\text{and}\quad  
\frac{d^{k-1}}{\epsilon^2  N_U N_{\!M}^{k}} \leq 1 , 
\end{align}
where we omit constant coefficients, including $\delta$ and $t$. 
It is straightforward to verify that these two conditions are satisfied by choosing $N_U$ and $N_M$ as in \eref{eq:MomResultApp}. 
This completes the proof of \pref{prop:appCBNEmomentSM}. 
\end{proof}

\section{Estimation of \texorpdfstring{$\Tr(O\rho^t)$}{} with  the CBNE protocol}\label{sec:CBNEmain}

In this section, we provide further details on the CBNE protocol for estimating nonlinear quantities of the form \(\Tr(O\rho^t)\).
The complete protocol is summarized in Algorithm~\ref{alg:CBNEmain} in the End Matter.
We also present in Algorithm~\ref{alg:variantCBNE} a closely related variant of the CBNE protocol, which will be used in the analysis with approximate unitary designs.

The following quantity [introduced in \eref{eq:xikO0def} of the main text] will be frequently used: 
\begin{equation}
\xi_k^{O} := \frac{1}{(k+1)! } \sum_{\pi\in \mathrm{S}_{k+1}}  \Tr\left[ \left(\rho^{\otimes k} \otimes O \right)R_\pi\right] ,
\end{equation}
which 
can be expressed as a polynomial of $p_i$ and $\Tr(O\rho^j)$ with $1\leq i,j\leq k$: 
\begin{align}\label{eq:xiORelation+}
\xi_1^{O}&=  \frac{\Tr(O\rho)+\Tr(O)}{2}, 
\nonumber\\ 
\xi_2^{O}&=  \frac{\Tr(O\rho)+\Tr(O\rho^2)}{3}+\frac{(1+p_2)\Tr(O)}{6},
\nonumber\\ 
\xi_3^{O}&=  \frac{(1+p_2)\Tr(O\rho)}{8} + \frac{\Tr(O\rho^2)+\Tr(O\rho^3)}{4}
+\frac{(1+3p_2+2p_3) \Tr(O)}{24}, 
\nonumber\\
&\dots \dots
\end{align}

The rest of this section is organized as follows: 
In Sec.~\ref{sec:gO}, we introduce an  auxiliary function and an lemma.
In Sec.~\ref{sec:ExpVarCBNE}, we derive the expectation and variance of our estimator $\hat{\Gamma}_k^{U}(O)$ [see  Eqs.~\eqref{eq:collision_general_prl} and \eqref{eq:GammakUsO}]. 
In Sec.~\ref{sec:CBNEmoment}, we prove the performance of CBNE as stated in \tref{thm:MomOResult} of the main text. In Sec.~\ref{sec:appdesignCBNE}, we prove \pref{prop:appCBNE} in the End Matter, which clarifies the performance of the variant CBNE protocol when the ensemble $\mathcal E$ is an approximate unitary design.

\begin{figure}
\begin{algorithm}[H]
{\small
\hspace{-152pt}\textbf{Input:}  $N_U N_M$ sequentially prepared quantum state $\rho$, observable $O$, and integer $t\geq 1$.   \\

\begin{algorithmic}[1]
\caption{{\small Variant CBNE protocol for estimating $\left\{ p_k, \Tr(O\rho^k)\right\}_{k=1}^t$} \ \ }
\label{alg:variantCBNE}

\For{$s=1,2,\dots,N_U$,}
\State{Randomly choose a unitary $U_s$ from an ensemble $\mathcal E$ and record it. }
\For{$j=1,2,\dots,N_M$,}
\State{Evolve the state $\rho$ using $U_s$ to get $U_s\rho U_s^{\dag}$.} 
\State{Measure the rotated state in the computational basis, obtaining outcome $\hat{b}_j\in\{0,\dots,d-1\}$.}
\EndFor 
\State{Compute the following collision estimator for all $k=1,\dots,t$ using the measurement data $\mathbf{b}_{U_s} = \{\hat{b}_1, \dots, \hat{b}_{N_M}\}$:  
\begin{align}\label{eq:GammakUsO}
\hat{\Gamma}_k^{U_{\!s}} (O) =    
	\frac{ \kappa_{k+1}}{d\binom{N_M}{k}}  
	\sum_{ i_1 <i_2 < \dots < i_k }  \, 
	\textbf{1}\left\{\hat{b}_{i_1}=\dots=\hat{b}_{i_k}\right\} \, \widetilde{\Pr}_{O}(\hat{b}_{i_1} |U_s), 
\end{align}
where $\widetilde{\Pr}_{O}(b|U)=\bra{b} U O U^{\dag} \ket{b}$.
}

\EndFor 
\State{For $k=1,2,\dots,t$, compute $\hat\xi_k^{O}= \frac{1}{N_U} \sum_{s=1}^{N_U} \hat{\Gamma}_k^{U_{\!s}}(O)$ as our estimate for $\xi_k^{O}$. }
\State{Run the post-processing of Algorithm~\ref{alg:CBNEmoments} to obtain estimates $\hat{p}_k$ for $p_k$ ($k=2,\dots,t$).}
\State{Substitute the estimated values of $\xi_1^{O},\dots,\xi_t^{O}$ and $p_2,\dots,p_t$ into  \eref{eq:xiORelation+}, then solve the equations to obtain estimates $\hat{o}_k$ for $\Tr(O\rho^k)$ ($k=1,\dots,t$).}

\end{algorithmic}
\hspace{-368pt} \textbf{Output:} 
$\hat{p}_2, \dots, \hat{p}_t$ and 
$\hat{o}_1, \dots, \hat{o}_t$. 
}
\end{algorithm}
\end{figure}

\subsection{An auxiliary function}\label{sec:gO}

For later use, we define the following function for an Hermitian observable $O$:
\begin{align}\label{eq:definegO}
g(O):=
\begin{cases}
	\Tr(O)^2 &  \text{when } O \geq 0, \\
	\Tr(O^2)   &  \text{when } \Tr(O)=0.   
\end{cases}
\end{align}

\begin{lemma}\label{lem:xikOUB}
Suppose integer $k\geq 1$, $\rho\in\caD(\caH)$, 
and $O$ is an observable on $\caH$ with  $O\geq0$ or $\Tr(O)=0$. 
Then 
\begin{align}\label{eq:xikOUB}
\max_{\pi\in\mathrm{S}_{k+2}}
\Big| \!\Tr\!\big[ \left( \rho^{\otimes k} \otimes O^{\otimes 2} \right) R_{\pi} \big] \Big| 
\leq g(O)
\quad \text{and} \quad
\left| \xi_k^{O}\right| \leq
\max_{\pi\in\mathrm{S}_{k+1}} \left| \Tr\left[ \left(\rho^{\otimes k} \otimes O \right)R_\pi\right]  \right|
\leq \sqrt{g(O)}. 
\end{align}
\end{lemma}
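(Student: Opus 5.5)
The plan is to use the standard cycle decomposition of permutation traces. For any $\pi\in\mathrm{S}_{m}$ and operators $A_1,\dots,A_m$, $\Tr[(A_1\otimes\cdots\otimes A_m)R_\pi]$ factorizes as a product over the disjoint cycles of $\pi$. Each factor is the trace of the ordered product of the operators sitting on that cycle. Applied to $\rho^{\otimes k}\otimes O^{\otimes 2}$, every cycle containing only copies of $\rho$ contributes $\Tr(\rho^c)=p_c\in(0,1]$, so such factors can be discarded in an upper bound. Only two configurations remain for $\pi\in\mathrm{S}_{k+2}$:
\begin{itemize}
\item[(A)] both $O$'s lie in the same cycle, giving a factor $\Tr(\rho^a O\rho^b O)$ with integers $a,b\ge 0$;
\item[(B)] the $O$'s lie in different cycles, giving $\Tr(\rho^a O)\Tr(\rho^b O)$ with $a,b\ge0$.
\end{itemize}
Here $\rho^0:=\openone$. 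Similarly, for $\pi\in\mathrm{S}_{k+1}$ the only nontrivial factor is $\Tr(\rho^a O)$ with $a\ge 0$. Throughout, I use $\|\rho^a\|\le 1$ for all $a\ge0$ and $\|\rho^a\|_2\le 1$ for $a\ge1$.

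\emph{Case $O\ge0$.} In (B), each factor obeys $0\le\Tr(\rho^aO)=\Tr(O^{1/2}\rho^aO^{1/2})\le\|\rho^a\|\Tr(O)\le\Tr(O)$, so the product is at most $\Tr(O)^2$. In (A), write
\begin{align*}
\Tr(\rho^aO\rho^bO)=\Tr\big[(O^{1/2}\rho^aO^{1/2})(O^{1/2}\rho^bO^{1/2})\big].
\end{align*}
This is a trace of a product of two positive semidefinite operators, so it lies in $[0,\Tr(O^{1/2}\rho^aO^{1/2})\Tr(O^{1/2}\rho^bO^{1/2})]$. Each trace is at most $\Tr(O)$ as before, giving the bound $\Tr(O)^2=g(O)$.

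\emph{Case $\Tr(O)=0$.} In (B), if $a=0$ or $b=0$ the product vanishes because $\Tr(O)=0$. Otherwise Cauchy--Schwarz gives $|\Tr(\rho^aO)|\le\|\rho^a\|_2\|O\|_2\le\|O\|_2$, so the product is at most $\Tr(O^2)$. In (A), Cauchy--Schwarz together with $\|XY\|_2\le\|X\|\,\|Y\|_2$ gives
\begin{align*}
|\Tr(\rho^aO\rho^bO)|\le\|\rho^aO\|_2\|\rho^bO\|_2\le\|O\|_2^2=\Tr(O^2).
\end{align*}
This step needs only the operator-norm bound, so it is valid even when $a$ or $b$ equals $0$. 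This proves the first inequality in \eqref{eq:xikOUB}.

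For the second inequality, $\xi_k^O$ is an average over $\mathrm{S}_{k+1}$ of the quantities $\Tr[(\rho^{\otimes k}\otimes O)R_\pi]$, so $|\xi_k^O|$ is at most their maximum modulus. Each such quantity equals $\Tr(\rho^aO)$ times factors in $(0,1]$. If $O\ge0$, then $|\Tr(\rho^aO)|\le\Tr(O)=\sqrt{g(O)}$. If $\Tr(O)=0$, the factor is $0$ for $a=0$ and at most $\|O\|_2=\sqrt{g(O)}$ for $a\ge1$.

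The main care point is the $a=0$ or $b=0$ sub-cases, where $\rho^0=\openone$ is not a state. For traceless $O$ these are handled by $\Tr(O)=0$ in (B) and by using only $\|\rho^0\|\le1$ in (A). For $O\ge0$ the bound $\|\rho^a\|\le 1$ covers them directly.
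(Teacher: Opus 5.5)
Your proof is correct and follows essentially the same route as the paper. Both reduce each permutation trace, via its cycle structure, to $\Tr(\rho^aO\rho^bO)$ or $\Tr(\rho^aO)\Tr(\rho^bO)$ and then bound these with elementary trace and norm inequalities. The one difference is that the paper first proves the hypothesis-free bound $\max\{\Tr(O)^2,\Tr(O^2)\}$ using H\"older-type estimates and then invokes $\Tr(O^2)\le\Tr(O)^2$ for $O\geq 0$, whereas you split into the two cases at the outset and use positivity directly in the $O\geq 0$ case. Your version also handles the exponent-zero cases more explicitly than the paper's auxiliary lemma, which is stated only for exponents $\geq 1$.
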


\begin{proof}[Proof of \lref{lem:xikOUB}]
Note that 
\begin{align}
&\max_{\pi\in\mathrm{S}_{k+2}}
\Big| \!\Tr\!\big[ \left( \rho^{\otimes k} \otimes O^{\otimes 2} \right) R_{\pi} \big] \Big| 
\leq
\max_{i,j|0\leq i+j \leq k} \left\lbrace \left| \Tr(O\rho^iO\rho^j) \right|,
\left| \Tr(O\rho^i) \Tr(O\rho^j) \right| \right\rbrace 
\stackrel{(a)}{\leq} \max \left\lbrace \Tr(O)^2, \Tr(O^2) \right\rbrace 
\stackrel{(b)}{=}g(O), 
\\
&\qquad\qquad\quad
\left| \xi_k^{O}\right| \leq
\max_{\pi\in\mathrm{S}_{k+1}} \left| \Tr\left[ \left(\rho^{\otimes k} \otimes O \right)R_\pi\right]  \right| 
\leq \max_{i|0\leq i\leq k} \left| \Tr(O\rho^i) \right|
\stackrel{(c)}{\leq} \max \left\lbrace \Tr(O), \sqrt{\Tr(O^2)} \right\rbrace 
\stackrel{(d)}{=}\sqrt{g(O)}, 
\end{align}
where $(a)$ and $(c)$ follow from \lref{lem:UsefulIneqs} below; 
$(b)$ and $(d)$ hold because $\Tr(O^2)=\|O\|_2^2 \leq\|O\|_1^2= \Tr(O)^2$ when $O\geq0$. 
This completes the proof. 
\end{proof}

\begin{lemma}\label{lem:UsefulIneqs}
Suppose integers $k,l\geq 1$, $\rho\in\caD(\caH)$, 
and $O$ is an Hermitian observable on $\caH$. 
Then 
\begin{align}\label{eq:UsefulIneqs}
\left| \Tr(O\rho^k) \right| \leq \sqrt{\Tr(O^2)},
\qquad 
\left| \Tr(O^2\rho^k) \right| \leq \Tr(O^2),
\qquad 
\left| \Tr(O\rho^{k}O\rho^l) \right| \leq \Tr(O^2). 
\end{align}
\end{lemma}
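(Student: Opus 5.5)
We prove the three inequalities of \lref{lem:UsefulIneqs} separately, using only the spectral properties of $\rho$: since $\rho\in\caD(\caH)$, its eigenvalues lie in $[0,1]$ and sum to one. Consequently, for every integer $k\geq 1$ we have $0\leq\rho^k\leq\openone$, $\|\rho^k\|\leq 1$, and
\begin{align}
\|\rho^k\|_2^2=\Tr(\rho^{2k})\leq \Tr(\rho)=1 .
\end{align}
Each bound below reduces to a Cauchy--Schwarz or Hölder-type estimate combined with these facts.

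\textbf{First inequality.} We apply Cauchy--Schwarz for the Hilbert--Schmidt inner product:
\begin{align}
\left|\Tr(O\rho^k)\right|\leq \|O\|_2\,\|\rho^k\|_2 \leq \sqrt{\Tr(O^2)}.
\end{align}

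\textbf{Second inequality.} Since $O$ is Hermitian, $O^2\geq 0$. Together with $0\leq\rho^k\leq\openone$, this gives
\begin{align}
\Tr(O^2\rho^k)=\Tr\!\left(O\rho^kO\right)\in[0,\Tr(O^2)],
\end{align}
because $O\rho^kO\leq O^2$ as operators.

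\textbf{Third inequality.} Write $\Tr(O\rho^kO\rho^l)=\langle A,B\rangle$ with $A=\rho^{k/2}O\rho^{l/2}$ and $B=\rho^{k/2}O\rho^{l/2}$ arranged suitably. More concretely,
\begin{align}
\Tr(O\rho^kO\rho^l)=\Tr\!\left[(\rho^{l/2}O\rho^{k/2})(\rho^{k/2}O\rho^{l/2})\right]=\Tr(X^\dagger X)\geq0,
\end{align}
where $X=\rho^{k/2}O\rho^{l/2}$. It follows that
\begin{align}
\|X\|_2\leq\|\rho^{k/2}\|\,\|O\|_2\,\|\rho^{l/2}\|\leq\|O\|_2,
\end{align}
and hence $\left|\Tr(O\rho^kO\rho^l)\right|\leq\Tr(O^2)$.

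An alternative route for the third inequality is Cauchy--Schwarz:
\begin{align}
\left|\Tr\!\left[(O\rho^k)(O\rho^l)\right]\right|\leq\|O\rho^k\|_2\|O\rho^l\|_2,
\end{align}
together with $\|O\rho^k\|_2^2=\Tr(O^2\rho^{2k})\leq\Tr(O^2)$, which follows from the second inequality.

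The only subtle step is the third inequality, where one must ensure that the fractional powers $\rho^{k/2}$ are well defined. This holds because $\rho\geq0$; alternatively, the Cauchy--Schwarz route above avoids fractional powers altogether.
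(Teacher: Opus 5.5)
Your proof is correct, and your ``alternative route'' for the third inequality is exactly the paper's argument. That argument is Cauchy--Schwarz in the Hilbert--Schmidt inner product, then $\|O\rho^k\|_2^2=\Tr(O^2\rho^{2k})$, then the second inequality with $2k$ in place of $k$.

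The other steps differ from the paper only in which elementary norm estimate is used:
\begin{itemize}
\item \textbf{First two bounds.} The paper uses H\"older's inequality with $\|\rho^k\|_1=\Tr(\rho^k)\le 1$, giving $|\Tr(O\rho^k)|\le\|O\|\le\|O\|_2$ and $|\Tr(O^2\rho^k)|\le\|O^2\|\le\|O^2\|_1=\Tr(O^2)$. You instead use Cauchy--Schwarz with $\|\rho^k\|_2\le 1$, and the Loewner-order bound $0\le O\rho^kO\le O^2$.
\item \textbf{Third bound, your primary route.} You write $\Tr(O\rho^kO\rho^l)=\|\rho^{k/2}O\rho^{l/2}\|_2^2$ and apply $\|AYB\|_2\le\|A\|\,\|Y\|_2\,\|B\|$.
\end{itemize}

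All of these are equally short. The paper's H\"older route passes through the operator-norm bound $|\Tr(O\rho^k)|\le\|O\|$ and avoids fractional powers of $\rho$. Your routes also show that $\Tr(O^2\rho^k)$ and $\Tr(O\rho^kO\rho^l)$ are nonnegative, which the lemma does not need.

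One presentational nit: the bound $\|X\|_2\le\|\rho^{k/2}\|\,\|O\|_2\,\|\rho^{l/2}\|$ does not ``follow'' from the identity $\Tr(O\rho^kO\rho^l)=\Tr(X^\dagger X)$. It comes from the submultiplicativity property above, so cite that explicitly.
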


\begin{proof}[Proof of \lref{lem:UsefulIneqs}]
Equation~\eqref{eq:UsefulIneqs} can be derived as follows: 
\begin{align}
\left| \Tr(O\rho^k) \right| 
&\leq \|O\| \leq \|O\|_2 =\sqrt{\Tr(O^2)},
\\
\left| \Tr(O^2\rho^k) \right| 
&\leq \|O^2\| \leq \|O^2\|_1 =\Tr(O^2), 
\label{eq:TrO2rhokUB}
\\
\left| \Tr(O\rho^{k}O\rho^l) \right| 
&\stackrel{(a)}{\leq} \|O\rho^{k}\|_2 \|O\rho^l\|_2 
= \sqrt{\Tr(O^2\rho^{2k})} \sqrt{\Tr(O^2\rho^{2l})} \stackrel{(b)}{\leq} \Tr(O^2),
\end{align}
where $(a)$ follows from H\"{o}lder's inequality and $(b)$ follows from \eref{eq:TrO2rhokUB}. 
\end{proof}

\subsection{Expectation and variance of our estimator \texorpdfstring{$\hat{\Gamma}_k^{U}(O)$}{}}\label{sec:ExpVarCBNE}
The statistical fluctuation of our estimator $\hat{\Gamma}_k^{U}(O)$ [see Eqs.~\eqref{eq:collision_general_prl} and \eqref{eq:GammakUsO}] comes from: 
(i) the randomness in the selection of the random unitary $U$, and 
(ii) the randomness in quantum measurement outcomes $\mathbf{b}_{U} = \{\hat{b}_1, \dots, \hat{b}_{N_M}\}$. 
Taking these two sources of randomness into account, the expectation and variance of $\hat{\Gamma}_k^{U}(O)$ are 
clarified by the following two lemmas.

\begin{lemma}\label{lam:UnbiasGammakU}
Suppose $U$ is a random unitary  sampled from a $(k+1)$-design ensemble $\mathcal E_{k+1}$, then 
$\E \left[ \hat{\Gamma}_k^U (O)\right] =\xi_k^{O}$ for any observable $O$. 
\end{lemma}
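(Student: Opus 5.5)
We follow the same route as the proof of \lref{lam:UnbiasMkU}, with the extra factor $\widetilde{\Pr}_{O}$ absorbed as an additional tensor factor. The idea is to rewrite the conditional expectation as a $(k+1)$-fold twirl of $\ketbra{b}{b}^{\otimes (k+1)}$ and then apply \eref{eq:PhiHaar}.

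\textbf{Conditional expectation for fixed $U$.} Fix $U$. For any index tuple $i_1<\dots<i_k$, the outcomes are i.i.d.\ with distribution ${\Pr}_{\rho}(\cdot|U)$. The quantity $\widetilde{\Pr}_{O}(\hat b_{i_1}|U)$ is a deterministic function of the common value $b$ on the collision event. Hence
\begin{align}
\underset{\mathbf{b}_U}{\E}\Big(\mathbf{1}\{\hat b_{i_1}=\dots=\hat b_{i_k}\}\,\widetilde{\Pr}_{O}(\hat b_{i_1}|U)\,\Big|\,U\Big)=\sum_{b=0}^{d-1}\<b|U\rho U^\dag|b\>^k\<b|UOU^\dag|b\>.
\end{align}
This expression does not depend on the tuple. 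The $\binom{N_M}{k}$ terms therefore cancel the normalization, and we obtain
\begin{align}
\E\big[\hat\Gamma_k^U(O)\,|\,U\big]=\frac{\kappa_{k+1}}{d}\sum_b \Tr\Big[(\rho^{\otimes k}\otimes O)\,U^{\dag\otimes(k+1)}\ketbra{b}{b}^{\otimes(k+1)}U^{\otimes(k+1)}\Big].
\end{align}

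\textbf{Averaging over $U$.} The expression above is a degree-$(k+1)$ polynomial in $U$ and in $U^\dag$. Because $\mathcal E_{k+1}$ is a $(k+1)$-design, we may replace the average by the Haar twirl $\Phi_{\rm H}$. The twirl here is conjugation by $U^\dag$ rather than $U$, but this does not matter, since the Haar measure is invariant under $U\mapsto U^\dag$. For the design this holds because the moment identity is stated for the twirl, and the twirl with $U^\dag$ equals the twirl with $U$ after applying the adjoint channel. More directly, we can move the conjugation onto the operator argument, exactly as in the proof of \lref{lam:UnbiasMkU}.

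By \eref{eq:PhiHaar}, each twirled projector becomes $\frac{1}{(k+1)!\,\kappa_{k+1}}\sum_{\pi\in\mathrm S_{k+1}}R_\pi$. Summing over the $d$ values of $b$ cancels the prefactor $\kappa_{k+1}/d$. What remains is $\frac{1}{(k+1)!}\sum_\pi\Tr[(\rho^{\otimes k}\otimes O)R_\pi]=\xi_k^O$.

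\textbf{Remaining checks.} The main point needing care is linearity. The computation used no assumption on $O$ beyond Hermiticity, and in fact none at all, since every step is linear in $O$. The claim therefore holds for any observable, including $O_0$. The only other subtlety is the conjugation direction, which we handle exactly as in \lref{lam:UnbiasMkU}.
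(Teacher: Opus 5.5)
Your proposal is correct and follows essentially the same route as the paper. Both compute the conditional expectation for fixed $U$ as $\frac{\kappa_{k+1}}{d}\sum_b \<b|U\rho U^\dag|b\>^k\<b|UOU^\dag|b\>$, rewrite it as a $(k+1)$-fold twirl of $\ketbra{b}{b}^{\otimes(k+1)}$ against $\rho^{\otimes k}\otimes O$, and apply \eref{eq:PhiHaar}. Your treatment of the conjugation direction, namely $\E_{\mathcal E}[U^{\dag\otimes(k+1)}(\cdot)\,U^{\otimes(k+1)}]=\Phi_{\mathcal E}^{\dagger}=\Phi_{\rm H}^{\dagger}=\Phi_{\rm H}$, fills in a step the paper passes over by writing $\Phi_{\mathcal E_{k+1}}$ directly.
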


\begin{proof}[Proof of \lref{lam:UnbiasGammakU}]
We have
\begin{align}
\E \left[ \hat{\Gamma}_k^U (O)\right] 
&=\underset{U }{\E} \, \underset{\,\mathbf{b}_{U}}{\E} \left( \hat{\Gamma}_k^U (O) \Big| U \right)
 =\underset{U }{\E} \left[ \frac{\kappa_{k+1}}{d} \sum_{b=0}^{d-1} 
            \Pr\left(\hat{b}_{i_1}=\dots=\hat{b}_{i_k}=b \,|U \right) \<b|UO U^{\dag}|b\>\right] 
\nonumber \\
&=\frac{\kappa_{k+1}}{d}\, \E_{U} \left[\, \sum_{b=0}^{d-1} \<b|U\rho U^{\dag}|b\>^k \<b|UO U^{\dag}|b\> \right] 
= \frac{\kappa_{k+1}}{d} \sum_{b=0}^{d-1} \Tr \left[ \left( \rho^{\otimes k}\otimes O \right)  
\Phi_{\mathcal E_{k+1}} \big( \ketbra{b}{b}^{\otimes {(k+1)}} \big) \right]
\nonumber \\
&= \Tr\!\Big[ \left( \rho^{\otimes k}\otimes O \right)  \Pi_{\mathrm{sym}}^{(k+1)} \Big]
=\frac{1 }{(k+1)! } \sum_{\pi\in \mathrm{S}_{k+1}}  \Tr\left[ \left(\rho^{\otimes k} \otimes O \right)R_\pi\right]
=\xi_k^{O} . 
\end{align}
This confirms \lref{lam:UnbiasGammakU}, implying that $\hat{\Gamma}_k^U (O)$ is an unbiased estimator of $\xi_k^{O}$. 
\end{proof}

\begin{lemma}\label{lem:MomOVariance}
Suppose $k>1$ is a constant integer independent of the system dimension $d$, 
$O$ is an Hermitian observable satisfying $O\geq0$ or $\Tr(O)=0$,
and $U$ is a random unitary  sampled from a $2(k+1)$-design ensemble $\mathcal E_{2k+2}$. Then 
\begin{align}
\Var\left[\hat{\Gamma}_k^U (O)\right] 
= \bigo{g(O) \cdot \max\left\lbrace \frac{1}{d} ,\frac{d^{k-1}}{N_{\!M}^{k}}\right\rbrace},  
\end{align}
where $g(O)$ is defined in \eref{eq:definegO}. 
\end{lemma}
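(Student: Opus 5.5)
We follow the proof of \lref{lem:MomVariance}, carrying the extra factor $\widetilde{\Pr}_{O}(\hat b_{i_1}|U)$ through each term. First expand
\begin{align}
\Var\left[\hat{\Gamma}_k^U(O)\right]+(\xi_k^O)^2
=\binom{N_M}{k}^{-2}\left(\frac{\kappa_{k+1}}{d}\right)^2\sum_{\mathbf{i},\mathbf{j}}
\E_U\E_{\mathbf b_U}\Big(\mathbf 1[\hat b_{\mathbf i}\text{ equal}]\,\mathbf 1[\hat b_{\mathbf j}\text{ equal}]\,\widetilde{\Pr}_O(\hat b_{i_1}|U)\widetilde{\Pr}_O(\hat b_{j_1}|U)\Big|U\Big).
\end{align}
We then sort the pairs of $k$-tuples by the overlap ${\rm Co}=k-l$, exactly as in \eref{eq:EE1cased}. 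The prefactor $\kappa_{k+1}^2/d^2=\Theta(d^{2k})$ is the same order as before, with one extra power of $d$ per factor.

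\textbf{Overlapping case} ($l\le k-1$). All $k+l$ outcomes coincide at a common value $b$. The inner expectation is then
\begin{align}
\sum_b\E_U\Big[\<b|U\rho U^\dag|b\>^{k+l}\<b|UOU^\dag|b\>^2\Big]
=\sum_b\Tr\Big[(\rho^{\otimes(k+l)}\otimes O^{\otimes 2})\,\Phi_{\rm H}\big(\ketbra{b}{b}^{\otimes(k+l+2)}\big)\Big].
\end{align}
This uses the $2(k+1)$-design property, since $k+l+2\le 2k+1$. By \eref{eq:PhiHaar}, the expression equals
\begin{align}
\frac{d}{\kappa_{k+l+2}\,(k+l+2)!}\sum_{\pi}\Tr\big[(\rho^{\otimes(k+l)}\otimes O^{\otimes2})R_\pi\big].
\end{align}
Applying \lref{lem:xikOUB} (the first bound, with $k\to k+l$), this is $\bigo{g(O)\,d^{-1-k-l}}$. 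Multiplying by $d^{2k}$ and by the combinatorial weight $\binom{N_M}{k}^{-2}\binom{N_M}{k+l}=\bigo{N_M^{l-k}}$ gives $\bigo{g(O)\,d^{k-l-1}/N_M^{k-l}}$. For $N_M\lesssim d$ the $l=0$ term dominates and gives $g(O)d^{k-1}/N_M^k$. When $N_M\gtrsim d$, all these terms are $\bigo{g(O)/d}$.

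\textbf{Disjoint case.} The inner expectation is
\begin{align}
\sum_{b,b'}\Tr\big[(\rho^{\otimes k}\otimes O\otimes\rho^{\otimes k}\otimes O)\,\Phi_{\rm H}(\ketbra{b}{b}^{\otimes(k+1)}\otimes\ketbra{b'}{b'}^{\otimes(k+1)})\big].
\end{align}
The $b=b'$ part is handled as above and is $\bigo{g(O)d^{-1-2k}}$, which contributes only $\bigo{g(O)/d}$. For $b\neq b'$ we use \eref{eq:tirling} with $2k+2$ copies. Only $\tau=(\tau_1,\tau_2)\in \mathrm{S}_{k+1}\times \mathrm{S}_{k+1}$ survive, and we obtain
\begin{align}
d(d-1)\sum_{\tau_1,\tau_2}\sum_\pi \Wg^{(2k+2,d)}_{\pi,(\tau_1,\tau_2)}\Tr\big[(\rho^{\otimes k}\otimes O\otimes\rho^{\otimes k}\otimes O)R_\pi\big].
\end{align}
The diagonal terms $\pi=(\tau_1,\tau_2)$ factorize into $[(k+1)!\,\xi_k^O]^2$ times $d^{2-2k-2}(1+\bigo{d^{-1}})$. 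After multiplying by $(\kappa_{k+1}/d)^2$ and the factor $\le 1$ from the binomials, this reproduces $(\xi_k^O)^2(1+\bigo{d^{-1}})$. The excess over $(\xi_k^O)^2$ is therefore $\bigo{(\xi_k^O)^2/d}=\bigo{g(O)/d}$, using $|\xi_k^O|\le\sqrt{g(O)}$ from \lref{lem:xikOUB}.

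The off-diagonal terms carry a Weingarten factor $\bigo{d^{-2k-3}}$, so after the $d^{2k}\cdot d^2$ prefactors they contribute $\bigo{d^{-1}}$ times a trace. For $O\ge0$ every such trace is at most $\Tr(O)^2=g(O)$ by \lref{lem:xikOUB}, which finishes this case.

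\textbf{Main obstacle: the traceless case.} When $\Tr(O)=0$, some off-diagonal $\pi$ can link the two $O$ factors into the same cycle. Then the trace is of the form $\Tr(O\rho^iO\rho^j)$ or $\Tr(O^2\rho^i)$, bounded by $\Tr(O^2)$ via \lref{lem:UsefulIneqs}. Other $\pi$ give products $\Tr(O\rho^i)\Tr(O\rho^j)$. These are bounded by $\Tr(O^2)$ too, since $|\Tr(O\rho^i)|\le\|O\|_2$. A cycle containing $O$ alone would give $\Tr(O)=0$, which only helps. So every trace is at most $g(O)$, and the off-diagonal sum is $\bigo{g(O)/d}$.

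A subtler point is the sub-leading $\bigo{d^{-1}}$ corrections to the diagonal Weingarten terms. These must be bounded by $g(O)/d$ and not by something dimension-large. Since $(\xi_k^O)^2\le g(O)$, this holds. We also need that the number of permutations is constant, which is true because $k$ is constant.

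Summing the overlapping and disjoint cases and subtracting $(\xi_k^O)^2$ then yields $\Var[\hat\Gamma_k^U(O)]=\bigo{g(O)\max\{1/d,\,d^{k-1}/N_M^k\}}$.
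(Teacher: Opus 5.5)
Your proposal is correct and follows essentially the same route as the paper. It uses the same split by overlap ${\rm Co}=k-l$, the symmetric-projector evaluation giving $\bigo{g(O)\,d^{-k-l-1}}$ for overlapping tuples, and the Weingarten diagonal/off-diagonal decomposition for disjoint tuples, with every trace bounded by $g(O)$ through \lref{lem:xikOUB}; the traceless case you single out as the main obstacle is already handled by that lemma, exactly as the paper uses it.
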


\begin{proof}[Proof of \lref{lem:MomOVariance}]
The variance of $\hat{\Gamma}_k^U (O)$ reads
\begin{align}\label{eq:GammakUO} 
\Var\left[ \hat{\Gamma}_k^U (O)\right] 
&= \E\left[ \left( \hat{\Gamma}_k^U (O)\right) ^2 \right] - \E\left[ \hat{\Gamma}_k^U (O) \right]^2 
= \underset{U }{\E} \, \underset{\,\mathbf{b}_{U}}{\E} \left[ \left( \hat{\Gamma}_k^U (O)\right) ^2 \Big| U \right]
- \left( \xi_k^{O}\right) ^2
\nonumber\\
&= \binom{N_M}{k}^{-2} \left( \frac{\kappa_{k+1}}{d}\right)^2
\sum_{\substack{1\leq i_1<\dots<i_k\leq N_M \\ 1\leq j_1<\dots<j_k\leq N_M}} 
\underset{U }{\E} \, \underset{\,\mathbf{b}_{U}}{\E} \Big(
f_1(\hat{b}_{i_1},\dots,\hat{b}_{i_k},\hat{b}_{j_1},\dots,\hat{b}_{j_k}, U,O)\Big| U \Big) 
-\left( \xi_k^{O}\right) ^2. 
\end{align}
where the second equality follows from \lref{lam:UnbiasGammakU}, and the function $f_1$ is defined as 
\begin{align}
f_1(\hat{b}_{i_1},\dots,\hat{b}_{i_k},\hat{b}_{j_1},\dots,\hat{b}_{j_k}, U,O)
:=
\textbf{1}\left\{\hat{b}_{i_1}=\dots=\hat{b}_{i_k}\right\}  
\textbf{1}\left\{\hat{b}_{j_1}=\dots=\hat{b}_{j_k}\right\} 
\<\hat{b}_{i_1}|UO U^{\dag}|\hat{b}_{i_1}\> \<\hat{b}_{j_1}|UO U^{\dag}|\hat{b}_{j_1}\>. 
\end{align} 

To calculate $\underset{U }{\E} \, \underset{\,\mathbf{b}_{U}}{\E} \Big(
f_1(\hat{b}_{i_1},\dots,\hat{b}_{i_k},\hat{b}_{j_1},\dots,\hat{b}_{j_k}, U,O)\Big| U \Big)$, we consider two cases below. 
Let ${\rm Co}[(i_1,\dots,i_k);(j_1,\dots,j_k)]$ be the number of common elements shared by $(i_1,\dots,i_k)$ and $(j_1,\dots,j_k)$. 
First, if ${\rm Co}[(i_1,\dots,i_k);(j_1,\dots,j_k)]=k-l$ with $l\in\{0,1,\dots,k-1\}$, we have 
\begin{align}\label{eq:EEfcasea}
&\underset{U }{\E} \, \underset{\,\mathbf{b}_{U}}{\E} \Big(
f_1(\hat{b}_{i_1},\dots,\hat{b}_{i_k},\hat{b}_{j_1},\dots,\hat{b}_{j_k}, U,O)\Big| U \Big)
= 
\E_{U} \sum_{b=0}^{d-1} \<b|U\rho U^{\dag}|b\>^{k+l} \<b|UO U^{\dag}|b\>^2
\nonumber\\
&=\sum_{b=0}^{d-1} \Tr \left[ \left( \rho^{\otimes (k+l)} \otimes O^{\otimes 2}\right)  \Phi_{\mathcal E_{2k+2}} \big( \ketbra{b}{b}^{\otimes (k+l+2)} \big) \right] 
=
\frac{d}{\kappa_{k+l+2}} \Tr \left[ \left( \rho^{\otimes (k+l)} \otimes O^{\otimes 2}\right)  \Pi_{\mathrm{sym}}^{(k+l+2)} \right] 
\nonumber\\
&=
\frac{d}{(k+l+2)!\, \kappa_{k+l+2}} \sum_{\pi\in \mathrm{S}_{k+l+2}} \Tr\left[ \left( \rho^{\otimes {(k+l)}}\otimes O^{\otimes 2}\right)  R_\pi \right]
 = \bigo{\frac{g(O)}{d^{k+l+1}}} ,
\end{align}
where the last equality follows from \lref{lem:xikOUB}. 

Second, if ${\rm Co}[(i_1,\dots,i_k);(j_1,\dots,j_k)]=0$, we have 
\begin{align}\label{eq:EEfcaseb}
\underset{U }{\E} \, \underset{\,\mathbf{b}_{U}}{\E} \Big(
&f_1(\hat{b}_{i_1},\dots,\hat{b}_{i_k},\hat{b}_{j_1},\dots,\hat{b}_{j_k}, U,O)\Big| U \Big)
= 
\E_{U} \sum_{b,b'=0}^{d-1} \<b|U\rho U^{\dag}|b\>^{k} \<b|UO U^{\dag}|b\> \<b'|U\rho U^{\dag}|b'\>^{k}\<b'|UO U^{\dag}|b'\>
\nonumber\\
&=
\underbrace{\E_{U} \sum_{b=0}^{d-1} \<b|U\rho U^{\dag}|b\>^{2k}\<b|UO U^{\dag}|b\>^2}_{(*1)}
+\underbrace{\E_{U} \sum_{b\ne b'} \<b|U\rho U^{\dag}|b\>^{k}\<b|UO U^{\dag}|b\> \<b'|U\rho U^{\dag}|b'\>^{k}\<b'|UO U^{\dag}|b'\>}_{(*2)}. 
\end{align}
By using an argument similar to that in \eref{eq:EEfcasea}, we can derive that $(*1)=\mathcal O \big[d^{-2k-1}g(O)\big]$. 
For the term $(*2)$, we have  
\begin{align}\label{eq:EEfcased}
(*2)&=  \sum_{b\ne b'} \Tr\left[ \left( \ketbra{b}{b}^{\otimes (k+1)}\otimes \ketbra{b'}{b'}^{\otimes (k+1)}\right)  
\Phi_{\mathcal E_{2k+2}}\left( \rho^{\otimes k} \otimes O\otimes\rho^{\otimes k} \otimes O \right) \right]
\nonumber\\
&\stackrel{(a)}{=}\sum_{b\ne b'} \sum_{\tau,\pi\in \mathrm{S}_{2k+2}} \Wg_{\pi,\tau}^{(2k+2,d)} 
\Tr\left[ \left( \ketbra{b}{b}^{\otimes (k+1)}\otimes \ketbra{b'}{b'}^{\otimes (k+1)}\right) R_{\tau} \right] 
\Tr\!\Big[ \left( \rho^{\otimes k} \otimes O\otimes\rho^{\otimes k} \otimes O \right) R_{\pi}^{\top} \Big]
\nonumber\\
&\stackrel{(b)}{=}\sum_{b\ne b'} \sum_{\pi\in \mathrm{S}_{2k+2}} \sum_{\,\tau_1,\tau_2\in \mathrm{S}_{k+1}}
\Wg_{\pi,(\tau_1,\tau_2)}^{(2k+2,d)}
\Tr\!\Big[ \left( \rho^{\otimes k} \otimes O\otimes\rho^{\otimes k} \otimes O \right) R_{\pi} \Big]
\nonumber\\
&=d(d-1)\times \underbrace{\sum_{\tau_1,\tau_2\in \mathrm{S}_{k+1}}
\Wg_{(\tau_1,\tau_2),(\tau_1,\tau_2)}^{(2k+2,d)}  
\Tr\!\Big[ \left( \rho^{\otimes k} \otimes O\otimes\rho^{\otimes k} \otimes O \right) R_{(\tau_1,\tau_2)} \Big]}_{(*3)}
\nonumber\\
&\quad\ 
+d(d-1)\times \underbrace{\sum_{\tau_1,\tau_2\in \mathrm{S}_{k+1}}
\sum_{\substack{\pi\in\mathrm{S}_{2k+2}\\ \pi\ne(\tau_1,\tau_2)}} 
\Wg_{\pi,(\tau_1,\tau_2)}^{(2k+2,d)}  
\Tr\!\Big[ \left( \rho^{\otimes k} \otimes O\otimes\rho^{\otimes k} \otimes O \right) R_{\pi} \Big]}_{(*4)}
\nonumber\\ 
&\stackrel{(c)}{\leq}
\frac{\left[ (k+1)!\,\xi_k^{O}\right]^2}{d^{2k}} +\bigo{\frac{g(O)}{d^{2k+1}} }.
\end{align}
Here,
$(a)$ follows from \eref{eq:tirling}; 
$(b)$ holds because $\Tr \left[ \left( \rho^{\otimes k} \otimes O\otimes\rho^{\otimes k} \otimes O \right) R_{\pi}^{\top} \right] =\Tr\left[  \left( \rho^{\otimes k} \otimes O\otimes\rho^{\otimes k} \otimes O \right) R_{\pi}\right]$ and  
$\Tr\!\big[ \big( \ketbra{b}{b}^{\otimes (k+1)}\!\otimes \ketbra{b'}{b'}^{\otimes (k+1)}\big) R_{\tau} \big]=1$ if $\tau=(\tau_1,\tau_2)$ for some $\tau_1,\tau_2\in\mathrm{S}_{k+1}$, and 0 otherwise; 
$(c)$ follows from the following relations: 
\begin{align}\label{eq:EEfcasee}
(*3)&\stackrel{(d)}{=}
\sum_{\tau_1,\tau_2\in \mathrm{S}_{k+1}}
\left[ d^{-2k-2}+ \bigo{d^{-2k-3}}\right]  
\Tr\!\Big[ \left( \rho^{\otimes k} \otimes O \right) R_{\tau_1} \Big]
\Tr\!\Big[ \left( \rho^{\otimes k} \otimes O \right) R_{\tau_2} \Big]
\nonumber\\
&=\frac{1}{d^{2k+2}}  \left\lbrace \sum_{\tau_1\in \mathrm{S}_{k+1}}
\Tr\!\Big[ \left( \rho^{\otimes k} \otimes O \right) R_{\tau_1} \Big] \right\rbrace^2
+\bigo{\frac{1}{d^{2k+3}}} 
\max_{\tau_1\in \mathrm{S}_{k+1}}
\Big| \Tr\!\big[ \left( \rho^{\otimes k} \otimes O \right) R_{\tau_1} \big] \Big|^2 
\nonumber\\ 
&\stackrel{(e)}{=}\frac{1}{d^{2k+2}} \left[ (k+1)!\,\xi_k^{O}\right]^2 +\bigo{\frac{g(O)}{d^{2k+3}} },
\\ 
(*4)&\stackrel{(f)}{\leq}
\bigo{\frac{1}{d^{2k+3}}} \cdot 
\max_{\pi\in\mathrm{S}_{2k+2}}\left| \Tr\!\Big[ \left( \rho^{\otimes 2k} \otimes O^{\otimes 2} \right) R_{\pi} \Big]\right|
\stackrel{(g)}{=}
\bigo{\frac{g(O)}{d^{2k+3}} },
\end{align}
where $(d)$ and $(f)$ follows from \eref{eq:Weingarten}; 
$(e)$ and $(g)$ follows from \lref{lem:xikOUB}. 
In summary, we have 
\begin{align}\label{eq:EEfcasef}
\underset{U }{\E} \, \underset{\,\mathbf{b}_{U}}{\E} \Big(
f_1(\hat{b}_{i_1},\dots,\hat{b}_{i_k},\hat{b}_{j_1},\dots,\hat{b}_{j_k}, U,O)\Big| U \Big)
=(*1)+(*2)\leq 
\frac{\left[ (k+1)!\,\xi_k^{O}\right]^2}{d^{2k}} +\bigo{\frac{g(O)}{d^{2k+1}}}
\end{align}
when ${\rm Co}[(i_1,\dots,i_k);(j_1,\dots,j_k)]=0$. 

The variance in \eref{eq:GammakUO} is then upper bounded by
\begin{align}\label{eq:EEfcaseg}
&\Var\left[ \hat{\Gamma}_k^U (O)\right] +\left( \xi_k^{O}\right)^2
\nonumber\\
&= \binom{N_M}{k}^{-2} \left( \frac{\kappa_{k+1}}{d}\right)^2
\sum_{l=0}^{k} \,
\sum_{\substack{i_1<i_2<\dots<i_k, j_1<j_2<\dots<j_k, \\{\rm Co}[(i_1,\dots,i_k);(j_1,\dots,j_k)]=k-l }} 
\underset{U }{\E} \, \underset{\,\mathbf{b}_{U}}{\E} \Big(
f_1(\hat{b}_{i_1},\dots,\hat{b}_{i_k},\hat{b}_{j_1},\dots,\hat{b}_{j_k}, U,O)\Big| U \Big) 
\nonumber\\
&= 
\binom{N_M}{k}^{-2} \left( \frac{\kappa_{k+1}}{d}\right)^2
\sum_{l=0}^{k} 
\binom{N_M}{k+l} \binom{k+l}{k} \binom{k}{l} \, 
\underset{U }{\E} \, \underset{\,\mathbf{b}_{U}}{\E} \Big(
f_1(\hat{b}_{i_1},\dots,\hat{b}_{i_k},\hat{b}_{j_1},\dots,\hat{b}_{j_k}, U,O)\Big| U \Big)\Big|_{{\rm Co}[(i_1,\dots,i_k);(j_1,\dots,j_k)]=k-l}
\nonumber\\
&\leq 
\binom{N_M}{k}^{-2} \left( \frac{\kappa_{k+1}}{d}\right)^2
\left\lbrace \,  
\sum_{l=0}^{k-1} 
\binom{N_M}{k+l} \binom{k+l}{k} \binom{k}{l} \, 
\bigo{\frac{g(O)}{d^{k+l+1}}}
+   
\binom{N_M}{2k} \binom{2k}{k}
\bigg[\frac{\left( (k+1)!\,\xi_k^{O}\right)^2}{d^{2k}} +\bigo{\frac{g(O)}{d^{2k+1}}}
\bigg]\right\rbrace 
\nonumber\\
&= 
\sum_{l=0}^{k-1}\binom{N_M}{k}^{-2}  
\binom{N_M}{k+l} \mathcal{O}\Big[d^{k-1-l}g(O)\Big]
+ 
\binom{N_M}{k}^{-2}  \binom{N_M}{2k}\binom{2k}{k} \left[ \frac{(k+1)!\kappa_{k+1}}{d^{k+1}} \right]^2 
\left[\left( \xi_k^{O}\right) ^2 + \bigo{\frac{g(O)}{d}} \right], 
\end{align}
where the inequality follows from Eqs.~\eqref{eq:EEfcasea} and \eqref{eq:EEfcasef}. 

Note that 
\begin{align}
	\frac{(k+1)!\kappa_{k+1}}{d^{k+1}}
	= \left( 1+ \frac{k}{d} \right) \left( 1+ \frac{k-1}{d} \right) \cdots \left( 1+ \frac{1}{d} \right) 
	= 1+\bigo{d^{-1}}
\end{align}
and 
\begin{align} 
	\binom{N_M}{k}^{-2}  \binom{N_M}{2k}\binom{2k}{k}
	&= \left( \frac{N_M-k}{N_M}\right)  \left( \frac{N_M-k-1}{N_M-1}\right)  \cdots \left(\frac{N_M-2k+1}{N_M-k+1}\right) 
	\leq 1.
\end{align}
By plugging these into \eref{eq:EEfcaseg}, we obtain 
\begin{align} 
\Var\left[ \hat{\Gamma}_k^U (O)\right] +\left( \xi_k^{O}\right)^2
&\leq
\sum_{l=0}^{k-1} \bigo{\frac{d^{k-l-1}g(O)}{N_M^{k-l}}} 
+ \left[1+\bigo{d^{-1}}\right]^2 
\left[\left( \xi_k^{O}\right) ^2 + \bigo{\frac{g(O)}{d}} \right]
\nonumber\\ 
&= 
\bigo{g(O)\cdot \max\left\lbrace \frac{1}{d} ,\frac{d^{k-1}}{N_{\!M}^{k}}\right\rbrace  }  +\left( \xi_k^{O}\right) ^2,
\end{align}
where we used the relation $\left( \xi_k^{O}\right)^2\leq g(O)$ (from \lref{lem:xikOUB}) in the last line. 
This confirms \lref{lem:MomOVariance}. 
\end{proof}

\subsection{Proof of \tref{thm:MomOResult} in the main text}\label{sec:CBNEmoment}

\begin{theorem}[Formal version of \tref{thm:MomOResult}]
\label{thm:MomOResultRestate}
Suppose $\rho\in\caD(\caH)$, $O$ is an observable on $\caH$ with $\|O\|\le 1$, and $t>1$ is a constant integer independent of $O$ and the system dimension $d$.
Let $\mathfrak{B}= \max \left\lbrace \Tr(O_0^2),1\right\rbrace$. 
For any $0<\epsilon<1$, the CBNE protocol (given in  Algorithm~\ref{alg:CBNEmain} of the End Matter) can return $\epsilon$-additive error estimates for all $p_2,\dots,p_t$  and $\Tr(O\rho),\dots,\Tr(O\rho^t)$ with high probability, provided that: $(i)$ the random unitaries are sampled from a $2(t+1)$-design ensemble, 
\begin{align} 
(ii)\ N_U=\max\left\lbrace 1, c_1 \,\Xi_1 \right\rbrace, 
\quad\text{and}\quad
(iii)\ N_M\geq c_2\,d \min\left\lbrace \Xi_1^{1/t}, 1 \right\rbrace. 
\end{align}
Here, $c_1,c_2>0$ are some constants independent of $\mathfrak{B},d$ and $\epsilon$, and $\Xi_1$ is a shorthand for $\Xi_1(d,\mathfrak{B},\epsilon)=\mathfrak{B}/(d\epsilon^2)$.
\end{theorem}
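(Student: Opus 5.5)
The proof mirrors that of \tref{thm:MomResult}. The moments $p_2,\dots,p_t$ are already handled by \tref{thm:MomResult}: a $2(t+1)$-design is in particular a $2t$-design, and the required $N_U,N_M$ are implied by the present choice since $\mathfrak{B}\ge 1$. It therefore remains to control $\hat\xi_k^{O_0}$ for $k=1,\dots,t$ and then propagate errors through \eref{eq:xiORelation}.

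We fix a constant failure probability $\delta$ and apply a union bound over the $2t$ estimators $\{\hat\zeta_k\}$ and $\{\hat\xi_k^{O_0}\}$. By \lref{lam:UnbiasGammakU}, $\hat\xi_k^{O_0}$ is unbiased. Since $\Tr(O_0)=0$, \lref{lem:MomOVariance} applies with $g(O_0)=\Tr(O_0^2)\le\mathfrak{B}$. The lemma is stated for $k>1$; for $k=1$ the same computation goes through, and the cross term $l=0$ simply gives $\mathcal O(g/d)$. This yields
\[
\Var\big[\hat\xi_k^{O_0}\big]=\frac{1}{N_U}\,\mathcal O\Big(\mathfrak{B}\max\Big\{\frac1d,\frac{d^{k-1}}{N_M^k}\Big\}\Big).
\]
Chebyshev's inequality then gives error $c\epsilon$ with probability at least $1-\delta/(2t)$, provided $\mathfrak{B}/(d\epsilon^2N_U)\lesssim1$ and $\mathfrak{B}d^{k-1}/(\epsilon^2N_UN_M^k)\lesssim1$ for every $k\le t$. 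We check these two conditions by cases.

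\emph{Case $\Xi_1\ge 1$.} Here $N_U=c_1\Xi_1$, so the first condition holds. The second becomes $(d/N_M)^k\lesssim c_1$, which follows from $N_M\ge c_2 d$.

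\emph{Case $\Xi_1<1$.} Here $N_U=1$ and $N_M\ge c_2 d\,\Xi_1^{1/t}$. The first condition reads $\Xi_1\lesssim 1$, which holds. For the second, $(d/N_M)^k\Xi_1\le c_2^{-k}\Xi_1^{1-k/t}\le c_2^{-k}$ because $k\le t$ and $\Xi_1<1$.

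The key observation is that the worst order is $k=t$, and lower orders are automatically fine.

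The remaining step, which is the main obstacle, is the deterministic error propagation: the analogue of \lref{lem:EstimatorTrans} for $\Tr(O_0\rho^k)$. We proceed by induction on $k$. Expanding $(k+1)!\,\xi_k^{O_0}$ by cycle type, the permutations in which $O_0$ sits in a single $(k+1)$-cycle contribute $k!\,\Tr(O_0\rho^k)$. Every other permutation contributes a product of some $p_j$ with $j\le k$ and a single $\Tr(O_0\rho^i)$ with $i<k$. Terms with $O_0$ in a fixed point vanish because $\Tr(O_0)=0$. Therefore
\[
\Tr(O_0\rho^k)=(k+1)\,\xi_k^{O_0}-\frac{1}{k!}\sum_{\pi}\big(\text{lower-order products}\big).
\]
Each factor is bounded: $|p_j|\le1$, and by \lref{lem:UsefulIneqs} $|\Tr(O_0\rho^i)|\le\|O_0\|\le 2$, since $\|O\|\le1$. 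The number of terms is a constant depending only on $t$. Hence $\mathcal O(\epsilon)$-accurate inputs give $\mathcal O(\epsilon)$-accurate outputs. Here it matters that we use the operator-norm bound $\|O_0\|\le2$, not the Frobenius bound $\sqrt{\mathfrak{B}}$, so that products of estimated quantities do not inflate the error by factors of $\mathfrak{B}$. Bounding the estimated values (rather than the true ones) in these products can be handled by clipping each estimate to $[-2,2]$, or simply by noting that the estimates lie within $\mathcal O(\epsilon)$ of bounded true values.

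Finally, $\hat o_k=\hat o_k^{(0)}+\Tr(O)\hat p_k/d$, and $|\Tr(O)|/d\le\|O\|\le1$. The error on $\Tr(O\rho^k)$ is therefore also $\mathcal O(\epsilon)$. Rescaling the constants $c_1$ and $c_2$ yields the stated $\epsilon$ and completes the plan.
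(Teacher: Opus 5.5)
Your proposal is correct and follows essentially the same route as the paper. Both use a union bound, reduce the moment part to \tref{thm:MomResult}, get unbiasedness from \lref{lam:UnbiasGammakU}, apply the variance bound of \lref{lem:MomOVariance} with $g(O_0)=\Tr(O_0^2)\le\mathfrak{B}$ together with Chebyshev's inequality, and propagate errors inductively as in \lref{lem:MoOlessTrans}. Your explicit handling of the $k=1$ variance (which the lemma as stated excludes), the case check of the parameter conditions, and the use of $\|O_0\|\le 2$ in the propagation step just fill in details the paper calls ``straightforward''.
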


\begin{proof}[Proof of \tref{thm:MomOResultRestate}]
Let $0<\delta<1$ be a constant. 
According to the union bound and \lref{lem:MoOlessTrans} below, in order to estimate all $p_2,\dots,p_t$ and
$\Tr(O\rho),\dots,\Tr(O\rho^t)$ within $\epsilon$ additive error and failure probability at most $\delta$, it 
suffices to ensure that the following relations hold for all $k=1,2,\dots,t$: 
\begin{align}\label{eq:EachOUBceps}
(a) \ 
\Pr\left\{ \left|\hat{p}_k-p_k\right| > c\epsilon \right\} \leq \frac{\delta}{2t}
\quad \text{and} \quad  
(b) \ 
\Pr\left\{ \left|\hat{\xi}_k^{O_0} - \xi_k^{O_0}\right| > c\epsilon \right\} \leq \frac{\delta}{2t}, 
\end{align}
where $O_0$ is the traceless part of $O$, and $0<c<1$ is some constant factor independent of $O$, $\epsilon$ and $d$.  

According to \tref{thm:MomResult}, the relation $(a)$ in \eref{eq:EachOUBceps}
holds provided that the conditions in \tref{thm:MomOResultRestate} are satisfied. 
We therefore focus on analyzing the relation $(b)$. 
By \lref{lam:UnbiasGammakU}, $\hat\xi_k^{O_0}= \frac{1}{N_U} \sum_{s=1}^{N_U} \hat{\Gamma}_k^{U_{\!s}}(O_0)$  is an unbiased estimator for $\xi_k^{O_0}$. 
Then Chebyshev's inequality implies that 
\begin{align}
\Pr\left\{ \left|\hat{\xi}_k^{O_0} - \xi_k^{O_0}\right| > c\epsilon \right\} 
\leq \frac{1}{c^2\epsilon^2} \Var\left[ \hat{\xi}_k^{O_0}\right] 
= \frac{1}{c^2\epsilon^2 N_U} \Var\!\Big[ \hat{\Gamma}_k^{U}(O_0)\Big] 
= \bigo{\frac{\Tr(O_0^2) }{\epsilon^2 N_U} \cdot \max\left\lbrace \frac{1}{d} ,\frac{d^{k-1}}{N_{\!M}^{k}}\right\rbrace},  
\end{align}
where the last equality follows from \lref{lem:MomOVariance}. 
Hence, to make the relation $(b)$ in \eref{eq:EachOUBceps} hold, it is sufficient to take 
\begin{align} 
\frac{\Tr(O_0^2)}{\epsilon^2 N_U d} \leq 1
\quad\text{and}\quad  
\frac{\Tr(O_0^2) d^{k-1}}{\epsilon^2 N_U N_{\!M}^{k}} \leq 1 , 
\end{align}
where we omit constant coefficients, including $\delta$ and $t$. 
It is straightforward to verify that these two conditions are satisfied by choosing $N_U$ and $N_M$ as in \tref{thm:MomOResultRestate}. 
This completes the proof. 
\end{proof}

\begin{lemma}\label{lem:MoOlessTrans}
Suppose $t\geq1$ is a constant integer and $O$ is an Hermitian observable with $\|O\|\leq1$. 
For any $0<\eta<1$, if the estimates $\{\hat{p}_k\}_{k=2}^t$ and $\{\hat{\xi}_k^{O_0}\}_{k=1}^t$ satisfy
\begin{align}\label{eq:Cond1ptOTrans}
|\hat{p}_k - p_k| \leq \eta 
\quad \text{and} \quad 
\left| \hat{\xi}_k^{O_0} - \xi_k^{O_0}\right|  \leq \eta 
\quad \forall k = 1, \dots, t, 
\end{align} 
then the quantities $\{\hat{o}_k^{(0)}\}_{k=1}^t$ and $\{\hat{o}_k\}_{k=1}^t$ computed in Steps~11--12 of Algorithm~\ref{alg:CBNEmain} satisfy
\begin{align}
\left| \hat{o}_k^{(0)}-\Tr(O_0\rho^k)\right|  = \mathcal{O}(\eta)
\quad \text{and} \quad 
\left|\hat{o}_k-\Tr(O\rho^k)\right|  = \mathcal{O}(\eta)  
 \quad \forall k = 1, \dots, t. 
\end{align} 
\end{lemma}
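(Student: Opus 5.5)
The plan is to argue by induction on $k$, in the same way as \lref{lem:EstimatorTrans}. I would first write $\xi_k^{O_0}$ explicitly as a linear function of $\Tr(O_0\rho^k)$ with all other terms of lower order. For $\pi\in\mathrm{S}_{k+1}$, the quantity $\Tr[(\rho^{\otimes k}\otimes O_0)R_\pi]$ factorizes over the cycles of $\pi$. The cycle containing the last tensor slot, say of length $j+1$, contributes $\Tr(O_0\rho^{j})$. Each remaining cycle of length $i$ contributes $p_i$, with $p_1=1$. Hence
\begin{align}
(k+1)!\,\xi_k^{O_0}=\sum_{j=0}^{k} m_{k,j}\,\Tr(O_0\rho^{j})\,Q_{k-j}(p_2,\dots,p_{k-j}),
\end{align}
where the $m_{k,j}$ are combinatorial constants and the $Q_{k-j}$ are fixed polynomials with nonnegative coefficients depending only on $k$. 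Since $\Tr(O_0)=0$, the $j=0$ term vanishes. The top term $j=k$ arises exactly from the $k!$ full $(k+1)$-cycles, and it equals $k!\,\Tr(O_0\rho^k)$. Rearranging gives
\begin{align}
\Tr(O_0\rho^k)=(k+1)\,\xi_k^{O_0}-\frac{1}{k!}\sum_{j=1}^{k-1} m_{k,j}\,\Tr(O_0\rho^{j})\,Q_{k-j}.
\end{align}
Step~11 of Algorithm~\ref{alg:CBNEmain} evaluates exactly this expression, with hatted quantities substituted in. For the base case $k=1$ this reads $\hat o_1^{(0)}=2\hat\xi_1^{O_0}$, so the error is at most $2\eta$.

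For the inductive step, I assume $|\hat o_j^{(0)}-\Tr(O_0\rho^j)|=\mathcal{O}(\eta)$ for all $j<k$. The only real obstacle is making sure the errors do not blow up when products are perturbed, which requires a priori bounds on the true values that do not depend on $d$.
\begin{enumerate}
\item The moments satisfy $p_i\in[0,1]$.
\item The traceless part satisfies $\|O_0\|\le 2\|O\|\le2$, so $|\Tr(O_0\rho^j)|\le\|O_0\|\,\Tr(\rho^j)\le 2$.
\end{enumerate}
I would deliberately use this operator-norm bound rather than the $\sqrt{\Tr(O_0^2)}$ bound of \lref{lem:UsefulIneqs}, because the latter can grow with $d$.

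The perturbation then goes through with the elementary identity $\hat a\hat b-ab=(\hat a-a)\hat b+a(\hat b-b)$. Because $\eta<1$, every hatted quantity is also bounded by a constant. Each product therefore carries error $\mathcal{O}(\eta)$, and the constant depends only on $k\le t$. There are only constantly many terms, so $|\hat o_k^{(0)}-\Tr(O_0\rho^k)|=\mathcal{O}(\eta)$, which closes the induction.

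Finally I would transfer the bound to $O$ itself. Step~12 sets $\hat o_k=\hat o_k^{(0)}+\Tr(O)\hat p_k/d$, with $\hat p_1=p_1=1$. Since $|\Tr(O)|/d\le\|O\|\le1$, the additional error is at most $|\hat p_k-p_k|\le\eta$. This gives $|\hat o_k-\Tr(O\rho^k)|=\mathcal{O}(\eta)$.
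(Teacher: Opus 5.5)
Your proposal is correct and follows essentially the same route as the paper: induction on the order, isolating the $k!$ full $(k+1)$-cycles to solve $(k+1)!\,\xi_k^{O_0}$ for $\Tr(O_0\rho^k)$, substituting the lower-order estimates, and finally adding $\Tr(O)\hat p_k/d$. Your explicit dimension-free bounds $p_i\in[0,1]$ and $|\Tr(O_0\rho^j)|\le\|O_0\|\le 2$ make the perturbation step rigorous, whereas the paper justifies it only by noting that the right-hand side has a constant number of terms.
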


\begin{proof}[Proof of \lref{lem:MoOlessTrans}]
We proceed by induction on $t$. 

\textbf{Base case ($t=1$):} 
From the relations $\xi_1^{O_0}=\Tr(O_0\rho)/2$ and $\Tr(O\rho)=\Tr(O_0\rho)+\Tr(O)/d$, 
we see that estimating $\xi_k^{O_0}$ within $\eta$-additive error directly yields $\mathcal{O}(\eta)$-additive error estimates for both $\Tr(O_0\rho)$ and $\Tr(O\rho)$. This establishes the lemma for $t=1$.

\textbf{Inductive step ($t \geq 2$):} 
Now we assume that $t\geq2$ is a constant integer (independent of $\eta$, $O$, and $d$) and the desired result holds for all positive integers smaller than $t$. 
For any permutation $\pi \in \mathrm{S}_{t+1}$, we have 
\begin{align}
\Tr\left[ \left(\rho^{\otimes t} \otimes O_0 \right)R_\pi\right]   
=
\begin{cases}
\Tr(O_0\rho^t) &  \text{when } \#(\pi)=1, \\
\vspace{-1.2em} \\
\Tr\left( O_0\rho^{\tilde{\nu}(\pi)} \right) \prod_{i=2}^{t} p_i^{\nu_i(\pi)}   &  \text{when } \#(\pi)\geq 2,  
\end{cases}
\end{align}
where $\#(\pi)$ is the number of disjoint cycles in $\pi$, and the exponents $0\leq \tilde{\nu}(\pi),\nu_i(\pi)<t$ (for $i=2,\dots,t-1$) are some integers depending on $\pi$. 
Combining this with \eref{eq:xikO0def}, we derive:
\begin{align}
(t+1)!\,\xi_t^{O_0}= 
\sum_{\pi\in \mathrm{S}_{t+1}}  \Tr\left[ \left(\rho^{\otimes t} \otimes O_0 \right)R_\pi\right]
=t!\Tr(O_0\rho^t) + \sum_{\pi\in \mathrm{S}_{t+1},\#(\pi)\geq 2} 
\left[ \Tr\left( O_0\rho^{\tilde{\nu}(\pi)} \right) \prod_{i=2}^{t} p_i^{\nu_i(\pi)} \right] ,
\end{align}
where the first term uses the fact that the number of permutations in $\mathrm{S}_t$ with $\#(\pi)=1$ is $t!$. 
Rearranging this relation yields:
\begin{align}\label{eq:ptOExp1}
\Tr(O_0\rho^t) 
= (t+1)\,\xi_t^{O_0} 
- \frac{1}{t!} \sum_{\pi\in \mathrm{S}_{t+1},\#(\pi)\geq 2} 
\left[ \Tr\left( O_0\rho^{\tilde{\nu}(\pi)} \right) \prod_{i=2}^{t} p_i^{\nu_i(\pi)} \right]. 
\end{align}

By the induction hypothesis, if \eref{eq:Cond1ptOTrans} holds, then the estimates $\{\hat{o}_k^{(0)}\}_{k=1}^{t-1}$ obtained from $\{\hat{p}_k\}_{k=2}^{t-1}$ and $\{\hat{\xi}_k^{O_0}\}_{k=1}^{t-1}$ satisfy 
$\big| \hat{o}_k^{(0)}-\Tr(O_0\rho^k)\big|= \mathcal{O}(\eta)$. 
Since the RHS of \eref{eq:ptOExp1} is a summation of a constant number of terms, 
substituting these $\{\hat{o}_k^{(0)}\}_{k=1}^{t-1}$ along with $\{\hat{p}_k\}_{k=2}^{t-1}$ and $\hat{\xi}_t^{O_0}$ into the RHS of \eref{eq:ptOExp1} produces an estimate $\hat{o}_t^{(0)}$ for $\Tr(O_0\rho^t)$ within error $\mathcal{O}(\eta)$. The estimate $\hat{o}_t = \hat{o}_t^{(0)} + \Tr(O) \hat{p}_t / d$ similarly preserves the $\mathcal{O}(\eta)$ additive error for $\Tr(O \rho^t)$. 
This completes the inductive step and the proof.
\end{proof}

\subsection{Nonlinear estimation with an approximate unitary design ensemble}\label{sec:appdesignCBNE}

In this subsection, we prove \pref{prop:appCBNE} in the End Matter.  

\begin{prop}[Restatement of \pref{prop:appCBNE}]
\label{prop:MomOResultApp}
Suppose $\rho\in\caD(\caH)$, $O$ is an observable on $\caH$ satisfying $O\geq 0$ and $\Tr(O)\geq1$, and $t>1$ is a constant integer independent of $O$ and the system dimension $d$.
For any $0<\epsilon<1$, the variant CBNE protocol (given in  Algorithm~\ref{alg:variantCBNE}) can return $\epsilon$-additive error estimates for all $p_2,\dots,p_t$  and $\Tr(O\rho),\dots,\Tr(O\rho^t)$ with high probability, provided that: $(i)$ the random unitaries used in the protocol are sampled from a $[c_3\epsilon^2\!/\!\Tr(O)^2]$-approximate $2(t+1)$-design ensemble, 
\begin{align} \label{eq:MomOResultApp}
{(ii)} \ N_U = \max\left\lbrace 1, \dfrac{c_1 \Tr(O)^2}{d\epsilon^2} \right\rbrace, 
\quad \text{and} \quad
{(iii)} \ N_M \geq  c_2  \min\!\bigg\lbrace \frac{d^{1-1/t}\Tr(O)^{2/t}}{\epsilon^{2/t}}, d \bigg\rbrace,  
\end{align}
where $c_1,c_2,c_3>0$ are some constants  independent of $O$, $d$, and $\epsilon$. 
\end{prop}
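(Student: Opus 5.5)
The plan is to mirror the proof of \pref{prop:appCBNEmomentSM}, now applied to the estimator $\hat{\Gamma}_k^U(O)$ with $O\geq0$, so that $g(O)=\Tr(O)^2$. First I will establish two approximate-design lemmas analogous to \lref{lam:biasMkUapp} and \lref{lem:MomVarianceApp}. Because $O\geq0$, every summand $\<b|U\rho U^\dag|b\>^k\<b|UOU^\dag|b\>$ is a nonnegative expectation of the form $\Tr[(\rho^{\otimes k}\otimes O)\Phi(\ketbra{b}{b}^{\otimes(k+1)})]$ with a positive input. This positivity is the technical reason for working with $O$ rather than $O_0$. The completely-positive ordering $(1-\mu)\Phi_{\rm H}\preccurlyeq\Phi_{\mathcal E^\mu}\preccurlyeq(1+\mu)\Phi_{\rm H}$ then applies termwise and gives
\begin{align}
\Big|\E_{U\sim\mathcal E^\mu}[\hat\Gamma_k^U(O)]-\xi_k^O\Big|\leq \mu\,\xi_k^O\leq \mu\,\Tr(O),
\end{align}
where I use \lref{lem:xikOUB} for $\xi_k^O\le\sqrt{g(O)}=\Tr(O)$.

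For the variance I will expand $\E[(\hat\Gamma_k^U(O))^2]$ exactly as in \eref{eq:GammakUO}. Each term $\E f_1$ is a nonnegative quantity of the form $\Tr[X\,\Phi(Y)]$ with $X,Y\geq0$; examples are $X=\rho^{\otimes(k+l)}\otimes O^{\otimes2}$ and $Y=\ketbra{b}{b}^{\otimes k}\otimes\ketbra{b'}{b'}^{\otimes k}$ suitably interleaved. So each term is at most $(1+\mu)$ times its Haar value. The squared mean is at least $(1-\mu)^2(\xi_k^O)^2$. Combining these with \lref{lem:MomOVariance} should give
\begin{align}
\Var_{\mathcal E^\mu}\big[\hat\Gamma_k^U(O)\big]\leq (1+\mu)\,\bigo{\Tr(O)^2\max\left\lbrace \frac1d,\frac{d^{k-1}}{N_M^k}\right\rbrace}+3\mu\,\Tr(O)^2 .
\end{align}

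Next comes the bias–variance split. I will choose $\mu=c_3\epsilon^2/\Tr(O)^2$ with $c_3$ small, say $c^2\delta/(20t)$. Since $\Tr(O)\ge1$ and $\epsilon<1$, this makes the bias $\mu\Tr(O)\le c\epsilon/2$, and the $3\mu\Tr(O)^2$ variance contribution becomes $O(\epsilon^2)$, which Chebyshev turns into a constant failure probability $\le 3\delta/(5t)$. The remaining Chebyshev requirement is
\begin{align}
\frac{\Tr(O)^2}{\epsilon^2 N_U}\max\left\lbrace\frac1d,\frac{d^{k-1}}{N_M^k}\right\rbrace\lesssim 1 .
\end{align}
This holds for $N_U,N_M$ as in \eref{eq:MomOResultApp}. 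If $N_U=1$, then $d\gtrsim\Tr(O)^2/\epsilon^2$ and $N_M\gtrsim d^{1-1/t}\Tr(O)^{2/t}/\epsilon^{2/t}$; otherwise $N_M\gtrsim d$. The worst case over $k\le t$ is $k=t$, because $d/N_M\lesssim 1$ in the regime considered. The moment estimates $\hat p_k$ follow from \pref{prop:appCBNEmomentSM}, since a $\mu$-approximate $2(t+1)$-design is in particular a $\mu$-approximate $2t$-design with $\mu\le c_3\epsilon^2$.

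Finally, I need an error-propagation lemma for the relations \eref{eq:xiORelation+}, which include $\Tr(O)$-weighted terms. I expect this to be the main obstacle. Inverting $\xi_t^O$ gives $\Tr(O\rho^t)=(t+1)\xi_t^O-\dots$, and the subtracted terms contain $\Tr(O)\prod p_i^{\nu_i}$ with $\Tr(O)$ possibly large. An $\eta$-error in $\hat p_i$ would then be amplified to $\Tr(O)\eta$. To handle this I will demand accuracy $\epsilon/\Tr(O)$ for the $p_k$. That is affordable: the moment requirements with $\epsilon\mapsto\epsilon/\Tr(O)$ give exactly $N_U\gtrsim \Tr(O)^2/(d\epsilon^2)$ and $N_M\gtrsim d^{1-1/t}\Tr(O)^{2/t}/\epsilon^{2/t}$, and the design accuracy $c_3\epsilon^2/\Tr(O)^2$ matches. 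The terms $\Tr(O\rho^{j})p_i$ with $j<t$ are bounded using $|\Tr(O\rho^j)|\le1$ and $p_i\le1$, so an induction as in \lref{lem:MoOlessTrans} closes with $O(\epsilon)$ error. A union bound over $k$ and both estimator families finishes the proof.
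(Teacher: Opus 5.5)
Your proposal is correct and follows essentially the same route as the paper. You use positivity of $O$ to transfer the Haar bias and variance bounds to the approximate design (the content of Lemmas~\ref{lam:biasGammakUapp} and \ref{lem:GammakVarianceApp}, which the paper only asserts "by similar arguments"), split bias from variance with $\mu\propto\epsilon^2/\Tr(O)^2$, and invoke \pref{prop:appCBNEmoment} at accuracy $\epsilon/\Tr(O)$ so that the $\Tr(O)$-weighted terms in \eref{eq:xiORelation+} stay controlled, exactly as in \lref{lem:MoOestTrans}. One small point: the proposition does not assume $\|O\|\le1$, so in the induction use $|\Tr(O\rho^j)|\le\Tr(O)$ rather than $\le 1$; combined with the $\epsilon/\Tr(O)$ accuracy on the $\hat p_i$, this still gives an $\mathcal{O}(\epsilon)$ error.
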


\begin{proof}[Proof of \pref{prop:MomOResultApp}]
Let $0<\delta<1$ be a constant. 
According to the union bound and \lref{lem:MoOestTrans} below, in order to estimate all $p_2,\dots,p_t$ and
$\Tr(O\rho),\dots,\Tr(O\rho^t)$ within $\epsilon$ additive error and failure probability at most $\delta$, it 
suffices to ensure that the following relations hold for all $k=1,2,\dots,t$: 
\begin{align}\label{eq:EachOUBcepsAPP}
(a) \ 
\Pr\left\{ \left|\hat{p}_k-p_k\right| > \frac{c\epsilon}{\Tr(O)}  \right\} \leq \frac{\delta}{2t}
\quad \text{and} \quad  
(b) \ 
\Pr\left\{ \left|\hat{\xi}_k^O - \xi_k^O\right| > c\epsilon \right\} \leq \frac{\delta}{2t}, 
\end{align}
where $0<c<1$ is some constant factor independent of $O$, $\epsilon$ and $d$.

According to \pref{prop:appCBNEmoment} in the End Matter, the relation $(a)$ in \eref{eq:EachOUBcepsAPP} 
holds provided that the three conditions in \pref{prop:MomOResultApp} are satisfied. 
We therefore focus on analyzing the relation $(b)$. It can be achieved if
\begin{align}\label{eq:CondiAppMomO}
\left|\E\big[\hat{\xi}_k^O\big]-\xi_k^O 
\right| < \frac{c\epsilon}{2} 
\quad\text{and}\quad
\Pr\left\{ \left|\hat{\xi}_k^O-\E\big[\hat{\xi}_k^O\big]\right| > \frac{c\epsilon}{2}  \right\} \leq \frac{\delta}{2t}.
\end{align} 

Assume that the random unitaries used in the CBNE protocol are sampled from a 
$\mu$-approximate $2(t+1)$-design ensemble, where 
$\mu=c^2\delta\epsilon^2/[30t\Tr(O)^2]$. 
Then \lref{lam:biasGammakUapp} below implies that
\begin{align}
\left|\E\big[\hat{\xi}_k^O\big]-\xi_k^O \right| 
=\Big| \E \left[ \hat{\Gamma}_k^U (O)\right] -\xi_k^{O}\Big| 
\leq \mu\, \xi_k^{O} 
\leq \dfrac{c^2\delta\epsilon^2}{30t\Tr(O)^2} \Tr(O) 
\leq \frac{c\epsilon}{2}, 
\end{align} 
where the second inequality holds because $\xi_k^{O} \leq \Tr(O)$ (see \lref{lem:xikOUB}). 
In addition, Chebyshev's inequality implies that 
\begin{align}\label{eq:ChebyYO}
\Pr\left\{ \left|\hat{\xi}_k^O-\E\big[\hat{\xi}_k^O\big]\right| > \frac{c\epsilon}{2}\right\}
&\leq \frac{4}{c^2\epsilon^2}\Var\!\big[\hat{\xi}_k^O\big] 
\leq \frac{4}{c^2\epsilon^2} \cdot \frac{1}{N_U} 
\left[ \bigo{\Tr(O)^2 \cdot \max\left\lbrace \frac{1}{d} ,\frac{d^{k-1}}{N_{\!M}^{k}}\right\rbrace} 
+ 3\mu \left( \xi_k^{O}\right) ^2 \right] 
\nonumber\\
&\leq \bigo{\frac{\Tr(O)^2 }{\epsilon^2 N_U} \cdot \max\left\lbrace \frac{1}{d} ,\frac{d^{k-1}}{N_{\!M}^{k}} \right\rbrace } 
+ \frac{2\delta}{5 t},
\end{align} 
where the second inequality follows from \lref{lem:GammakVarianceApp} below.   
Therefore, to make sure that \eref{eq:CondiAppMomO} holds, it is sufficient to take 
\begin{align} 
	\frac{\Tr(O)^2}{\epsilon^2 dN_U} \leq 1
	\quad\text{and}\quad  
	\frac{\Tr(O)^2d^{k-1}}{\epsilon^2  N_U N_{\!M}^{k}} \leq 1,
\end{align}
where we omit constant coefficients, including $\delta$ and $t$. 
It is straightforward to verify that these two conditions are satisfied by choosing $N_U$ and $N_M$ as in \eref{eq:MomOResultApp}. 
This completes the proof of \pref{prop:MomOResultApp}. 
\end{proof}

The following two lemmas can be derived by
using arguments similar to those in the proofs of Lemmas~\ref{lam:biasMkUapp} and \ref{lem:MomVarianceApp}, respectively.  

\begin{lemma}\label{lam:biasGammakUapp} 
Suppose $U$ is a random unitary sampled from a $\mu$-approximate $(k+1)$-design 
ensemble $\mathcal E_{k+1}^\mu$, then 
\begin{align}
\left| \underset{U\sim \mathcal E_{k+1}^\mu,\mathbf{b}_{U}}{\E} \left[ \hat{\Gamma}_k^U (O)\right] -\xi_k^{O}\right|  
\leq \mu\, \xi_k^{O} 
\end{align}
for any observable $O\geq0$. 
\end{lemma}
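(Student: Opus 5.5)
The argument copies the proof of \lref{lam:biasMkUapp}, with the extra factor $\widetilde{\Pr}_O$ absorbed into the tensor power. First I condition on $U$ and compute the conditional expectation over the outcomes $\mathbf{b}_U$. For each ordered $k$-tuple, the probability that all $k$ outcomes equal $b$ is $\<b|U\rho U^\dag|b\>^k$. Hence
\begin{align*}
\underset{\mathbf{b}_U}{\E}\Big(\hat{\Gamma}_k^U(O)\Big|U\Big)=\frac{\kappa_{k+1}}{d}\sum_{b=0}^{d-1}\<b|U\rho U^\dag|b\>^k\<b|UOU^\dag|b\>=\frac{\kappa_{k+1}}{d}\sum_{b=0}^{d-1}\Tr\Big[(\rho^{\otimes k}\otimes O)\,U^{\dagger\otimes(k+1)}\ketbra{b}{b}^{\otimes(k+1)}U^{\otimes(k+1)}\Big].
\end{align*}
Averaging over $U\sim\mathcal E_{k+1}^\mu$ turns this into $\frac{\kappa_{k+1}}{d}\sum_b\Tr[(\rho^{\otimes k}\otimes O)\,\Phi_{\mathcal E^\mu_{k+1}}(\ketbra{b}{b}^{\otimes(k+1)})]$. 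The twirl convention is the same one used in the proofs of \lref{lam:UnbiasMkU} and \lref{lam:UnbiasGammakU}: if $\mathcal E$ is an approximate design, the ensemble of $U^\dagger$ satisfies the same sandwich bound. By \lref{lam:UnbiasGammakU}, the analogous expression with an exact $(k+1)$-design (equivalently Haar) equals $\xi_k^O$.

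Next I bound the difference term by term in $b$. By \eref{eq:appdesignDef}, $\Phi_{\mathcal E^\mu_{k+1}}-(1-\mu)\Phi_{\rm H}$ and $(1+\mu)\Phi_{\rm H}-\Phi_{\mathcal E^\mu_{k+1}}$ are completely positive. Applied to the positive operator $\ketbra{b}{b}^{\otimes(k+1)}$, both give positive semidefinite operators. The key point is that $O\geq0$ makes $\rho^{\otimes k}\otimes O$ positive semidefinite. Pairing it with those two PSD operators gives
\begin{align*}
(1-\mu)\,a_b\le \Tr\Big[(\rho^{\otimes k}\otimes O)\,\Phi_{\mathcal E^\mu_{k+1}}(\ketbra{b}{b}^{\otimes(k+1)})\Big]\le(1+\mu)\,a_b,
\end{align*}
where
\begin{align*}
a_b:=\Tr\Big[(\rho^{\otimes k}\otimes O)\,\Phi_{\rm H}(\ketbra{b}{b}^{\otimes(k+1)})\Big]\ge0.
\end{align*}
So each summand differs from its Haar counterpart by at most $\mu a_b$.

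Finally I sum over $b$ with the prefactor $\kappa_{k+1}/d$. Using \eref{eq:PhiHaar}, each $a_b=\Tr[(\rho^{\otimes k}\otimes O)\sym^{(k+1)}]/\kappa_{k+1}$. The sum therefore equals $\mu\Tr[(\rho^{\otimes k}\otimes O)\sym^{(k+1)}]=\mu\,\xi_k^O$, which is the claimed bound; note $\xi_k^O\ge0$ here. The only point that needs care is the positivity step: for indefinite $O$ the bound would fail, which is why the lemma assumes $O\geq0$. Everything else is the same bookkeeping as in \lref{lam:biasMkUapp}.
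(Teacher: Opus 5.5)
Your proposal is correct and is exactly the argument the paper intends: it adapts the proof of Lemma~\ref{lam:biasMkUapp} by conditioning on $U$, writing the mean as $\frac{\kappa_{k+1}}{d}\sum_b \Tr[(\rho^{\otimes k}\otimes O)\,\Phi(\ketbra{b}{b}^{\otimes(k+1)})]$, and then using the completely-positive sandwich together with $\rho^{\otimes k}\otimes O\geq 0$ to bound each term by $\mu$ times its Haar value, which sums to $\mu\,\xi_k^O$. Your remark about the $U^\dagger$ twirl is also sound, since the adjoint of a CP map is CP and $\Phi_{\rm H}$ is self-adjoint; the paper passes over this point silently.
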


\begin{lemma}\label{lem:GammakVarianceApp}
Suppose $k>1$ is a constant integer independent of the system dimension $d$, 
observable $O\geq 0$, and $U$ is a random unitary  sampled from a 
$\mu$-approximate $2(k+1)$-design ensemble $\mathcal E_{2k+2}^\mu$. Then 
\begin{align}
\underset{U\sim \mathcal E_{2k+2}^\mu,\mathbf{b}_{U}}{\Var}
\left[\hat{\Gamma}_k^U (O)\right] 
= \bigo{\Tr(O)^2 \cdot \max\left\lbrace \frac{1}{d} ,\frac{d^{k-1}}{N_{\!M}^{k}}\right\rbrace} 
+ 3\mu \left( \xi_k^{O}\right) ^2. 
\end{align}
\end{lemma}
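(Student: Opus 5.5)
The plan is to mirror the proof of \lref{lem:MomVarianceApp}, with \lref{lem:MomOVariance} playing the role of \lref{lem:MomVariance}. We expand
\begin{align}
\underset{U\sim \mathcal E_{2k+2}^\mu,\mathbf{b}_U}{\Var}\left[\hat{\Gamma}_k^U(O)\right]
=\binom{N_M}{k}^{-2}\left(\frac{\kappa_{k+1}}{d}\right)^2\sum_{\substack{i_1<\dots<i_k\\ j_1<\dots<j_k}}\underset{U\sim\mathcal E_{2k+2}^\mu}{\E}\,\underset{\mathbf{b}_U}{\E}\left(f_1\,\Big|\,U\right)-\left(\underset{U\sim \mathcal E_{2k+2}^\mu,\mathbf{b}_U}{\E}\left[\hat{\Gamma}_k^U(O)\right]\right)^2 ,
\end{align}
exactly as in \eref{eq:GammakUO}. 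We then bound the positive part termwise by $(1+\mu)$ times its exact-design counterpart, and bound the subtracted square from below using \lref{lam:biasGammakUapp}.

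\textbf{Step 1 (termwise comparison).} For each pair of index tuples, the conditional expectation of $f_1$ is a sum of terms of the form
\begin{align}
\Tr\left[X\,\Phi_{\mathcal E}\left(\ketbra{b}{b}^{\otimes(k+l)}\otimes\ketbra{b'}{b'}^{\otimes \cdot}\right)\right],
\end{align}
where $X$ is a tensor product of copies of $\rho$ and $O$. Concretely:
\begin{itemize}
\item if ${\rm Co}=k-l>0$, the relevant quantity is $\sum_b \Tr[(\rho^{\otimes(k+l)}\otimes O^{\otimes 2})\Phi(\ketbra{b}{b}^{\otimes(k+l+2)})]$;
\item if ${\rm Co}=0$, it is $\sum_{b,b'}\Tr[(\rho^{\otimes k}\otimes O\otimes\rho^{\otimes k}\otimes O)\Phi(\ketbra{b}{b}^{\otimes(k+1)}\otimes\ketbra{b'}{b'}^{\otimes(k+1)})]$.
\end{itemize}
In both cases $X\ge 0$ because $O\ge0$ and $\rho\ge0$, and the input to the channel is positive semidefinite. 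The relation $(1-\mu)\Phi_{\rm H}\preccurlyeq\Phi_{\mathcal E}\preccurlyeq(1+\mu)\Phi_{\rm H}$ therefore gives each term within a factor $(1\pm\mu)$ of its Haar value. When fewer than $2k+2$ tensor factors are involved, we use that an approximate $(2k+2)$-design is also an approximate lower design.

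Summing the upper bounds shows that the second moment is at most $(1+\mu)\left(\Var_{\mathcal E_{2k+2}}[\hat\Gamma_k^U(O)]+(\xi_k^O)^2\right)$.

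\textbf{Step 2 (assemble).} \lref{lam:biasGammakUapp} gives the lower bound $\E[\hat\Gamma_k^U(O)]\ge(1-\mu)\xi_k^O\ge0$; here $\xi_k^O\ge0$ because it equals $\Tr[(\rho^{\otimes k}\otimes O)\Pi_{\rm sym}^{(k+1)}]$ with a positive operator. Combining this with Step 1,
\begin{align}
\Var\le(1+\mu)\Var_{\mathcal E_{2k+2}}+\mu(3-\mu)(\xi_k^O)^2 .
\end{align}
We then apply \lref{lem:MomOVariance} with $g(O)=\Tr(O)^2$. The prefactor $1+\mu$ is harmless: it is absorbed into the $\mathcal O(\cdot)$ when $\mu\le1$, and if $\mu>1$ the same computation still holds with a constant-factor change. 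Finally $\mu(3-\mu)\le3\mu$, which yields the claimed bound.

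\textbf{Main obstacle.} The step needing care is the positivity check in Step 1. The completely-positive ordering only yields a scalar inequality when paired with a positive operator and a positive input. This is exactly why the lemma assumes $O\ge0$, and why the variant estimator uses $O$ rather than the traceless part $O_0$. One also has to make sure the ${\rm Co}=0$ term is handled as a full $(2k+2)$-fold twirl of a product of projectors, rather than splitting off the $b=b'$ diagonal before comparing. Comparing before splitting avoids any cancellation issues.
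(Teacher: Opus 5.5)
Your proposal is correct and follows the route the paper indicates, namely mirroring the proof of \lref{lem:MomVarianceApp}: you bound each collision term by $(1+\mu)$ times its exact-design value using positivity (which needs $O\ge0$), lower-bound the mean by $(1-\mu)\xi_k^O\ge0$ via \lref{lam:biasGammakUapp}, and then invoke \lref{lem:MomOVariance} with $g(O)=\Tr(O)^2$. The one loose point is your aside on $\mu>1$: there the factor $1+\mu$ multiplying the exact-design variance cannot be absorbed into the $\mathcal O(\cdot)$, and the lower bound $(1-\mu)\xi_k^O$ gives nothing, so you should simply assume $\mu\le1$ as \lref{lem:MomVarianceApp} does, which is the regime used in \pref{prop:MomOResultApp}.
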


The following lemma can be derived by a similar induction argument used in the proof of \lref{lem:MoOlessTrans}.

\begin{lemma}\label{lem:MoOestTrans}
Suppose $k>1$ is a constant integer and observable $O\geq 0$. 
For any $0<\eta<1$, if the estimates $\{\hat{p}_k\}_{k=2}^t$ and $\{\hat{\xi}_k^O\}_{k=1}^t$ satisfy 
\begin{align}
|\hat{p}_k - p_k| \leq \frac{\eta}{\Tr(O)} 
\quad \text{and} \quad 
\left| \hat{\xi}_k^O - \xi_k^O\right|  \leq \eta
\quad \forall k = 1, \dots, t, 
\end{align} 
then the quantities $\{\hat{o}_k\}_{k=1}^t$ computed in Step~11 of Algorithm~\ref{alg:variantCBNE} satisfy
\begin{align}
	\left|\hat{o}_k-\Tr(O\rho^k)\right|  = \mathcal{O}(\eta)  
	\quad \forall k = 1, \dots, t.  
\end{align} 
\end{lemma}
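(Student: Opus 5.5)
We argue by induction on $t$, following the proof of \lref{lem:MoOlessTrans}. The difference is that here we work with $\xi_k^O$ for the full observable $O\geq0$ rather than its traceless part. Consequently, each inversion formula contains terms of the form $\Tr(O)\cdot(\text{polynomial in } p_j)$. The hypothesis $|\hat p_k-p_k|\le \eta/\Tr(O)$ is exactly what keeps such terms at error $\mathcal O(\eta)$.

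\textbf{Base case ($t=1$).} From \eref{eq:xiORelation+}, $\Tr(O\rho)=2\xi_1^O-\Tr(O)$. Since $\Tr(O)$ is known exactly, $\hat o_1=2\hat\xi_1^O-\Tr(O)$ has error at most $2\eta$.

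\textbf{Inductive step.} For $\pi\in\mathrm{S}_{t+1}$, the trace $\Tr[(\rho^{\otimes t}\otimes O)R_\pi]$ falls into three cases according to the cycle containing the $O$ slot:
\begin{itemize}
\item it equals $\Tr(O\rho^t)$ if $\pi$ is a single $(t+1)$-cycle;
\item it equals $\Tr(O\rho^{\tilde\nu})\prod_i p_i^{\nu_i}$ with $1\le\tilde\nu<t$ if the $O$ slot shares its cycle with some, but not all, $\rho$ slots;
\item it equals $\Tr(O)\prod_i p_i^{\nu_i}$ with $\sum_i i\nu_i=t$ if $O$ sits in a fixed point.
\end{itemize}
Rearranging as in \eref{eq:ptOExp1} gives
\begin{align*}
\Tr(O\rho^t)=(t+1)\xi_t^O-\frac{1}{t!}\sum_{\#(\pi)\ge2}[\cdots].
\end{align*}
We then bound the error of each term after substituting estimates. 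The term $(t+1)\hat\xi_t^O$ contributes $\mathcal O(\eta)$. Terms of the form $\Tr(O\rho^{\tilde\nu})\prod p_i^{\nu_i}$ use $\hat o_{\tilde\nu}$ from the induction hypothesis, with error $\mathcal O(\eta)$, together with $\hat p_i$. We write
\begin{align*}
\hat o\,\hat P-oP=(\hat o-o)\hat P+o(\hat P-P),
\end{align*}
and use $0\le \Tr(O\rho^{\tilde\nu})\le \Tr(O)$ together with $|p_i|\le1$. For bounded $|\hat p_i|$, the product of a constant number of factors changes by $\mathcal O(\eta/\Tr(O))$, so $o(\hat P-P)=\mathcal O(\eta)$. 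Also $|\hat P|\le 1+\mathcal O(\eta)$, so $(\hat o-o)\hat P=\mathcal O(\eta)$. Fixed-point terms are $\Tr(O)\prod p_i^{\nu_i}$, whose error is $\Tr(O)\cdot\mathcal O(\eta/\Tr(O))=\mathcal O(\eta)$. Summing a constant number of such terms gives the claim.

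\textbf{Main obstacle.} The delicate point is keeping the induction hypothesis strong enough. Bounding $\hat o_{\tilde\nu}$ only up to $\mathcal O(\eta)$ is fine here because it multiplies $\hat P$, which has size $\mathcal O(1)$, not $\Tr(O)$. Only the exact quantity $\Tr(O)$ is ever multiplied by estimated $p$'s, so the stated asymmetric accuracy requirement suffices.
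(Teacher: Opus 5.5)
Your proof is correct and is the same induction the paper has in mind (it only says the lemma follows as in \lref{lem:MoOlessTrans}), and you correctly identify the fixed-point terms $\Tr(O)\prod_i p_i^{\nu_i}$ as the reason the lemma asks for accuracy $\eta/\Tr(O)$ on $\hat p_k$. One caveat: your claims that $\hat p_i$ is bounded and that $|\hat P|\le 1+\mathcal O(\eta)$ rely on $\eta/\Tr(O)\le\eta$, i.e.\ on $\Tr(O)\ge 1$; this is not among the lemma's stated hypotheses but is assumed in \pref{prop:appCBNE}, where the lemma is used, and some lower bound of this kind is genuinely needed for $t\ge 3$, since the cross term $(\hat p_2-p_2)(\hat o_1-\Tr(O\rho))$ in $\hat o_3$ can be of order $\eta^2/\Tr(O)$.
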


\section{Principal component estimation}\label{sec:PCE}

In this section, we prove \tref{thm:PrinEigen} in the main text, 
which characterizes the efficiency of the CBNE protocol for solving the PCE task. 
First, we formally define the PCE task as follows:
\begin{defn}[PCE task]\label{defn:PCEtask}
Let $\rho$ be an unknown quantum state on a $d$-dimensional Hilbert space $\caH$ with largest eigenvalue $\lambda$
and corresponding eigenstate $\ket{\psi}$.
Assume that $\rho$ has a constant spectral gap, that is, $\lambda-\lambda_2\ge \mu$ for some constant $\mu>0$
independent of $d$, where $\lambda_2$ denotes the second largest eigenvalue of $\rho$.
Given an observable $O$ on $\caH$ with $\|O\|\leq 1$,
the PCE task is to estimate $\bra{\psi}O\ket{\psi}$ up to additive error $0<\epsilon<1$
with high success probability.
\end{defn}

Throughout this section, we assume that the unknown state $\rho$ has the following spectral decomposition:
\begin{align}\label{eq:rhospec1}
	\rho=\lambda \ketbra{\psi}{\psi} + \sum_{i=2}^d \lambda_i \ketbra{\psi_i}{\psi_i},
\end{align}
where the eigenvalues are arranged in decreasing order $\lambda> \lambda_2\geq \dots \geq \lambda_d$, 
and $\lambda-\lambda_2\geq \mu$ with constant $0<\mu\leq1$ independent of the dimension $d$.

\subsection{Proof of \tref{thm:PrinEigen} in the main text}

To prove \tref{thm:PrinEigen}, it suffices to prove the following theorem, which is a stronger and formal version of \tref{thm:PrinEigen}.  

\begin{theorem}\label{thm:PrinEigenFor}
	Suppose the target error $0<\epsilon<1$ is a constant independent of the system dimension $d$ and the target observable $O$. 
	Then the CBNE protocol (given in  Algorithm~\ref{alg:CBNEmain} of the End Matter) can solve the PCE task in Definition~\ref{defn:PCEtask},
	provided that: $(i)$ the random unitaries are sampled from a $2(t+1)$-design ensemble, $(ii)$ integer $t\geq c_1$, 
	\begin{align}\label{eq:PCEtaskNuNm}
		(iii)\ N_U = \max\left\lbrace 1, \dfrac{c_2 \mathfrak{B}}{d} \right\rbrace, 
		\quad \text{and} \quad
		(iv)\ N_M \geq  c_3 \,d^{1-1/t}\mathfrak{B}^{1/t} .  
	\end{align}
    Here $\mathfrak{B}= \max \left\lbrace \Tr(O_0^2),1\right\rbrace$, $O_0$ is the traceless part of $O$, and $c_1,c_2,c_3$ are some constants independent of $\mathfrak{B}$ and $d$.
\end{theorem}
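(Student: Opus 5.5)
The plan splits the error into a deterministic truncation part and a statistical part, and handles each with results already in the paper. For a constant $t$ we use the ratio $R_t:=\Tr(O\rho^t)/p_t$ as a proxy for $\bra{\psi}O\ket{\psi}$. The final estimator is $\hat R_t:=\hat o_t/\hat p_t$, where $\hat o_t$ and $\hat p_t$ are the outputs of Algorithm~\ref{alg:CBNEmain}. The target is
\[
|\hat R_t-\bra{\psi}O\ket{\psi}|\le |\hat R_t-R_t|+|R_t-\bra{\psi}O\ket{\psi}|\le \epsilon/2+\epsilon/2 .
\]

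\emph{Truncation (deterministic) step.} Write $\rho^t=\lambda^t\ketbra{\psi}{\psi}+\sum_{i\ge2}\lambda_i^t\ketbra{\psi_i}{\psi_i}$ using \eref{eq:rhospec1}. Then
\[
|R_t-\bra{\psi}O\ket{\psi}|\le \frac{2\sum_{i\ge2}\lambda_i^t}{p_t},
\]
which follows from $\|O\|\le1$. Since $\sum_{i\ge2}\lambda_i\le 1$, we have $\sum_{i\ge2}\lambda_i^t\le\lambda_2^{t-1}$. Also $p_t\ge\lambda^t$. Hence the truncation error is at most $2\lambda^{-1}(\lambda_2/\lambda)^{t-1}$. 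The gap gives $\lambda\ge\mu$ and $\lambda_2/\lambda\le 1-\mu$. So the error is at most $2\mu^{-1}(1-\mu)^{t-1}$, which is below $\epsilon/2$ once $t\ge c_1$ for a constant $c_1=c_1(\mu,\epsilon)$.

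\emph{Statistical step.} With $t$ now a constant, invoke \tref{thm:MomOResultRestate} with some accuracy $\epsilon'$ to be chosen. This gives $|\hat o_t-\Tr(O\rho^t)|\le\epsilon'$ and $|\hat p_t-p_t|\le\epsilon'$ with high probability. The ratio is only stable if $p_t$ is bounded away from zero, and indeed $p_t\ge\lambda^t\ge\mu^t$, a dimension-independent constant. Taking $\epsilon'\le \mu^t/2$, a standard ratio perturbation bound gives
\[
|\hat R_t-R_t|\le \frac{|\hat o_t-\Tr(O\rho^t)|+|R_t|\,|\hat p_t-p_t|}{\hat p_t}\le \frac{4\epsilon'}{\mu^t},
\]
using $|R_t|\le1$ and $\hat p_t\ge p_t/2$. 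Choosing $\epsilon'=\epsilon\mu^t/8$, still a constant, makes this at most $\epsilon/2$.

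\emph{Substituting parameters.} With constant $\epsilon'$, the conditions of \tref{thm:MomOResultRestate} become $N_U=\max\{1,c\mathfrak{B}/d\}$. For $N_M$, the requirement $c\,d\min\{(\mathfrak{B}/d)^{1/t},1\}$ is at most $c\,d^{1-1/t}\mathfrak{B}^{1/t}$, because $\mathfrak{B}\ge1$. This matches \eref{eq:PCEtaskNuNm}, and the $2(t+1)$-design requirement carries over unchanged.

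The main obstacle is the dependence of the constants on $t$ and $\mu$. The constants in \tref{thm:MomOResultRestate} depend on $t$, and $\epsilon'$ shrinks like $\mu^t$. This is acceptable only because $t$ is fixed by $\mu$ and $\epsilon$ alone, independent of $d$ and $\mathfrak{B}$. I would state explicitly that $c_2$ and $c_3$ are allowed to depend on $\mu$ and $\epsilon$ but not on $d$ or $\mathfrak{B}$.
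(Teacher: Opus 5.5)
Your proof is correct, and its overall structure matches the paper's. Both split the error by the triangle inequality into a truncation term $|R_t-\langle\psi|O|\psi\rangle|$ and a statistical term $|\hat R_t-R_t|$. Both invoke \tref{thm:MomOResult} with an accuracy of order $\epsilon\mu^t$, and both use $p_t\ge\lambda^t\ge\mu^t$ to keep the ratio stable. Your identity $\hat R_t-R_t=[(\hat o_t-o_t)-R_t(\hat p_t-p_t)]/\hat p_t$ does the same job as the corner-case analysis in \lref{lem:PrincErrorB}.

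Where you genuinely differ is the truncation step. The paper builds the eigenvalue grouping of \lref{lem:PrincErrorALem1}: at most $3/\mu$ blocks, each with weight in $[\mu/3,\lambda-\mu/3]$. It then bounds the ratio at the two extreme denominators $\lambda^t$ and $\lambda^t+\sum_j w_j^t$, obtaining $\frac{6}{\mu}(1-\mu/3)^t$. You instead write the exact difference $\sum_{i\ge2}\lambda_i^t(\langle\psi_i|O|\psi_i\rangle-\langle\psi|O|\psi\rangle)/p_t$ and use $\sum_{i\ge2}\lambda_i^t\le\lambda_2^{t-1}\sum_{i\ge2}\lambda_i\le\lambda_2^{t-1}$. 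This gives $\frac{2}{\mu}(1-\mu)^{t-1}$ in two lines, with no grouping algorithm.

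Your route is shorter and also quantitatively better. It needs $t$ of order $\mu^{-1}\log\frac{1}{\mu\epsilon}$, rather than roughly $3\mu^{-1}\log\frac{1}{\mu\epsilon}$ for small $\mu$. Because the statistical accuracy must scale like $\mu^t$, a smaller $t$ shrinks the $\mu^{-2t}$ factor hidden in the $N_U$ constant. It also lowers the design order $2(t+1)$ and the exponent in $d^{1-1/t}$. The paper's grouping buys nothing extra here.

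One small slip in your parameter substitution: $c\,d\min\{(\mathfrak{B}/d)^{1/t},1\}\le c\,d^{1-1/t}\mathfrak{B}^{1/t}$ holds simply because $\min\{x,1\}\le x$. It does not depend on $\mathfrak{B}\ge1$.
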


Since $\mathfrak{B}\leq d$,
\tref{thm:PrinEigen} in the main text follows immediately from \tref{thm:PrinEigenFor}.

\begin{proof}[Proof of \tref{thm:PrinEigenFor}] 
	Let $\eta = \epsilon/2$ and integer $t \geq \lceil \log_{1-\mu/3} \frac{\mu \eta}{6} \rceil$, 
	which, by assumption, are both constants independent of $\mathfrak{B}$ and $d$. 
	According to \tref{thm:MomOResult} in the main text, the CBNE protocol can (with high probability) return estimates $\hat{o}_t, \hat{p}_t \in \R$ satisfying 
	\begin{align}\label{eq:PrinEigenProof1}
		\left| \hat{o}_t - \tr{O\rho^t}\right| \leq \frac{\mu^t \eta}{3}
		\quad \text{and} \quad
		\left| \hat{p}_t - \tr{\rho^t}\right| \leq \frac{\mu^t \eta}{3},
	\end{align}
	provided $N_U$ and $N_M$ are chosen as in \eref{eq:PCEtaskNuNm}. 
	Furthermore, if \eqref{eq:PrinEigenProof1} holds, then
	\begin{align}
		\left| \frac{\hat{o}_t}{\hat{p}_t} - \bra{\psi}O\ket{\psi} \right|
		\stackrel{(a)}{\leq}  
		\left| \frac{\hat{o}_t}{\hat{p}_t} - \frac{\tr{O\rho^t}}{\tr{\rho^t}} \right|
		+ \left| \frac{\tr{O\rho^t}}{\tr{\rho^t}} - \bra{\psi}O\ket{\psi} \right| 
		\stackrel{(b)}{\leq} \eta + \eta = \epsilon, 
	\end{align}
	that is, $\hat{o}_t/\hat{p}_t$ is an $\epsilon$-additive error estimate for the desired property $\bra{\psi}O\ket{\psi}$. 
	Here, $(a)$ uses the triangle inequality $|x+y|\leq |x|+|y|$, and $(b)$ follows from Lemmas~\ref{lem:PrincErrorA} and \ref{lem:PrincErrorB} below. 
	
	Therefore, the CBNE protocol solves the PCE task under the three conditions of \tref{thm:PrinEigenFor}.
\end{proof}

\subsection{Auxiliary lemmas}

\begin{lemma}\label{lem:PrincErrorALem1}
The state $\rho$ in \eref{eq:rhospec1} admits the following decomposition: 
\begin{align}\label{eq:PrincErrorArho}
\rho = \lambda\ketbra{\psi}{\psi} + \sum_{j=1}^m w_j \sigma_j, 
\end{align}
where  
\begin{itemize}
\item[$(i)$] $\{\sigma_j\}_{j=1}^m$ are quantum states on $\caH$ that have mutually orthogonal supports, 
i.e., $\Tr(\sigma_i\sigma_j) = 0\  \forall i \neq j$. 

\item[$(ii)$] The coefficients $\{w_j\}_{j=1}^m$ satisfy 
$w_j \in [\mu/3, \lambda-\mu/3]$ for $j=1,2,\dots,m-1$ and $w_m \in (0, \lambda-\mu/3]$.

\item[$(iii)$] $m$ is an integer satisfying $0\leq m\leq 3/\mu$. 
\end{itemize}
\end{lemma}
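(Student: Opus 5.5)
The plan is a greedy bin-packing of the remaining spectrum into blocks of eigenprojectors. Write $\rho-\lambda\ketbra{\psi}{\psi}=\sum_{i=2}^d\lambda_i\ketbra{\psi_i}{\psi_i}$, where $0\le\lambda_i\le\lambda_2\le\lambda-\mu$ and $\sum_{i\ge2}\lambda_i=1-\lambda$. Discard the zero eigenvalues. If none remain, take $m=0$ and we are done. Otherwise, partition the index set $\{2,\dots,d\}$ (restricted to positive eigenvalues) into consecutive blocks $B_1,\dots,B_m$, and for each block set
\[
w_j:=\sum_{i\in B_j}\lambda_i,\qquad \sigma_j:=w_j^{-1}\sum_{i\in B_j}\lambda_i\ketbra{\psi_i}{\psi_i}.
\]
Each $\sigma_j$ is a density operator, and distinct $\sigma_j$ are supported on disjoint sets of orthonormal eigenvectors. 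Hence $\Tr(\sigma_i\sigma_j)=0$ for $i\neq j$, and \eqref{eq:PrincErrorArho} holds. This gives $(i)$.

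The blocks are built greedily. Start a new block and keep adding the next eigenvalue until the block sum first reaches at least $\mu/3$; then close it and start the next block. The last block may fail to reach $\mu/3$, which is why $w_m$ is only required to be positive. Every closed block has $w_j\ge\mu/3$ by construction. For the upper bound, note that just before the final eigenvalue was added the block sum was below $\mu/3$. So
\[
w_j<\mu/3+\lambda_i\le\mu/3+\lambda-\mu=\lambda-2\mu/3\le\lambda-\mu/3 .
\]
A block consisting of a single eigenvalue satisfies $w_j\le\lambda-\mu$ directly. The last block either reached $\mu/3$, and is bounded by the same argument, or it is smaller than $\mu/3\le\lambda-\mu/3$; the latter inequality uses $\lambda\ge\mu\ge 2\mu/3$, which follows from $\lambda-\lambda_2\ge\mu$. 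This gives $(ii)$.

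For $(iii)$, the first $m-1$ blocks each have weight at least $\mu/3$, and all weights sum to $1-\lambda\le1$. Therefore $(m-1)\mu/3\le 1$, i.e.\ $m\le 3/\mu+1$. The stated bound is $m\le 3/\mu$, so one unit must be saved. I expect this to be the main point needing care. The first fix I would try is merging: if the final block has $w_m<\mu/3$ and $m\ge2$, fold it into block $m-1$. The merged weight is below $\lambda-2\mu/3+\mu/3=\lambda-\mu/3$, so the upper bound survives, and now all blocks except possibly a lone block have weight at least $\mu/3$. The count is then $m\mu/3\le 1-\lambda<1$, giving $m<3/\mu$. If $m=1$, we need $1\le 3/\mu$, which holds since $\mu\le1$. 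The remaining work is bookkeeping of these edge cases.
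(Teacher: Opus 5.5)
Your proof is correct and is essentially the paper's: the same greedy grouping of consecutive eigenvalues into blocks of weight at least $\mu/3$ (Algorithm~\ref{alg:choose_ell}), with the upper bound $\lambda-\mu/3$ coming from $\lambda_i\le\lambda_2\le\lambda-\mu$. Discarding zero eigenvalues first also avoids a degenerate empty last group ($w_m=0$) that the paper's algorithm does not explicitly exclude.

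The merging step for $(iii)$ is unnecessary, however. If you keep the total weight as $1-\lambda$ rather than bounding it by $1$, then $\lambda\ge\mu$ gives $(m-1)\mu/3\le 1-\lambda\le 1-\mu$, i.e.\ $m\le 3/\mu-2$. This is a slightly sharper form of the paper's conclusion $m\le(3-3\lambda+\mu)/\mu\le 3/\mu$.
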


\begin{figure}[b]
\begin{algorithm}[H]
{\small
\hspace{-148pt}\textbf{Input:}  Eigenvalues $1>\lambda> \lambda_2\geq \lambda_3\geq \dots \geq \lambda_d\geq0$,  spectral gap $0<\mu\leq \lambda$, and $\ell_0=1$.    \\

\begin{algorithmic}[1]
\caption{{\small Algorithm for choosing integers $\ell_1, \ell_2,\dots,\ell_{m-1} \qquad $}}
\label{alg:choose_ell}

\For{$j=1,2,3,\dots$,}

\State{$\ell_j \leftarrow \ell_{j-1}+1$} \Comment{Start new group at $\ell_{j-1}+1$}

\While{$\sum_{i=\ell_{j-1}+1}^{\ell_j} \lambda_i <\mu/3$ and $\ell_j<d$}
\State{$\ell_j \leftarrow \ell_j+1$} \Comment{Expand group until weight $\geq \mu/3$}
\EndWhile

\If{$\ell_j=d$} \Comment{All eigenvalues grouped}
\State{$m \leftarrow j$; \textbf{break}} 
\EndIf

\If{$\sum_{i=\ell_{j}+1}^{d} \lambda_i <\mu/3$} \Comment{Remaining weight too small}
\State{$m \leftarrow j+1$; \textbf{break}} \Comment{Assign the rest to the last group}
\EndIf

\EndFor

\end{algorithmic}
\hspace{-403pt} \textbf{Output:} $\ell_1, \ell_2,\dots,\ell_{m-1}$. 
}
\end{algorithm}
\end{figure}

\begin{proof}[Proof of \lref{lem:PrincErrorALem1}] 
	In the trivial case of $\lambda=1$, we have $\rho = \lambda\ketbra{\psi}{\psi}$ and $m=0$. 
	So the three conditions in \lref{lem:PrincErrorALem1} are automatically satisfied. 
	
	In the following, we consider the nontrivial case with $\lambda<1$.  
	We shall construct the decomposition of $\rho$ explicitly.
	Let $m\geq 1$ and $\ell_0, \ell_1, \dots,\ell_m$ be integers such that
	$1=\ell_0< \ell_1 <\dots<\ell_m=d$.
	For $j=1,2,\dots,m$, define
	\begin{align}\label{eq:wjsigmaj}
		w_j =\sum_{i=\ell_{j-1}+1}^{\ell_j} \lambda_i,  
		\qquad 
		\sigma_j =\frac{1}{w_j} \sum_{i=\ell_{j-1}+1}^{\ell_j} \lambda_i \ketbra{\psi_i}{\psi_i}. 
	\end{align}
	Since \(\{\ket{\psi_i}\}_{i=1}^d\) is an orthonormal basis, the states \(\{\sigma_j\}_j\) have mutually orthogonal supports, satisfying condition \((i)\) in \lref{lem:PrincErrorALem1}. 
	Using $\{w_j\}_j$ and $\{\sigma_j\}_j$ in \eref{eq:wjsigmaj}, the state $\rho$ in \eref{eq:rhospec1} can be rewritten as:
	\begin{align}
		\rho=
		\lambda \ketbra{\psi}{\psi} 
		+ \underbrace{\sum_{i=2}^{\ell_1} \lambda_i \ketbra{\psi_i}{\psi_i}}_{w_1 \sigma_1}
		+ \underbrace{\sum_{i=\ell_1+1}^{\ell_2} \lambda_i \ketbra{\psi_i}{\psi_i}}_{w_2 \sigma_2}
		+ \cdots
		+ \underbrace{\sum_{i=\ell_{m-1}+1}^{d} \lambda_i \ketbra{\psi_i}{\psi_i}}_{w_m \sigma_m}
		= \lambda\ketbra{\psi}{\psi} + \sum_{j=1}^m w_j \sigma_j  . 
	\end{align}
	This matches the desired form in \eref{eq:PrincErrorArho}.

    To complete the proof, it remains to show that there exist integers $\ell_1, \ell_2,\dots,\ell_{m-1}$ such that condition \((ii)\) is satisfied. 
    To this end, we give a constructive grouping of the eigenvalues, described in Algorithm~\ref{alg:choose_ell}.  
	This algorithm ensures each \(w_j\) (for \(j < m\)) is at least \(\mu/3\) by iteratively adding eigenvalues until the sum exceeds \(\mu/3\). 
	In addition, since  \(\lambda_i \leq \lambda_2 \leq \lambda - \mu\) for \(i \geq 2\), the total weight of any group \(w_j\) cannot exceed \(\lambda - \mu/3\). 
	Algorithm~\ref{alg:choose_ell} terminates when $\ell_j=d$ or the remaining eigenvalues satisfy
	$\sum_{i=\ell_{j}+1}^{d} \lambda_i <\mu/3$. 
	These remaining eigenvalues are grouped into \(w_m\), which must satisfy \(w_m \in (0, \lambda - \mu/3]\). This ensures that condition \((ii)\) holds.
	
	Finally, from the trace relation:
	\begin{align}
		1= \tr\rho=\lambda+ \sum_{j=1}^m w_j \geq \lambda+ (m-1) \min_{j\in\{1,2,\dots,m-1\}}w_j
		\geq \lambda + (m-1) \cdot \frac{\mu}{3}, 
	\end{align}
	we derive that 
	\begin{align}
		m \leq \frac{1 - \lambda}{\mu/3} + 1 
		=\frac{3 - 3\lambda+\mu}{\mu} 
		\leq \frac{3}{\mu}, 
	\end{align}
	where the last inequality follows because $\mu\leq \lambda<3\lambda$. This proves condition \((iii)\).
	
	By construction, all conditions \((i)\)–\((iii)\) are satisfied, completing the proof of \lref{lem:PrincErrorALem1}.
\end{proof}

\begin{lemma}\label{lem:PrincErrorA}
Suppose $0<\eta<1$, integer $t\geq \lceil \log_{1-\mu/3}\frac{\mu\eta}{6}\rceil$, and $O$ is an observable with $\|O\|\leq 1$. Then 
\begin{align}\label{eq:PrincErrorA}
\left| \frac{\tr{O\rho^t}}{\tr{\rho^t}} - \bra{\psi}O\ket{\psi} \right|\leq \eta. 
\end{align}
\end{lemma}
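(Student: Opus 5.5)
The plan is to bound the deviation by the total weight of the non-principal part of $\rho^t$ relative to $\lambda^t$. That weight is then controlled using the grouping from \lref{lem:PrincErrorALem1}, which is what removes any dependence on $d$.

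First, write $\rho^t=\lambda^t\ketbra{\psi}{\psi}+R$ with $R:=\sum_{i\ge2}\lambda_i^t\ketbra{\psi_i}{\psi_i}\ge0$. Then $\tr{\rho^t}=\lambda^t+\Tr(R)$ and
\begin{align}
\frac{\tr{O\rho^t}}{\tr{\rho^t}}-\bra{\psi}O\ket{\psi}
=\frac{\Tr(OR)-\bra{\psi}O\ket{\psi}\Tr(R)}{\lambda^t+\Tr(R)} .
\end{align}
Since $\|O\|\le1$ and $R\ge0$, we have $|\Tr(OR)|\le\Tr(R)$ and $|\bra{\psi}O\ket{\psi}|\le1$. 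Hence the left-hand side of \eref{eq:PrincErrorA} is at most $2\Tr(R)/\lambda^t$. The trivial case $\lambda=1$ gives $R=0$ and needs no further work. Note also that $\lambda\ge\mu>0$, so the division is legitimate.

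Second, bound $\Tr(R)=\sum_{i\ge2}\lambda_i^t$ using \lref{lem:PrincErrorALem1}. Grouping the indices $i\ge2$ into the blocks defining $w_1,\dots,w_m$ in \eref{eq:wjsigmaj}, I use the elementary inequality $\sum_i a_i^t\le(\sum_i a_i)^t$ for $a_i\ge0$ and $t\ge1$. This gives $\Tr(R)\le\sum_{j=1}^m w_j^t$. By conditions $(ii)$ and $(iii)$, $w_j\le\lambda-\mu/3$ and $m\le3/\mu$, so
\begin{align}
\frac{2\Tr(R)}{\lambda^t}\le\frac{6}{\mu}\left(1-\frac{\mu}{3\lambda}\right)^t\le\frac{6}{\mu}\left(1-\frac{\mu}{3}\right)^t,
\end{align}
where the last step uses $\lambda\le1$.

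Finally, the hypothesis $t\ge\lceil\log_{1-\mu/3}\frac{\mu\eta}{6}\rceil$ gives $(1-\mu/3)^t\le\mu\eta/6$, because $0<1-\mu/3<1$. So the bound is at most $\eta$.

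The main obstacle is the second step. A naive bound of $\Tr(R)$ by $(d-1)\lambda_2^t$ would introduce a factor $d$; the grouping into $O(1/\mu)$ blocks of bounded weight, combined with superadditivity of $x\mapsto x^t$, is exactly what avoids this. The other steps are routine.
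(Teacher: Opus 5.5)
Your proof is correct and follows essentially the same route as the paper. Both use the grouping of \lref{lem:PrincErrorALem1} to bound the non-principal weight of $\rho^t$ by $m(\lambda-\mu/3)^t\le(3/\mu)(\lambda-\mu/3)^t$, divide by $\lambda^t$, and then invoke the choice of $t$. The differences are cosmetic: you compute the deviation exactly with denominator $\lambda^t+\Tr(R)$, where the paper takes the maximum of its two endpoint bounds $(*)$ and $(**)$; and your superadditivity inequality $\sum_i a_i^t\le(\sum_i a_i)^t$ is the paper's fact $\Tr(\sigma_j^t)\le 1$ in another form.
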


\begin{proof}[Proof of \lref{lem:PrincErrorA}]
\lref{lem:PrincErrorALem1} implies that 
\begin{align}\label{eq:PrincErrorArho^t}
\rho^t = \lambda^t \ketbra{\psi}{\psi} + \sum_{j=1}^m w_j^t \sigma_j^t, 
\quad
\Tr(O\rho^t) = \lambda^t \bra{\psi}O\ket{\psi} + \sum_{j=1}^m w_j^t \Tr(O\sigma_j^t),
\quad 
\lambda^t \leq \Tr(\rho^t)\leq \lambda^t + \sum_{j=1}^m w_j^t . 
\end{align}
It follows that 
\begin{align}\label{eq:ErrorAterm0}
\left|\frac{\tr{O\rho^t}}{\tr{\rho^t}}-\bra{\psi}O\ket{\psi}\right|
&\leq 
\max\Bigg\{ \,
\underbrace{\left|\frac{\tr{O\rho^t}}{\lambda^t}-\bra{\psi}O\ket{\psi}\right|}_{(*)}, \;
\underbrace{\left|\frac{\tr{O\rho^t}}{\lambda^t + \sum_{j} w_j^t}-\bra{\psi}O\ket{\psi}\right|}_{(**)}
\, \Bigg\}. 
\end{align}

Next, we bound the terms $(*)$ and $(**)$, respectively. First, by \eref{eq:PrincErrorArho^t} we have
\begin{align}\label{eq:ErrorAterm1}
(*)
&=
\left|\frac{\sum_{j=1}^m w_j^t \,\tr{O \sigma_j^t}}{\lambda^t}\right|
\leq
\sum_{j=1}^m \left( \frac{w_j}{\lambda}\right)^t \left|\tr{O \sigma_j^t}\right|
\leq
\sum_{j=1}^m \left( \frac{w_j}{\lambda}\right)^t \|O\|
\nonumber\\
&\leq
m \left(\frac{\max_j w_j}{\lambda}\right)^t 
\stackrel{(a)}{\leq}  \frac{3}{\mu} \left( \frac{\lambda-\mu/3}{\lambda}\right)^t
\leq  \frac{3}{\mu} \left( 1- \frac{\mu}{3} \right)^t
\stackrel{(b)}{\leq} \frac{3}{\mu} \cdot \frac{\mu\eta}{6}
< \eta,  
\end{align}
where $(a)$ follows from \lref{lem:PrincErrorALem1} and $(b)$ follows from the assumption $t\geq \lceil \log_{1-\mu/3}\frac{\mu\eta}{6}\rceil$. 
Second, by \eref{eq:PrincErrorArho^t} we have
\begin{align}\label{eq:ErrorAterm2}
(**)
&=
\left| \frac{\lambda^t \bra{\psi}O\ket{\psi} + \sum_{j=1}^m w_j^t \, \tr{O \sigma_j^t}}{\lambda^t + \sum_{j=1}^m w_j^t} - \bra{\psi}O\ket{\psi} \right|
=
\left| \frac{\sum_{j=1}^m w_j^t \, \tr{O \sigma_j^t}
-\sum_{j=1}^m w_j^t \,\bra{\psi}O\ket{\psi}}{\lambda^t + \sum_{j=1}^m w_j^t} 
\right|
\nonumber\\
&\leq
\frac{\sum_{j=1}^m w_j^t \,\left| \tr{O \sigma_j^t}-\bra{\psi}O\ket{\psi}\right| }{\lambda^t + \sum_{j=1}^m w_j^t}
\leq
\frac{\sum_{j=1}^m w_j^t \,\left| \tr{O \sigma_j^t}-\bra{\psi}O\ket{\psi}\right| }{\lambda^t}
\nonumber\\
&\stackrel{(a)}{\leq}
2\sum_{j=1}^m \left( \frac{w_j}{\lambda}\right) ^t 
\leq 2m \left(\frac{\max_j w_j}{\lambda}\right) ^t 
\stackrel{(b)}{\leq} \frac{6}{\mu} \left( \frac{\lambda-\mu/3}{\lambda}\right)^t
\leq \frac{6}{\mu} \left( 1- \frac{\mu}{3} \right)^t
\stackrel{(c)}{\leq} \frac{6}{\mu} \cdot \frac{\mu\eta}{6}
= \eta, 
\end{align}
where $(a)$ holds because $\left|\Tr(O \sigma_j^t)-\bra{\psi}O\ket{\psi}\right|
\leq|\!\Tr(O \sigma_j^t)|+ |\bra{\psi}O\ket{\psi}|\leq2$, 
$(b)$ follows from \lref{lem:PrincErrorALem1},
and $(c)$ follows from the assumption $t\geq \lceil \log_{1-\mu/3}\frac{\mu\eta}{6}\rceil$. 
Equations~\eqref{eq:ErrorAterm0}, \eqref{eq:ErrorAterm1}, and \eqref{eq:ErrorAterm2} together confirm \lref{lem:PrincErrorA}. 
\end{proof}

\begin{lemma}\label{lem:PrincErrorB}
Suppose $0<\eta<1$, integer $t\geq 1$, $O$ is an observable with $\|O\|\leq 1$, $\hat{o}_t,\hat{p}_t\in\R$, 
$|\hat{o}_t-\tr{O\rho^t}|\leq \mu^t\eta/3$, and $|\hat{p}_t-\tr{\rho^t}|\leq \mu^t\eta/3$. Then 
\begin{align}
\left| \frac{\hat{o}_t}{\hat{p}_t} - \frac{\tr{O\rho^t}}{\tr{\rho^t}} \right| \leq \eta. 
\end{align}
\end{lemma}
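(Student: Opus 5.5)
The approach is a direct perturbation bound for a ratio. Write $o:=\Tr(O\rho^t)$ and $p:=\Tr(\rho^t)$, and set $\delta_o:=\hat{o}_t-o$ and $\delta_p:=\hat{p}_t-p$, so that $|\delta_o|,|\delta_p|\le \mu^t\eta/3$. The first step is the algebraic identity
\begin{align}
\frac{\hat{o}_t}{\hat{p}_t}-\frac{o}{p}
=\frac{p\,\delta_o - o\,\delta_p}{p\,\hat{p}_t}
=\frac{\delta_o}{\hat{p}_t}-\frac{o}{p}\cdot\frac{\delta_p}{\hat{p}_t}.
\end{align}
This reduces the claim to two facts: a lower bound on $p$ (hence on $\hat p_t$), and an upper bound on $|o/p|$.

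\textbf{Bounding $|o/p|$.} Since $\|O\|\le1$ and $\rho^t\ge0$, we have $|o|\le \|O\|\,\Tr(\rho^t)\le p$, so $|o/p|\le 1$.

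\textbf{Lower-bounding $p$.} By \eref{eq:PrincErrorArho^t} (or directly), $p\ge\lambda^t\ge\mu^t$, because $\lambda\ge\lambda-\lambda_2\ge\mu$. Hence $\hat{p}_t\ge p-\mu^t\eta/3\ge \mu^t(1-\eta/3)\ge \tfrac{2}{3}\mu^t>0$, using $\eta<1$. In particular $\hat p_t\neq0$, so the estimator is well defined.

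\textbf{Combining.} Substituting these bounds into the identity gives
\begin{align}
\left|\frac{\hat{o}_t}{\hat{p}_t}-\frac{o}{p}\right|
\le \frac{|\delta_o|+|\delta_p|}{\hat p_t}
\le \frac{2\mu^t\eta/3}{2\mu^t/3}=\eta .
\end{align}

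The only delicate point is the lower bound on $\hat{p}_t$. Everything hinges on the error scale $\mu^t\eta/3$ being small relative to $\lambda^t$, which is guaranteed by $\lambda\ge\mu$ (the constant gap together with $\lambda_2\ge0$) and by the assumption $\eta<1$. The constant $1/3$ is exactly what makes the denominator at least $2\mu^t/3$ and the final ratio come out to $\eta$. No further obstacle is expected.
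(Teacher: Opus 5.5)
Your proof is correct and rests on the same two facts as the paper's: $|\Tr(O\rho^t)|\le\Tr(\rho^t)$ and $\Tr(\rho^t)\ge\lambda^t\ge\mu^t$, which together give a bound of $2\mu^t\eta/3$ divided by a denominator of order $\mu^t$. The difference is only presentational. The paper evaluates the deviation at the extreme points $\hat p_t=\Tr(\rho^t)\pm\mu^t\eta/3$, $\hat o_t=\Tr(O\rho^t)\pm\mu^t\eta/3$ and takes the maximum, implicitly assuming the worst case occurs at a corner. Your exact identity covers every admissible $(\hat o_t,\hat p_t)$ at once and also makes explicit that $\hat p_t>0$.
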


\begin{proof}[Proof of \lref{lem:PrincErrorB}]
Let $\tilde\eta:=\mu^t\eta/3$. Then we have 
\begin{align}
\left| \frac{\tr{O\rho^t}\pm \tilde\eta }{\tr{\rho^t}-\tilde\eta} - \frac{\tr{O\rho^t}}{\tr{\rho^t}} \right|
&
\stackrel{(a)}{\leq} 
\left| \frac{\tr{O\rho^t}\pm\tilde\eta }{\tr{\rho^t}-\tilde\eta}-\frac{\tr{O\rho^t} }{\tr{\rho^t}-\tilde\eta} \right|
+ \left|\frac{\tr{O\rho^t} }{\tr{\rho^t}-\tilde\eta}-\frac{\tr{O\rho^t}}{\tr{\rho^t}} \right|
\nonumber\\
&=
\left| \frac{\tilde\eta }{\tr{\rho^t}-\tilde\eta}\right|
+ \left|\frac{ \tr{O\rho^t}}{\tr{\rho^t}} \cdot \frac{\tilde\eta }{\tr{\rho^t}-\tilde\eta}\right|
\nonumber\\
&= \left( \left|\frac{ \tr{O\rho^t}}{\tr{\rho^t}}\right|+1 \right)  \cdot 
\left| \frac{\tilde\eta }{\tr{\rho^t}-\tilde\eta}\right| 
\nonumber\\
&\stackrel{(b)}{\leq}  \frac{2\tilde\eta }{\left|\tr{\rho^t}-\tilde\eta\right|}
\stackrel{(c)}{\leq}  \frac{2\mu^t\eta/3 }{\mu^t-\mu^t\eta/3}
=    \frac{2\eta}{3-\eta}
\leq \eta,  
\end{align}
where $(a)$ follows from the triangle inequality $|x+y|\leq |x|+|y|$;
$(b)$ holds because $\left|\tr{O\rho^t}/\tr{\rho^t}\right|\leq 1$; 
and $(c)$ holds because $\tr{\rho^t}\geq \lambda^t\geq \mu^t$. 
Using a similar argument, we can derive that
\begin{align}
\left| \frac{\tr{O\rho^t}\pm \tilde\eta }{\tr{\rho^t}+\tilde\eta} - \frac{\tr{O\rho^t}}{\tr{\rho^t}} \right|
&
\leq \left| \frac{\tr{O\rho^t}\pm\tilde\eta }{\tr{\rho^t}+\tilde\eta}-\frac{\tr{O\rho^t} }{\tr{\rho^t}+\tilde\eta} \right|
+ \left|\frac{\tr{O\rho^t} }{\tr{\rho^t}+\tilde\eta}-\frac{\tr{O\rho^t}}{\tr{\rho^t}} \right|
\nonumber\\
&=
\left| \frac{\tilde\eta }{\tr{\rho^t}+\tilde\eta}\right|
+ \left|\frac{ \tr{O\rho^t}}{\tr{\rho^t}}\cdot \frac{\tilde\eta }{\tr{\rho^t}+\tilde\eta}\right|
\nonumber\\
&= \left( \left|\frac{ \tr{O\rho^t}}{\tr{\rho^t}}\right|+1 \right)  \cdot 
\frac{\tilde\eta }{\tr{\rho^t}+\tilde\eta}
\leq \frac{2\tilde\eta }{\tr{\rho^t}} 
\leq \frac{2\mu^t\eta/3 }{\mu^t}
\leq \eta. 
\end{align}
Therefore, 
\begin{align}
\left| \frac{\hat{o}_t}{\hat{p}_t} - \frac{\tr{O\rho^t}}{\tr{\rho^t}} \right|
\leq \max\left\lbrace 
\left| \frac{\tr{O\rho^t}\pm \tilde\eta }{\tr{\rho^t}-\tilde\eta} - \frac{\tr{O\rho^t}}{\tr{\rho^t}} \right|,
\left| \frac{\tr{O\rho^t}\pm \tilde\eta }{\tr{\rho^t}+\tilde\eta} - \frac{\tr{O\rho^t}}{\tr{\rho^t}} \right|  \right\rbrace 
\leq \eta, 
\end{align}
which completes the proof. 
\end{proof}

\section{Estimation of PT moments with  the PTME protocol}\label{sec:PTME}

The goal of this section is to prove \tref{thm:PTResult} in the main text, which characterizes the performance of our PTME protocol. 
In Secs.~\ref{sec:ExpPTmoment} and~\ref{sec:VarPTmoment}, we 
derive the expectation value and variance of our estimator 
\begin{align}\label{eq:hatLambdaPTSM}
\hat{\Lambda}_k^{U}
	:= \binom{N_M}{k}^{-1} \frac{d_{\!A}^{\,k}}{k!\, d_{\!A_1} }
	\sum_{1\leq i_1 <\dots < i_k \leq N_M} \left( \hat{r}_{i_1}\cdots \hat{r}_{i_k} \right)  \textbf{1}\big\{\hat{b}_{i_1}=\dots=\hat{b}_{i_k}\big\}, 
\end{align}
respectively, where
$\mathbf{b}_{U} = \big\{\hat{b}_1, \dots, \hat{b}_{N_M}\big\}$ and $\mathbf{r}_{U} = \{\hat{r}_1, \dots, \hat{r}_{N_M}\}$ denote the measurement outcomes of the PTME protocol corresponding to the random unitary $U$ (see Algorithm~\ref{alg:PTME} in the End Matter). 
Combining these results, we then present the proof of \tref{thm:PTResult} in Sec.~\ref{sec:ProofthmPTResult}. 
Throughout this section we use the relations $d=d_A d_B$, $d_A=d_{A_1} d_{A_2}$, and $d_B=d_{A_2}$.

\subsection{Expectation of the estimator \texorpdfstring{$\hat{\Lambda}_k^{U}$}{}}\label{sec:ExpPTmoment}

The statistical fluctuation of our estimator $\hat{\Lambda}_k^{U}$ comes from: 
(i) the randomness in the selection of the random unitary $U$ on $\caH_A$, and 
(ii) the randomness in measurement outcomes $\mathbf{b}_{U} = \big\{\hat{b}_1, \dots, \hat{b}_{N_M}\big\}$ 
and $\mathbf{r}_{U} = \{\hat{r}_1, \dots, \hat{r}_{N_M}\}$. 
Taking both sources of randomness into account, the expectation value of \(\hat{\Lambda}_k^U\) is characterized by the following lemma:

\begin{lemma}\label{lam:biasLambdaU}
Suppose $k>1$ is a constant integer independent of the system dimension, and $U$ is a random unitary  sampled from a $k$-design ensemble on $\caH_A$. Then 
$\E \left[ \hat{\Lambda}_k^{U} \right]= \zeta_k^{\PT} + \bigo{d_{\!A}^{\,-1}}$, where $\zeta_k^{\PT}$ is defined in \eref{eq:zetakPT} of the End Matter. 
\end{lemma}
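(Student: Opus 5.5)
The plan is to first compute the conditional expectation $\E[\hat{\Lambda}_k^U\,|\,U]$ exactly, and then average over $U$ with the Weingarten expansion \eqref{eq:tirling} applied on $\caH_A$ only.

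\emph{Step 1 (fixed $U$).} Conditional on $U$, the pairs $(\hat b_i,\hat r_i)$ are i.i.d. For a single shot,
\begin{align}
\E\big[\hat r\,\textbf{1}\{\hat b=b\}\,\big|\,U\big]=\Tr\big[\rho_U(\ketbra{b}{b}_{A_1}\otimes(G_+-G_-))\big]=\Tr\big[\rho_U(\ketbra{b}{b}_{A_1}\otimes\mathbb{S}_{A_2B})\big].
\end{align}
Hence, for distinct indices $i_1<\dots<i_k$, independence gives
\begin{align}
\E\big[\hat r_{i_1}\cdots\hat r_{i_k}\textbf{1}\{\hat b_{i_1}=\dots=\hat b_{i_k}\}\,\big|\,U\big]=\sum_{b}\Tr\big[\rho_U^{\otimes k}Z_b\big],\qquad Z_b:=(\ketbra{b}{b}_{A_1}\otimes\mathbb{S}_{A_2B})^{\otimes k}.
\end{align}
Every $k$-subset contributes the same value, so the $\binom{N_M}{k}$ normalization cancels and
\begin{align}
\E[\hat\Lambda_k^U\,|\,U]=\frac{d_A^k}{k!\,d_{A_1}}\sum_b\Tr\big[\rho^{\otimes k}\,(U^\dagger\otimes I_B)^{\otimes k}Z_b(U\otimes I_B)^{\otimes k}\big].
\end{align}

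\emph{Step 2 (twirl on $A$).} Because the ensemble is a $k$-design on $\caH_A$, the average over $U$ equals the Haar twirl on $A^{\otimes k}$, with $B^{\otimes k}$ acting as a spectator. Applying \eqref{eq:tirling} with operator-valued coefficients on $B$, the twirled operator becomes
\begin{align}
\sum_{\pi,\tau\in\mathrm{S}_k}\Wg^{(k,d_A)}_{\pi,\tau}\,(R_\tau)_A\otimes \Tr_A\big[Z_b\,((R_\pi^{\top})_A\otimes I_B)\big].
\end{align}
To evaluate the partial trace, split $A=A_1A_2$. The $A_1$ factor gives $\Tr[\ketbra{b}{b}^{\otimes k}R_\pi^\top]=1$, independent of $b$, so $\sum_b$ produces a factor $d_{A_1}$. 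The $A_2$ factor requires the identity
\begin{align}
\Tr_{A_2}\big[\mathbb{S}_{A_2B}^{\otimes k}(R_\pi^\top)_{A_2}\big]=(R_\pi^\top)_B,
\end{align}
which holds because a swap transfers the permutation from $A_2$ to $B$. Combining,
\begin{align}
\E[\hat\Lambda_k^U]=\frac{d_A^k}{k!}\sum_{\pi,\tau}\Wg^{(k,d_A)}_{\pi,\tau}\,\Tr\big\{\rho_{AB}^{\otimes k}\big[(R_\tau)_A\otimes(R_\pi^\top)_B\big]\big\}.
\end{align}

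\emph{Step 3 (asymptotics).} Insert \eref{eq:Weingarten}, which gives $\Wg_{\pi,\tau}^{(k,d_A)}=d_A^{-k}(\delta_{\pi\tau}+\bigo{d_A^{-1}})$. The diagonal terms reproduce $\frac{1}{k!}\sum_\pi\Tr\{\rho^{\otimes k}[(R_\pi)_A\otimes(R_\pi^\top)_B]\}=\zeta_k^{\PT}$ exactly. Every remaining term is bounded by $\bigo{d_A^{-1}}$ times $|\Tr[\rho^{\otimes k}W]|\le\|W\|=1$, where $W$ is a product of unitary permutation operators. There are at most $(k!)^2=\bigo{1}$ such terms, so the total correction is $\bigo{d_A^{-1}}$, as claimed.

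The main obstacle is Step 2's partial-trace identity with the correct transpose/inverse convention. I would verify it on product vectors: $\mathbb{S}_{A_2B}$ maps $\ket{x}_{A_2}\ket{y}_B\mapsto\ket{y}\ket{x}$. Applying this copy-wise and tracing $A_2$ against $R_\pi^\top=R_{\pi^{-1}}$ should yield $R_{\pi^{-1}}$ on $B$, which equals $R_\pi^\top$ because the permutation matrices are real. If the orientation came out as $R_\pi$ instead, I would reconcile it by relabeling $\pi\to\pi^{-1}$ in the Weingarten sum. This relabeling is harmless because $\Wg_{\pi,\tau}$ depends only on $\pi\tau^{-1}$ up to conjugacy, and $\zeta_k^{\PT}$ is invariant under $\pi\mapsto\pi^{-1}$.
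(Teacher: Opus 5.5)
Your argument is correct and follows essentially the same route as the paper's proof. You reduce $\E[\hat\Lambda_k^U\,|\,U]$ to $\sum_b \Tr[\rho_U(\ketbra{b}{b}_{A_1}\otimes\mathbb{S}_{A_2,B})]^k$: you do this via independence of the shots, the paper via a binomial expansion in the number of $-1$ outcomes. You then twirl on $A$ with the Weingarten formula, use the identity $\Tr_{A_2}[\mathbb{S}_{A_2,B}^{\otimes k}(R_\pi^\top)_{A_2}]=(R_\pi^\top)_B$, and finish with $\Wg_{\pi,\tau}^{(k,d_A)}=d_A^{-k}(\delta_{\pi,\tau}+\bigo{d_A^{-1}})$.

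One caveat concerns your fallback. Relabeling $\pi\to\pi^{-1}$ would not rescue a wrong orientation. A $B$-factor of $(R_\pi)_B$ would make the leading term $\sum_\pi (R_\pi)_A\otimes(R_\pi)_B$, which gives the ordinary $\zeta_k$ rather than $\zeta_k^{\PT}$; the two already differ at $k=3$. This is moot, however, because the identity you verify does produce $(R_\pi^\top)_B$.
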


\begin{proof}[Proof of \lref{lam:biasLambdaU}]
Let $\rho_{U}=(U\otimes I_B)\rho(U^{\dag}\otimes I_B)$. By definition of \(\hat{\Lambda}_k^U\), we have
\begin{align}\label{eq:ELambdaU1}
\E \left[ \hat{\Lambda}_k^{U} \right]
&=\underset{U }{\E} \, \underset{\,\mathbf{b}_{U},\mathbf{r}_{U}}{\E} 
\left( \hat{\Lambda}_k^{U} \Big| U \right)
= \frac{d_{\!A}^{\,k}}{k! \,d_{\!A_1}} \, \underset{U }{\E} \, \underset{\,\mathbf{b}_{U},\mathbf{r}_{U}}{\E} 
\left\{ \left. \left( \hat{r}_{i_1}\cdots \hat{r}_{i_k} \right)  \textbf{1}\big\{\hat{b}_{i_1}=\dots=\hat{b}_{i_k}\big\} \right| U \right\} 
\nonumber \\
&=  \frac{d_{\!A}^{\,k}}{k! \,d_{\!A_1}} \, \E_{U} \left\lbrace \, \sum_{\ell=0}^k \sum_{b=0}^{d_{A_1}-1} (-1)^{\ell} 
\Pr\left(\#\left[ \text{$-1$ among $\left\lbrace \hat{r}_{i_1},\dots, \hat{r}_{i_k}\right\rbrace $}\right] =\ell,\, \text{and } \hat{b}_{i_1}=\dots=\hat{b}_{i_k}=b \,\Big| U \right)  \right\rbrace 
\nonumber \\
&=  \frac{d_{\!A}^{\,k}}{k! \,d_{\!A_1}} \, \E_{U} \left\lbrace \,\sum_{\ell=0}^k \sum_{b=0}^{d_{A_1}-1} \binom{k}{\ell} (-1)^{\ell} 
\Pr\left(\hat{r}=-1, \hat{b}=b \,\Big| U\right)^\ell \Pr\left(\hat{r}=+1, \hat{b}=b\,\Big| U\right)^{k-\ell} \right\rbrace 
\nonumber \\
&=  \frac{d_{\!A}^{\,k}}{k! \,d_{\!A_1}} \, \E_{U} \left\lbrace \,\sum_{b=0}^{d_{A_1}-1} 
\left[ \Pr\left(\hat{r}=+1, \hat{b}=b\,\Big| U\right)-\Pr\left(\hat{r}=-1, \hat{b}=b \,\Big| U\right) \right]^k \right\rbrace 
\nonumber \\
&=  \frac{d_{\!A}^{\,k}}{k! \,d_{\!A_1}} \, \sum_{b=0}^{d_{A_1}-1} 
\E_{U} \left\lbrace  \Tr\left[\rho_{U} \left( \ketbra{b}{b}_{A_1} \otimes \mathbb{S}_{A_2,B} \right)  \right]^k \right\rbrace ,
\end{align}
where in the last equality we use 
\begin{align}
\Pr\left(\hat{r}=\pm 1, \hat{b}=b\,\Big| U\right) = 
\Tr\left[\rho_{U} \left( \ketbra{b}{b}_{A_1} \!\otimes \frac{I_{A_2,B}\pm\mathbb{S}_{A_2,B}}{2} \right)  \right].  
\end{align}

Note that 
\begin{align}\label{eq:ELambdaU2}
\E_{U} \left\lbrace  \Tr\left[\rho_{U} \left( \ketbra{b}{b}_{A_1} \!\otimes \mathbb{S}_{A_2,B} \right)  \right]^k \right\rbrace
&= 
\E_{U} \left\lbrace \Tr\left[\rho(U^{\dag}\otimes I_B) \left(\ketbra{b}{b}_{A_1} \!\otimes \mathbb{S}_{A_2,B} \right) (U\otimes I_B) \right]^k \right\rbrace
\nonumber\\
&= 
\Tr\left\{\rho^{\otimes k} \, \E_{U} \left[ (U^{\dag}\otimes I_B)^{\otimes k} \left(\ketbra{b}{b}_{A_1} \!\otimes \mathbb{S}_{A_2,B} \right)^{\otimes k} (U\otimes I_B)^{\otimes k} \right]\right\}
\nonumber\\
&= 
\Tr\left\{\rho^{\otimes k} \, \left(\Phi_{A}\otimes \mathcal{I}_{B} \right) 
\left[ \left(\ketbra{b}{b}_{A_1} \!\otimes \mathbb{S}_{A_2,B} \right)^{\otimes k}\right]  \right\}
\nonumber\\
&= \Tr \left( \rho^{\otimes k}\tilde H_{b,k} \right) , 
\end{align}
where $\Phi_{A}(\cdot)$ is the $k$-ford twirling channel (see \eref{eq:designDef}) acting on subsystem $A$, and $\tilde H_{b,k}$ in the last line is defined as 
\begin{align}
\tilde H_{b,k}
&:=\left(\Phi_{A}\otimes \mathcal{I}_{B} \right) \left[ \left(\ketbra{b}{b}_{A_1} \!\otimes \mathbb{S}_{A_2,B} \right)^{\otimes k}\right], 
\end{align}
which satisfies
\begin{align}\label{eq:ELambdaU3}
\tilde H_{b,k}
&\,\stackrel{(a)}{=} \sum_{\tau,\pi\in \mathrm{S}_k} \Wg_{\pi,\tau}^{(k,d_A)} \, 
(R_{\tau})_A \otimes 
\Tr_A \left\lbrace \left(\ketbra{b}{b}_{A_1} \!\otimes \mathbb{S}_{A_2,B} \right)^{\otimes k} \left[  (R_{\pi}^{\top})_A \otimes I_B^{\otimes k} \right] \right\rbrace 
\nonumber \\
&\,\stackrel{(b)}{=} \sum_{\pi, \tau \in \mathrm{S}_k} \Wg_{\pi,\tau}^{(k,d_A)} \, 
(R_\tau)_{A} \otimes (R_\pi^\top)_B
\nonumber \\
&\,\stackrel{(c)}{=} d_{\!A}^{\,-k} \left[\, \sum_{\pi\in \mathrm{S}_k} (R_\pi)_A \otimes (R_\pi^\top)_B 
+ \bigo{d_{\!A}^{\,-1}} \sum_{\pi, \tau \in \mathrm{S}_k} 
(R_\tau)_{A} \otimes (R_\pi^\top)_B \right] .
\end{align}
Here, $(a)$ follows from \eref{eq:tirling}, $(c)$ follows from \eref{eq:Weingarten}, and $(b)$ holds because 
\begin{align}
\Tr_A \left\lbrace \left(\ketbra{b}{b}_{A_1} \!\otimes \mathbb{S}_{A_2,B} \right)^{\otimes k} \left[  (R_{\pi}^{\top})_A \otimes I_B^{\otimes k} \right] \right\rbrace 
&= \Tr \left[\left( \ketbra{b}{b}_{A_1}\right) ^{\otimes k}\left( R_{\pi}^{\top}\right)_{\!A_1}\right] 
\Tr_{A_2} \left\lbrace \left( \mathbb{S}_{A_2,B}\right) ^{\otimes k} \left[(R_{\pi}^{\top})_{A_2} \otimes I_B^{\otimes k} \right] \right\rbrace 
= (R_\pi^\top)_B .   
\end{align}

By combining Eqs.~\eqref{eq:ELambdaU1}, \eqref{eq:ELambdaU2}, and \eqref{eq:ELambdaU3}, 
the expectation of $\hat{\Lambda}_k^{U}$ can be expressed as 
\begin{align}
\E \left[ \hat{\Lambda}_k^{U} \right] 
&=  \frac{1}{k!} \sum_{\pi\in \mathrm{S}_k} \Tr \left[ \rho^{\otimes k} (R_\pi)_A \otimes (R_\pi^\top)_B \right] 
+ \bigo{d_{\!A}^{-1}} \sum_{\pi, \tau \in \mathrm{S}_k} \Tr \left[ \rho^{\otimes k} 
(R_\tau)_{A} \otimes (R_\pi^\top)_B \right] 
\nonumber \\
&= \frac{1}{k!} \sum_{\pi\in \mathrm{S}_k} \Tr \left[ \left( \rho^{\top_{\!B}}\right)^{\otimes k} (R_\pi)_A \otimes (R_\pi)_B \right] 
+ \bigo{d_{\!A}^{\,-1}}
= \zeta_k^{\PT} + \bigo{d_{\!A}^{\,-1}} .
\end{align}
This completes the proof of \lref{lam:biasLambdaU}. 
\end{proof}

\subsection{Variance of the estimator \texorpdfstring{$\hat{\Lambda}_k^{U}$}{}}\label{sec:VarPTmoment}

\begin{lemma}\label{lem:PTVariance}
Suppose $k>1$ is a constant integer independent of the system dimension, and $U$ is a random unitary  sampled from a $2k$-design ensemble on $\caH_A$. Then 
\begin{align}
\Var\left[\hat{\Lambda}_k^{U} \right] 
= \bigo{\max\left\lbrace \frac{1}{d_{A_1}} ,\frac{d^{k}}{d_{A_1} N_{\!M}^{k}}\right\rbrace  } ,
\end{align}
where $d=d_Ad_B$. 
\end{lemma}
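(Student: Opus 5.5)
We follow the template of the proof of \lref{lem:MomVariance}. Write $\E[(\hat{\Lambda}_k^{U})^2]$ as a double sum over $k$-tuples $(i_1<\dots<i_k)$ and $(j_1<\dots<j_k)$ and group the terms by the overlap ${\rm Co}=k-l$, $l=0,\dots,k$. Define the signed operator $X_b:=\ketbra{b}{b}_{A_1}\otimes\mathbb{S}_{A_2,B}$. As in \eref{eq:ELambdaU1}, independence of the runs given $U$ turns each term into a product of expectations. An index lying in both tuples contributes $\hat r^2=1$, so it carries the unsigned weight $\ketbra{b}{b}_{A_1}\otimes I_{A_2B}$. An index lying in only one tuple carries $X_b$.

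\textbf{Overlap case ($l<k$).} The conditional expectation equals
\begin{align*}
\sum_b \Tr\!\big[\rho^{\otimes(k+l)}(\Phi_A\otimes\mathcal{I}_B)\big((\ketbra{b}{b}_{A_1}\otimes I)^{\otimes(k-l)}\otimes X_b^{\otimes 2l}\big)\big].
\end{align*}
We expand the twirl with \eref{eq:tirling} on $A$, with $k+l\le 2k$ copies. The partial traces over $A$ produce $d_{A_2}$-dependent factors on the $k-l$ identity slots. For example, $\Tr_{A_2}$ of an identity slot gives $d_{A_2}$ when $\pi$ fixes that slot. The $X_b$ slots produce transposed permutations on $B$. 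Bounding $|\Tr[\rho^{\otimes}(R_\tau)_A\otimes(\cdot)_B]|\le \|(\cdot)_B\|$ and using $\Wg=\bigo{d_A^{-(k+l)}}$, each $b$-term is $\bigo{d_{A_2}^{k-l}d_A^{-(k+l)}}$. Summing over $b$ multiplies this by $d_{A_1}$. Multiplying by the prefactor $(d_A^k/(k!d_{A_1}))^2$ and the combinatorial weight $\binom{N_M}{k}^{-2}\binom{N_M}{k+l}\sim N_M^{l-k}$, we get
\begin{align*}
\frac{d_A^{k-l}d_{A_2}^{k-l}}{d_{A_1}N_M^{k-l}}=\frac{d^{k-l}}{d_{A_1}N_M^{k-l}},
\end{align*}
using $d_{A_2}=d_B$. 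The largest of these terms is $d^k/(d_{A_1}N_M^k)$ when $N_M\lesssim d$, and otherwise they are dominated by the $1/d_{A_1}$ term. This explains the shape of the target bound.

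\textbf{Disjoint case ($l=k$).} Here we get $\E_U\sum_{b,b'}\Tr[\rho^{\otimes 2k}(\Phi_A\otimes\mathcal{I}_B)(X_b^{\otimes k}\otimes X_{b'}^{\otimes k})]$. The diagonal part $b=b'$ is handled as above and gives $\bigo{1/d_{A_1}}$ after normalization. For $b\neq b'$ we mimic steps (a)--(c) of \lref{lem:MomVariance}. The $A_1$ trace forces $\pi=(\pi_1,\pi_2)\in\mathrm{S}_k\times\mathrm{S}_k$ up to the $A_2$ contributions. The leading Weingarten terms $\Wg_{(\tau_1,\tau_2),(\tau_1,\tau_2)}\approx d_A^{-2k}$ reproduce $[\,k!\,\zeta_k^{\PT}]^2 d_A^{-2k}$ times $d_{A_1}(d_{A_1}-1)$. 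After normalization this cancels $(\E\hat\Lambda_k^U)^2=(\zeta_k^{\PT})^2+\bigo{d_A^{-1}}$ up to $\bigo{d_{A_1}^{-1}}$. The off-diagonal Weingarten terms are $\bigo{d_A^{-2k-1}}$ and are suppressed by the $d_{A_1}^2$ prefactor.

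\textbf{Main obstacle.} The hard step is this last cancellation. Mismatched $A_2$-permutation traces can contribute powers of $d_{A_2}$ that compensate the $d_A^{-1}$ Weingarten suppression, so we must check that every non-leading term is at most $\bigo{d_{A_1}^{-1}}$ relative to the leading one. The factor $d_{A_1}(d_{A_1}-1)/d_{A_1}^2=1-1/d_{A_1}$ also leaves a $1/d_{A_1}$ remainder, consistent with the claim. We would control all of these terms by an operator-norm bound
\begin{align*}
\big\|\Tr_A[X_b^{\otimes 2k}(R_\pi^{\top})_A]\big\|\le d_{A_2}^{c(\pi)},
\end{align*}
where $c(\pi)$ is a suitable cycle count, and then compare $c(\pi)$ with the Weingarten decay order $|\pi^{-1}\tau|$.
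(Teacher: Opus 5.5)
Your proposal is correct and follows essentially the same route as the paper. You group the double sum by overlap, twirl on $A$ via the Weingarten expansion, and obtain $\bigo{d_{A_1}d_B^{\,k-l}d_A^{-(k+l)}}$ for overlapping tuples and $d_{A_1}^2d_A^{-2k}(k!\,\zeta_k^{\PT})^2+\bigo{d_{A_1}d_A^{-2k}}$ for disjoint ones; you then need $\binom{N_M}{k}^{-2}\binom{N_M}{2k}\binom{2k}{k}\le 1$ and the bias bound of \lref{lam:biasLambdaU} to finish.

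The ``main obstacle'' you flag does not arise. When every slot carries $\mathbb{S}_{A_2,B}$, one has exactly $\Tr_{A_2}\{\mathbb{S}_{A_2,B}^{\otimes 2k}[(R_\pi^{\top})_{A_2}\otimes I_B^{\otimes 2k}]\}=(R_\pi^{\top})_B$, as you already used in the overlap case. Hence $c(\pi)=0$, the $A_1$ trace forces $\pi=(\pi_1,\pi_2)$ exactly, and each non-leading Weingarten term is simply $\bigo{d_A^{-2k-1}}$, i.e.\ $\bigo{d_A^{-1}}$ after normalization.
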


\begin{proof}[Proof of \lref{lem:PTVariance}]
The variance of $\hat{\Lambda}_k^{U}$ reads
\begin{align}\label{eq:PTVariance1} 
\Var\left[ \hat{\Lambda}_k^{U} \right] 
&= \E\left[ \left( \hat{\Lambda}_k^{U} \right) ^2 \right] - \E\left[ \hat{\Lambda}_k^{U}  \right]^2 
= \underset{U }{\E} \, \underset{\,\mathbf{b}_{U},\mathbf{r}_{U}}{\E}  \left[ \left( \hat{\Lambda}_k^{U}  \right) ^2 \Big| U \right]
- \E\left[ \hat{\Lambda}_k^{U}  \right]^2 
\nonumber\\
&= \binom{N_M}{k}^{\!-2} \!\left( \frac{d_{\!A}^{\,k}}{k!\, d_{\!A_1} } \right)^{\!2}\!\!
\sum_{\substack{1\leq i_1<\dots<i_k\leq N_M \\ 1\leq j_1<\dots<j_k\leq N_M}} \,
f_2\left( \mathbf{b}_{U},\mathbf{r}_{U},\left\lbrace i_1,\dots,i_k,j_1,\dots,j_k\right\rbrace \right)
- \,\E\left[ \hat{\Lambda}_k^{U}  \right]^2,
\end{align}
where the function $f_2$ is defined as 
\begin{align}\label{eq:definef2}
f_2\left( \mathbf{b}_{U},\mathbf{r}_{U},\left\lbrace i_1,\dots,i_k,j_1,\dots,j_k\right\rbrace \right)
:=
\underset{U }{\E} \, \underset{\,\mathbf{b}_{U},\mathbf{r}_{U}}{\E}  
\Big[
\left( \hat{r}_{i_1}\cdots \hat{r}_{i_k} \hat{r}_{j_1}\cdots \hat{r}_{j_k} \right) \mathbf{1}\big\{\hat{b}_{i_1}=\dots=\hat{b}_{i_k}\big\}
\mathbf{1}\big\{\hat{b}_{j_1}=\dots=\hat{b}_{j_k}\big\} \Big| U \Big]. 
\end{align}

Let ${\rm Co}[(i_1,\dots,i_k);(j_1,\dots,j_k)]$ denote the number of common elements shared by $(i_1,\dots,i_k)$ and $(j_1,\dots,j_k)$. Then \eref{eq:PTVariance1}  implies that 
\begin{align}\label{eq:PTVariance13} 
&\Var\left[ \hat{\Lambda}_k^{U} \right] +\E\left[ \hat{\Lambda}_k^{U}  \right]^2
\nonumber\\
&= \binom{N_M}{k}^{\!-2} \!\left( \frac{d_{\!A}^{\,k}}{k!\, d_{\!A_1} } \right)^{\!2} 
\sum_{l=0}^{k} 
\sum_{\substack{1\leq i_1<\dots<i_k\leq N_M \\ 1\leq j_1<\dots<j_k\leq N_M\\{\rm Co}[(i_1,\dots,i_k);(j_1,\dots,j_k)]=k-l}}
f_2\left( \mathbf{b}_{U},\mathbf{r}_{U},\left\lbrace i_1,\dots,i_k,j_1,\dots,j_k\right\rbrace \right)
\nonumber\\
&= \binom{N_M}{k}^{\!-2} \!\left( \frac{d_{\!A}^{\,k}}{k!\, d_{\!A_1} } \right)^{\!2} 
\sum_{l=0}^{k} 
\binom{N_M}{k+l} \binom{k+l}{k} \binom{k}{l} \,
f_2\left( \mathbf{b}_{U},\mathbf{r}_{U},\left\lbrace i_1,\dots,i_k,j_1,\dots,j_k\right\rbrace \right)\big|_{{\rm Co}[(i_1,\dots,i_k);(j_1,\dots,j_k)]=k-l}
\nonumber\\
&\stackrel{(a)}{=}\binom{N_M}{k}^{\!-2} \!\left( \frac{d_{\!A}^{\,k}}{k!\, d_{\!A_1} } \right)^{\!2} 
\left\lbrace \,  
\sum_{l=0}^{k-1} 
\binom{N_M}{k+l} \binom{k+l}{k} \binom{k}{l} \, 
\bigo{ \frac{d_{A_1} d_B^{\,k-l}}{d_A^{\,k+l} } } 
+   
\binom{N_M}{2k} \binom{2k}{k}
\left[\bigo{ \frac{d_{A_1}}{d_{\!A}^{\,2k}} }+ \frac{d_{A_1}^{\,2}}{d_{\!A}^{\,2k}} \left( k!\,\zeta_k^{\PT}\right)^2
\right]\right\rbrace 
\nonumber\\
&= 
\sum_{l=0}^{k-1}\binom{N_M}{k}^{-2}  
\binom{N_M}{k+l} \bigo{d^{k-l} d_{\!A_1}^{\,-1}}
+ 
\binom{N_M}{k}^{-2}  \binom{N_M}{2k}\binom{2k}{k} 
\left[\left( \zeta_k^{\PT}\right)^2
+ \bigo{d_{\!A_1}^{\,-1}}
\right],
\end{align}
where $(a)$ follows from \lref{lem:PTVarianceCases} below. 
Note that 
\begin{align} 
	\binom{N_M}{k}^{-2}  \binom{N_M}{2k}\binom{2k}{k}
	&= \left( \frac{N_M-k}{N_M}\right)  \left( \frac{N_M-k-1}{N_M-1}\right)  \cdots \left(\frac{N_M-2k+1}{N_M-k+1}\right) 
	\leq 1.
\end{align}
By combining this inequality with \eref{eq:PTVariance13}, we obtain 
\begin{align}\label{eq:PTVariance14} 
\Var\left[ \hat{\Lambda}_k^{U} \right] 
&\leq 
\sum_{l=0}^{k-1}\binom{N_M}{k}^{-2}  
\binom{N_M}{k+l} \bigo{d^{k-l} d_{\!A_1}^{\,-1}}
+ 
\left[\left( \zeta_k^{\PT}\right)^2
+ \bigo{d_{\!A_1}^{\,-1}}
\right] - \E\left[ \hat{\Lambda}_k^{U}  \right]^2
\nonumber\\
&\stackrel{(b)}{=}\sum_{l=0}^{k-1} \bigo{\frac{ d^{k-l}}{d_{A_1} N_M^{k-l}}}  
+ \left[\left( \zeta_k^{\PT}\right)^2+ \bigo{d_{\!A_1}^{\,-1}}\right] 
- \left[ \zeta_k^{\PT} + \bigo{d_{\!A}^{\,-1}} \right]^2
= \bigo{\max\left\lbrace \frac{1}{d_{A_1}} ,\frac{d^{k}}{d_{A_1} N_{\!M}^{k}}\right\rbrace  } , 
\end{align}
where we use \lref{lam:biasLambdaU} in $(b)$. 
This completes the proof of \lref{lem:PTVariance}. 
\end{proof}

\begin{lemma}\label{lem:PTVarianceCases}
Suppose $k>1$ is a constant integer independent of the system dimension, $U$ is a random unitary  sampled from a $2k$-design ensemble on $\caH_A$, integers $1\leq i_1< i_2<\dots<i_k\leq N_M$ and $1\leq j_1<j_2<\dots<j_k\leq N_M$. 
Let
$\mathbf{b}_{U} = \big\{\hat{b}_1, \dots, \hat{b}_{N_M}\big\}$ and $\mathbf{r}_{U} = \{\hat{r}_1, \dots, \hat{r}_{N_M}\}$ denote the measurement outcomes of the PTME protocol for unitary $U$ (see Algorithm~\ref{alg:PTME}). 
Then the function $f_2$ defined in \eref{eq:definef2} satisfies 
\begin{align}\label{eq:PTVarianceCase1}
f_2\left( \mathbf{b}_{U},\mathbf{r}_{U},\left\lbrace i_1,\dots,i_k,j_1,\dots,j_k\right\rbrace \right)
=
\bigo{ \frac{d_{A_1} d_B^{\,k-l}}{d_A^{\,k+l} } }
\end{align}
when ${\rm Co}[(i_1,\dots,i_k);(j_1,\dots,j_k)]=k-l$ with $l\in\{0,1,\dots,k-1\}$, and 
\begin{align}\label{eq:PTVarianceCase2}
f_2\left( \mathbf{b}_{U},\mathbf{r}_{U},\left\lbrace i_1,\dots,i_k,j_1,\dots,j_k\right\rbrace \right)
=
\bigo{ \frac{d_{A_1}}{d_{\!A}^{\,2k}} }+ \frac{d_{A_1}^{\,2}}{d_{\!A}^{\,2k}} \left( k!\,\zeta_k^{\PT}\right)^2
\end{align}
when ${\rm Co}[(i_1,\dots,i_k);(j_1,\dots,j_k)]=0$. 
\end{lemma}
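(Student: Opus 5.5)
We follow the same route as the computation of $\E[\hat\Lambda_k^U]$ in \lref{lam:biasLambdaU}: first reduce $f_2$ to a twirl of an explicit operator on $A$, then evaluate it with the Weingarten expansion \eqref{eq:tirling}.

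\textbf{Overlapping case} (${\rm Co}=k-l$ with $l\le k-1$). Here the two index sets cover $k+l$ distinct measurement rounds, and all the $\hat b$'s coincide on a common value $b$. The shared rounds carry $\hat r^2=1$, while each of the $2l$ unshared rounds contributes a factor $\hat r$. Summing over the $\pm1$ outcomes exactly as in \eref{eq:ELambdaU1}, each shared round gives $\Pr(\hat b=b|U)=\Tr[\rho_U(\ketbra{b}{b}_{A_1}\otimes I)]$, and each unshared round gives $\Tr[\rho_U(\ketbra{b}{b}_{A_1}\otimes\mathbb{S}_{A_2,B})]$. Hence
\begin{align}
f_2=\sum_{b}\Tr\Big\{\rho^{\otimes(k+l)}(\Phi_A\otimes\mathcal I_B)\big[(\ketbra{b}{b}_{A_1}\otimes I_{A_2B})^{\otimes(k-l)}\otimes(\ketbra{b}{b}_{A_1}\otimes\mathbb{S}_{A_2,B})^{\otimes 2l}\big]\Big\},
\end{align}
which uses a $(k+l)$-fold twirl, available since $k+l\le 2k$. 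We expand it with \eqref{eq:tirling} over $\pi,\tau\in\mathrm S_{k+l}$. The partial trace over $A$ factorizes as in step (b) of \eref{eq:ELambdaU3}. The $A_1$ part gives exactly $1$. The $A_2$ part involves $(R_\pi^\top)_{A_2}$ against $I^{\otimes(k-l)}\otimes\mathbb{S}^{\otimes 2l}$, and produces an operator on $B^{\otimes(k+l)}$ that is a permutation on the $2l$ swapped legs tensored with $d_{A_2}^{(\#\text{ loops})}$ factors. This factor is at most $d_B^{\,k-l}$, since the $k-l$ identity legs each contribute at most one factor of $d_{A_2}=d_B$.

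The Weingarten coefficients are $\bigo{d_A^{-(k+l)}}$, and the remaining operators have operator norm $\le 1$ on $\rho^{\otimes(k+l)}$. Summing over $b$ gives the factor $d_{A_1}$, and the bound \eqref{eq:PTVarianceCase1} follows.

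\textbf{Disjoint case} (${\rm Co}=0$). All $2k$ rounds are distinct and every round carries a single factor $\hat r$. We split the sum over $(b,b')$ into the diagonal part $b=b'$ and the off-diagonal part $b\ne b'$.
\begin{itemize}
\item For $b=b'$, the previous argument applies with $2k$ swapped legs. The Weingarten prefactor is $\bigo{d_A^{-2k}}$, there are no identity legs so no factor $d_B$ appears, and summing over $b$ gives $d_{A_1}$. This yields the $\bigo{d_{A_1}/d_A^{2k}}$ term.
\item For $b\ne b'$, we mimic \eref{eq:EEfcased}. We write the quantity as a $2k$-fold twirl of $(\ketbra{b}{b}\otimes\mathbb{S})^{\otimes k}\otimes(\ketbra{b'}{b'}\otimes\mathbb{S})^{\otimes k}$. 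The $A_1$ trace kills every $\pi$ not of block form $(\pi_1,\pi_2)$, because mixed cycles produce $\langle b|b'\rangle=0$.
\end{itemize}

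In the off-diagonal part, the leading Weingarten terms are $\Wg_{(\tau_1,\tau_2),(\tau_1,\tau_2)}=d_A^{-2k}(1+\bigo{d_A^{-1}})$. They reproduce $d_A^{-2k}\big[\sum_\pi\Tr(\rho^{\otimes k}(R_\pi)_A\otimes(R_\pi^\top)_B)\big]^2=d_A^{-2k}(k!\zeta_k^{\PT})^2$ for each pair $(b,b')$, and the $d_{A_1}(d_{A_1}-1)\le d_{A_1}^2$ such pairs give the main term. The subleading terms have coefficient $\bigo{d_A^{-2k-1}}$, and multiplying by the $d_{A_1}^2$ pairs gives $\bigo{d_{A_1}^2 d_A^{-2k-1}}\le\bigo{d_{A_1}/d_A^{2k}}$ because $d_{A_1}\le d_A$.

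\textbf{Main obstacle.} The hard part is the $A_2$ partial-trace bookkeeping for general $\pi$ when $\mathbb{S}_{A_2,B}$ and identity legs are mixed. We must show that every $(\pi,\tau)$ term is bounded by $d_B^{\,k-l}$ times an operator of norm at most $1$ on $B^{\otimes(k+l)}$. We would do this with a loop-counting argument: each closed loop on $A_2$ gives a factor $d_{A_2}$, and only the identity legs can close loops without routing through $B$. This caps the number of loops at $k-l$, while the swap legs feed into a permutation on $B$, which is unitary.
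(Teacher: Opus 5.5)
Your proposal is correct and takes essentially the same route as the paper. In the overlapping case you reduce $f_2$ to a $(k+l)$-fold partial twirl on $A$, expand it in Weingarten functions, and bound the $A_2$ partial trace of $(R_\pi^\top)_{A_2}$ by $d_B^{\,k-l}$ times a permutation on $B$. In the disjoint case you split $b=b'$ from $b\ne b'$ and keep only the block permutations $(\pi_1,\pi_2)$. Your explicit count of the $d_{A_1}(d_{A_1}-1)$ off-diagonal pairs, together with the inequality $d_{A_1}^2 d_A^{-2k-1}\le d_{A_1} d_A^{-2k}$, makes precise a step the paper leaves implicit.
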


\subsubsection{Proof of \eref{eq:PTVarianceCase1} in \lref{lem:PTVarianceCases}}

Assume that ${\rm Co}[(i_1,\dots,i_k);(j_1,\dots,j_k)]=k-l$ with $l\in\{0,1,\dots,k-1\}$. In this case, we have 
\begin{align}\label{eq:PTVariance2} 
f_2\left( \mathbf{b}_{U},\mathbf{r}_{U},\left\lbrace i_1,\dots,i_k,j_1,\dots,j_k\right\rbrace \right)
=\underset{U }{\E} \, \underset{\,\mathbf{b}_{U},\mathbf{r}_{U}}{\E}   
\Big[ \left( \hat{r}_{i_1}\cdots \hat{r}_{i_k} \hat{r}_{j_1}\cdots \hat{r}_{j_k} \right) 
\textbf{1}\big\{\hat{b}_{i_1}=\dots=\hat{b}_{i_k}=\hat{b}_{j_1}=\dots=\hat{b}_{j_k}\big\} \Big| U \Big].
\end{align}
Without loss of generality, we assume that $(i_1,\dots,i_k)$ and $(j_1,\dots,j_k)$ coincide on the last $k-l$ pairs. 
Then 
\begin{align}\label{eq:PTVariance3} 
&f_2\left( \mathbf{b}_{U},\mathbf{r}_{U},\left\lbrace i_1,\dots,i_k,j_1,\dots,j_k\right\rbrace \right)
\nonumber\\
&= \E_{U} \left\lbrace \sum_{\ell=0}^{2l} \sum_{b=0}^{d_{A_1}-1} (-1)^{\ell} 
\Pr\left(\#\left[ \text{$-1$ in $\left\lbrace \hat{r}_{i_1},\dots, \hat{r}_{i_l},\hat{r}_{j_1},\dots,\hat{r}_{j_l}\right\rbrace $}\right] =\ell,\, \text{and } \hat{b}_{i_1}=\dots=\hat{b}_{i_k}=\hat{b}_{j_1}=\dots=\hat{b}_{j_{l}}=b \,\Big| U \right)  \right\rbrace 
\nonumber \\
&= \E_{U} \left\lbrace \sum_{\ell=0}^{2l} \sum_{b=0}^{d_{A_1}-1} \binom{2l}{\ell} (-1)^{\ell} 
\Pr\left(\hat{r}=-1, \hat{b}=b \,\Big| U\right)^\ell 
\Pr\left(\hat{r}=+1, \hat{b}=b \,\Big| U\right)^{2l-\ell} 
\Pr\left(\hat{b}=b \,\Big| U\right)^{k-l} \right\rbrace 
\nonumber \\
&= \E_{U} \left\lbrace \sum_{b=0}^{d_{A_1}-1} \Pr\left(\hat{b}=b \,\Big| U\right)^{k-l}
\sum_{\ell=0}^{2l} \binom{2l}{\ell} (-1)^{\ell} 
\Pr\left(\hat{r}=-1, \hat{b}=b \,\Big| U\right)^\ell 
\Pr\left(\hat{r}=+1, \hat{b}=b \,\Big| U\right)^{2l-\ell} 
\right\rbrace 
\nonumber \\
&= \E_{U} \left\lbrace \sum_{b=0}^{d_{A_1}-1} \Pr\left(\hat{b}=b \,\Big| U\right)^{k-l}
\left[ \Pr\left(\hat{r}=+1, \hat{b}=b\,\Big| U\right)-\Pr\left(\hat{r}=-1, \hat{b}=b \,\Big| U\right) \right]^{2l}
\right\rbrace 
\nonumber \\
&= \sum_{b=0}^{d_{A_1}-1} \,\underbrace{ \E_{U} \left\lbrace 
\Tr\left[\rho_{U} \left( \ketbra{b}{b}_{A_1}\!\otimes I_{\!A_2,B} \right)  \right]^{k-l}
\Tr\left[\rho_{U} \left( \ketbra{b}{b}_{A_1} \otimes \mathbb{S}_{A_2,B} \right)  \right]^{2l}
\right\rbrace}_{(*1)}, 
\end{align}
where in the last equality we use 
\begin{align}\label{eq:Pr+-}
\Pr\left(\hat{b}=b\,\Big| U\right) =\Tr\left[\rho_{U} \left( \ketbra{b}{b}_{A_1}\!\otimes I_{\!A_2,B} \right)  \right],
\quad
\Pr\left(\hat{r}=\pm 1, \hat{b}=b\,\Big| U\right) = 
\Tr\left[\rho_{U} \left( \ketbra{b}{b}_{A_1} \!\otimes \frac{I_{A_2,B}\pm\mathbb{S}_{A_2,B}}{2} \right)  \right].  
\end{align}

The term $(*1)$ in \eref{eq:PTVariance3} can be further expressed as
\begin{align}\label{eq:PTVariance4} 
(*1)
&= 
\E_{U} \left\lbrace 
\Tr\left[\rho(U^{\dag}\otimes I_B) \left(\ketbra{b}{b}_{A_1} \!\otimes I_{A_2,B} \right) (U\otimes I_B) \right]^{k-l}
\Tr\left[\rho(U^{\dag}\otimes I_B) \left(\ketbra{b}{b}_{A_1} \!\otimes \mathbb{S}_{A_2,B} \right) (U\otimes I_B) \right]^{2l} 
\right\rbrace
\nonumber\\
&= 
\Tr \!\bigg\{ \rho^{\otimes (k+l)} \, \E_{U} \left\{(U^{\dag}\otimes I_B)^{\otimes (k+l)} 
\left[\left(\ketbra{b}{b}_{A_1} \!\otimes I_{A_2,B} \right)^{\otimes (k-l)}\otimes
\left(\ketbra{b}{b}_{A_1} \!\otimes \mathbb{S}_{A_2,B} \right)^{\otimes 2l} \right]
(U\otimes I_B)^{\otimes (k+l)} \right\} \bigg\}
\nonumber\\
&= 
\Tr\left\{\rho^{\otimes (k+l)} \, \left(\Phi_{A}\otimes \mathcal{I}_{B} \right) 
\left[\left(\ketbra{b}{b}_{A_1} \!\otimes I_{A_2,B} \right)^{\otimes (k-l)}\otimes
\left(\ketbra{b}{b}_{A_1} \!\otimes \mathbb{S}_{A_2,B} \right)^{\otimes 2l} \right]  \right\}
\nonumber\\
&= \Tr \left( \rho^{\otimes (k+l)}\mathcal{Q}_{b,k,l}\right) , 
\end{align}
Here, $\mathcal{Q}_{b,k,l}$ in the last line is defined as 
\begin{align}\label{eq:PTVariance5} 
\mathcal{Q}_{b,k,l}
&:=
\left(\Phi_{A}\otimes \mathcal{I}_{B} \right) 
\left[\left(\ketbra{b}{b}_{A_1} \!\otimes I_{A_2,B} \right)^{\otimes (k-l)}\otimes
\left(\ketbra{b}{b}_{A_1} \!\otimes \mathbb{S}_{A_2,B} \right)^{\otimes 2l} \right]
\nonumber\\
&\,\stackrel{(a)}{=} \sum_{\tau,\pi\in \mathrm{S}_{k+l}} \Wg_{\pi,\tau}^{(k+l,d_A)} \, 
(R_{\tau})_A \otimes 
\underbrace{\Tr_A \left\lbrace  \left[\left(\ketbra{b}{b}_{A_1} \!\otimes I_{A_2,B} \right)^{\otimes (k-l)}\otimes
\left(\ketbra{b}{b}_{A_1} \!\otimes \mathbb{S}_{A_2,B} \right)^{\otimes 2l} \right] 
\left[ (R_{\pi}^{\top})_A \otimes I_B^{\otimes (k+l)} \right] \right\rbrace }_{(*2)}
\nonumber\\
&\,\stackrel{(b)}{=} \sum_{\tau,\pi\in \mathrm{S}_{k+l}} \Wg_{\pi,\tau}^{(k+l,d_A)} \bigo{d_B^{\,k-l}} \, 
(R_{\tau})_A \otimes \left(R_{\pi'} \right)_B
\nonumber\\
&\,\stackrel{(c)}{=} \sum_{\tau,\pi\in \mathrm{S}_{k+l}} \bigo{d_A^{\,-k-l} d_B^{\,k-l}} \, 
(R_{\tau})_A \otimes \left(R_{\pi'} \right)_B, 
\end{align}
where $\pi'\in \mathrm{S}_{k+l}$ is a permutation element related to $\pi$. 
In \eref{eq:PTVariance5}, equality $(a)$ follows from \eref{eq:tirling}, $(c)$ follows from \eref{eq:Weingarten}, and $(b)$ holds because 
\begin{align}
(*2)
&= \Tr \left[\left( \ketbra{b}{b}_{A_1}\right)^{\otimes (k+l)}\left( R_{\pi}^{\top}\right)_{\!A_1}\right] 
\Tr_{A_2} \left\lbrace \left[ ( I_{A_2,B})^{\otimes (k-l)}\otimes ( \mathbb{S}_{A_2,B})^{\otimes 2l}\right]  \left[(R_{\pi}^{\top})_{A_2} \otimes I_B^{\otimes (k+l)} \right] \right\rbrace 
\nonumber\\
&= \Tr_{A_2} \left\lbrace \left[ ( I_{A_2,B})^{\otimes (k-l)}\otimes ( \mathbb{S}_{A_2,B})^{\otimes 2l}\right]  \left[(R_{\pi}^{\top})_{A_2} \otimes I_B^{\otimes (k+l)} \right] \right\rbrace 
\nonumber\\
&= I_B^{\otimes (k-l)} \otimes \Tr_{1,2,\dots,k-l}\left[ \left(R_\pi^\top \right)_B\right] 
= \bigo{d_B^{\,k-l}} \left(R_{\pi'} \right)_B
\end{align}
for some $\pi'\in \mathrm{S}_{k+l}$.

By combining Eqs.~\eqref{eq:PTVariance3}, \eqref{eq:PTVariance4}, and \eqref{eq:PTVariance5}, we have
\begin{align}\label{eq:PTVariance6} 
&f_2\left( \mathbf{b}_{U},\mathbf{r}_{U},\left\lbrace i_1,\dots,i_k,j_1,\dots,j_k\right\rbrace \right)
= \sum_{b=0}^{d_{A_1}-1} \,\Tr \left( \rho^{\otimes (k+l)}\mathcal{Q}_{b,k,l}\right)
\nonumber\\
&\quad = \sum_{b=0}^{d_{A_1}-1} \sum_{\tau,\pi\in \mathrm{S}_{k+l}} \bigo{d_A^{\,-k-l} d_B^{\,k-l}} \, 
\Tr  \left\lbrace \rho^{\otimes (k+l)} \left[(R_{\tau})_A \otimes \left(R_{\pi'} \right)_B\right] \right\rbrace 
= \bigo{ d_{A_1} d_A^{\,-k-l} d_B^{\,k-l} } , 
\end{align}
which confirms \eref{eq:PTVarianceCase1} in \lref{lem:PTVarianceCases}.

\subsubsection{Proof of \eref{eq:PTVarianceCase2} in \lref{lem:PTVarianceCases}}

Assume that ${\rm Co}[(i_1,\dots,i_k);(j_1,\dots,j_k)]=0$. In this case, we have
\begin{align}\label{eq:PTVariance7} 
&f_2\left( \mathbf{b}_{U},\mathbf{r}_{U},\left\lbrace i_1,\dots,i_k,j_1,\dots,j_k\right\rbrace \right)
\nonumber\\
&=\E_{U} \left\lbrace \left[ \underset{\,\mathbf{b}_{U},\mathbf{r}_{U}}{\E}  \left( \hat{r}_{i_1}\hat{r}_{i_2}\cdots \hat{r}_{i_k}  \textbf{1}\big\{\hat{b}_{i_1}=\dots=\hat{b}_{i_k}\big\} \Big| U \right) \right]^2 \right\rbrace 
\nonumber \\
&= \E_{U} \left\lbrace \left[ \, 
\sum_{\ell=0}^{k} \sum_{b=0}^{d_{A_1}-1} (-1)^{\ell} 
\Pr\left(
\#\left[ \text{$-1$ in $\left\lbrace \hat{r}_{i_1},\dots, \hat{r}_{i_k}\right\rbrace $}\right] =\ell,\,\text{and }
\hat{b}_{i_1}=\dots=\hat{b}_{i_k}=b \,\Big| U \right) \right]^2  \right\rbrace 
\nonumber \\
&= \E_{U} \left\lbrace \left[ \,
\sum_{b=0}^{d_{A_1}-1} \sum_{\ell=0}^{k}  (-1)^{\ell}  
\binom{k}{\ell} 
\Pr\left(\hat{r}=-1, \hat{b}=b \,\Big| U\right)^{\ell}
\Pr\left(\hat{r}=+1, \hat{b}=b \,\Big| U\right)^{k-\ell} 
\right]^2 \right\}
\nonumber \\
&= \E_{U} \left\lbrace \left( \,
\sum_{b=0}^{d_{A_1}-1} 
\left[ \Pr\left(\hat{r}=+1, \hat{b}=b \,\Big| U\right)-
\Pr\left(\hat{r}=-1, \hat{b}=b \,\Big| U\right) \right]^k 
\right)^{\!\!2} \right\}
\nonumber \\
&\stackrel{(a)}{=} \E_{U} \left\lbrace \left( \, \sum_{b=0}^{d_{A_1}-1}
\Tr\left[\rho_{U} \left( \ketbra{b}{b}_{A_1} \!\otimes \mathbb{S}_{A_2,B} \right) \right]^{k}
\right)^{\!\!2} \right\rbrace
\nonumber \\
&= \sum_{b=0}^{d_{A_1}-1}\;
\underbrace{ \E_{U} \left\lbrace 
\Tr\left[\rho_{U} \left( \ketbra{b}{b}_{A_1} \!\otimes \mathbb{S}_{A_2,B} \right) \right]^{2k}
\right\rbrace }_{(*3)}
+ \sum_{b\ne b'}\;
\underbrace{ \E_{U} \left\lbrace 
\Tr\left[\rho_{U} \left( \ketbra{b}{b}_{A_1} \!\otimes \mathbb{S}_{A_2,B} \right)  \right]^{k}
\Tr\left[\rho_{U} \left( \ketbra{b'}{b'}_{A_1} \!\otimes \mathbb{S}_{A_2,B} \right) \right]^{k}
\right\rbrace}_{(*4)},
\end{align}
where $(a)$ follows from \eref{eq:Pr+-}. 
By using an argument similar to that in Eqs.~\eqref{eq:ELambdaU2} and \eqref{eq:ELambdaU3}, we can derive that
\begin{align}\label{eq:PTVariance8} 
(*3)&= d_{\!A}^{\,-2k}
\Tr\left\lbrace \rho^{\otimes 2k} \left[\, \sum_{\pi\in \mathrm{S}_{2k}} (R_\pi)_A \otimes (R_\pi^\top)_B 
+ \bigo{d_{\!A}^{\,-1}} \sum_{\pi, \tau \in \mathrm{S}_{2k}} 
(R_\tau)_{A} \otimes (R_\pi^\top)_B \right] \right\rbrace 
= \bigo{d_{\!A}^{\,-2k}}. 
\end{align}

For the term $(*4)$ in \eref{eq:PTVariance7}, we have 
\begin{align}\label{eq:PTVariance9} 
(*4)&= 
\E_{U} \left\lbrace 
\Tr\left[\rho(U^{\dag}\otimes I_B) \left(\ketbra{b}{b}_{A_1}   \!\otimes \mathbb{S}_{A_2,B} \right) (U\otimes I_B) \right]^{k}
\Tr\left[\rho(U^{\dag}\otimes I_B) \left(\ketbra{b'}{b'}_{A_1} \!\otimes \mathbb{S}_{A_2,B} \right) (U\otimes I_B) \right]^{k} 
\right\rbrace
\nonumber\\
&= 
\Tr \!\bigg\{ \rho^{\otimes 2k} \, \E_{U} \left\{(U^{\dag}\otimes I_B)^{\otimes 2k} 
\left[\left(\ketbra{b}{b}_{A_1}^{\otimes k}\otimes\ketbra{b'}{b'}_{A_1}^{\otimes k} \right) \!\otimes \left( \mathbb{S}_{A_2,B} \right)^{\otimes 2k} \right]
(U\otimes I_B)^{\otimes 2k} \right\} \bigg\}
\nonumber\\
&= 
\Tr\left\{\rho^{\otimes 2k} \, \left(\Phi_{A}\otimes \mathcal{I}_{B} \right) 
\left[\left(\ketbra{b}{b}_{A_1}^{\otimes k}\otimes\ketbra{b'}{b'}_{A_1}^{\otimes k} \right) \!\otimes \left( \mathbb{S}_{A_2,B} \right)^{\otimes 2k} \right] \right\}
\nonumber\\
&= \Tr \left( \rho^{\otimes 2k}\tilde{\mathcal{Q}}_{b,b'\!,k}\right) , 
\end{align}
where $\tilde{\mathcal{Q}}_{b,b'\!,k}$ in the last line is defined as 
\begin{align}\label{eq:PTVariance10} 
\tilde{\mathcal{Q}}_{b,b'\!,k} &:= 
\left(\Phi_{A}\otimes \mathcal{I}_{B} \right) 
\left[\left(\ketbra{b}{b}_{A_1}^{\otimes k}\otimes\ketbra{b'}{b'}_{A_1}^{\otimes k} \right) \!\otimes \left( \mathbb{S}_{A_2,B} \right)^{\otimes 2k} \right]
\nonumber\\
&\,\stackrel{(a)}{=} \sum_{\tau,\pi\in \mathrm{S}_{2k}} \Wg_{\pi,\tau}^{(2k,d_A)} \, 
(R_{\tau})_A \otimes 
\underbrace{\Tr_A \left\lbrace \left[\left(\ketbra{b}{b}_{A_1}^{\otimes k}\otimes\ketbra{b'}{b'}_{A_1}^{\otimes k} \right) \!\otimes \left( \mathbb{S}_{A_2,B} \right)^{\otimes 2k} \right]
\left[  (R_{\pi}^{\top})_A \otimes I_B^{\otimes 2k} \right] \right\rbrace }_{(*5)}
\nonumber \\
&\,\stackrel{(b)}{=} \sum_{\tau \in \mathrm{S}_{2k}} \sum_{\pi_1,\pi_2 \in \mathrm{S}_{k}} \Wg_{(\pi_1,\pi_2),\tau}^{(2k,d_A)} \, 
(R_\tau)_{A} \otimes \left[ R_{(\pi_1,\pi_2)}^\top\right] _B
\nonumber \\
&\,\stackrel{(c)}{=} d_{\!A}^{\,-2k} \left\lbrace \,\sum_{\pi_1,\pi_2\in \mathrm{S}_{k}} 
\left[R_{(\pi_1,\pi_2)}\right]_A \otimes \left[ R_{(\pi_1,\pi_2)}^\top\right]_B
+ \bigo{d_{\!A}^{\,-1}} \sum_{\tau \in \mathrm{S}_{2k}} \sum_{\pi_1,\pi_2 \in \mathrm{S}_{k}}  
(R_\tau)_{A} \otimes \left[ R_{(\pi_1,\pi_2)}^\top\right]_B \right\rbrace  .
\end{align}
Here, $(a)$ follows from \eref{eq:tirling}, $(c)$ follows from \eref{eq:Weingarten}, and $(b)$ holds because 
\begin{align}
(*5)
&= \Tr \left[\left(\ketbra{b}{b}^{\otimes k}\otimes\ketbra{b'}{b'}^{\otimes k} \right) R_{\pi}^{\top}\right] 
\Tr_{A_2} \left\lbrace \left( \mathbb{S}_{A_2,B} \right)^{\otimes 2k}  \left[(R_{\pi}^{\top})_{A_2} \otimes I_B^{\otimes 2k} \right] \right\rbrace 
\nonumber\\
&= \Tr \left[\left(\ketbra{b}{b}^{\otimes k}\otimes\ketbra{b'}{b'}^{\otimes k} \right) R_{\pi}^{\top}\right]  (R_{\pi}^{\top})_{B}
\nonumber\\
&=
\begin{cases}
 (R_{\pi}^{\top})_{B} & \text{when } \pi=(\pi_1,\pi_2) \text{ for some } \pi_1,\pi_2\in\mathrm{S}_{k}, \\
0   &  \text{otherwise}.   
\end{cases}
\end{align}
Equations \eqref{eq:PTVariance9} and \eqref{eq:PTVariance10} imply that
\begin{align}\label{eq:PTVariance11} 
(*4)&= 
d_{\!A}^{\,-2k} \left\lbrace \,\sum_{\pi_1,\pi_2\in \mathrm{S}_{k}} 
\Tr \left\{ \rho^{\otimes 2k} \left[R_{(\pi_1,\pi_2)}\right]_A \otimes \left[ R_{(\pi_1,\pi_2)}^\top\right]_B \right\} 
+ \bigo{d_{\!A}^{\,-1}} \sum_{\tau \in \mathrm{S}_{2k}} \sum_{\pi_1,\pi_2 \in \mathrm{S}_{k}}  
\Tr \left\{ \rho^{\otimes 2k} (R_\tau)_{A} \otimes \left[ R_{(\pi_1,\pi_2)}^\top\right]_B \right\} \right\rbrace 
\nonumber\\
&=d_{\!A}^{\,-2k} \sum_{\pi_1,\pi_2\in \mathrm{S}_{k}} 
\Tr \left[  \rho^{\otimes k} (R_{\pi_1})_A \otimes (R_{\pi_1}^\top)_B \right] 
\Tr \left[  \rho^{\otimes k} (R_{\pi_2})_A \otimes (R_{\pi_2}^\top)_B \right] 
+ \bigo{d_{\!A}^{\,-2k-1}} 
\nonumber\\
&=d_{\!A}^{\,-2k} \left\lbrace \,\sum_{\pi\in \mathrm{S}_{k}} 
\Tr \left[  \left( \rho^{\top_{\!B}}\right) ^{\otimes k} (R_{\pi})_A \otimes (R_{\pi})_B \right] \right\rbrace^2 
+ \bigo{d_{\!A}^{\,-2k-1}} 
\nonumber\\
&= \frac{k!^2}{d_{\!A}^{\,2k}} \left( \zeta_k^{\PT}\right)^2+ \bigo{d_{\!A}^{\,-2k-1}}. 
\end{align}

By combining Eqs.~\eqref{eq:PTVariance7}, \eqref{eq:PTVariance8}, and \eqref{eq:PTVariance11}, we have 
\begin{align}\label{eq:PTVariance12} 
f_2\left( \mathbf{b}_{U},\mathbf{r}_{U},\left\lbrace i_1,\dots,i_k,j_1,\dots,j_k\right\rbrace \right)
&= \sum_{b=0}^{d_{A_1}-1}\bigo{d_{\!A}^{\,-2k}} + \sum_{b\ne b'}\left[\frac{k!^2}{d_{\!A}^{\,2k}} \left( \zeta_k^{\PT}\right) ^2+ \bigo{d_{\!A}^{\,-2k-1}} \right] 
\nonumber\\
&=\bigo{d_{A_1} d_{\!A}^{\,-2k}}+ d_{\!A_1}^{\,2} d_{\!A}^{\,-2k} \left( k!\,\zeta_k^{\PT}\right)^2,
\end{align}
which confirms \eref{eq:PTVarianceCase2} in \lref{lem:PTVarianceCases}.

\subsection{Proof of \tref{thm:PTResult} in the main text}\label{sec:ProofthmPTResult}

\begin{theorem}[Restatement of \tref{thm:PTResult}]
\label{thm:PTResultSM}
Suppose $\rho\in\caD(\caH_A\otimes \caH_B)$ and 
$t\geq 2$ is a constant integer independent of the system dimensions $d_A$ and $d_B$.
For any $0<\epsilon<1$, the PTME protocol (given in  Algorithm~\ref{alg:PTME} of the End Matter) can return $\epsilon$-additive error estimates for all 
$p_2^{\PT},\dots,p_t^{\PT}\!$ with high probability, provided that: $(i)$ the random unitaries are sampled from a $2t$-design ensemble, and
\begin{align} 
(ii)\ d_{\!A}\epsilon\geq c_1,
\qquad 
(iii)\ N_U=\max\left\lbrace 1, c_2 \,\Xi_2 \right\rbrace, 
\qquad 
(iv)\ N_M\geq c_3\,d_{\!A} d_B \min\left\lbrace \Xi_2^{1/t}, 1 \right\rbrace
\end{align}
for some constants $c_1,c_2,c_3>0$ independent of $d_A,d_B,\epsilon$. Here, $\Xi_2$ is a shorthand for $\Xi_2(d_A,d_B,\epsilon)=d_B/(d_{A}\epsilon^2)$. 
\end{theorem}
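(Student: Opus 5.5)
The argument will follow the template of the proof of \tref{thm:MomResult}, with two changes: the estimator $\hat\Lambda_k^U$ now has a bias, and the variance bound is the one from \lref{lem:PTVariance}. Fix a constant failure probability $0<\delta<1$.

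\emph{Reduction to the symmetrized quantities.} First I will check that the inversion argument of \lref{lem:EstimatorTrans} carries over verbatim to \eref{eq:zetaPTRelation}. The reason is that $\zeta_k^{\PT}$ expands as $\frac{(k-1)!}{k!}\,p_k^{\PT}$ plus products of lower PT moments, in exactly the same way as before. Two points need care. Lower PT moments are bounded in absolute value by $1$, since $|p_j^{\PT}|\le \|\rho^{\top_B}\|_2^{2}\cdots\le 1$. Hence propagated errors stay $\mathcal{O}(\eta)$. With the union bound, it then suffices to show, for every $k=2,\dots,t$,
\[
\Pr\bigl\{|\hat\zeta_k^{\PT}-\zeta_k^{\PT}|>c\epsilon\bigr\}\le \delta/t .
\]

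\emph{Bias and fluctuation.} I split the error as $|\E[\hat\zeta_k^{\PT}]-\zeta_k^{\PT}| + |\hat\zeta_k^{\PT}-\E[\hat\zeta_k^{\PT}]|$.

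For the bias, \lref{lam:biasLambdaU} gives a bound of $\mathcal{O}(d_A^{-1})$. Here I use that a $2t$-design is also a $k$-design for every $k\le t$. Condition (ii), $d_A\epsilon\ge c_1$ with $c_1$ large, makes this bias at most $c\epsilon/2$. This is the only role of (ii).

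For the fluctuation, Chebyshev's inequality together with \lref{lem:PTVariance} gives
\[
\Pr\le \frac{4}{c^2\epsilon^2N_U}\,\mathcal{O}\!\Bigl(\max\Bigl\{\frac{1}{d_{A_1}},\frac{d^{k}}{d_{A_1}N_M^{k}}\Bigr\}\Bigr).
\]
Using $d_{A_1}=d_A/d_B$, it suffices to have
\[
\frac{d_B}{d_A\epsilon^2 N_U}\lesssim 1 \quad\text{and}\quad \frac{d_B\, d^{k}}{d_A\epsilon^2N_UN_M^{k}}\lesssim 1 .
\]
The first inequality holds because $N_U\ge c_2\Xi_2$.

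\emph{Verifying the second condition.} I split into two cases.
\begin{itemize}
\item If $\Xi_2\ge 1$, then $N_U\approx \Xi_2$ and $N_M\ge c_3 d$. The left side becomes $(d/N_M)^k\le c_3^{-k}$.
\item If $\Xi_2<1$, then $N_U=1$ and $N_M\ge c_3 d\,\Xi_2^{1/t}$. The left side becomes $\Xi_2\, c_3^{-k}\Xi_2^{-k/t}$. Since $k\le t$ and $\Xi_2<1$, this is $\Xi_2^{1-k/t}c_3^{-k}\le c_3^{-k}$.
\end{itemize}
In both cases the condition holds once $c_3$ is chosen large. This step checks that the worst order is $k=t$.

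\emph{Main obstacle.} The only step I expect to need care is the inversion lemma, specifically making sure the coefficients and lower moments are bounded so that errors stay $\mathcal{O}(\epsilon)$. I also need $c_1$ to absorb the bias constant, which depends only on $t$. Everything else is direct substitution of the earlier lemmas.
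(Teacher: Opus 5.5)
Your proposal is correct and follows the paper's proof essentially step for step. It uses the same reduction via the union bound and the inversion lemma (\lref{lem:PTMestTrans}), the same bias/fluctuation split in which \lref{lam:biasLambdaU} together with condition (ii) absorbs the $\mathcal{O}(d_A^{-1})$ bias, and the same Chebyshev-plus-\lref{lem:PTVariance} bound with $d_{A_1}=d_A/d_B$. Your explicit two-case check of the $N_M$ condition (worst order $k=t$) and your remark that $|p_j^{\PT}|\le\|\rho^{\top_B}\|_2^{\,j}\le 1$ only fill in what the paper leaves as ``straightforward to verify'' and ``a similar induction argument.'' In the case $\Xi_2<1$ you only need $N_U\ge 1$, not $N_U=1$.
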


\begin{proof}[Proof of \tref{thm:PTResult}]
Let $0<\delta<1$ be a constant independent of $d_A,d_B$ and $\epsilon$. 
According to the union bound and \lref{lem:PTMestTrans} below, in order to estimate all 
$p_2^{\PT},p_3^{\PT},\dots,p_t^{\PT}$ within $\epsilon$ additive error and failure probability at most $\delta$, it 
suffices to ensure that the following condition holds for all $k=2,\dots,t$:  
\begin{equation}\label{eq:EachUBcepsPT}
\Pr\left\{ \left| \hat{\zeta}_k^{\PT} - \zeta_k^{\PT}\right| > c\epsilon \right\} \leq \frac{\delta}{t},
\end{equation}
where $c>0$ is some constant factor independent of $d_A,d_B,\epsilon$, and $\hat\zeta_k^{\PT} = \frac{1}{N_U} \sum_{s=1}^{N_U} \hat{\Lambda}_k^{U_s}$ is 
our estimator for $\zeta_k^{\PT}$ (see Algorithm~\ref{alg:PTME}). 
Note that \eref{eq:EachUBcepsPT} can be achieved if the following two relations hold: 
\begin{align}\label{eq:CondiUBcepsPT}
(a)\ \, \left|\E\left[ \hat\zeta_k^{\PT}\right] -\zeta_k^{\PT}\right| < \frac{c\epsilon}{2} 
\quad\text{and}\quad
(b)\ \Pr\left\{ \left|\hat\zeta_k^{\PT}-\E\left[ \hat\zeta_k^{\PT}\right] \right| > \frac{c\epsilon}{2}  \right\} \leq \frac{\delta}{t}. 
\end{align} 

By \lref{lam:biasLambdaU}, $\hat\zeta_k^{\PT}$ is 
an estimator for $\zeta_k^{\PT}$ with bias $\mathcal O(d_{\!A}^{\,-1})$. 
Thus, the relation $(a)$ in \eref{eq:CondiUBcepsPT} holds provided that $d_{\!A}\epsilon\geq c_1$ for some constants $c_1>0$.

In addition, Chebyshev's inequality implies that 
\begin{align}
\Pr\left\{ \left|\hat\zeta_k^{\PT}-\E\left[ \hat\zeta_k^{\PT}\right] \right| > \frac{c\epsilon}{2}  \right\}
\leq \frac{4}{c^2\epsilon^2} \Var\left[ \hat\zeta_k^{\PT}\right] 
= \frac{4}{c^2\epsilon^2 N_U} \Var\!\Big[ \hat{\Lambda}_k^{U} \Big] 
= \bigo{\frac{1}{\epsilon^2 N_U} \cdot \max\left\lbrace \frac{1}{d_{A_1}} ,\frac{d^{k}}{d_{A_1} N_{\!M}^{k}}\right\rbrace},  
\end{align}
where $d=d_A d_B$, and the last equality follows from \lref{lem:PTVariance}.
Hence, to make the relation $(b)$ in \eref{eq:CondiUBcepsPT} hold, it is sufficient to take 
\begin{align} 
\frac{1}{\epsilon^2 N_U d_{A_1}} \leq 1
\quad\text{and}\quad  
\frac{d^{k}}{\epsilon^2 N_U d_{A_1} N_{\!M}^{k}} \leq 1 \quad \forall k=2,\dots,t,
\end{align}
where we omit constant coefficients, including $\delta$ and $t$. 
It is straightforward to verify that these two conditions are satisfied by choosing $N_U$ and $N_M$ as in \tref{thm:PTResultSM}. 
This completes the proof. 
\end{proof}

The following lemma can be derived by a similar induction argument used in the proof of \lref{lem:EstimatorTrans}. 

\begin{lemma}\label{lem:PTMestTrans}
Suppose $t\geq 2$ is a constant integer. For any $0<\eta<1$, if the estimates $\big\{\hat{\zeta}_k^{\PT}\big\}_{k=2}^t$ satisfy
\begin{align}
\left| \hat{\zeta}_k^{\PT} - \zeta_k^{\PT}\right|  \leq c\eta \quad \forall k = 2, \dots, t, 
\end{align} 
then the quantities $\big\{\hat{p}_k^{\PT}\big\}_{k=2}^t$ computed in Step~11 of Algorithm~\ref{alg:PTME} satisfy
$\big|\hat{p}_k^{\PT} - p_k^{\PT}\big| = \mathcal{O}(\eta)$ for all $k=2,\dots,t$.
\end{lemma}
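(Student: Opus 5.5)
The plan is to mirror the proof of \lref{lem:EstimatorTrans}, using induction on $t$. First I will show that $\zeta_k^{\PT}$ is a polynomial in the PT moments, in direct analogy with \eref{eq:TRrhotRpiexp}. Using $\Tr[X^{\top_B}Y]=\Tr[XY^{\top_B}]$ and the relation already used at the end of the proof of \lref{lam:biasLambdaU}, each term can be written as
\begin{align}
\Tr\left\{\rho_{AB}^{\otimes k}\left[(R_\pi)_A\otimes(R_\pi^{\top})_B\right]\right\}=\Tr\left[\left(\rho_{AB}^{\top_B}\right)^{\otimes k}R_\pi\right].
\end{align}
Here $R_\pi$ is the permutation operator on $(\caH_A\otimes\caH_B)^{\otimes k}$. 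Since $\rho_{AB}^{\top_B}$ is Hermitian with unit trace, the standard cycle-decomposition identity gives
\begin{align}
\Tr\left[\left(\rho_{AB}^{\top_B}\right)^{\otimes k}R_\pi\right]=\prod_{i}\left(p_i^{\PT}\right)^{\nu_i(\pi)},
\end{align}
where $p_1^{\PT}=1$ and $\nu_i(\pi)$ is the number of cycles of length $i$ in $\pi$. Separating out the $(k-1)!$ full cycles then yields
\begin{align}
p_k^{\PT}=k\,\zeta_k^{\PT}-\frac{1}{(k-1)!}\sum_{\pi\in\mathrm{S}_k,\#(\pi)\geq2}\prod_{i=2}^{k-1}\left(p_i^{\PT}\right)^{\nu_i(\pi)},
\end{align}
which is exactly the form of \eref{eq:ptExp1} and is what Step~11 of Algorithm~\ref{alg:PTME} inverts.

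The base case $t=2$ is immediate: $\hat p_2^{\PT}=2\hat\zeta_2^{\PT}-1$ has error at most $2c\eta$. For the inductive step, assume $|\hat p_i^{\PT}-p_i^{\PT}|=\mathcal O(\eta)$ for all $i<t$. Substituting these estimates together with $\hat\zeta_t^{\PT}$ into the relation above, the error in $\hat p_t^{\PT}$ is at most $t\,c\eta$ plus the error incurred in a constant number (at most $t!$) of monomials of constant degree (less than $t$).

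The main obstacle is controlling the error propagation through these products. Unlike ordinary moments, the PT moments are not bounded by $1$ automatically in the same way, because $\rho^{\top_B}$ can have negative eigenvalues. I will bound them using
\begin{align}
\left|p_i^{\PT}\right|\le\left\|\rho^{\top_B}\right\|_1\,\left\|\rho^{\top_B}\right\|^{i-1}\le\left\|\rho^{\top_B}\right\|_1\le\min\{d_A,d_B\},
\end{align}
but this bound is dimension-dependent. A sharper bound is available: $|p_i^{\PT}|\le\|\rho^{\top_B}\|_2^{\,i}$ for $i\ge2$, and $\|\rho^{\top_B}\|_2^2=p_2^{\PT}=\Tr(\rho^2)\le1$. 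Hence $|p_i^{\PT}|\le1$ for all $i\ge2$, since $\|\cdot\|_i\le\|\cdot\|_2$.

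Combined with the inductive estimates $|\hat p_i^{\PT}|\le1+\mathcal O(\eta)\le2$, the telescoping inequality
\begin{align}
\left|\prod_j a_j-\prod_j b_j\right|\le\sum_j\left|a_j-b_j\right|\prod_{l\neq j}\max\{|a_l|,|b_l|\}
\end{align}
gives an $\mathcal O(\eta)$ error for each monomial, with constants depending only on $t$. This closes the induction.
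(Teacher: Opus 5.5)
Your proposal is correct and follows essentially the same route as the paper, which proves this lemma only by pointing to the induction of \lref{lem:EstimatorTrans}: expand $\zeta_k^{\PT}$ by cycle structure as a polynomial in the PT moments, isolate the $(k-1)!$ full cycles, and substitute the earlier estimates. Your additional bound $|p_i^{\PT}|\le\|\rho^{\top_B}\|_2^{\,i}=(p_2^{\PT})^{i/2}\le 1$ for $i\ge 2$ makes explicit the boundedness of the monomials, which the paper's ``constant number of terms'' argument uses without stating it. A minor fix: you only need $|\hat p_i^{\PT}|\le 1+C\eta\le 1+C$, not $\le 2$.
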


\section{Classical computational cost in postprocessing}\label{sec:computational_cost}
In this section, we analyze the classical computational cost of our CBNE and PTME protocols. 

\subsection{Computational cost of CBNE for state moment estimation} 
Here we analyze the classical computational complexity of the CBNE protocol (in Algorithm~\ref{alg:CBNEmoments}) for estimating $p_t$, where $t\geq 2$ is a constant integer. 
The postprocessing stage involves computing the collision estimator
\begin{align}\label{eq:collision_moment_SM2}
\hat M_k^U=
\frac{\kappa_k}{d}\binom{N_M}{k}^{-1}
\sum_{1\leq i_1<\cdots<i_k\leq N_M}
\mathbf{1}\{\hat{b}_{i_1}=\cdots=\hat{b}_{i_k}\}. 
\end{align}
for $k=2,3,\dots,t$ across $N_U$ distinct random unitaries $U$, 
where $\mathbf{b}_U=\{\hat{b}_1,\dots,\hat{b}_{N_M}\}$ denotes the list of measurement outcomes with $\hat{b}_i\in\{0,\dots,d-1\}$. 

To evaluate this quantity efficiently, it is convenient to group the
indices $1\le i\le N_M$ according to the value of $\hat{b}_i$.
For $j\in\{0,\ldots,d-1\}$, define
$\theta_j = \bigl|\{i\,|\,\hat{b}_i=j\}\bigr|$ 
to be the number of occurrences of the value $j$ among the $N_M$ outcomes.
A $k$-tuple $(i_1,\ldots,i_k)$ contributes to the sum in \eref{eq:collision_moment_SM2} if and only if all its associated outcomes correspond to the same value $j$.
Therefore, the summation can be rewritten as
\begin{equation}
\sum_{1\le i_1<\cdots<i_k\le N_M}
\mathbf{1}\{\hat{b}_{i_1}=\cdots=\hat{b}_{i_k}\}
=
\sum_{j\,|\,\theta_j>0}\binom{\theta_j}{k}.
\end{equation}
To compute this expression, one first scans the list
$\{\hat{b}_i\}_{i=1}^{N_M}$ and records the frequencies $\theta_j$ of the observed values, which requires $\bigo{N_M}$ time.
Note that the number of nonzero $\theta_j$ is at most $\bigo{N_M}$.
The sum $\sum_j \binom{\theta_j}{k}$ can then be computed by iterating over these values.
Because $k$ is a constant, each binomial coefficient $\binom{\theta_j}{k}$ can be evaluated in $\bigo{1}$ time, yielding a total cost of $\bigo{N_M}$.
Finally, since the prefactor $\frac{\kappa_k}{d}\binom{N_M}{k}^{-1}$ in \eref{eq:collision_moment_SM2} can also be computed in $\bigo{1}$ time, the overall time cost for evaluating $\hat M_k^U$ is $\bigo{N_M}$.

Recall that $\hat M_k^U$ should be computed for $k=2,\dots,t$ across $N_U$ distinct $U$.
Therefore, the overall classical computational complexity for estimating $p_t$ via CBNE is
\begin{equation}
T_1=N_U\cdot\sum_{k=2}^{t}\bigo{N_M}
=\bigo{N_UN_M}
=\bigo{\max \left\{\frac{d^{1-1/t}}{\epsilon^{2/t}},\,\frac{1}{\epsilon^2}\right\}}, 
\end{equation}
where the second equality holds because $t$ is a constant, and the last equality follows from \eref{eq:SampleCompCBNE} in the main text. 
Note that this computational complexity has the same order as the sample complexity $N_U N_M$, and is therefore the best scaling one can expect.
We remark again that the information of the random unitaries $U$ is not used in the postprocessing.

\subsection{Computational cost of CBNE for estimating \texorpdfstring{$\Tr(O\rho^t)$}{}}

Next, we analyze the classical computational complexity of the CBNE protocol (in Algorithm~\ref{alg:CBNEmain} of the End Matter) for estimating $\Tr(O\rho^t)$, where $t\ge1$ is a constant integer.
The postprocessing stage involves computing both $\hat M_k^U$ in \eref{eq:collision_moment_SM2} and 
\begin{align}\label{eq:GammakUSM2}
\hat{\Gamma}_k^U(O_0)=
\frac{\kappa_{k+1}}{d}\binom{N_M}{k}^{-1}
\sum_{1\le i_1<\cdots<i_k\le N_M}
\mathbf{1}\{\hat{b}_{i_1}=\cdots=\hat{b}_{i_k}\}\,
\bra{r_{i_1}} U O_0 U^{\dag} \ket{r_{i_1}},
\end{align}
for $k=2,3,\dots,t$ across $N_U$ distinct random unitaries $U$, where  $O_0$ denotes the traceless part of the observable $O$.

To compute $\hat{\Gamma}_k^U(O_0)$ efficiently, as analyzed in the previous subsection, it is convenient to group the indices according to the value of $\hat{b}_i$.
Let $\theta_j=\bigl|\{i\,|\,\hat{b}_i=j\}\bigr|$ 
denote the number of occurrences of outcome $j\in\{0,\ldots,d-1\}$ among the list $\mathbf{b}_U=\{\hat{b}_1,\dots,\hat{b}_{N_M}\}$.
Then the summation in \eref{eq:GammakUSM2} can be rewritten as
\begin{equation}\label{eq:GammakUSM3}
\sum_{1\le i_1<\cdots<i_k\le N_M}
\mathbf{1}\{\hat{b}_{i_1}=\cdots=\hat{b}_{i_k}\}\,
\bra{r_{i_1}}U O_0 U^\dag\ket{r_{i_1}}
=
\sum_{j|\theta_j>0}
\binom{\theta_j}{k}
\bra{j}U O_0 U^\dag\ket{j}.
\end{equation}
The number of nonzero 
$\theta_j$ is at most $\bigo{N_M}$, which can be obtained by scanning the list
$\{\hat{b}_i\}_{i=1}^{N_M}$ with $\bigo{N_M}$ time.

For each $j$ with $\theta_j>0$, we additionally need to compute the quantity 
$\bra{j}U O_0 U^\dag\ket{j}$.
Assuming that the $d\times d$ matrix representation of $U$ is accessible. Let $e_j$ denote the $d$-dimensional computational-basis vector corresponding to $\ket{j}$.
Then
\begin{equation}
\bra{j}U O_0 U^\dag\ket{j}
=v_j^\dagger O_0 v_j,\qquad
v_j=U^\dagger e_j.
\end{equation}
Since $v_j$ is simply the $j$-th column of $U^\dagger$, obtaining this vector requires $\bigo{d}$ time.
Then the matrix-vector product $O_0v_j$ requires $\bigo{d^2}$ operations for a general dense matrix $O_0$, and the final inner product $v_j^\dagger (O_0 v_j)$ costs $\bigo{d}$ time.
Therefore, evaluating $\bra{j}U O_0 U^\dag\ket{j}$ requires $\bigo{d^2}$ time in the worst case.

If the observable $O_0$ has additional structure, the computational cost can be further reduced.
For example, if $O_0$ is sparse and has ${\rm nnz}(O_0)$ nonzero entries, then the  matrix-vector multiplication $v_j^\dagger O_0 v_j$  can be performed in $\bigo{{\rm nnz}(O_0)}$ time, which leads to an overall $\bigo{{\rm nnz}(O_0)+d}$ cost for computing $\bra{j}U O_0 U^\dag\ket{j}$. 
Alternatively, if $O_0$ has rank $r$, one can write $O_0=AB^\dagger$ with $A,B\in\mathbb{C}^{d\times r}$.
In this case, we have 
\begin{equation}
\bra{j}U O_0 U^\dag\ket{j}
=v_j^\dagger O_0 v_j = (A^\dagger v_j)^\dagger (B^\dagger v_j),
\qquad 
v_j=U^\dagger e_j, 
\end{equation}
which can be evaluated in $\bigo{rd}$ time.
In summary, the overall complexity for evaluating $\bra{j}U O_0 U^\dag\ket{j}$ is
\begin{equation}\label{eq:tildeTO}
\tilde T(O_0)= \bigo{ \min\!\big\{ d+{\rm nnz}(O_0), d\cdot {\rm rank}(O_0) \big\} }
\end{equation}

Because $k$ is a constant, each binomial coefficient $\binom{\theta_j}{k}$ in \eref{eq:GammakUSM3}  can be evaluated in $\bigo{1}$ time.
Thus, computing the sum
$\sum_{j:\theta_j>0}\binom{\theta_j}{k}\bra{j}U O_0 U^\dag\ket{j}$
costs $\bigo{N_M \tilde T(O_0)}$ time in total.
This leads to an $\bigo{N_M \tilde T(O_0)}$ overall time cost for evaluating $\hat{\Gamma}_k^U(O_0)$ in \eref{eq:GammakUSM2}.
Recall that we need to compute both $\hat M_k^U$ and $\hat{\Gamma}_k^U(O_0)$ for all $k=2,\dots,t$ and for $N_U$ distinct $U$.
Therefore, the overall classical computational complexity for estimating $\Tr(O\rho^t)$ via CBNE is
\begin{align}
T_2
&=T_1+N_U\cdot\sum_{k=2}^{t}\bigo{N_M \tilde T(O_0)}
=\bigo{N_UN_M \tilde T(O_0)}
\nonumber\\
&=\bigo{\max \left\{\frac{d^{1-1/t}\mathfrak{B}^{1/t}}{\epsilon^{2/t}},\,\frac{\mathfrak{B}}{\epsilon^2}\right\}
\cdot\min\!\Big\{ d+{\rm nnz}(O_0), d\cdot {\rm rank}(O_0) \Big\}}, 
\qquad 
\mathfrak{B}= \max \left\lbrace \Tr(O_0^2),1\right\rbrace, 
\end{align}
where the second equality holds because $t$ is a constant, and the third equality follows from \eref{eq:SampleCompCBNE} in the main text.

\subsection{Computational cost of PTME for PT moment estimation}

Here we analyze the classical computational complexity of the PTME protocol (in Algorithm~\ref{alg:PTME} in the End Matter) for estimating $p_t^{\PT}$, where $t\geq 2$ is a constant integer. 
The postprocessing stage involves computing the estimator
\begin{align}\label{eq:hatLambdaPTSM2}
\hat{\Lambda}_k^{U}
	:= \binom{N_M}{k}^{-1} \frac{d_{\!A}^{\,k}}{k!\, d_{\!A_1} }
	\sum_{1\leq i_1 <\dots < i_k \leq N_M} \left( \hat{r}_{i_1}\cdots \hat{r}_{i_k} \right) \times  \textbf{1}\big\{\hat{b}_{i_1}=\dots=\hat{b}_{i_k}\big\}, 
\end{align}
for $k=2,3,\dots,t$ across $N_U$ distinct random unitaries $U$, 
 where
$\mathbf{b}_{U} = \big\{\hat{b}_1, \dots, \hat{b}_{N_M}\big\}$ and $\mathbf{r}_{U} = \{\hat{r}_1, \dots, \hat{r}_{N_M}\}$ denote the measurement outcomes with $\hat r_i\in\{\pm1\}$ and $\hat b_i\in\{0,\dots,d_{A_1}-1\}$.

To evaluate this quantity efficiently, it is again convenient to group the
indices $1\le i\le N_M$ according to the value of $\hat b_i$.
For each $j\in\{0,\ldots,d_{A_1}-1\}$, define
$\theta_j = \bigl|\{i\,|\,\hat b_i=j\}\bigr|$ 
to be the number of occurrences of the outcome $j$ among the $N_M$ outcomes, and let $\mathcal S_j=\{i\,|\,\hat b_i=j\}$ 
denote the corresponding index set.
A $k$-tuple $(i_1,\ldots,i_k)$ contributes to the summation in
\eref{eq:hatLambdaPTSM2} only when all its associated outcomes  correspond to the same value
$j$.
Therefore, the summation can be rewritten as
\begin{equation}\label{eq:sumjPTMESM}
\sum_{1\le i_1<\cdots<i_k\le N_M}
(\hat r_{i_1}\cdots \hat r_{i_k})\times 
\mathbf 1\{\hat b_{i_1}=\cdots=\hat b_{i_k}\}
=
\sum_{j\,|\,\theta_j>0}
\sum_{\substack{i_1<\cdots<i_k\\ i_1,\dots,i_k\in \mathcal S_j}}
\hat r_{i_1}\cdots \hat r_{i_k}.
\end{equation}

To compute this expression, one first scans the lists
$\{\hat b_i\}_{i=1}^{N_M}$ and $\{\hat r_i\}_{i=1}^{N_M}$ and records the
values $\theta_j$ together with the corresponding values $\hat r_i$.
This step requires $\bigo{N_M}$ time.
Note that among the list $\{\hat b_i\}_{i=1}^{N_M}$ there can be at most $N_M$ distinct
values of $\hat b_i$, so the number of nonzero $\theta_j$ is at most $\bigo{N_M}$.
For each $j$ with $\theta_j>0$, we need to compute
\begin{equation}
\sum_{\substack{i_1<\cdots<i_k\\ i_1,\dots,i_k\in \mathcal S_j}}
\hat r_{i_1}\cdots \hat r_{i_k}.
\end{equation}
Because $\hat r_i=\pm1$ and $k$ is a constant, this quantity can be
computed in $\bigo{\theta_j}$ time.
Summing over all $j$ as in \eref{eq:sumjPTMESM} thus costs 
$\sum_j \bigo{\theta_j}=\bigo{N_M}$ time. 
Since the prefactor
$\binom{N_M}{k}^{-1}\frac{d_A^{\,k}}{k!\,d_{A_1}}$
in \eref{eq:hatLambdaPTSM2} can also be evaluated in $\bigo{1}$ time,
the overall time cost for computing $\hat{\Lambda}_k^U$ is $\bigo{N_M}$.

Recall that $\hat{\Lambda}_k^U$ needs to be computed for
$k=2,3,\dots,t$ across $N_U$ distinct $U$.
Thus, the total classical computational complexity for estimating
$p_t^{\PT}$ via PTME is
\begin{equation}
T_3=
N_U\cdot\sum_{k=2}^{t}\bigo{N_M}
=\bigo{N_U N_M}
=\bigo{ \max\left\lbrace d_{\!A} d_B \left( \frac{d_B}{d_{A}\epsilon^2} \right)^{1/t}, \frac{d_B^2}{\epsilon^2}  \right\rbrace },
\end{equation}
where the last equality follows from \tref{thm:PTResult} in the main text. 
Note that this computational complexity has the same order as the sample complexity $N_U N_M$, and is therefore the best scaling one can expect.
We remark again that the information of the random unitaries $U$ is not used in the postprocessing.

\section{Comparison with previous works}
\label{sec:comparison}

Numerous methods have been proposed in the literature for estimating nonlinear state properties such as $p_t$, $\Tr(O\rho^t)$, and $p_t^{\PT}$. These methods typically rely on collective operations performed over multiple copies of $\rho$~\cite{Quek2024multivariatetrace,liu2025estimating,Shin2025resourceefficient,shi2025near,subacsi2019entanglement,zhou2024hybrid,liu2024auxiliary,chen2025simultaneous,shi2025near,ye2025exponential,PhysRevLett.88.217901,bruni2004measurimg,zhang2025measuring,PhysRevLett.121.150503,PhysRevLett.94.040502,PhysRevA.99.062309,PhysRevLett.129.260501}, employing diverse techniques such as generalized swap tests~\cite{zhou2024hybrid,PhysRevLett.88.217901,bruni2004measurimg,PhysRevLett.94.040502,PhysRevA.99.062309}, block encoding~\cite{liu2025estimating,zhang2025measuring}, and deliberately constructed circuits~\cite{Quek2024multivariatetrace,liu2024auxiliary}, among others.
Beyond the sample-access model, several studies have further explored scenarios where a purification of the unknown state $\rho$ is available~\cite{liu2024exponential,zhang2025measuring,chen2025simultaneous}.

However, the primary focus of this work is estimation via \emph{single-copy} measurements. Thus, we refrain from further elaboration on these multi-copy approaches. Instead, in the remainder of this section, we survey existing works on nonlinear property estimation using single-copy operations, and compare them with our results.
The number of measurement settings required by these protocols are summarized in Table~\ref{tab:compare} of the End Matter.

\subsection{Sample-optimal estimation of \texorpdfstring{$p_2$}{}, \texorpdfstring{$\Tr(\sigma\rho)$}{}, and \texorpdfstring{$\Tr(O\rho^2)$}{}}

The estimation of the quantum state purity $p_2=\Tr(\rho^2)$ and the inner product $\Tr(\sigma\rho)$ between two unknown states with single-copy measurements has been studied in Refs.~\cite{PhysRevA.99.052323,brydges2019probing,PhysRevLett.124.010504,anshu2022distributed}. 
Their protocols rely on RMs, and have the following resource requirements for achieving an additive error $\epsilon$ \cite{anshu2022distributed}: 
\begin{align} \label{eq:NuNmt=2}
N_U = \bigo{\max\left\lbrace 1, \dfrac{1 }{d\epsilon^2} \right\rbrace}, 
\qquad 
N_M=  \bigo{\min\left\lbrace \frac{\sqrt{d}}{\epsilon}, d \right\rbrace}.
\end{align}
Here $d$ is the system dimension, $N_U$ is the number of global random unitaries used, and $N_M$ is the repetition number per unitary. 
This performance matches that of our CBNE protocol stated in \tref{thm:MomOResult} of the main text with $t=2$. 
The resulting sample complexity $N_U N_M=\mathcal O(\sqrt{d})$ achieves the optimal scaling with respect to $d$ among protocols based on single-copy operations \cite{anshu2022distributed,chen2022exponential,gong2024sample}.
Notably, the protocol in Ref.~\cite{anshu2022distributed} utilizes a collision estimator and can be viewed as a special case of our CBNE protocol with $t=2$ and $O=\openone$.

By generalizing the purity estimation protocol of Refs.~\cite{PhysRevA.99.052323,brydges2019probing}, a recent work \cite{du2025optimal} proposed an efficient protocol for estimating 
$\Tr(O\rho^2)$, termed the \emph{observable-driven RM} (ORM) protocol. 
When $O$ is a dichotomic observable (i.e., a Hermitian operator with eigenvalues $\pm1$), the resource costs of ORM coincide with those in \eref{eq:NuNmt=2}.  
For a general observable $O$ with $\|O\|\leq 1$, the number of measurement settings required by ORM becomes
\begin{align} \label{eq:NuNmt=2Liu}
N_U = \bigo{\max\left\lbrace 1, \dfrac{1 }{d\epsilon^2} \right\rbrace \log(\epsilon^{-1})}, 
\end{align}
and single-setting estimation is no longer achievable. 

ORM attains the optimal sample complexity with respect to $d$ when $\|O\|_1=\Theta(d)$ \cite{du2025optimal}. 
However, its experimental procedure depends on the target observable $O$ and therefore cannot be applied to simultaneously estimate many different observables, unlike CSE and CBNE. 
To address this issue, Ref.~\cite{du2025optimal} further proposed a \emph{braiding RM} (BRM) protocol, whose experimental stage is independent of the observable $O$. 
BRM requires $N_U=\bigo{1/\epsilon^2}$ random unitaries and $N_M=\bigo{\sqrt{d\Tr(O^2)}}$ measurements per unitary to estimate $\Tr(O\rho^2)$. 
Compared with this approach, our CBNE protocol achieves improved performance in both the setting number $N_U$ and sample cost $N_UN_M$, and  enables the estimation of more general higher-order functions $\Tr(O\rho^t)$.

\subsection{Estimation of \texorpdfstring{$p_t$}{} with the vEB protocol}

In Ref.~\cite{PhysRevLett.108.110503}, van Enk and Beenakker (vEB) proposed the first protocol for estimating state moments $p_t=\Tr(\rho^t)$ using single-copy measurements. 
The central idea of their protocol is as follows. 
Suppose one performs RMs on the state $\rho$ using a fixed random unitary $U$ sufficiently many times. 
From the measurement outcomes, one can obtain an accurate estimate of the outcome probabilities
$\Pr_\rho(b|U)=\bra{b}U\rho U^\dagger\ket{b}$ for $b\in\{0,\dots,d-1\}$, where $d$ is the system dimension. 
The protocol then averages these probabilities over many random unitaries to estimate the quantities
\begin{align}
\mathcal R_k(b):=\mathbb E_U[{\Pr}_\rho(b|U)^k], \qquad k=2,3\dots,t .
\end{align}
Using properties of Haar-random unitaries, one obtains
\begin{align}\label{eq:zeta_def_prlSM}
\kappa_k \, \mathcal R_k(b)
=
\kappa_k \, \underset{U\sim \mathcal E}{\E} \left[\<b|U\rho U^{\dag}|b\>^k\right]
= \kappa_k \Tr \left[ \rho^{\otimes k}  \Phi_{\mathcal E_{k}} \big( \ketbra{b}{b}^{\otimes k} \big) \right]
= \frac{1}{k!}\sum_{\pi\in \mathrm{S}_k}\Tr(\rho^{\otimes k}R_\pi) :=\zeta_k,
\end{align}
As discussed earlier, each $\zeta_k$ can be expressed as a polynomial of the moments $p_j$ with $j\le k$.
Thus, by estimating $\zeta_k$ for $k=1,2,\dots,t$ and inverting these polynomial relations, one can sequentially recover the desired moments $p_j$ \cite{PhysRevLett.108.110503}.

Reference~\cite{PhysRevLett.108.110503} did not analyze the sample complexity of the vEB protocol. 
In fact, a direct plug-in implementation of the vEB idea can be costly.
Indeed, since $\kappa_t=\binom{d+t-1}{t}=\mathcal O(d^t)$, Eq.~\eqref{eq:zeta_def_prlSM} implies that estimating $\zeta_t$ with constant additive error requires estimating $\mathcal R_t(b)$, and thus the probabilities $\Pr_\rho(b|U)$, to accuracy $\mu=\mathcal O(d^{-t})$. 
Achieving such precision requires at least on the order of $1/\mu^2=\mathcal O(d^{2t})$ measurement repetitions for each chosen random unitary, leading to a large measurement overhead.

Although our CBNE protocol also estimates state moments through the quantities $\zeta_k$, it is conceptually different from the vEB protocol. 
Instead of first estimating individual outcome probabilities $\Pr_\rho(b|U)$ and then computing their powers, our protocol directly estimates the power sums of these probabilities through collision statistics. 
This avoids the need for extremely precise probability estimation and leads to a substantially lower sample complexity. 
Furthermore, our protocol extends beyond moment estimation to more general nonlinear functions, including $\Tr(O\rho^t)$, principal component properties, and PT moments $p_t^{\PT}$, thereby enabling broader applications.

\subsection{Estimation of \texorpdfstring{$p_3^{\PT}$}{} with the ZZL protocol}
In Ref.~\cite{singlezhou}, Zhou, Zeng, and Liu (ZZL) proposed an efficient protocol for estimating the third-order PT moment 
$p_3^{\PT}=\Tr\left[(\rho_{AB}^{\top_{\! B}})^3\right]$ via RMs, with $\rho_{AB}$ being a quantum state on $\caH_A\otimes \caH_B$. 
The required number of random unitaries (measurement settings) and the total sample complexity are 
\begin{align}
N_U=\bigo{\frac{1}{\epsilon^2}}
\quad \text{and} \quad
N_UN_M=\bigo{ \frac{(d_Ad_B)^{2/3}}{\epsilon^2} }, 
\end{align}
respectively \cite{singlezhou}. 
Compared with this protocol, our PTME protocol offers advantages in several aspects. 
First, PTME  requires a significantly smaller setting number when $d_A\gg d_B$ (see \tref{thm:PTResult} in the main text). 
Second, the sample complexity for estimating $p_3^{\PT}$ with PTME reads 
\begin{align}
N_U N_M=  \mathcal O \left[\max\left\lbrace \frac{d_B^2}{\epsilon^2}, 
\left(\frac{d_A d_B^2}{\epsilon}\right)^{2/3} \right\rbrace \right], 
\end{align}
which is smaller than that of the ZZL protocol in the regime $d_B\leq \max\{\sqrt{d_A},1/\epsilon^2\}$. 
Third, PTME enables the estimation of higher-order PT moments, whereas the ZZL protocol is restricted to $p_3^{\PT}$. 
Fourth, PTME only needs local random unitaries on $\caH_A$ during the experimental stage, while the ZZL protocol requires global random unitaries on $\caH_A\otimes\caH_B$ \cite{singlezhou}, which can be experimentally challenging in some scenarios.

To address the last issue mentioned above, ZZL also proposed a variant protocol for estimating $p_3^{\PT}$ \cite{singlezhou}. 
This protocol applies local random unitaries on both $\caH_A$ and $\caH_B$, followed by Bell-basis measurements on each qubit pair shared between systems $A$ and $B$. 
However, this variant protocol only applies to quantum states with $d_A=d_B$, and its statistical performance remains unclear.

\subsection{Nonlinear property estimation with CSE}

Classical shadow estimation (CSE) is a recently developed framework for efficiently predicting properties of an unknown quantum state $\rho$ using RMs \cite{HKPshadow20}. 
In each experimental round of CSE, one first applies a unitary $U$ randomly drawn from a suitable ensemble $\mathcal E$ to the $d$-dimensional quantum state $\rho$, and then measures the rotated state $U\rho U^\dagger$ in the computational basis. 
The measurement outcome $\hat{b}$ together with the applied unitary $U$ constitutes the raw data of the experiment. 
From this data one can construct a classical estimator $\hat{\rho}$, referred to as a classical snapshot, which reproduces the original state $\rho$ in expectation: $\mathbb{E}[\hat{\rho}] = \rho$. 
By collecting $N$ independent snapshots $\{\hat{\rho}_1,\dots,\hat{\rho}_N\}$, one obtains a classical representation of the state that enables the estimation of properties of $\rho$ via classical postprocessing. 
For example, when $\mathcal E$ is a unitary 3-design (e.g., the Clifford group), the snapshot $\hat\rho$ has the form
\begin{align}\label{eq:hatrhoCSE}
\hat\rho
= (d+1) U^{\dag}|\hat b\>\<\hat b| U -\openone , 
\end{align} 
and the number of samples for estimating a linear property $\Tr(O\rho)$ within error $\epsilon$ reads $N=\bigo{\Tr(O_0^2)/\epsilon^2}$ \cite{HKPshadow20}.

A key feature of CSE, also shared by CBNE, is that the experimental stage is independent of the target property one wants to estimate. 
This independence enables simultaneous prediction of numerous distinct properties \cite{elben2023randomized,HKPshadow20}. 
Consequently, classical shadows provide an efficient approach for characterizing quantum systems and have found numerous applications in quantum learning and quantum many-body physics.

Beyond linear functions, 
CSE can naturally be used to predict nonlinear functions of quantum states 
\cite{elben2020mixedstate,PhysRevLett.127.260501,HKPshadow20,elben2023randomized,PRXQuantum.4.010303,PhysRevX.14.031035}. 
For instance, the moment $p_t=\Tr(\rho^t)$ can be estimated using $t$ distinct snapshots 
$\hat\rho_{i_1},\dots,\hat\rho_{i_t}$ obtained in different experimental rounds. 
Due to their independence, $\Tr(\hat\rho_{i_1}\hat\rho_{i_2}\cdots\hat\rho_{i_t})$ provides an unbiased estimate of $p_t$. 
To reduce the estimation error, one may use $N$ independent snapshots and average over all possible index tuples \cite{pelecanos2026beating,elben2020mixedstate}:
\begin{align}\label{eq:CSEptEstimator}
\widehat{p_t} = \frac{1}{N(N-1)\dots(N-t+1)} \sum_{\text{distinct} \ i_1,i_2,\dots,i_t \in\{1,\dots,N\}} 
\Tr(\hat\rho_{i_1}\hat\rho_{i_2}\cdots\hat\rho_{i_t}). 
\end{align}
Similarly, unbiased estimators for more general quantities such as $\Tr(O\rho^t)$ and the PT moments $p_t^{\PT}$ can be constructed in the same spirit \cite{PRXQuantum.4.010303,elben2020mixedstate}:
\begin{align}
\widehat{\Tr(O\rho^t)} 
&= \frac{1}{N(N-1)\dots(N-t+1)} \sum_{\text{distinct} \ i_1,i_2,\dots,i_t \in\{1,\dots,N\}}  \Tr(O \hat\rho_{i_1} \hat\rho_{i_2}\cdots \hat\rho_{i_t}),
\label{eq:CSEptOEstimator}
\\
\widehat{p_t^{\PT}} 
&= \frac{1}{N(N-1)\dots(N-t+1)} \sum_{\text{distinct} \ i_1,i_2,\dots,i_t \in\{1,\dots,N\}} 
\Tr\left[ \left(\hat\rho_{i_1}\right)^{\top_{\! B}} \left(\hat\rho_{i_2}\right)^{\top_{\! B}} \cdots \left(\hat\rho_{i_t}\right)^{\top_{\! B}}\right].  
\label{eq:CSEPTMEstimator}
\end{align}
However, for general $t\ge2$, the statistical performance of these estimators is not yet fully understood. 
A detailed analysis is currently available only for the estimator $\widehat{p_t}$ in Eq.~\eqref{eq:CSEptEstimator}, as recently studied in Ref.~\cite{pelecanos2026beating}. 
Below we briefly review their result and derive the corresponding sample complexity under the additive error criterion.

For constant integer $t$, Ref.~\cite[Theorem~5.2]{pelecanos2026beating} shows that $\widehat{p_t}$ has variance
\begin{align}
\Var \left[\widehat{p_t} \right]
= \mathcal O \left[\frac{1}{d} \sum_{j=0}^{t-1} \left( \frac{d}{N}\right)^{t-j} \Tr(\rho^{2j})\right].
\end{align}
By Chebyshev's inequality, 
$\Pr\left\{ \left|\widehat{p_t}-p_t\right| > \epsilon \right\}
\leq \Var\left[\widehat{p_t} \right]/\epsilon^2$. 
Hence, to achieve the target accuracy $\epsilon$ with high probability, one needs to choose
\begin{align}\label{eq:Nshadowpt}
N=  \bigo{\max\left\lbrace \frac{d}{\epsilon^{2/t}}, \frac{1}{\epsilon^2} \right\rbrace}
\end{align}
for a general quantum state $\rho$. 
Here $N$ characterizes both the number of measurement settings and the total number of samples consumed. 
To our knowledge, this represents the best known scaling for estimating higher-order moments of a general quantum state using single-copy measurements prior to our work.
In comparison, the sample complexity of our CBNE protocol for estimating $p_t$ is (see \eref{eq:SampleCompCBNE} in the main text): 
\begin{align}
N_U N_M
=
\bigo{\max\left\lbrace
\frac{d^{1-1/t}}{\epsilon^{2/t}},
\frac{1}{\epsilon^2}
\right\rbrace }.
\end{align}
Thus improves upon the scaling in \eref{eq:Nshadowpt} by a factor of $d^{1/t}$, implying that CBNE has a strictly better sample efficiency than CSE on nonlinear state moment estimation. 

Moreover, CBNE uses substantially fewer measurement settings than direct CSE-based nonlinear estimators. 
For moment estimation, the number of settings required by CBNE is only
$N_U =  \max\left\lbrace 1, c/(d\epsilon^2)\right\rbrace$
for some constant \(c\), which is much smaller than the setting number in Eq.~\eqref{eq:Nshadowpt}. 
This reduction is practically important: on many quantum platforms, repeated measurements under a fixed setting are relatively inexpensive, whereas changing the measurement basis can be time-consuming.
A similar distinction appears for estimating \(\Tr(O\rho^t)\) and \(p_t^{\PT}\), where the number of required measurement settings also scales with the total sample cost $N$, which is exponentially large with the system size.   
By contrast, CBNE and PTME have a much smaller setting complexity; see Table~\ref{tab:compare}.

Another limitation of direct CSE-based nonlinear estimation is its postprocessing cost. 
Evaluating the estimators in Eqs.~\eqref{eq:CSEptEstimator}, \eqref{eq:CSEptOEstimator}, and \eqref{eq:CSEPTMEstimator} requires constructing \(N\) snapshots \(\hat\rho\) and combining them over distinct \(t\)-tuples of indices. 
A straightforward implementation therefore involves \(\bigo{N^t}\) tuple products and the corresponding matrix multiplications, which can become prohibitive when \(N\) is large.
By contrast, CBNE and PTME process the data directly through collision statistics. 
As shown in Sec.~\ref{sec:computational_cost}, the postprocessing cost for estimating \(p_t\) and \(p_t^{\PT}\) scales only linearly in \(N\), 
and the applied random unitaries \(U\) need not be used in postprocessing, unlike CSE. 
For estimating \(\Tr(O\rho^t)\), the CBNE postprocessing cost is \(\bigo{N\,\tilde T(O_0)}\), where \(\tilde T(O_0)\) is defined in Eq.~\eqref{eq:tildeTO}. 
Thus, for the nonlinear quantities considered here, the collision-based estimators of our protocols substantially reduce the postprocessing overhead compared with 
direct CSE snapshot-product estimators.

\subsection{Single-setting CSE via ancillary systems}

Two recent works~\cite{PhysRevX.13.011049,PhysRevLett.131.160601} propose a variant of the CSE approach that enables property estimation using a single measurement setting. 
In each experimental round of their protocol, the $n$-qubit state $\rho$ of interest is first entangled with a sufficiently large ancillary system. Next, instead of applying randomly sampled unitaries, a fixed global unitary $U$ or a fixed Hamiltonian time evolution is applied to the joint system consisting of $\rho$ and the ancillas, followed by a measurement in the computational basis. 
From the resulting measurement outcomes, one can construct classical shadow snapshots $\hat{\rho}$ that remain unbiased estimators of $\rho$. 
These snapshots can then be used in classical postprocessing to estimate various functions of $\rho$. 
For linear properties, Ref.~\cite{PhysRevLett.131.160601} shows that this variant protocol achieves a sample efficiency comparable to the standard CSE protocol, while its performance for nonlinear properties remains unclear.

The underlying principle of this approach is that, when the ancillary system is sufficiently large, the overall experimental procedure (adding ancillas, unitary evolution, and computational-basis measurement) effectively implements a \emph{tomographically complete} POVM on the original state $\rho$. 
Consequently, the full information of $\rho$ can in principle be reconstructed from the measurement data.

The notable feature of this variant CSE approach, namely, property estimation from a single measurement setting, is similar to our CBNE and PTME protocols. 
However, this advantage comes at the cost of requiring a large ancillary system, whose qubit number scales as (see Ref.~\cite[Theorem~1]{PhysRevLett.131.160601}): 
\begin{align}
n_a = \bigo{n+\log(\epsilon^{-1})}. 
\end{align}
Such a requirement can be prohibitive for near-term quantum devices when the system size $n$ is large.
In contrast, our CBNE and PTME protocols require far fewer ancillary qubits. 
As shown in the main text, in common scenarios our protocols achieve a single measurement setting without introducing any ancillary qubits. 
Even in the worst case, the number of ancillary qubits needed is bounded by
\begin{align}
n_a \le 2\log_2(\epsilon^{-1}) + \bigo{1},
\end{align}
which is completely independent of the system size $n$ of the original system. 
This represents a substantial reduction in ancillary-system overhead compared with Refs.~\cite{PhysRevX.13.011049,PhysRevLett.131.160601}, particularly for large-scale quantum systems.

\section{Estimation of other nonlinear functions with the CBNE framework}\label{sec:OtherNonlinear}

In this section, we discuss how our CBNE framework can be extended beyond $\Tr(O\rho^t)$, $\Tr(\rho^t)$, and $\Tr\left[(\rho_{AB}^{\top_{\! B}})^t\right]$. 
The main purpose here is to illustrate that the central idea of CBNE, namely, extracting information from collision statistics under single-copy randomized measurements, applies to a broader class of nonlinear quantities as well. 
We present two representative examples, 
showing how our framework can be adapted to estimate the expectation values of two-body observables and general multivariate functions.
 
\subsection{Estimation of two-body observables}
We first consider the estimation of $\Tr(O \rho^{\otimes 2})$, where $\rho$ is an unknown quantum state on $\caH$ and $O$ is a two-body observable on $\caH^{\otimes 2}$. 

Suppose that copies of $\rho$ are prepared sequentially. 
In each experimental round, we apply a Haar-random unitary $U$ to $\rho$, followed by a computational-basis measurement. 
Repeating this procedure $N_M$ times with the same unitary $U$ produces a list of outcomes
$\mathbf{b}_U=\{\hat b_1,\hat b_2,\dots,\hat b_{N_M}\}$. 
From these data, we define the following estimator:
\begin{align}\label{eq:defDelta}
\hat{\Delta}^{U}:=    
\frac{ 24\kappa_{4}}{d}  \binom{N_M}{2}^{-1}
\sum_{ 1\leq i_1 < i_2 \leq N_M} 
\textbf{1}\left\{\hat{b}_{i_1}=\hat{b}_{i_2}\right\} \cdot \Tr \left[ \left(U^{\dag}|\hat{b}_{i_1}\>\<\hat{b}_{i_1}|U\right)^{\otimes 2}O\right].    
\end{align}
This estimator is again constructed from collision events, in the same spirit as the basic CBNE protocol.
As proved below, the expectation value of $\hat{\Delta}^{U}$ is given by
\begin{align}\label{eq:expDelta}
\E \left[ \hat{\Delta}^{U} \right] 
&= 2 \Tr(O \rho^{\otimes 2}) + 2 \Tr(\mathbb{S}O \rho^{\otimes 2})+ [1+\Tr(\rho^2)] [\Tr(O) + \Tr(\mathbb{S}O)] + 2 \Tr(O_1\rho)+ 2 \Tr(O_2\rho)
\nonumber \\
&\quad\ 
+ 2 \Tr[(\mathbb{S}O)_1\rho]
+ 2 \Tr[(\mathbb{S}O)_2\rho] 
+ 2 \Tr(O_1\rho^2)+ 2 \Tr(O_2\rho^2) + 2 \Tr[(\mathbb{S}O)_1\rho^2] 
+ 2 \Tr[(\mathbb{S}O)_2\rho^2] 
\end{align}
where $\mathbb{S}$ denotes the swap operator on $\caH^{\otimes 2}$; 
$O_1:=\Tr_2(O)$ and $O_2:=\Tr_1(O)$
denote the reduced operators on the first and second subsystem, respectively; the same notation is used for $(\mathbb{S}O)_1$ and $(\mathbb{S}O)_2$.

Equation \eqref{eq:expDelta} shows that $\E \left[ \hat{\Delta}^{U} \right]$ is a linear combination of the target quantity $\Tr(O\rho^{\otimes 2})$, its swapped counterpart $\Tr(\mathbb{S}O\rho^{\otimes 2})$, and several lower-order quantities involving $\rho$ and $\rho^2$. 
Importantly, all of the latter terms, namely
\begin{align}
\Tr(\rho^2),\ 
\Tr(O_1\rho),\ \Tr(O_2\rho),\ 
\Tr(O_1\rho^2),\ \Tr(O_2\rho^2),\ 
\Tr((\mathbb{S}O)_1\rho),\ \Tr((\mathbb{S}O)_2\rho),\ 
\Tr((\mathbb{S}O)_1\rho^2),\ \Tr((\mathbb{S}O)_2\rho^2),
\end{align}
can be estimated using the standard CBNE protocol (see Theorem~\ref{thm:MomOResult} in the main text). 
Therefore, combining $\hat{\Delta}^{U}$ with these CBNE estimators yields an estimation of
\begin{align}\label{eq:sym_target}
\frac{1}{2}\Big[\Tr(O\rho^{\otimes 2})+\Tr(\mathbb{S}O\rho^{\otimes 2})\Big].
\end{align}

In particular, if $\rho$ is a pure state, then $\rho^{\otimes 2}\mathbb{S}= \rho^{\otimes 2}$, and \eqref{eq:sym_target} reduces exactly to the desired quantity $\Tr(O\rho^{\otimes 2})$. 
Moreover, if $O$ is an observable supported on the symmetric subspace of $\caH^{\otimes 2}$, then
$\mathbb{S}O=O$, 
and \eqref{eq:sym_target} also reduces to $\Tr(O\rho^{\otimes 2})$. Many important observables satisfy this requirement, including 
the projector $\Pi_{\rm sym}^{(2)}$ onto the symmetric subspace, arbitrary spectral observables of the form
$O=\sum_j \lambda_j \ketbra{\phi_j}{\phi_j}$, 
where each $\ket{\phi_j}$ lies in the symmetric subspace.

In conclusion, the nonlinear function $\Tr(O\rho^{\otimes 2})$ can also be estimated within the CBNE framework if $\rho$ is known to be pure or $O$ is supported on the symmetric subspace.  
In these cases, we expect the same concentration mechanism as in the standard CBNE protocol to yield single-setting estimation in the large-dimensional or few-ancilla regimes; a rigorous analysis is left for future work.
Moreover, since the experimental procedure is independent of the observable $O$, the same measurement data can be reused to estimate many different two-body observables simultaneously. 
\begin{proof}[Proof of \eref{eq:expDelta}]
The expectation value of $\hat{\Delta}^{U}$ (with respect to the randomness in unitary selection and measurement outcomes) reads: 
\begin{align}
\E \left[ \hat{\Delta}^{U}\right] 
&=\underset{U }{\E} \, \underset{\,\mathbf{b}_{U}}{\E} \left[ \hat{\Delta}^{U} \Big| U \right]
 =\underset{U }{\E} \left\{ \frac{24\kappa_{4}}{d} \sum_{b=0}^{d-1} 
            \Pr\left(\hat{b}_{i_1}=\hat{b}_{i_2}=b \Big|U \right) \cdot \Tr \left[ \left(U^{\dag}|b\>\<b|U\right)^{\otimes 2}O\right]\right\} 
\nonumber \\
&=\frac{24\kappa_{4}}{d}\, \sum_{b=0}^{d-1} \, \underset{U }{\E} \left\{ \<b|U\rho U^{\dag}|b\>^2 \cdot \Tr \left[ \left(U^{\dag}|b\>\<b|U\right)^{\otimes 2}O\right] \right\} 
=
\frac{24\kappa_{4}}{d} \,\sum_{b=0}^{d-1} \,\underset{U }{\E} \left\{ \Tr \left[ \left( \rho^{\otimes 2}\otimes O \right) \left(U^{\dag}|b\>\<b|U\right)^{\otimes 4}\right] \right\}
\nonumber \\
&=
\frac{24\kappa_{4}}{d} \,\sum_{b=0}^{d-1} \Tr \left[ \left( \rho^{\otimes 2}\otimes O \right)  
\Phi_{\rm H} \big( \ketbra{b}{b}^{\otimes {4}} \big) \right]
\stackrel{(a)}{=}
\sum_{\pi\in \mathrm{S}_{4}}  \Tr\left[ \left(\rho^{\otimes 2} \otimes O \right)R_\pi\right]
\nonumber \\
&= 2 \Tr(O \rho^{\otimes 2}) + 2 \Tr(\mathbb{S}O \rho^{\otimes 2})+ [1+\Tr(\rho^2)] [\Tr(O) + \Tr(\mathbb{S}O)] + 2 \Tr(O_1\rho)+ 2 \Tr(O_2\rho)
\nonumber \\
&\quad\ 
+ 2 \Tr[(\mathbb{S}O)_1\rho]
+ 2 \Tr[(\mathbb{S}O)_2\rho] 
+ 2 \Tr(O_1\rho^2)+ 2 \Tr(O_2\rho^2) + 2 \Tr[(\mathbb{S}O)_1\rho^2] 
+ 2 \Tr[(\mathbb{S}O)_2\rho^2] , 
\end{align}
where $(a)$ follows from the Haar-twirling identity in \eref{eq:PhiHaar}.
\end{proof}

\subsection{Estimation of multivariate functions}

We next discuss the estimation of multivariate nonlinear functions of the form
\begin{align}\label{eq:multivariate_target}
\Tr(O_1\rho_1^{t_1} O_2\rho_2^{t_2}\cdots O_m\rho_m^{t_m}),
\end{align}
where $t_1,t_2,\dots,t_m$ are positive integers, $\rho_1,\dots,\rho_m$ are possibly distinct quantum states on $\caH$, and $O_1,\dots,O_m$ are observables on  $\caH$. 
Such quantities arise in tasks such as bounding the quantum Fisher information \cite{PRXQuantum.5.030338,PhysRevLett.127.260501,PhysRevLett.134.110802}, cross-platform verification \cite{anshu2022distributed,PhysRevLett.124.010504}, multivariate trace estimation \cite{Quek2024multivariatetrace}, and many other quantum information-processing applications.

To estimate \eqref{eq:multivariate_target}, it suffices to construct  an operator-valued estimator of $\rho_j^{t_j}$ for each $j=1,\dots,m$. 
This can again be achieved within the CBNE framework. 
Suppose copies of a fixed state $\rho$ are prepared sequentially. 
In each experimental round, we apply a Haar-random unitary $U$ to $\rho$, followed by a computational-basis measurement. 
Repeating this procedure $N_M$ times with the same $U$ yields a list of measurement outcomes
$\mathbf{b}_U=\{\hat b_1,\dots,\hat b_{N_M}\}$.
From these data, we define the estimator
\begin{align}\label{eq:defUpsilon}
\hat{\Upsilon}_k^{U} (\rho) :=    
\frac{ \kappa_{k+1}}{d}  \binom{N_M}{k}^{-1}
\sum_{ 1\leq i_1 < \dots < i_k \leq N_M} 
\textbf{1}\left\{\hat{b}_{i_1}=\dots=\hat{b}_{i_k}\right\} \cdot U^{\dag}|\hat{b}_{i_1}\>\<\hat{b}_{i_1}|U .  
\end{align}
As proved below, its expectation value is
\begin{align}\label{eq:expUpsilon}
\E \left[ \hat{\Upsilon}_k^{U} (\rho) \right] 
=\gamma_k(\rho):=
\frac{1 }{(k+1)! } \sum_{\pi\in \mathrm{S}_{k+1}}  \Tr_{1,\dots,k}\left[ \left(\rho^{\otimes k} \otimes \openone \right)R_\pi\right], 
\end{align}
where $\Tr_{1,\dots,k}$ denotes the partial trace over the first $k$ systems.

Note that the operator $\gamma_k(\rho)$ can be expanded as a polynomial in the moments $p_i=\Tr(\rho^i)$
and the state powers $\rho^j$, with $2\le i\le k$ and $0\le j\le k$. 
For the first few values of $k$, one finds
\begin{align}\label{eq:gammaExpan}
\gamma_1(\rho) &= \frac{\openone}{2}+\frac{\rho}{2}, 
\nonumber \\   
\gamma_2(\rho) &= \frac{(1+p_2)\openone}{6}+\frac{\rho}{3}+\frac{\rho^2}{3},
\nonumber \\ 
\gamma_3(\rho) &= \frac{(1+3p_2+2p_3) \openone}{24} + \frac{(1+p_2) \rho}{8} + \frac{\rho^2}{4} +\frac{\rho^3}{4}, 
\quad   \dots   
\end{align}
Therefore, combining the estimators $\hat\Upsilon_i^U(\rho)$ with the CBNE estimators for the moments $p_2,\dots,p_k$, one can recursively reconstruct estimators of $\rho,\rho^2,\dots,\rho^k$. 
For instance,
\begin{align}
\rho &= 2\gamma_1(\rho)-\openone, 
\nonumber \\
\rho^2 &= 3\gamma_2(\rho)-2\gamma_1(\rho)+\frac{1-p_2}{2}\openone,
\nonumber \\
\rho^3
&=
4\gamma_3(\rho)
-3\gamma_2(\rho)
+(1-p_2)\gamma_1(\rho)
+\frac{3p_2-1-2p_3}{6}\openone, 
\end{align}
and similarly for higher powers. 
We denote by $\widehat{\rho^k}$ an estimator of $\rho^k$ constructed in this way.

Applying this construction independently to each state $\rho_j$, we obtain estimators
$\widehat{\rho_1^{t_1}},\ \widehat{\rho_2^{t_2}},\ \dots,\ \widehat{\rho_m^{t_m}}$.
Then
\begin{align}
\Tr \left(
O_1\widehat{\rho_1^{t_1}}
O_2\widehat{\rho_2^{t_2}}
\cdots
O_m\widehat{\rho_m^{t_m}}
\right)
\end{align}
can be used to approximate the target quantity \eqref{eq:multivariate_target}, provided that the random data used to construct the estimators $\widehat{\rho_j^{t_j}}$ are independent across different $j$.

We remark that this independence requirement means that the random unitaries used for different states $\rho_j$ should be chosen independently. 
Therefore, unlike the single-state quantities considered before, one should not expect a single measurement setting to suffice in general when $m\ge 2$. 
Nevertheless, the total number of required settings may still remain very small: we expect that $m$ independent settings, one for each state $\rho_j$, should be sufficient in the large-dimensional regime or with the aid of a few ancillary qubits.
We leave a rigorous performance analysis for future work. 
Finally, we emphasize again that, since the experimental protocol for each $\rho_j$ is independent of $O_1,\dots,O_m$, the same measurement data can be reused to estimate many different multivariate functions simultaneously.

\begin{proof}[Proof of \eref{eq:expUpsilon}]
The expectation of $\hat{\Upsilon}_k^{U} (\rho)$ (with respect to the randomness in unitary selection and measurement outcomes) reads: 
\begin{align}
\E \left[ \hat{\Upsilon}_k^{U} (\rho) \right] 
&=\underset{U }{\E} \, \underset{\,\mathbf{b}_{U}}{\E} \left[ \hat{\Upsilon}_k^{U}(\rho)  \Big| U \right]
 =\underset{U }{\E} \left[ \frac{\kappa_{k+1}}{d} \sum_{b=0}^{d-1} 
            \Pr\left(\hat{b}_{i_1}=\dots=\hat{b}_{i_k}=b \,|U \right) \cdot U^{\dag}|b\>\<b|U\right] 
\nonumber \\
&=\frac{\kappa_{k+1}}{d}\, \sum_{b=0}^{d-1} \, \underset{U }{\E} \left[ \<b|U\rho U^{\dag}|b\>^k \cdot U^{\dag}|b\>\<b|U \right] 
\stackrel{(a)}{=} 
\frac{\kappa_{k+1}}{d} \sum_{b=0}^{d-1} \Tr_{1,\dots,k} \left[ \left( \rho^{\otimes k}\otimes \openone \right)  
\Phi_{\rm H} \big( \ketbra{b}{b}^{\otimes {(k+1)}} \big) \right]
\nonumber \\
&\,\stackrel{(b)}{=}
\frac{1 }{(k+1)! } \sum_{\pi\in \mathrm{S}_{k+1}}  \Tr_{1,\dots,k}\left[ \left(\rho^{\otimes k} \otimes \openone \right)R_\pi\right]
=\gamma_k(\rho) , 
\end{align}
where $(a)$ holds because $\<b|U\rho U^{\dag}|b\>^k \cdot U^{\dag}|b\>\<b|U =\Tr_{1,\dots,k} \left[ \left( \rho^{\otimes k}\otimes \openone \right) 
\big( U^{\dag}\ketbra{b}{b} U\big)^{\otimes (k+1)}\right]$, and $(b)$ follows from \eref{eq:PhiHaar}.
\end{proof}

\section{More numerical results}\label{sec:moreNumerical}

In this section, we provide additional numerical results supporting the performance of CBNE and PTME. 
Throughout this paper, the random unitaries used in the simulations are generated by finite-depth two-local brickwork circuits rather than exact global unitary designs. 
Specifically, a brickwork circuit consists of alternating layers of nearest-neighbor two-qubit gates: odd layers act on qubit pairs \((1,2),(3,4),\ldots\), while even layers act on pairs \((2,3),(4,5),\ldots\). 
Each two-qubit gate is independently sampled from the Haar measure. 
Such local random circuits are known to form approximate unitary designs
\cite{Brandao2016LocalDesign,Haferkamp2022Interleaved}. 
Thus, our numerical results also serve as a practical check that the predicted performance of CBNE and PTME persists when exact unitary designs are replaced by experimentally more feasible approximate designs, consistent with the theoretical guarantees in the End Matter.

\subsection{Sample-complexity scaling with system size}

In this subsection, we provide additional numerical evidence for the sample-complexity scalings predicted by Theorems~\ref{thm:MomOResult} and~\ref{thm:PTResult} in the main text.
We focus on third-order quantities, since \(t=3\) already captures the nonlinear scaling behavior of our estimators.

We first consider the estimation of the state moment \(p_3=\Tr(\rho^3)\) using the CBNE protocol.
According to \tref{thm:MomOResult}, the total number of measurements required to estimate \(p_3\) up to a constant additive error \(\epsilon\) scales as
\begin{equation}
N_{\mathrm{tot}}^{\mathrm{CBNE}}
= N_U N_M
= \bigo{d^{2/3}}
= \bigo{2^{\frac{2}{3}n}},
\end{equation}
where \(n\) is the number of qubits in \(\rho\) and \(d=2^n\) is the system dimension.

To demonstrate this prediction, we use the same noisy TFIM ground states as in the main text,
\(\rho=(1-p)|\mathrm{gs}\rangle\langle\mathrm{gs}|+p\mathbb{I}/d\),
with \(J=h=1\) and \(p=0.2\).
For each system size \(n\), we fix the number of random unitaries to \(N_U=200\), and then determine the smallest number of measurements per unitary \(N_M\) such that the statistical error falls below the threshold \(0.1\).
Here \(N_U\) is chosen to be larger than one only to suppress statistical fluctuations and make the scaling with \(n\) more clearly visible in the numerics.
This choice is made only for numerical stability and visualization; it does not contradict our theoretical result that, when the system dimension is sufficiently large, single-setting estimation with $N_U=1$ is already sufficient. 
Our numerical results, shown in Fig.~\ref{fig:Samplevsn}(a), indicate that the fitted scaling of the total sample complexity $N_U N_M$ is consistent with the theoretical prediction
\(N_{\mathrm{tot}}^{\mathrm{CBNE}}\propto 2^{2n/3}\).

\begin{figure}[b]
\centering
\includegraphics[width=0.7\linewidth]{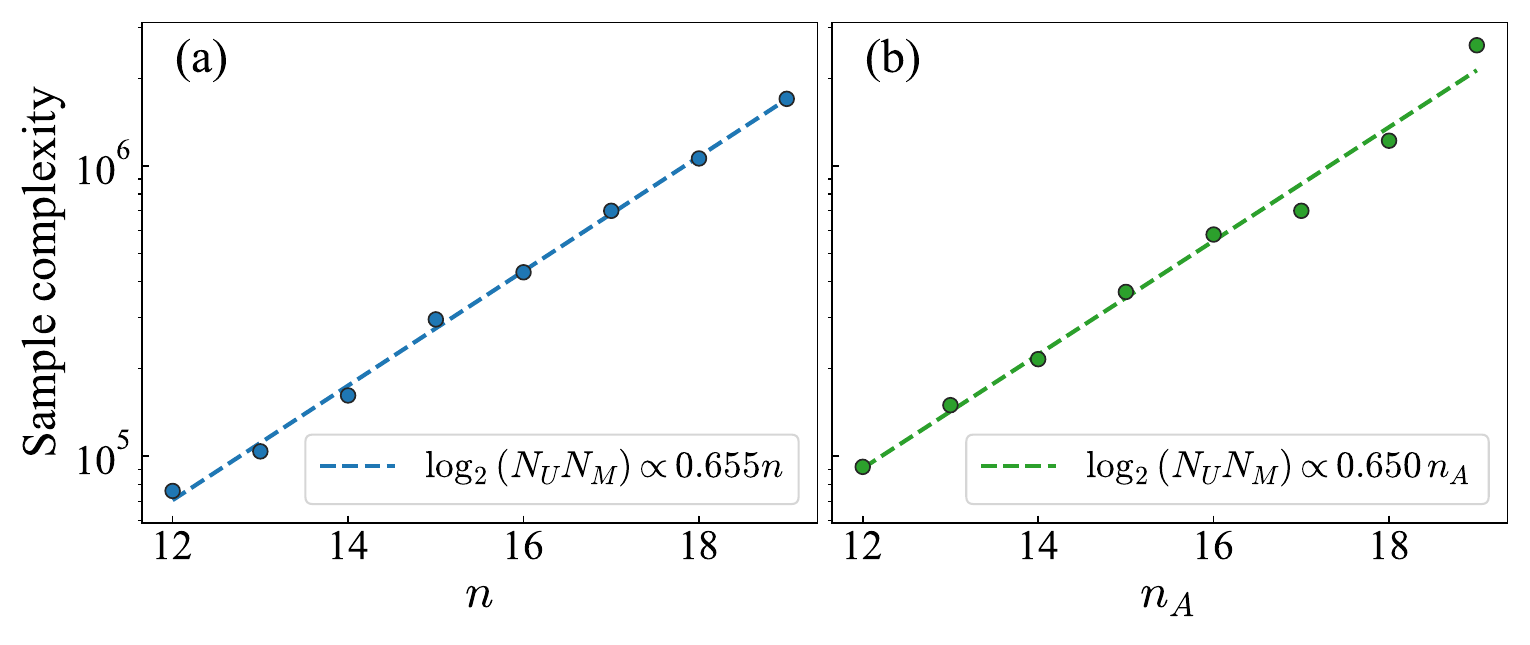}
\caption{Scaling of the minimal number of measurements $N_UN_M$ required to estimate (a) the state moment $\Tr(\rho^3)$ and (b) the PT moment $\Tr\!\big[(\rho_{AB}^{\top_{\! B}})^3\big]$ up to a fixed statistical error of 0.1, for fixed $N_U=200$. 
Here $\rho=(1-p)|\mathrm{gs}\>\<\mathrm{gs}|+pI/d$, where $|\mathrm{gs}\>$ is the ground state of $n$-qubit TFIM with $J=h=1.0$ and depolarizing probability $p=0.2$. 
For each data point, we choose the minimal $N_M$ such that the statistical error $\le0.1$ averaged over $50$ experiments. 
In panel (b), the total system has \(n=n_A+1\) qubits; subsystem \(A\) consists of the first \(n_A\) qubits, and subsystem \(B\) is the last qubit.}
\label{fig:Samplevsn}
\end{figure}

We next examine the sample complexity of our PTME protocol. 
For estimating the PT moment \(p_3^{\PT}=\Tr\!\big[(\rho_{AB}^{\top_{\! B}})^3\big]\), \tref{thm:PTResult} in the main text gives the required total number of measurements
\begin{equation}
N_{\mathrm{tot}}^{\mathrm{PTME}}
=
N_U N_M
=
\mathcal O\left[
\max\left\{
\frac{d_B^2}{\epsilon^2},
\left(\frac{d_A d_B^2}{\epsilon}\right)^{2/3}
\right\}
\right].
\label{eq:PTME_sample_complexity_num}
\end{equation}
When the subsystem \(B\) and the target accuracy \(\epsilon\) are fixed, this becomes
\begin{equation}\label{eq:PT3order}
N_{\mathrm{tot}}^{\mathrm{PTME}}
=
\bigo{d_A^{2/3}}=\bigo{2^{\frac23 n_A}}, 
\end{equation}
where $n_A$ and $n_B$ denote the qubit number of subsystems $A$ and $B$, respectively. 
Figure~\ref{fig:Samplevsn}(b) verifies this theoretical prediction numerically, in which we fix \(N_U=200\), $n_B=1$, and choose, for each \(n_A\), the smallest number \(N_M\) required to reach the same target statistical error 0.1. 
The fitted exponent is close to the expected value \(2/3\), supporting the sample-complexity scaling in \eref{eq:PT3order}. 

Together, Fig.~\ref{fig:Samplevsn}(a,b) confirms that both CBNE and PTME exhibit the predicted sample-complexity scaling that is sublinear in the system dimension for estimating nonlinear moment quantities, which is consistent with our theoretical results.

\subsection{Exponential error suppression in CBNE and PTME}

In this subsection, we provide additional numerical evidence for the dimension-dependent error suppression predicted by our theory. 
Recall that, for CBNE, Theorem~\ref{thm:MomOResult} shows that the number of measurement settings required to estimate \(\Tr(O\rho^t)\) is
$N_U=\max\left\{1,\,c\mathfrak B/(d\epsilon^2) \right\}$,
where $c$ is a constant factor, \(d\) is the system dimension, and \(\mathfrak B=\max\{\Tr(O_0^2),1\}\). 
Thus, increasing the dimension \(d\) reduces the required setting number and eventually drives the protocol into the single-setting regime. 
Equivalently, when \(N_U=1\) is fixed, the statistical estimation error is expected to decrease as the effective dimension increases. 
This is the mechanism behind the ancilla-assisted version of CBNE: appending \(n_a\) ancillary qubits increases the effective dimension from \(d\) to \(2^{n_a}d\), while preserving the target value \(\Tr(O\rho^t)\).

\begin{figure}[b]
\centering
\includegraphics[width=0.9\textwidth]{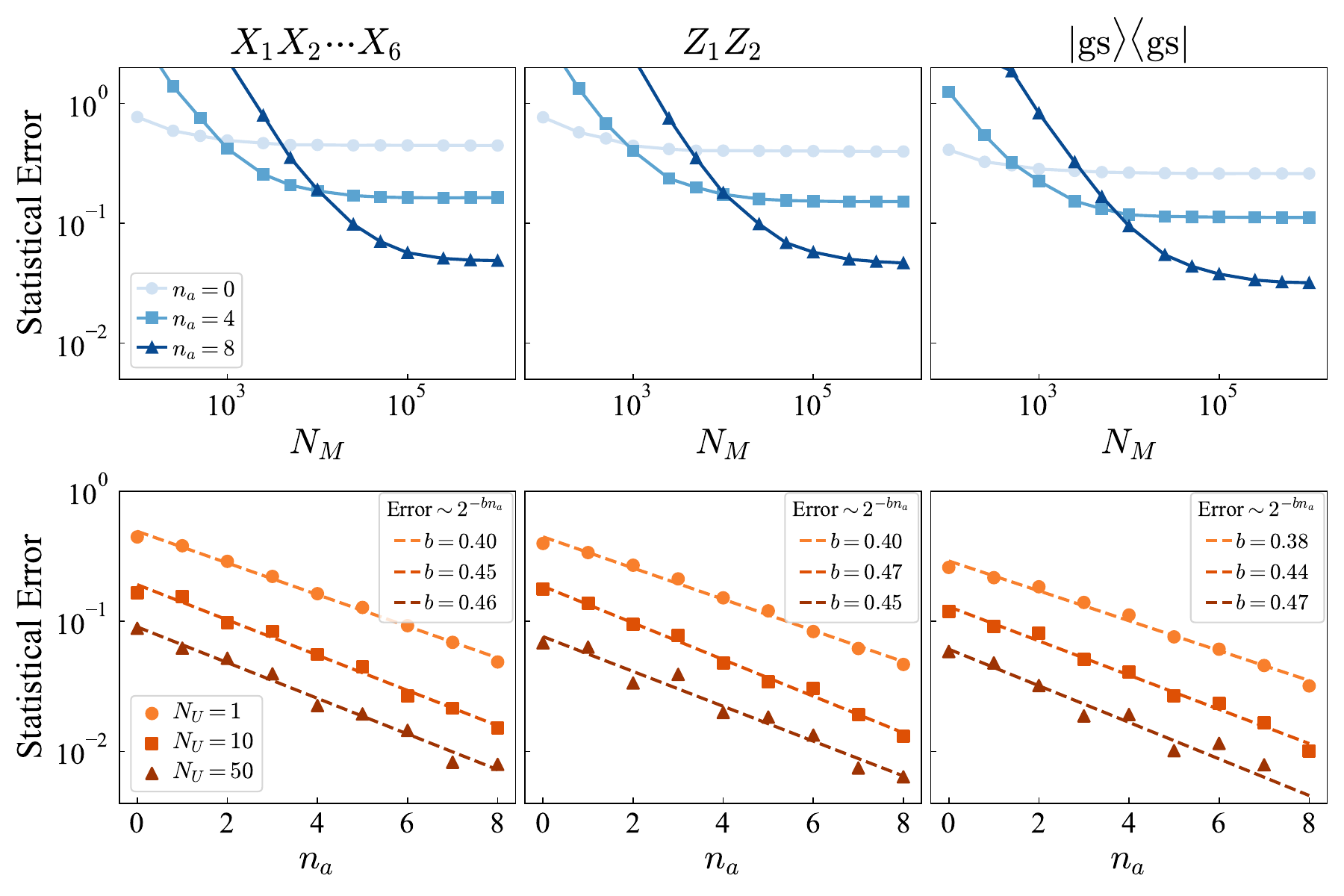}
\caption{Average statistical error in estimating \(\Tr(O\rho^3)\) via CBNE.
Here \(\rho\) is the Gibbs state of the 6-qubit 1D TFIM with \(J=h=1.0\) and \(\beta=2.0\).
The observables are Pauli operators \(O=X_1X_2\cdots X_6\), \(O=Z_1Z_2\), and the ground state of the TFIM $|\mathrm{gs}\>\<\mathrm{gs}|$. 
The upper plots show the error scaling with \(N_M\) for \(N_U=1\) and different numbers of ancillary qubits \(n_a\).
The lower plots show the error scaling with \(n_a\) for fixed \(N_M=10^6\) and different values of \(N_U\).
}
\label{fig:ancilla}
\end{figure}

\begin{figure}
\centering
\includegraphics[width=0.7\textwidth]{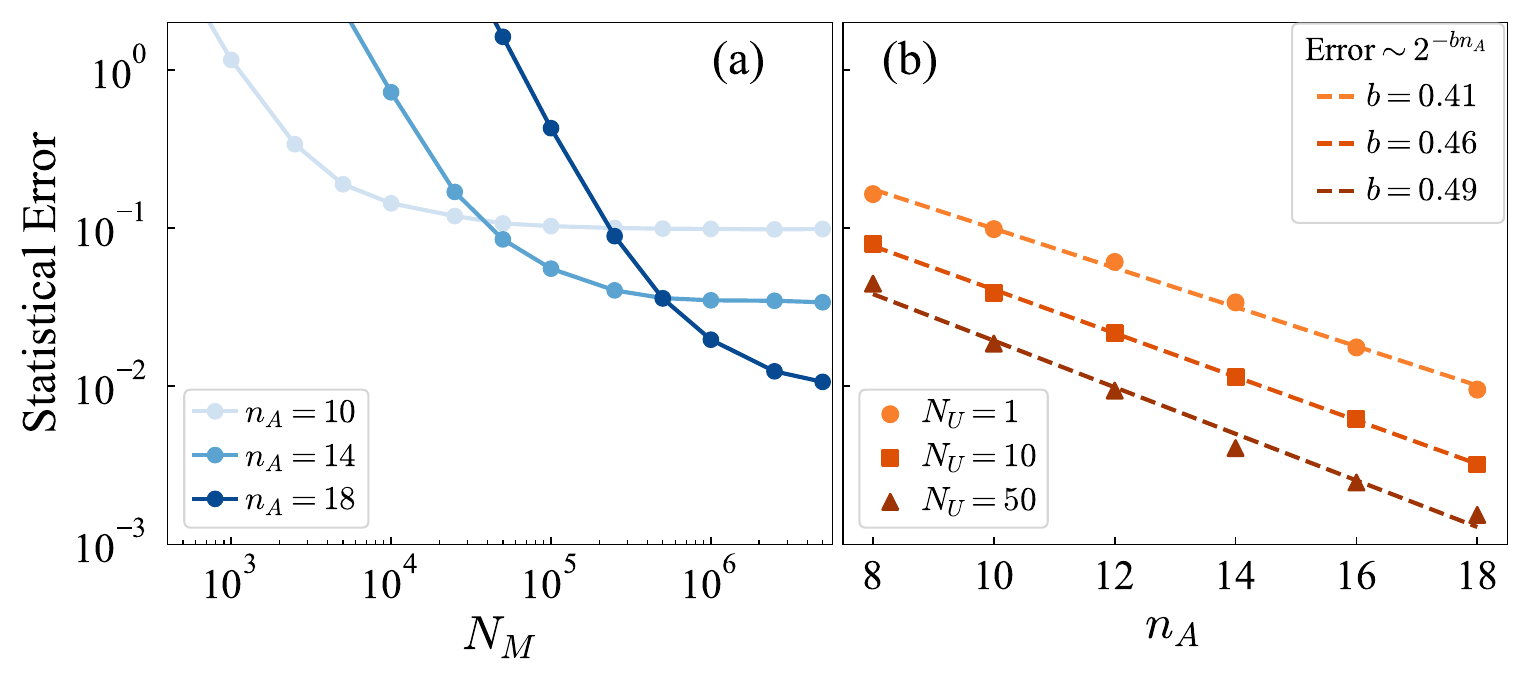}
\caption{Average statistical error in estimating \(p_4^{\PT}\) via PTME.   Here \(\rho=(1-p)|\mathrm{gs}\rangle\langle\mathrm{gs}|+pI/d\), with \(p=0.3\), and \(|\mathrm{gs}\rangle\) is the ground state of the 1D TFIM with \(J=h=1.0\).
The size of subsystem \(B\) is fixed at \(n_B=2\).
(a) Error scaling versus \(N_M\) for \(N_U=1\) and different subsystem sizes \(n_A\).
(b) Error scaling versus \(n_A\) for fixed \(N_M=10^7\) and different \(N_U\). }
\label{fig:SE_nA}
\end{figure}

We first test this prediction for CBNE. 
The target state $\rho$ is chosen as a Gibbs state of the six-qubit 1D TFIM, and we estimate nonlinear quantities \(\Tr(O\rho^3)\) for three observables:
\[
X_1X_2X_3X_4X_5X_6,\qquad Z_1Z_2I_3I_4I_5 I_6,\qquad |\mathrm{gs}\rangle\langle \mathrm{gs}|,
\]
where $X_i$ and $Z_i$ denote the Pauli-$X$ and Pauli-$Z$ operators acting on the $i$-th qubit, 
and \(|\mathrm{gs}\rangle\) denotes the TFIM ground state. 
For each value of \(n_a\), we append \(n_a\) ancillary qubits initialized in the fixed pure state \(\ket{0^{n_a}}\bra{0^{n_a}}\) and run CBNE with a single measurement setting, \(N_U=1\). 
As shown in Fig.~\ref{fig:ancilla}, the statistical error decreases exponentially with \(n_a\). 
This agrees with the theoretical prediction above, confirming that enlarging the effective Hilbert space through ancillary qubits enables high-precision estimation with a single measurement setting. 
Moreover, the estimates for all three observables in Fig.~\ref{fig:ancilla} are obtained by postprocessing the same measurement data, demonstrating that CBNE can simultaneously estimate multiple distinct observables.

We next consider PTME, for which
\tref{thm:PTResult} in the main text  predicts a similar dimension-dependent suppression. 
For estimating PT moments, the number of measurement settings scales as $N_U=\max\left\{1,\,c\,d_B/(d_A\epsilon^2) \right\}$, 
where \(d_A\) and \(d_B\) are the dimensions of subsystems $A$ and $B$, respectively. 
Hence, for fixed \(d_B\), increasing \(d_A\) or adding ancillary qubits to subsystem $A$ reduces the setting complexity and suppresses the single-setting statistical error.
We verify this behavior numerically using depolarized TFIM ground states. 
We fix subsystem \(B\) to contain \(n_B=2\) qubits and vary the size \(n_A\) of subsystem \(A\). 
As shown in Fig.~\ref{fig:SE_nA}, the statistical error for estimating the fourth order PT moment $p_4^{\PT}$ decreases exponentially as \(n_A\) increases, confirming the error suppression predicted by our theory.

\subsection{PTME for entanglement detection}
In this subsection, we demonstrate how PTME can be used for entanglement detection through PT-moment witnesses.

Entanglement is one of the central resources in quantum information science~\cite{Horodecki2009Entanglement}. It enables the advantage of quantum communication over classical strategies. Therefore, entanglement detection is a fundamental problem in quantum resource theory. Among the various criteria developed, the Positive Partial Transposition (PPT) criterion stands out as one of the most important entanglement detection criteria~\cite{Peres1996PPT}. Suppose an $n$-qubit state $\rho$ on Hilbert space $\caH = \caH_A \otimes \caH_B$. The PPT criterion states that for all separable states, $\rho^{\top_B}$ remains positive, while for entangled states, $\rho^{\top_B}$ may not be. 
Thus, the positivity of $\rho^{\top_B}$ can be used as a criterion for detecting entanglement. It has been shown that, for all pure states and low-dimensional mixed states ($2 \times 2$ and $2 \times 3$), PPT provides a necessary and sufficient condition for separability~\cite{Horodecki1996PPT}.

Although the PPT criterion provides a clear and straightforward method for entanglement detection, determining whether $\rho^{\top_B}$ is positive requires precise estimation of all its eigenvalues, which is challenging in practical scenarios. 
As a result, a series of entanglement criteria based on partial transpose (PT) moments have emerged. 
While these criteria are generally weaker than the PPT criterion, they offer a simpler implementation and can be extracted through techniques such as randomized measurements~\cite{elben2020mixedstate,singlezhou,PhysRevX.14.031035,PhysRevLett.121.150503,PhysRevLett.94.040502,PhysRevA.99.062309}.

A class of PT-moment criteria is based on the Stieltjes moment problem~\cite{Yu2021Entanglement}. 
For \(k\ge0\), define the \((k+1)\times(k+1)\) Hankel matrix
\begin{equation}
\big[B_k(\rho^{\top_B})\big]_{i,j}
=
p^{\PT}_{i+j+1},
\qquad i,j=0,\ldots,k .
\end{equation}
If \(\rho\) is separable, then the moment sequence \(\{p_t^{\PT}\}_{t=1}^m\) must satisfy
$B_{\lfloor (m-1)/2 \rfloor}(\rho^{\top_B})\geq0$. 
A weaker but simpler necessary condition is
\begin{equation}
\left|B_{\lfloor (m-1)/2 \rfloor}(\rho^{\top_B})\right| := \det\left[B_{\lfloor (m-1)/2 \rfloor}\left(\rho^{\top_B}\right)\right] \geq 0 .
\end{equation}
Since \(B_k\) is the leading principal submatrix of \(B_{k+1}\), the condition \(B_{k+1}\geq0\) gives a stronger entanglement criterion than \(B_k\geq0\). 
For \(k=1\), using \(p_1^{\PT}=1\), the positivity condition \(B_1(\rho^{\top_B})\geq0\) implies
$p_3^{\PT}-(p_2^{\PT})^2\ge0$.
Therefore, a violation,
$(p_2^{\PT})^2-p_3^{\PT}>0$,
certifies entanglement; this is the well-known \(p_3^{\PT}\)-PPT criterion~\cite{elben2020mixedstate}.

We next present a numerical experiment in which our PTME protocol is used to estimate PT moments and construct the Hankel-matrix witnesses for entanglement detection. 
The states considered here are Gibbs states of the TFIM. 
At low temperature, i.e., large \(\beta\), the Gibbs state approaches the TFIM ground state and is typically entangled, whereas at high temperature, i.e., small \(\beta\), it approaches the maximally mixed state and becomes separable. 
As shown in Fig.~\ref{fig:Gibbs_Entdet}, the Hankel criterion based on \(B_1(\rho^{\top_B})\) fails to detect entanglement for certain intermediate values of \(\beta\). 
In contrast, the higher-order criterion based on \(B_2(\rho^{\top_B})\), which uses PT moments up to order five, can detect entanglement in this regime. 
This illustrates the advantage of accessing higher-order PT moments. 
At the same time, higher-order criteria are more sensitive to statistical noise, since they involve higher-order moments and higher-degree polynomial functions of the estimated moments. 
PTME is useful in this setting because the same measurement data can be reused to estimate all required PT moments, allowing one to combine lower-order and higher-order criteria to improve the overall detection performance.

\begin{figure}[t]
\centering
\includegraphics[width=0.5\textwidth]{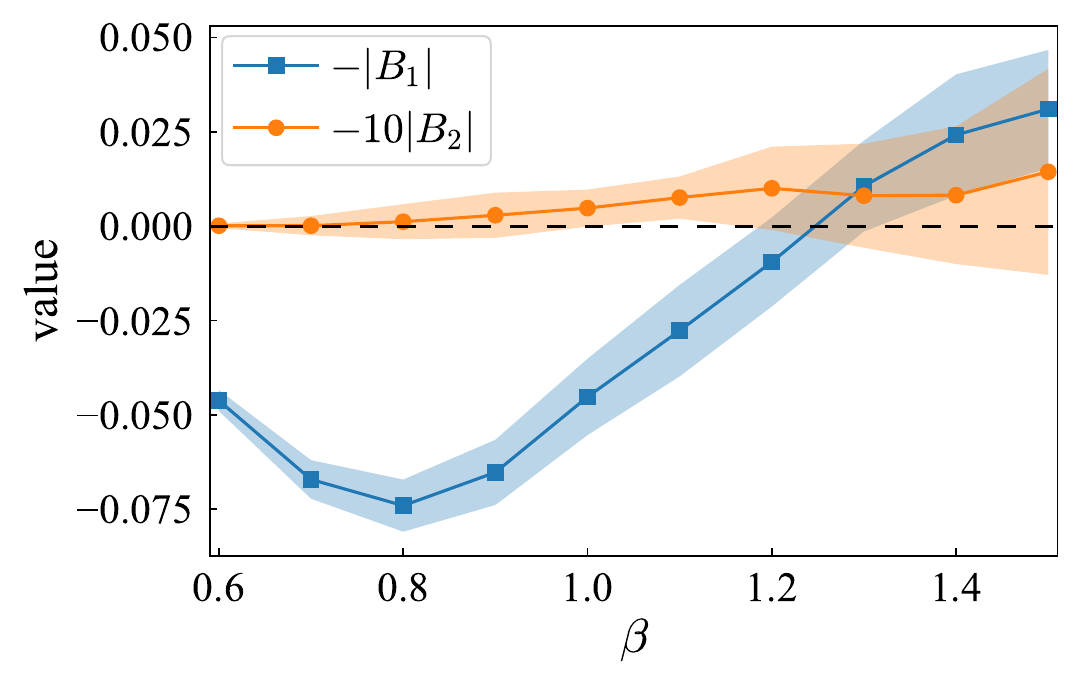}
\caption{Values of the entanglement-detection criteria constructed from \(B_k(\rho^{\top_B})\) as functions of the inverse temperature \(\beta\), where positive values detect entanglement.
In the plot, the matrix \(B_k(\rho^{\top_B})\) is abbreviated as \(B_k\), and $|B_k|:=\det(B_k)$.
Here \(\rho\) is a 12-qubit Gibbs state of the TFIM with \(J=1\) and \(h=2.5\).
Subsystem \(A\) consists of the first 11 qubits, and subsystem \(B\) is the remaining one qubit.
Each point is obtained by averaging the PTME estimates over 1000 independent trials with \(N_U=100\) and \(N_M=10^6\).
The shaded regions is plotted according to the standard deviation.}
\label{fig:Gibbs_Entdet}
\end{figure}

Another class of PT-moment criteria is given by the \(D_k\) witness hierarchy~\cite{Neven2021Entanglement}. 
Let \(e_k(x_1,\ldots,x_d)\) denote the \(k\)-th elementary symmetric polynomial,
\begin{equation}
e_k(x_1,\ldots,x_d)
:=
\sum_{1\le j_1<\cdots<j_k\le d}
x_{j_1}\cdots x_{j_k},
\qquad k=1,\ldots,d,
\qquad e_0(x_1,\ldots,x_d)=1. 
\end{equation}
Let \(\lambda_1,\ldots,\lambda_d\) be the eigenvalues of \(\rho^{\top_B}\). 
The criterion states that, if \(\rho\) is separable, then  
$e_k(\lambda_1,\ldots,\lambda_d)\ge0$
for all \(k=1,\ldots,d\). 
Therefore, a negative value of \(e_k(\lambda_1,\ldots,\lambda_d)\) indicates the existence of entanglement. 
Using Newton's identities, each \(e_k(\lambda_1,\ldots,\lambda_d)\) can be expressed as a polynomial in the first \(k\) PT moments \(p_1^{\PT},\ldots,p_k^{\PT}\). 
Define
\[
D_k := -k\,e_k(\lambda_1,\ldots,\lambda_d),
\]
so that \(D_k>0\) detects entanglement \cite{Neven2021Entanglement}. 
The first few polynomials are
\begin{align}
D_1 &= -p^{\PT}_1,\nonumber\\
D_2 &=  p^{\PT}_2-\left(p^{\PT}_1\right)^2 ,\nonumber\\
D_3 &= -p^{\PT}_3 + \frac{3}{2}p^{\PT}_1p^{\PT}_2 - \frac{1}{2} \left(p^{\PT}_1\right)^3,\nonumber\\
D_4 &=p^{\PT}_4-\frac{1}{2}\left[\left(p^{\PT}_1\right)^2-p^{\PT}_2\right]^2
+\frac{1}{3}\left(p^{\PT}_1\right)^4
-\frac{4}{3}p^{\PT}_1p^{\PT}_3,\nonumber\\
D_5 &= -p^{\PT}_5 
+\frac{5}{4}p^{\PT}_1p^{\PT}_4 
+\frac{5}{6}p^{\PT}_2p^{\PT}_3
-\frac{5}{6}\left(p^{\PT}_1\right)^2p^{\PT}_3
-\frac{5}{8}p^{\PT}_1\left(p^{\PT}_2\right)^2 
+\frac{5}{12}\left(p^{\PT}_1\right)^3p^{\PT}_2
-\frac{1}{24}\left(p^{\PT}_1\right)^5 .
\end{align}

\begin{figure}[t]
\centering
\includegraphics[width=0.5\textwidth]{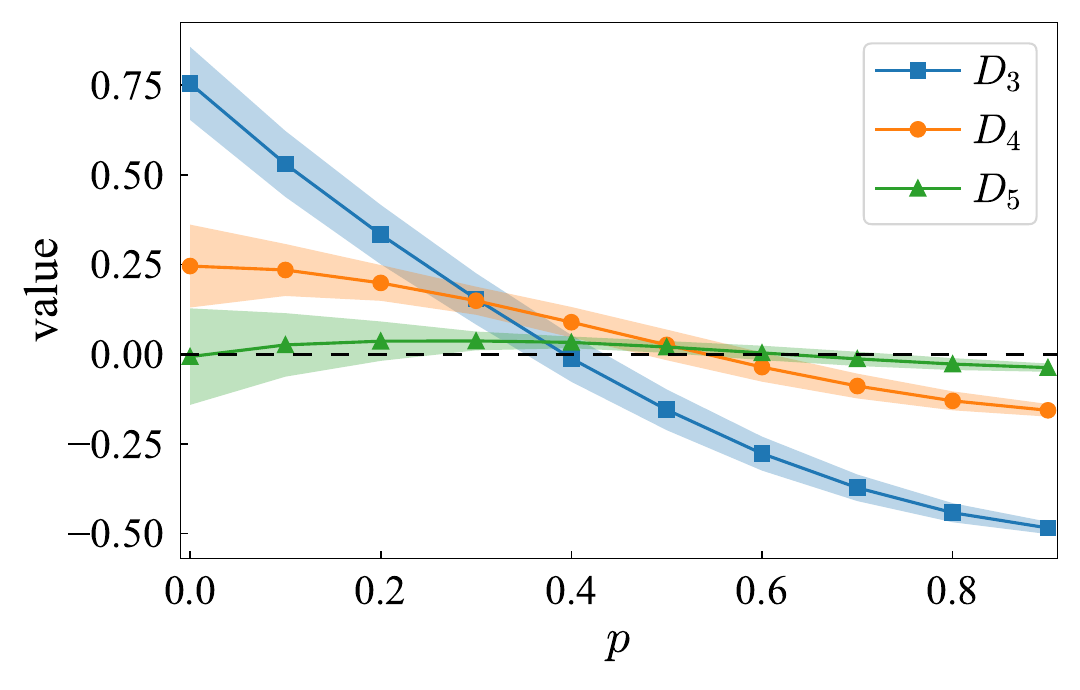}
\caption{Values of the \(D_k\) entanglement witnesses estimated by PTME as functions of the depolarizing probability \(p\), where positive values detect entanglement. 
Here \(\rho=(1-p)|\mathrm{gs}\rangle\langle\mathrm{gs}|+pI/d\), and \(|\mathrm{gs}\rangle\) is the ground state of the 12-qubit Heisenberg model with \(J=1\). 
Subsystem \(A\) consists of the first 11 qubits, and subsystem \(B\) is the remaining qubit. 
Each point is obtained by averaging over 1000 independent PTME trials with \(N_U=1\) and \(N_M=10^6\). 
The shaded regions indicate the standard error.
}
\label{fig:Depol_Entdet}
\end{figure}

Next, we demonstrate the use of PTME for entanglement detection with the \(D_k\) witness hierarchy. 
We consider the depolarized state
$\rho=(1-p)|\mathrm{gs}\rangle\langle\mathrm{gs}|+pI/d$,
where \(|\mathrm{gs}\rangle\) is the ground state of the one-dimensional Heisenberg model with open boundary conditions, whose Hamiltonian is written as
\begin{equation}
H=-J\sum_{i=1}^{n-1}
\left(X_iX_{i+1}+Y_iY_{i+1}+Z_iZ_{i+1}\right).
\end{equation}
Figure~\ref{fig:Depol_Entdet} shows that higher-order  witnesses can detect entanglement in certain regimes where lower-order witnesses fail. 
PTME is useful in this setting because it estimates the required PT moments simultaneously from a single measurement setting, allowing one to combine $D_k$ witnesses of different orders and thereby detect a broader range of entangled states.

\end{document}